\theoremstyle{plain}
\newtheorem{thm}{Theorem}[section]
\newtheorem{lem}[thm]{Lemma}
\newtheorem{prop}[thm]{Proposition}
\newtheorem{cor}[thm]{Corollary}
\newtheorem{assump}{Assumption}
\newtheorem{fact}{Fact}
\theoremstyle{definition}
\newtheorem{prtl}{Protocol}
\newtheorem{conv}{Convention}
\newtheorem{setup}{Set-up}
\newtheorem{defn}{Definition}[section]
\newtheorem{exmp}{Example}[section]
\newtheorem{nota}{Notation}[section]
\theoremstyle{remark}
\DeclareMathOperator{\tr}{tr}
\newcommand{\cE}{{\mathcal{E}}}
\newcommand{\cF}{{\mathcal{F}}}
\newcommand{\cH}{{\mathcal{H}}}
\newcommand{\mi}{{\mathrm{i}}}
\newcommand{\btheta}{{\boldsymbol{\theta}}}
\newcommand{\bS}{{\boldsymbol{S}}}
\newcommand{\bD}{{\boldsymbol{D}}}
\newcommand{\bx}{{\boldsymbol{x}}}
\newcommand{\bK}{{\boldsymbol{K}}}
\newcommand{\bR}{{\boldsymbol{R}}}
\newcommand{\bflag}{{\boldsymbol{flag}}}
\newcommand{\bscore}{{\boldsymbol{score}}}
\newcommand{\foprtortest}{{\mathsf{operatortest}}}
\newcommand{\fenergytest}{{\mathsf{energytest}}}
\newcommand{\bbC}{{\boldsymbol{C}}}
\DeclareMathOperator{\bE}{\mathbb{E}}
\DeclareMathOperator{\bN}{\mathbb{N}}
\DeclareMathOperator{\bZ}{\mathbb{Z}}
\newcommand{\bbI}{\mathbb{I}}
\newcommand{\fX}{{\sf X}}
\newcommand{\fZ}{{\sf Z}}
\newcommand{\fDisgard}{{\sf Disgard}}
\newcommand{\Domain}{{\text{Domain}}}
\newcommand{\fQFac}{{\mathsf{QFac}}}
\newcommand{\fDc}{{\mathsf{Dec}}}
\newcommand{\fEv}{{\mathsf{Eval}}}
\newcommand{\fKg}{{\mathsf{KeyGen}}}
\newcommand{\fneg}{{\mathsf{negl}}}
\newcommand{\fpoly}{{\mathsf{poly}}}
\newcommand{\fAdv}{{\mathsf{Adv}}}
\newcommand{\fSim}{{\mathsf{Sim}}}
\newcommand{\fNTCF}{{\mathsf{NTCF}}}
\newcommand{\fCHK}{{\mathsf{CHK}}}
\newcommand{\ftrue}{{\mathsf{true}}}
\newcommand{\ffalse}{{\mathsf{false}}}
\newcommand{\fpass}{{\mathsf{pass}}}
\newcommand{\ffail}{{\mathsf{fail}}}
\newcommand{\ftest}{{\mathsf{test}}}
\newcommand{\fcomp}{{\mathsf{comp}}}
\newcommand{\fwin}{{\mathsf{win}}}
\newcommand{\flose}{{\mathsf{lose}}}
\newcommand{\tind}{{\text{ind}}}
\newcommand{\sk}{{\text{sk}}}
\newcommand{\OPT}{{\text{OPT}}}
\newcommand{\pk}{{\text{pk}}}
\newcommand{\roundtype}{{\text{roundtype}}}
\newcommand{\val}{{\text{val}}}
\newcommand{\fOneBlock}{{\mathsf{OneBlock}}}
\newcommand{\fOneBlockTensor}{{\mathsf{OneBlockTensor}}}
\newcommand{\fMultiBlockTest}{{\mathsf{MultiBlockTest}}}
\newcommand{\fMultiBlockComp}{{\mathsf{MultiBlockComp}}}
\newcommand{\fMultiBlock}{{\mathsf{MultiBlock}}}
\newcommand{\fKP}{{\mathsf{KP}}}
\newcommand{\fQFacTest}{{\mathsf{QFacTest}}}
\newcommand{\fQFacComp}{{\mathsf{QFacComp}}}
\newcommand{\tdecodekey}{{\text{decodekey}}}
\newcommand{\fBB}{{\mathsf{BB84}}}
\newcommand{\fxorparity}{{\mathsf{xorparity}}}
\newcommand{\fsendxorparity}{{\mathsf{sendxorparity}}}
\newcommand{\fsend}{{\mathsf{send}}}
\newcommand{\bxor}{{\boldsymbol{xor}}}
\newcommand{\bq}{{\boldsymbol{q}}}
\newcommand{\bsimxor}{{\boldsymbol{simxor}}}
\newcommand{\bP}{{\boldsymbol{P}}}
\newcommand{\bbE}{{\boldsymbol{E}}}
\newcommand{\bQ}{{\boldsymbol{Q}}}
\newcommand{\bw}{{\boldsymbol{w}}}
\newcommand{\bregs}{{\boldsymbol{regs}}}
\newcommand{\breg}{{\boldsymbol{reg}}}
\newcommand{\tD}{{\text{D}}}
\newcommand{\tTD}{{\text{TD}}}
\newcommand{\tPos}{{\text{Pos}}}
\newcommand{\ttest}{{\text{test}}}
\newcommand{\tcomp}{{\text{comp}}}
\newcommand{\tmode}{{\text{mode}}}
\newcommand{\tvals}{{\text{vals}}}
\newcommand{\tReal}{{\text{Real}}}
\newcommand{\tIdeal}{{\text{Ideal}}}
\newcommand{\tHW}{{\text{HW}}}
\newcommand{\tParity}{{\text{Parity}}}
\newcommand{\tsize}{{\text{size}}}
\newcommand{\tindex}{{\text{index}}}
\newcommand{\tin}{{\text{in}}}
\newcommand{\tout}{{\text{out}}}
\newcommand{\ttemp}{{\text{temp}}}
\newcommand{\tmr}{{\text{mr}}}
\newcommand{\tutob}{{\text{u2b}}}
\newcommand{\tsubs}{{\text{subs}}}
\newcommand{\tkey}{{\text{key}}}
\begin{document}
\title{Formulations and Constructions of Remote State Preparation with Verifiability, with Applications}
%\author{Anonymous for Review}
	\author[1]{Jiayu Zhang\footnote{zhangjy@zgclab.edu.cn}}
	\affil[1]{Zhongguancun Laboratory}
	\maketitle\thispagestyle{empty}
	\begin{abstract}
		Remote state preparation with verifiability (RSPV) is an important quantum cryptographic primitive \cite{GVRSP,cvqcinlt}. In this primitive, a client would like to prepare a quantum state (sampled or chosen from a state family) on the server side, such that ideally the client knows its full description, while the server holds and only holds the state itself. In this work we make several contributions on its formulations, constructions and applications. In more detail:% These notions have been widely studied in various different settings and have become fundamental building blocks in many quantum protocols \cite{GVRSP,GMP,cvqcinlt,FWZ}. However, there are also subtleties and confusions accumulated in existing works, for example, inconsistent choices of definitions and whether these definitions have well-behaved composability \cite{firstrsp,GVRSP,cvqcinlt,GMP,RSPVImp}. % In this background, it will be very helpful to develop a well-behaved framework for these problems and clarify these subtleties.\par
        %In this paper, we choose notions or basic ideas from existing works \cite{firstrsp,GVRSP,cvqcinlt,RY21} and introduce new notions, study their basic properties and compare them with other variants, with the goal of developing a more well-behaved framework for these problems.
		\begin{itemize}\item We first work on the definitions and abstract properties of the RSPV problem. We select and compare different variants of definitions \cite{firstrsp,GVRSP,cvqcinlt,GMP}, and study their basic properties (like composability and amplification). %This part will lay the foundation for the concrete constructions later.% This part is basically a re-organization of easy or existing results, hopefully it will lay a good foundation for further studies.% We also formalize and discuss open questions in this direction, and discuss how our work could potentially be useful. 
			\item We also study a closely related question of how to certify the server's operations (instead of solely the states). We introduce a new notion named \emph{remote operator application with verifiability} (ROAV). %In this notion the server is provided with an unknown input state, and is supposed to perform a specific operator (sampled from an operator family) to the state; the client knows the operator description, but what the server knows in the end is limited to the output state of the operation applied on the input state. 
			We compare this notion with related existing definitions \cite{SW87,MY04,MIPstar,MV21,NZ23}, study its abstract properties and leave its concrete constructions for further works.
			\item Building on the abstract properties and existing results \cite{BGKPV23}, we construct a series of new RSPV protocols. Our constructions not only simplify existing results \cite{GVRSP} but also cover new state families, for example, states in the form of $\frac{1}{\sqrt{2}}(\ket{0}\ket{x_0}+\ket{1}\ket{x_1})$. All these constructions rely only on the existence of weak NTCF \cite{BKVV,AMR22}, without additional requirements like the adaptive hardcore bit property \cite{BCMVV,AMR22}.
			\item As a further application, we show that the classical verification of quantum computations (CVQC) problem \cite{ABEM,MahadevVerification} could be constructed from assumptions on group actions \cite{ADMP20}. This is achieved by combining our results on RSPV with group-action-based instantiation of weak NTCF \cite{AMR22}, and then with the quantum-gadget-assisted quantum verification protocol \cite{FKD}.% Thanks to the fact that we do not need the adaptive hardcore bit property, the analysis in \cite{AMR22} could be significantly simplified.%assuming either low-noise LPN (with noise ratio $n^{-\frac{3}{4}}$) \cite{LPN} or CSIDH \cite{csidh}, which makes the CVQC survive under more standard assumptions in case the LWE assumption \cite{regevLWE} is broken. For comparison, existing CVQC constructions either rely on the security of LWE once we instantiate all the abstractions \cite{MahadevVerification,NZ23,GupteVinod}, or rely on the random oracle heuristic or assumptions that are relatively more non-standard \cite{cvqcinlt,AMR22}.
		\end{itemize}
		%We hope our work could lay a better foundation for future works in this direction.
	\end{abstract}\clearpage
	\pagenumbering{arabic}
	\newpage
	\tableofcontents
	\section{Introduction}
	
    \subsection{Background}
	Development of quantum computers \cite{supremacy,zuchongzhi,neutralatom} leads to demands of various quantum cryptographic protocols. In a typical setting, there are a client and a remote quantum server. The client would like to achieve some quantum tasks, but it does not trust the server; thus the client would like to make use of cryptography to achieve its goal. Famous examples include quantum computation verification \cite{MahadevVerification,cvqcinlt}, multiparty quantum computations \cite{B21}, etc. In this work, we are interested in a basic and very important primitive called \emph{remote state preparation} (RSP) \cite{firstrsp}, which we introduce below. \par
    \subsubsection{Remote state preparation: an overview}\label{sec:1.1.1}
    In the RSP problem, ideally, the client would like to prepare a quantum state (sampled or chosen from a state family) on the server side; thus in the end the client knows the description of the state, while the server simply holds the state. The trivial solution is to simply send the quantum state through a quantum channel. RSP asks: how could we achieve this task using cheaper resources (for example, only classical communication), possibly under computational assumptions?\par
    Studies of RSP have a long history \cite{Pati99,firstrsp}. One setting of RSP is the fully honest setting \cite{firstrsp}. In this work, we are interested in the cryptographic setting where the server could be malicious. Then a formulation of RSP should at least have a correctness requirement and a security requirement.\par
	The natural correctness requirement for RSP says that when the server is honest, the server gets the state while the client gets the state description. For security, there are different security notions, including blindness (or privacy, secrecy) and verifiability (or soundness) \cite{qfactory,GVRSP,jiayu20}. In this paper we focus on RSP with verifiability (RSPV). In RSPV, intuitively, the client is able to verify that in the case of passing (or called acceptance, non-aborting) the server  (approximately) really gets the state, as if it is sent through a quantum channel. A malicious server who attempts to get other information by deviating from the protocol would be caught cheating by the client.\par
    As a natural quantum task, the RSPV problem is interesting on its own. What's more, it has become an important building block in many other quantum cryptographic protocols. For example, \cite{GVRSP} first constructs a classical channel cryptography-based RSPV and uses it to achieve classical verification of quantum computations; \cite{GMP} explores more applications of RSPV; \cite{cvqcinlt} takes the RSPV approach to achieve classical verification of quantum computations with linear total time complexity. Many quantum cryptographic protocols rely on the quantum channel and quantum communication, and an RSPV protocol could serve as a compiler: it allows us to replace these quantum communication steps by other cheaper resources, like the classical communication.\par
    %Preparing states on the server side is quite useful. But in many scenarios what the client needs is to have control on server's \emph{operations}, as introduced below.
    %\subsubsection{How to control server's operations}
    On the one hand, there have been many important and impressive results in this direction; on the other hand, there are also various limitations or subtleties in existing works, including formulations and constructions. Below we discuss existing works in more detail and motivate our results.
	\subsection{Existing Works and Motivating Questions}
    \subsubsection{Formulations and abstract properties of RSPV}\label{sec:1.1.3}
    We first note that there are many variants of definitions for RSPV. % In summary, an undesirable situation on RSPV and self-testing is that these notions do not seem (or do not always seem) \emph{well-behaved}.\par
	For example, there are two subtlely different types of security notions, the \emph{rigidity-based} (or isometry-based) soundness \cite{qfactory,GMP} and \emph{simulation-based} soundness \cite{firstrsp,GVRSP,cvqcinlt}. %. Roughly speaking, these two definitions go as follows:
    % \begin{itemize}
    %     \item (Rigidity-based soundness) The output state, going through an isometry, is close to the target state.
    %     \item (Simulation-based soundness) The target state, going through a simulator, is indistinguishable to the output state.
    % \end{itemize}
    Existing works do not seem to care about the differences; we note that these differences could have impact on the abstract properties of the definitions and could affect their well-behaveness. % In this work we formulate the RSPV problem (under the simulation-based soundness) and compare the definitions with other variants of definitions.\par
	%It will be desirable if we could construct RSPV for relatively more complicated state families by building on smaller, easy-to-construct pieces. 
	For example, we would like RSPV to have sequential composability between independent instances: if the client and the server execute an RSPV protocol for a state family $\cF_1$, and then execute an RSPV protocol for a state family $\cF_2$, we would like the overall protocol to be automatically an RSPV for $\cF_1\otimes \cF_2$. If such a sequential composability property holds, protocols for tensor products of states could be reduced to protocols for each simple state family.\par
	%Another abstract property that appears frequently in RSPV is about the amplification. In constructions of RSPV \cite{GVRSP,GMP,cvqcinlt}, a common routine is to first construct 
%Another common approach in the construction of RSPV is to first construct a variant of RSPV with a slightly weaker security notions, then amplify it to the standard security definition.
	In this background, we argue that:
	\begin{center}
		\emph{It's helpful to compare variants of definitions and formalize basic abstract properties.}
	\end{center}
	We would like to have a more well-behaved framework for RSPV, which will lay a solid foundation for concrete constructions.\par
	\subsubsection{Definitions of the primitive for certifying the server's operations}\label{sec:1.2.2}
	RSPV talks about the certification of server-side \emph{states}. A closely related question is: how could the client certify the server-side \emph{operations}?\par% In existing works, people raised the notion of self-testing to address the problem.\par
	To address this problem, existing works raise the notion of self-testing \cite{SW87,PR92,MY04}. One famous scenario of self-testing is in non-local games \cite{MIPstar,RUV}. In this scenario, the verifier sends questions to two spatially-separated but entangled quantum provers (or called servers). The verifier's questions and passing conditions are specially designed so that the provers have to perform specific operations to pass. %and send back the results, then the client decides whether to accept or reject. The natural correctness requirement says that when all the parties follow the protocol, the client accepts with some specific probability, say, $\OPT$. Furthermore, specific games have the property that, any servers that want to pass the protocol with  probability bigger than $\OPT-\epsilon$ have to use a strategy (measurement operators) that is close to the honest behavior. 
	This provides a way to constrain the provers' operations through only classical interactions and spatial separation, which has become a fundamental technique in the study of non-local games.\par
    Recently a series of works \cite{MV21,FWZ,BGKPV23,NZ23} study the single-server cryptographic analog of the non-local game self-testing. \cite{MV21} studies the cryptographic analog of the CHSH game, where the server needs to prepare the Bell states and perform measurements on two of its registers. \cite{MTHAT22,FWZ} further extend it to three-qubit and $N$-qubit; \cite{KLVY22,NZ23} make use of QFHE \cite{Mahadev2017ClassicalHE} to formulate and address the problem, where the FHE-encrypted part takes the role of one prover and the unencrypted part takes the role of the other prover.\par %Typically these self-testing protocols have also achieved a sense of RSPV since the protocols also certify the underlying entangled states; however, these self-testing protocols do not aim to reserve the states in the end. Other examples include \cite{KGCVY21,BGKPV23}, which study the proof of quantumness problem and the construction is later proved to have a self-testing property; the self-testing properties in these protocols are formulated by ``certifying a qubit'' (instead of certifying entangled states), which imply (and are classified by us as) RSPV protocols.
	What's common in these existing works is that they are defining the single-server cryptographic analog of the non-local game self-testing to be a protocol where the single server is playing the roles of \emph{both} of the two provers. In this work we are interested in another viewpoint, which is:
	\begin{center}
		\emph{How could we define a single-server cryptographic analog of the non-local game self-testing, where the single server is playing the role of \emph{one} of the two provers?}
	\end{center}
	\subsubsection{Existing RSPV constructions}
	Maybe the most natural question in the direction of RSPV is:
	\begin{center}\emph{For what state families could we achieve RSPV?}
	\end{center} 
	Let's quickly review what state families have been achieved in existing works. \cite{GVRSP} achieves RSPV for $\{\ket{+_{\theta}}:=\frac{1}{\sqrt{2}}(\ket{0}+e^{\mi\theta\pi/4}\ket{1}),\theta\in \{0,1,2\cdots 7\}\}$; independently \cite{qfactory} gives a candidate RSPV construction for this state family and conjectures its security. \cite{GVRSP} achieves RSPV for tensor products of BB84 states: $\{\ket{0},\ket{1},\ket{+},\ket{-}\}^{\otimes n}$. \cite{cvqcinlt} achieves RSPV for tensor products of $\ket{+_\theta}$ states. \cite{BGKPV23} achieves RSPV for BB84 states.\par
	 For a broader viewpoint let's also review some famous self-testing protocols. \cite{MV21} achieves cryptographic self-testing for Bell states and the corresponding X/Z measurements. \cite{MTHAT22} achieves self-testing for a 3-qubit magic state and the corresponding measurements. \cite{FWZ} achieves self-testing for multiple Bell pairs and measurements. \cite{NZ23} achieves self-testing for ``all-X'' and ``all-Z'' operations.\par
	As a summary, one significant limitations of existing works is that they could only handle simple tensor product states and operators. We consider this situation very undesirable since ideally we want a protocol that is sufficiently powerful to cover all the computationally-efficient state families. The lack of concrete protocols also restricts the applications of RSPV as a protocol compiler: suppose that we have a quantum cryptographic protocol that starts with sending states in (for example) $\{\frac{1}{\sqrt{2}}(\ket{0}\ket{x_0}+\ket{1}\ket{x_1}), x_0,x_1\in \{0,1\}^n\}$, it's not clear how to use existing RSPV protocols to compile this quantum communication step to the classical communication. (Note that although there are mature techniques for creating this type of states with some other security notions \cite{Mahadev2017ClassicalHE,BCMVV,MahadevVerification}, it's not known how to construct RSPV for it.)\par
	 Besides the existency problem, it's also desirable if we could weaken the assumptions needed or simplify existing results. 
	\subsubsection{Classical verification of quantum computations and its assumption}\label{sec:1.2.4}
	We will apply our results on RSPV to the classical verification of quantum computations (CVQC) problem. In this problem a classical client would like to delegate the evaluation of a BQP circuit $C$ to a quantum server, but it does not trust the server; thus it wants to make use of a protocol to verify the results. %The completeness and soundness informally say:
	% \begin{itemize}
	% 	\item (Completeness) If the result is correct, and if the server behaves honestly, the client accepts.
	% 	\item (Soundness) If the result is incorrect, for any efficient malicious server, the client catches the server cheating.
	% \end{itemize}
	The first and perhaps the most famous construction is given by Mahadev \cite{MahadevVerification}. Building on Mahadev's work and related results \cite{Mahadev2017ClassicalHE,BCMVV,MahadevVerification}, a series of new CVQC protocols are constructed \cite{ACGH,GVRSP,CCT,cvqcinlt,NZ23,AMR22}.\par %One basic technique in this direction is to make use of an primitive called noisy trapdoor claw-free functions (NTCF), which could be constructed from the Learning-with-Errors (LWE) assumption \cite{regevLWE,BCMVV}.\par
	%However, one problem that has not been understood thoroughly is what are exactly the assumption needed in this problem. In more detail:
	However, one undesirable situation is that all the existing constructions are either based on LWE, or based on the random oracle heuristic, or based on some new conjecture on the reduction between security notions. In more detail:
	\begin{itemize}
		\item Most existing works \cite{MahadevVerification,CCT,GVRSP} make use of NTCF with the ``adaptive hardcore bit property'' (or some stronger variants). The only known instantiation of this primitive from standard assumption is based on LWE.\par
		\item \cite{AMR22} gives an instantiation of NTCF with a weaker variant of the adaptive hardcore bit property using cryptographic group actions. However, to achieve the adaptive hardcore bit property it relies on an unproven conjecture on the reduction between these two notions. Alternatively one could make use of the random oracle to construct CVQC without the adaptive hardcore bit property \cite{BKVV,cvqcinlt} but the instantiation of the random oracle is heuristic.
		 \item If we only make use of NTCF without the adaptive hardcore bit property, we could do RSPV for BB84 states \cite{BGKPV23}, but BB84 states are not known to be sufficient for constructing CVQC.
		\item \cite{NZ23} makes use of a quantum FHE-based approach to construct CVQC. However instantiations of quantum FHE rely on (variants of) NTCF and the classical FHE \cite{Mahadev2017ClassicalHE,GupteVinod}, which still rely on the LWE assumption.
	\end{itemize}
	As mentioned above, a type of cryptographic assumption that is different from LWE is the cryptographic group actions, for example, supersingular isogeny \cite{ADMP20,csidh}. In this background, we ask:
	\begin{center}
		\emph{Could we construct CVQC from cryptographic group actions?}
	\end{center}
    \subsection{Our Contributions}
	In this work we address or make progress to the questions above. %Below we summarize each part one by one.
	%We argue that the current complicated situation of RSPV and single-server self-testing is largely from the choices of definitions. In this work we choose or introduce a new set of notions for these problems and study their properties and applications, which we summarize below.
	\subsubsection{Definitions and abstract properties of RSPV}\label{sec:1.3.1}
	We first work on the definitions and abstract properties of RSPV. This part is in Section \ref{sec:2} and \ref{sec:3}.
	\begin{enumerate}
		\item We first formulate and review the hierarchy of notions for formalizing RSPV. This includes the notions of registers, cq-states, protocols, paradigms of security definitions, etc.
		\item We then formalize the notion of RSPV. We formulate the soundness as a simulation-based definition and compare this definition with several other variants (like the rigidity-based soundness, which is also popular).
		\item We then study the abstract properties of RSPV including the sequential composability and amplification. This allows us to reduce the constructions of RSPV to primitives that are relatively easier to construct.
	\end{enumerate}
	In summary, we clarify subtleties and build a well-behaved framework for RSPV, which enables us to build larger protocols from smaller or easier-to-construct components. In later part of this work we will build concrete RSPV constructions under this framework.
	%Along the way, we also discuss existing results and open problems of RSPV.\par
	%We note that technically this part is basic a re-organization of existing or technically easy results. 
	%The goal of this part is to clarify subtleties and lay a better foundation for further works. In later part of this work we will construct RSPV protocols based on it.
    %We first develop a new set of notions. For RSPV, we choose and study RSPV with simulation-based soundness (see Section \ref{sec:1.1.3} and \ref{sec:3.1.2}). We show that the definitions that we choose have several desirable properties, which could hopefully make RSPV much easier to work on:
	% \begin{itemize} 
	% 	\item We show our choice of notions has a well-behaved sequential composability property.% Existing works about rigidity-based RSPV typically requires large amount of technical works to address the composability issue \cite{GV19,GMP};
	% 	\item In usual applications of RSPV, simulation-based definition is as powerful as rigidity-based definition.
	% \end{itemize}
	%Based on these reasons, hopefully switching to the simulation-based notion could make RSPV much easier to work on.\par
	%Another abstract property that appears frequently in RSPV is about the amplification. In constructions of RSPV \cite{GVRSP,GMP,cvqcinlt}, a common routine is to first construct 
	\subsubsection{Our cryptographic analog of self-testing: remote operator application with verifiability (ROAV)}\label{sec:1.3.2}
	We then introduce a new notion called remote operator application with verifiability (ROAV), as our answer to the question in Section \ref{sec:1.2.2}. This part is in Section \ref{sec:4}.\par
	Let's first review how the self-testing-based protocols typically work in the non-local game setting. Suppose the verifier would like to make use of the prover 1 and prover 2 to achieve some tasks ---- for example, testing the ground state energy of a local Hamiltonian. One typical technique is to design two subprotocols $\pi_{\ttest}$ and $\pi_{\tcomp}$. Subprotocol $\pi_{\ttest}$ is a non-local game with a self-testing property, while subprotocol $\pi_{\tcomp}$ is to test the Hamiltonian. Furthermore, the games are designed specially so that the prover 2, without communicating with the prover 1, could not decide which subprotocol the verifier is currently performing. %Then to pass the overall protocol, the prover 2 has to behave honestly in $\pi_{test}$, and thus also in $\pi_{comp}$.\par
	Then the setting, the overall protocol and the security proof roughly go as follows:
	\begin{enumerate}
		\item[(Setting)]The prover 1 and prover 2 initially hold EPR states. They receive questions from the verifier, make the corresponding measurements and send back the results.
		\item[(Overall protocol)] The verifier randomly chooses to execute either $\pi_{\ttest}$ or $\pi_{\tcomp}$ (without telling the provers the choices).
		\item[(Security proof)]
		\item To pass the overall protocol the provers have to pass $\pi_{\ttest}$ with high probability. By the property of $\pi_{\ttest}$ the operations of the prover 2 has to be close to the honest behavior.
		\item By the design of $\pi_{\ttest}$ and $\pi_{\tcomp}$, and the fact that the prover 2 is close to the honest behavior in $\pi_{\ttest}$, we could argue that the prover 2 is also close to the honest behavior in $\pi_{\tcomp}$.
		\item Since we already know the prover 2 is close to be honest in $\pi_{\tcomp}$, it's typically easy to analyze the execution of $\pi_{\tcomp}$ directly and show that it achieves the task.
	\end{enumerate}
Recall that we would like to define a single-server cryptographic analog of the non-local game self-testing where the single server plays the role of one of the two provers. So what does the ``non-local game self-testing'' mean here? Our idea is to consider both the step 1 and step 2 in the security proof template above as the ``non-local game self-testing''. Then let's focus on the prover 2 (as ``one of the two provers'' in our question) and assume the prover 1 is honest. Then we could do the following simplifications in the non-local game self-testing:
\begin{itemize}\item In $\pi_{\ttest}$ we could assume the prover 1 first measures its states following the verifier's question; as a result, the joint state of the prover 1 and prover 2 becomes a cq-state where the verifer knows its description. Then the verifier also gets the measurement results from the prover 1's answer; so in the end we only need to consider the joint cq-state between the verifier and the prover 2.
	\item For $\pi_{\tcomp}$, since we do not consider the step 3 above as a part of the self-testing notion, the question to the prover 1 in $\pi_{\tcomp}$ could be left undetermined. Then $\pi_{\tcomp}$ is not designed for any specific task any more, which in turn makes the self-testing a general notion.% and this part could be determined when we want to use the self-testing to do something.
\end{itemize}
	Now the setting and the first two steps in the security proof could be updated as follows:
\begin{enumerate}
	\item[(Setting)]In $\pi_{\ttest}$ the client and the prover 2 initially hold a cq-state, where the verifier holds the classical part and the prover 2 holds the quantum part. In $\pi_{\tcomp}$ the prover 1 and prover 2 hold EPR states and the verifier does not have access to the prover 1's information.
	\item[(Security proof)] To make the first two steps in the security proof template above go through, we should at least require that:\par ``The prover could pass in $\pi_{\ttest}$'' should imply that the prover's operation in $\pi_{\tcomp}$ is close to the desired one.
\end{enumerate}
	This gives us the basic intuition for formalizing our single-server cryptographic analog of the non-local game self-testing.
	% The driven question behind our definition is: could we formulate a notion in the single-server cryptographic setting that is analogous to what a specific server sees in the two-server setting?\par
	%We raise the notion of ROAV for formulating this intuition.
	Below we introduce the notion, which we call \emph{reomte operator application with verifiability} (ROAV).\par 
	An ROAV for a POVM $(\cE_{1},\cE_{2},\cdots, \cE_{D})$ is defined as a tuple $(\rho_{\ttest},\pi_{\ttest},\pi_{\tcomp})$ where:
	\begin{itemize}
		\item The setting contains the following registers. $\bD$ is a client-side classical register, $\bQ$ is a server-side quantum register, $\bP$ is a quantum register in the environment with the same dimension with $\bQ$.
		\item $\rho_{\ttest}$ is a cq-state on registers $\bD$ and $\bQ$.
		\item $\pi_{\ttest}$ is the protocol for the test mode. In the test mode $\rho_{\ttest}$ is used as the input state and $\bP$ is empty.%, where the client holds the classical part and the server holds the quantum part in $\bXS$.%[\rho_{test}]$ could be used to certify the server has really applied the target operators.
		%\item Suppose $\bXR$ is a register whose dimension is the same as $\bXS$. Suppose $\Phi$ is the multiple EPR entanglement state between $\bXR$ and $\bXS$. The input state of $\pi_{comp}$ is $\ket{\Phi}$, but $\pi_{comp}$ does not have access to $\bXR$.
	\item $\pi_{\tcomp}$ is the protocol for the computation mode (where the operations are applied). Denote the state of maximal entanglement (multiple EPR pairs) between $\bP$ and $\bQ$ as $\Phi$. In the comp mode $\Phi$ is used as the input state and $\bD$ is empty. Note that the execution of $\pi_{\tcomp}$ does not touch $\bP$.	
	\end{itemize} % Passing $\pi_{test}$ certifies that what the adversary gets in $\pi_{comp}$ is no more than the output of applying $\cE$ on the initial state.
	%Here $(\rho_{test},\pi_{test})$ is the test mode, which means, running $\pi_{test}$ on input state $\rho_{test}$ is used to test the adversary's behavior; $\pi_{comp}$ is the computation mode, which means, in this mode the operator $\cE$ is finally applied on the input state. More formally,
	\begin{figure}
		\begin{subfigure}{0.22\textwidth}
			\begin{tikzpicture}
				\node[draw,circle] (P1) at (0,0) {P1};
				\node[draw,circle] (P2) at (2,0) {P2};
				\node[draw,circle] (V) at (1,-1.5) {V};
				\draw[<->] (P1) -- (V);
				\draw[<->] (P2) -- (V);
				\draw[dashed] (P1) -- (P2);
			\end{tikzpicture}
		\end{subfigure}
			\begin{subfigure}{0.3\textwidth}
				\begin{tikzpicture}
					\node[draw,circle] (P1) at (0,0) {P1};
				\node[draw,circle] (P2) at (2,0) {P2};
				\node[draw,rectangle,minimum height=20,minimum width=20] (V) at (1,-1.5) {C};
				\draw (-1.1,0.6) rectangle (2.6,-0.6);
				\coordinate (lowercenterofrect) at (1,-0.6);
				\node (S) at (-0.9,0.3) {S};
				\draw[dashed] (P1) -- (P2);
				\draw[<->] (V) -- (lowercenterofrect);
				\end{tikzpicture}
			\end{subfigure}
				\begin{subfigure}{0.18\textwidth}
					\begin{tikzpicture}
				\node[draw,circle] (P2) at (2,0) {Q};
				\node[draw,rectangle,minimum width=50,minimum height=20] (V) at (2,-1.5) {C};
				\draw (1.4,0.6) rectangle (2.8,-0.6);
				\coordinate (arrowcqpointto) at (2,-0.6);
				\node (S) at (2.6,0.3) {S};
				\node[draw,rectangle,minimum width=15,minimum height=15] (K) at (1.5,-1.5) {D};
				\draw[<->] (arrowcqpointto) -- (V);
					\end{tikzpicture}
				\end{subfigure}
				\begin{subfigure}{0.2\textwidth}
					\begin{tikzpicture}
						\node[draw,circle] (P1) at (0,0) {P};
				\node[draw,circle] (P2) at (2,0) {Q};
				\node[draw,rectangle,minimum width=50,minimum height=20] (V) at (2,-1.5) {C};
				\draw (1.4,0.6) rectangle (2.8,-0.6);
				\coordinate (arrowcqpointto) at (2,-0.6);
				\node (S) at (2.6,0.3) {S};
				
				\draw[<->] (arrowcqpointto) -- (V);
				\draw[dashed] (P1) -- (P2);
					\end{tikzpicture}
				\end{subfigure}
		\caption{From the left to the right: the non-local game self-testing, previous definitions of its single party analog, and our definition ($\pi_{\ttest}$ and $\pi_{\tcomp}$ for the last two diagrams). Here $\leftrightarrow$ stands for interactions, and {-}{-}{-} stands for quantum entanglements; C stands for the client and S stands for the server.}
	\end{figure} 
	
	The soundness of ROAV is defined roughly as follows: for any adversary $\fAdv$, at least one of the following is true:%denoting the final output of running protocol $\pi_{\cdots}$ against adversary $\fAdv$ on input $\rho_{\cdots}$ as $\pi_{\cdots}^{\fAdv}(\rho_{\cdots})$, the ROAV satisfies:
	\begin{itemize}\item In the test mode ($\pi_{test}$ is executed with input state $\rho_{\ttest}$ against adversary $\fAdv$), the adversary gets caught cheating with significant probability; \item In the comp mode ($\pi_{\tcomp}$ is executed with input state $\Phi$ against adversary $\fAdv$), the final state gives the outcome of the following operations:\par
		 The measurement described by $(\cE_{1},\cE_{2},\cdots, \cE_{D})$ is applied on $\bQ$, and the client gets the measurement result $i\in [D]$; the corresponding output state (on register $\bP$ and $\bQ$) is $(\bbI\otimes \cE_i)(\Phi)$.\end{itemize}
		 We finally note that the definition of ROAV also use the simulation-based security definition paradigm, like RSPV.
	%Finally we note that in the formal notion that we propose we consider a large entangled state which can be collapsed to any $\chi$ by measuring part of its systems.\par
	%Our notion is different from the notions used in existing works of single party self-testing, which take the approach of simulating two parties on a single party. 
	%We argue that our new notion has relatively well-behaved properties, is consistent with the intuition of self-testing in the multi-party setting, and is potentially useful.
	\paragraph{ROAV as the operator analog of RSPV} In the previous discussion we focus on the analog between our ROAV primitive and the non-local game self-testing. There is also another intuition that is analogous to the intuition of RSPV (see Section \ref{sec:1.1.1}): In the ROAV problem the server is provided an undetermined input state, and the client would like to apply an operation (from an operator family) on it; in the end the client knows the description of the operation, while the server simply holds the output of the operation.\par
	We also note that such a state-operator analog also appears in other directions like \cite{RY21,JLS}. % Compare it with the intuition of RSPV, we find that we basically replace ``states'' by ``operations''.
\paragraph{Abstract constructions using ROAV}
	After giving the abstract definitions, we show several potential applications of our notions. We first show that ROAV is potentially a useful tool for constructing RSPV protocols for more general state families. %The outcome of an ROAV protocol is a remote preparation of joint state $\cE(\chi)$ where $\chi$ is the input state; such a state might be hard to prepare directly, but could be made possible once we have an ROAV for $\cE$ and have the RSPV for the corresponding $\rho_{test}$ and $\chi$.\par% We will discuss it in more details in Section \ref{}.\par
	% Building on abstract RSPV and ROAV protocols, we show two applications of our notions:
	% \begin{itemize}
	% 	\item We show a basic template for constructing more general RSPV protocols from basic RSPV and ROAV.
	% 	\item We construct a Hamiltonian ground energy testing protocol based on RSPV and ROAV.
	% \end{itemize}
	Then we construct a Hamiltonian ground energy testing protocol based on specific RSPV and ROAV. Our construction shares similarities to Grilo's Hamiltonian verification protocol in the 2-party setting \cite{Grilo17}. We note that these constructions are abstract constructions and concrete constructions for nontrivial ROAV remain open.
	\subsubsection{New RSPV constructions}\label{sec:1.3.3}
	Now we introduce a series of new RSPV constructions. Our results not only give arguably simplified constructions for existing results but also cover new state families. This part is in Section \ref{sec:5}.
	\paragraph{Overall approach} Instead of constructing each protocol directly from cryptographic assumptions, we study how different protocols could be \emph{reduced} to existing ones in a black-box manner. Following this approach, we build a series of RSPV protocols step by step from RSPV for BB84 states $\{\ket{0},\ket{1},\ket{+},\ket{-}\}$. Furthermore, these steps could be classified into two classes:
	\begin{itemize}
		\item In one class of steps, the reduction either has a simple intuition or is an application of the abstract properties that we have proved in Section \ref{sec:3} (in more detail, sequential composability and amplification).
		\item In the other class of steps, we could work on an ``information-theoretic core'' (IT-core) where the analysis is purely quantum information theoretic, and there is no appearance of computational notions like computational indistinguishability.\par
	\end{itemize}
	Our reductions are illustrated in Figure \ref{fig:2}. We elaborate technical details in Section \ref{sec:5.1}. As an example, let's describe the first step of our reductions. We assume an RSPV for BB84 states; then the client would like to prepare a sequence of such states in which there is only one $\ket{+}$ state and no $\ket{-}$ state. Equivalently, this could be written as $\frac{1}{\sqrt{2}}(\ket{x_0}+\ket{x_1})$ where the hamming weight of $x_0\oplus x_1$ is exactly 1. What the client needs to do is to repeat the RSPV-for-BB84 protocol for many rounds and tell the server which states it wants to keep. This could be easily understood as a repeat-and-pick process.\par
	We argue that our approach is cleaner and easier to understand compared to the original construction in \cite{GVRSP} at least in the following sense: In \cite{GVRSP} the computational indistinguishability arguments and quantum information theoretic arguments are mixed together, which could lead to complicated details \cite{discussionwithVidick}. By separating two types of steps in the reductions explicitly, each step has a relatively simple intuition and there is much less room for complicated details.
	\paragraph{Main results} Among these reductions, we consider the following two results particularly interesting.
	\begin{figure}
		\begin{tikzpicture}
			\node (n1) at (0,0) {RSPV for BB84};
			\node (n2) at (0,-1.3) {RSPV for $\{\frac{1}{\sqrt{2}}(\ket{x_0}+\ket{x_1}):x_0,x_1\in \{0,1\}^m,\text{HW}(x_0\oplus x_1)=1,\tParity(x_0)=0\}$};
			\node (n3) at (0,-2.65) {RSPV for $\{\frac{1}{\sqrt{2}}(\ket{x_0}+\ket{x_1}):x_0,x_1\in \{0,1\}^m,\text{HW}(x_0\oplus x_1)=1,\tParity(x_0)=0\}^{\otimes n}$};
			\node(n3b) at (0,-3.3) {{\color{purple}\large\textbf{+}}};
			\node (n4p) at (0,-4.3) {\begin{tabular}{c}IT-core for preparing $\{\frac{1}{\sqrt{2}}(\ket{x_0^{(1)}||x_0^{(2)}||\cdots ||x_0^{(n)}}+\ket{x_1^{(1)}||x_1^{(2)}||\cdots ||x_1^{(n)}})$\\$:\forall i\in [n],x_0^{(i)},x_1^{(i)}\in \{0,1\}^m,\text{HW}(x_0^{(i)}\oplus x_1^{(i)})=1;\tParity(x_0^{(1)})=0\}$\end{tabular}};
			\node (n4) at (0,-6.3) {\begin{tabular}{c}RSPV for $\{\frac{1}{\sqrt{2}}(\ket{x_0^{(1)}||x_0^{(2)}||\cdots ||x_0^{(n)}}+\ket{x_1^{(1)}||x_1^{(2)}||\cdots ||x_1^{(n)}})$\\$:\forall i\in [n],x_0^{(i)},x_1^{(i)}\in \{0,1\}^m,\text{HW}(x_0^{(i)}\oplus x_1^{(i)})=1;\tParity(x_0^{(1)})=0\}$\end{tabular}};
			\node (n5) at (0,-8) {RSPV for $\{\frac{1}{\sqrt{2}}(\ket{0}\ket{x_0}+\ket{1}\ket{x_1}):x_0,x_1\in \{0,1\}^n\}$};
			\node(n3b) at (0,-8.7) {{\color{purple}\large\textbf{+}}};
			\node (n6p) at (0,-9.4) {IT-core for prparing $\ket{+_\theta},\theta\in \{0,1\cdots 7\}$};
			\node (n6) at (0,-10.8) {RSPV for $\ket{+_\theta},\theta\in \{0,1\cdots 7\}$};
			\draw[->,blue] (n1) -- (n2);
			\draw[->,blue] (n2) -- (n3);
			%\draw[->,purple] (n3) -- (n4p);
			\draw[->,blue] (n4p) -- (n4);
			\draw[->,blue] (n4) -- (n5);
			%\draw[->,purple] (n5) -- (n6p);
			\draw[->,blue] (n6p) -- (n6);
		\end{tikzpicture}
	\caption{The reduction diagram. Here {\color{blue}$\rightarrow$} means this step either has a simple intuition or is from the framework that we formalize in Section \ref{sec:3}; {\color{purple}\textbf{+}} means the analysis of the IT-core below it is purely quantum information theoretic, and the RSPV above it is used to compile the IT-core to a full cryptographic protocol. HW is the hamming weight, Parity is the total parity.}\label{fig:2}
	\end{figure}
	\begin{thm}\label{thm:1.2p}
		Assuming the existence of RSPV for BB84 states, there exists an RSPV for state family $\{\frac{1}{\sqrt{2}}(\ket{0}\ket{x_0}+\ket{1}\ket{x_1}),x_0,x_1\in \{0,1\}^n\}$.
	\end{thm}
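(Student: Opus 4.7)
The plan is to reduce to the RSPV for the big concatenated block state (the preceding item in Figure~\ref{fig:2}) via the observation that any target state $\frac{1}{\sqrt{2}}(\ket{0}\ket{x_0}+\ket{1}\ket{x_1})$ can be obtained from a canonical seed state $\ket{+}\otimes\ket{0^n}$ by a Clifford circuit whose description depends only on the classical pair $(x_0,x_1)$ chosen by the client. Concretely, setting $z:=x_0\oplus x_1$, one applies CNOT from the first qubit to the $(1{+}i)$-th qubit for each $i$ with $z_i=1$, followed by $X^{x_{0,i}}$ on each bit $i$ of the register; a direct computation shows this yields $\frac{1}{\sqrt{2}}(\ket{0}\ket{x_0}+\ket{1}\ket{x_1})$.

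The second ingredient is to extract this seed from the big state $\frac{1}{\sqrt{2}}(\ket{A}+\ket{B})$ guaranteed by the previous-layer RSPV, where $A=x_0^{(1)}||\cdots||x_0^{(n)}$ and $B=x_1^{(1)}||\cdots||x_1^{(n)}$. Let $S\subseteq [nm]$ consist of the $n$ block-wise difference positions (exactly one per block, by the hamming-weight constraint), and note that the parity condition $\tParity(x_0^{(1)})=0$ lets the client canonically identify $A$ vs.\ $B$. On the complement $\bar S$ the two branches agree, so applying $X^{A_i}$ for $i\in\bar S$ clears those qubits to $\ket{0}$ and disentangles them. On $S$, writing $a:=A|_S$ so that $B|_S=\bar a$, applying CNOTs from the first qubit of $S$ to the remaining $n-1$ yields $\ket{+}\otimes\ket{a_1\oplus a_2,\dots,a_1\oplus a_n}$, after which $X$-corrections leave $\ket{+}\otimes\ket{0^{n-1}}$. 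The net effect converts the big state into $\ket{+}\otimes\ket{0^{nm-1}}$, from which a designated $n+1$ qubits serve as the seed and the remaining $nm-n-1$ qubits are garbage in $\ket{0}$.

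The composed protocol is then: (1)~the client samples $(x_0,x_1)$ from the target distribution; (2)~invokes the underlying RSPV and learns $(A,B)$; (3)~sends classical instructions for the cleanup Cliffords above, producing $\ket{+}\otimes\ket{0^{nm-1}}$; (4)~sends classical instructions for the $(x_0,x_1)$-dependent CNOT-and-$X$ circuit that produces the target on the $n+1$ designated qubits. Honest correctness is immediate from the two calculations. For soundness I invoke the sequential composability developed in Section~\ref{sec:3}: the post-processing is a publicly specified, deterministic server-side channel, so the simulator guaranteed by the underlying RSPV composed with these public Clifford steps yields a simulator for the new protocol; if the underlying RSPV carries only inverse-polynomial soundness error, the amplification result of Section~\ref{sec:3} upgrades it to negligible.

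The main subtlety I anticipate is controlling the \emph{garbage register}: a dishonest server could stash information in the $nm-n-1$ qubits that were supposed to be disentangled in $\ket{0}$, thereby preserving secret entanglement with the target qubits and violating the purity implicit in RSPV soundness. Two natural remedies are to add a check in which the server measures the garbage in the computational basis and reports the all-zero outcome (using the underlying RSPV's soundness on the original big state to conclude the test passes only when the server was close to honest on those qubits), or to fold the garbage into the adversary's internal register inside the simulator and appeal to the simulation-based definition, which already tolerates arbitrary disentangled adversary-side state. Verifying that the chosen route dovetails cleanly with the precise RSPV definition fixed in Section~\ref{sec:2} and with the composability and amplification theorems of Section~\ref{sec:3} is the step requiring the most care.
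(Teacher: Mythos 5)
There is a genuine gap, and it is fatal to the approach rather than a fixable detail. In steps (3) and (4) of your protocol the client \emph{sends the server classical instructions that determine the state's full description}: revealing $A$ together with the difference set $S$ determines both branches $A$ and $B$ of the underlying big state, and revealing the $(x_0,x_1)$-dependent CNOT-and-$X$ circuit reveals $z=x_0\oplus x_1$ and $x_0$, hence the entire key pair. After your cleanup the server holds the publicly known state $\ket{+}\otimes\ket{0^{nm-1}}$ and then applies a publicly known circuit, so it ends up holding $\frac{1}{\sqrt 2}(\ket{0}\ket{x_0}+\ket{1}\ket{x_1})$ \emph{together with a classical transcript containing $(x_0,x_1)$}. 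This cannot satisfy the soundness definition (Definition \ref{defn:3.2} / Set-up \ref{setup:2}): there the simulator acts only on $\bS,\bQ,\bflag$ and receives the quantum half of $\rho_{tar}$, i.e.\ a single copy of $\ket{\psi_{x_0,x_1}}$ with $\bD=(x_0,x_1)$ held by the client; no operation on that single copy can produce a classical record of \emph{both} $x_0$ and $x_1$ correlated with $\bD$ (a measurement yields one branch, and there are $2^{2n}$ key pairs versus a $2^{n+1}$-dimensional register), so a distinguisher that compares the server-side transcript against $\bD$ separates real from ideal with probability close to $1$. The issue is not the garbage register you flag at the end --- it is that the honest protocol itself leaks the keys, so the primitive you build is not an RSPV for this family and is also useless for the downstream applications (which rely precisely on the server not knowing the claw).

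The paper's route through $\fKP$ (Protocol \ref{prtl:10}) avoids exactly this: the final keys are \emph{not} sampled by the client and programmed into the state; they are extracted from randomness already hidden inside the $\fMultiBlock$ state, namely the position of the single differing bit in each block. The client reveals only information independent of those final key bits --- one branch string per block, alternating which branch ($x_0^{(2i-1)}$ and $x_1^{(2i)}$), plus all bits of the binary encoding of the differing position \emph{except the first} --- so the server can locally reshape the state into $\frac{1}{\sqrt 2}(\ket{0}\ket{x_0}+\ket{1}\ket{x_1})$ while the remaining unrevealed bits stay uniformly random and unknown to it. Soundness then follows because everything the client discloses can be regenerated by the simulator from the target state alone. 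If you want to keep a ``transform a seed by a key-dependent Clifford'' picture, the transformation must be one the server can perform without learning the key, which is precisely what forces the paper's more delicate reveal-and-reencode step (and, one level down, the parity-measurement IT-core of $\fMultiBlock$) rather than a public circuit.
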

	This is an RSPV protocol for a new state family. Note that although there are plenty of works \cite{BCMVV,BKVV} that allow the client to prepare these states with some other types of security (like claw-freeness, etc), as far as we know, this is the first time that an RSPV for it is constructed.\par
	We also recover the results of RSPV for 8-basis states \cite{GVRSP}. 
	\begin{thm}\label{thm:1.2}
		Assuming the existence of RSPV for BB84 states, there exists an RSPV for state family $\{\ket{+_\theta}:=\frac{1}{\sqrt{2}}(\ket{0}+e^{\mi\pi\theta/4}\ket{1}),\theta\in \{0,1,2\cdots 7\}\}$.
	\end{thm}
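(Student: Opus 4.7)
The plan is to derive Theorem \ref{thm:1.2} as the final step of the reduction chain depicted in Figure \ref{fig:2}: I would combine the RSPV for the family $\{\frac{1}{\sqrt{2}}(\ket{0}\ket{x_0}+\ket{1}\ket{x_1}): x_0,x_1\in\{0,1\}^n\}$ provided by Theorem \ref{thm:1.2p} with a small information-theoretic core (IT-core) protocol that converts a single honestly-prepared instance of that state into a $\ket{+_\theta}$ state with $\theta\in\{0,1,\ldots,7\}$. The overall RSPV protocol would thus consist of two phases: first, run the Theorem \ref{thm:1.2p} protocol to prepare a sample $\frac{1}{\sqrt{2}}(\ket{0}\ket{x_0}+\ket{1}\ket{x_1})$ on the server with $x_0,x_1$ held by the client; second, run the IT-core on this state to distill $\ket{+_\theta}$.

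For the IT-core, my approach is to exploit the fact that a suitable unitary (for instance, an inverse quantum Fourier transform on $n=3$ qubits) followed by a standard-basis measurement on the second register of $\frac{1}{\sqrt{2}}(\ket{0}\ket{x_0}+\ket{1}\ket{x_1})$ leaves the first qubit in a state of the form $\frac{1}{\sqrt{2}}(\ket{0}+e^{\mi\pi k/4}\ket{1})=\ket{+_k}$, where $k$ is a deterministic function of $(x_0,x_1)$ and the measurement outcome $y$ that the client can compute. By choosing $(x_0,x_1)$ so that $x_1-x_0$ is invertible modulo $8$, the induced distribution on $\theta$ ranges over $\{0,1,\ldots,7\}$ in a controllable way. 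To protect against an adversarial server that departs from the prescribed measurement, I would interleave test rounds in which the client instead requests a computational-basis measurement of the full state and compares the outcome against the known pair $\{x_0,x_1\}$; a deviation in the compute rounds that shifts the first-qubit state outside the target family must be accompanied by a detectable failure in the test rounds, by a standard swap-style argument applied at the information-theoretic level.

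The compilation step then invokes the framework of Section \ref{sec:3}: sequential composability lets me chain the Theorem \ref{thm:1.2p} RSPV with the IT-core while preserving the simulation-based soundness of the composite protocol. The RSPV simulator delivers a state on the server's side that is close to the ideal input of the IT-core, and the IT-core's information-theoretic soundness guarantees that this closeness is preserved through its measurements and classical post-processing, yielding a simulator for the outer protocol whose output is close to the ideal distribution over $\ket{+_\theta}$. If the resulting soundness error is not small enough, the amplification lemma of Section \ref{sec:3} can be applied to sharpen it to a negligible function.

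The main obstacle lies in the IT-core analysis itself, specifically in arranging the test/compute rounds so that on every compute-round outcome the client can consistently decode a $\theta$ such that the first qubit is either exactly in the target family $\{\ket{+_\theta}\}$ or provably caught by the test rounds, and in verifying that this information-theoretic guarantee composes cleanly with the approximate input supplied by Theorem \ref{thm:1.2p}. The abstract composability lemmas of Section \ref{sec:3} are tailored to exactly this IT-core-plus-RSPV pattern, so once the IT-core and its simulator are in place, the remainder of the proof reduces to a routine invocation of the framework, mirroring the ``\textcolor{purple}{\textbf{+}}'' step in Figure \ref{fig:2}.
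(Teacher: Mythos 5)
Your high-level decomposition matches the paper's: Theorem \ref{thm:1.2} is indeed proved by starting from the $\fKP$ protocol of Theorem \ref{thm:1.2p}, running an information-theoretic core that converts $\frac{1}{\sqrt{2}}(\ket{0}\ket{x_0}+\ket{1}\ket{x_1})$ into $\ket{+_\theta}$, and then invoking the composition and amplification machinery of Section \ref{sec:3}. Your honest-party transformation (a Fourier-type measurement of the second register, with the client decoding $\theta$ from $x_0\oplus x_1$ and the reported outcome) is also in the same spirit as the paper's mechanism, which uses controlled-phase gates keyed to the first two bits of $x_0,x_1$ to inject $\theta_{2,3}$ and a bitwise Hadamard measurement to produce $\theta_1$.

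However, there is a genuine gap in your soundness argument: the test you propose cannot certify the target states. You interleave test rounds in which the client requests a \emph{computational-basis} measurement of the full state and checks the outcome against $\{x_0,x_1\}$. This test is insensitive to the relative phase, which is the only thing that distinguishes the eight states $\ket{+_\theta}$ from one another and from incoherent mixtures. Concretely, a server that immediately measures its $\fKP$ output in the computational basis, obtaining $\ket{b}\ket{x_b}$, passes your test with probability $1$ yet holds a first qubit equal to $\ket{0}$ or $\ket{1}$ rather than anything in the target family; no ``swap-style argument'' can rescue a test that a maximally dishonest strategy passes perfectly. (Note also that the input to the IT-core is already certified by the $\fKP$ soundness, so re-checking its computational-basis support adds nothing.) What is actually needed, and what the paper does in $\fQFacTest$, is a phase-sensitive challenge on the \emph{output} qubit: the client sends a random $\varphi\in\{0,\dots,7\}$, the server measures in the basis $\{\ket{+_\varphi},\ket{+_{\varphi+4}}\}$, and the client records a score whose optimal value $\frac{1}{2}+\frac{1}{2}\cos^2(\pi/8)$ is achievable only by (an isometric image of) the honest states. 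The analysis of this test is the real content of the step: it requires first proving that $\theta_{2,3}$ is information-theoretically basis-blind given the exact $\fKP$ state (Lemma \ref{lem:bst}), and then invoking the blind self-testing/anticommutation machinery (Theorem \ref{thm:qfacbeforec}, Section \ref{sec:a.3}) to conclude rigidity from near-optimal winning. Your proposal does not engage with either ingredient. Two smaller issues: the $\fKP$ keys $x_0,x_1$ are sampled uniformly rather than chosen, so you cannot simply ``choose $(x_0,x_1)$ so that $x_1-x_0$ is invertible modulo $8$''; and the final amplification (Protocol \ref{prtl:3r} applied to a preRSPV-with-score) yields inverse-polynomial, not negligible, soundness error --- negligible error for RSPV is explicitly open.
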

	\paragraph{Instantiation of RSPV for BB84}We note that we still need to instantiate the RSPV for BB84 part to get a concrete protocol. Luckily the RSPV for BB84 has been studied relatively thoroughly: there are multiple constructions \cite{GMP,BGKPV23} and we have a better understanding on the assumptions needed \cite{BGKPV23}. After instantiating the BB84 part by \cite{BGKPV23}, we get RSPV constructions for these state families from weak NTCF, without requiring the adaptive hardcore bit property (elaborated below).\par
	\subsubsection{Application: CVQC from cryptographic group actions}\label{sec:1.3.4}
	Now we apply our results to the classical verification of quantum computations (CVQC) problem. This part is in Section \ref{sec:6}.\par
	 As the preparation, we give a more detailed review on the variants of NTCF, and their relations to RSPV. We refer to Section \ref{sec:1.2.4} for a CVQC-centric background. 
	\paragraph{More backgrounds} Noisy trapdoor claw-free functions (NTCF) is a popular and powerful primitive in quantum cryptographic tasks like CVQC and RSPV. Informally, this primitive is defined to be a function family that satisfies the following requirements. Below we use $f$ to denote a function sampled from this function family.
	\begin{itemize}\item Trapdoor: this means that all the parties could evaluate $f$, but only the client, who holds the ``trapdoor'' information, could invert $f$.
		\item Noisy 2-to-1: the un-noisy 2-to-1 means that for each $y$ in the range there exist exactly two preimages $x_0,x_1$ such that $f(x_0)=f(x_1)=y$. ``Noisy'' means that the evaluation of the function could be randomized, which makes the instantiation of the primitive easier.\par
		Below when we describe other properties we use the un-noisy version to simplify the description.
		\item  Claw-free: the adversary could not efficiently find $x_0,x_1$ such that $f(x_0)=f(x_1)$.
	\end{itemize}
	In practice, we often use some variants of NTCF instead of the standard requirements above, to either make it more powerful or make the instantiation easier. Popular variants or additional requirements of NTCF include:
	\begin{itemize}
		\item Adaptive hardcore bit: the adversary could not find $(d,y)$ with probability better than $\frac{1}{2}$ (the probability of random guessing) such that $d\cdot x_0\equiv d\cdot x_1\mod 2\land d\neq 0$, where $x_0, x_1$ are two preimages of $y$.\par
		The RSPV for BB84 states is known to exist without this property \cite{BGKPV23}, but Mahadev's constructions for CVQC \cite{MahadevVerification} and RSPV for  $\ket{+_\theta}$ \cite{GVRSP} require this property.
		\item Extended function family: there is another function family $g$ that is injective (instead of 2-to-1) and indistinguishable to $f$. Existing constructions for CVQC and RSPV for $\ket{+_\theta}$ \cite{MahadevVerification,GVRSP} also require this property.
		\item Inverse-polynomial correctness error (or weak correctness): this means that the 2-to-1 property is allowed to hold up to an inverse-polynomial error. This is used in \cite{AMR22} to define a primitive called weak TCF.
	\end{itemize}
	%The adaptive hardcore bit property and extended NTCF variants make the assumption relatively harder to instantiate but possibly more powerful; the noisy property and the weak correctness make the assumption easier to instantiate.\par
	%When we use NTCF, we often need to assume the NTCF satisfies some additional properties like the adaptive hardcire bit property (or other variants). 
	%Prior to this work, RSPV for BB84 states is known to exist from NTCF without assuming adaptive hardcore bit property \cite{BGKPV23}, but RSPV for $\ket{+_\theta}$ and CVQC from plain NTCF are open.
	%\paragraph{Weak NTCF}
	In our work we build our protocols on weak NTCF, that is, we allow inverse-polynomial correctness error and do not require additional properties like the adaptive hardcore bit property. In other words, we only use a relatively weak assumption from different variants of NTCF/TCF. This assumption could be instantiated from either LWE \cite{BCMVV} or assumptions on group actions \cite{AMR22}.%We show that we can build a series of quantum cryptographic primitives from weak NTCF, and instantiate weak NTCF from assumptions that are plausibly incomparable to the LWE assumption. Below we elaborate our results.
	\paragraph{Applications of our results on CVQC}
	We first note that, \cite{BGKPV23} could be adapted easily to weak NTCF:
	\begin{fact}\label{fact:1}
		By \cite{BGKPV23}, there exists an RSPV for BB84 assuming weak NTCF.
	\end{fact}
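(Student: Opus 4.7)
The plan is to show that the BGKPV23 construction, originally analyzed assuming a noisy 2-to-1 trapdoor claw-free family with negligible correctness error, extends to the weak NTCF setting by tracking where exact 2-to-1-ness is used and then absorbing the inverse-polynomial slack via the amplification machinery of Section \ref{sec:3}.

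First, I would revisit the BGKPV23 protocol at a structural level. The construction follows the Mahadev-style template: the client samples an NTCF $f$ with a trapdoor, the server evaluates $f$ on a uniform superposition and measures the image register to obtain $y$ together with a preimage-superposition state on the remaining register, and the client then runs cut-and-choose test rounds (preimage tests and Hadamard/equation tests derived from the NTCF structure) to certify the server's state, extracting an RSPV for BB84 states from the accepting transcripts. The crucial observation is that BGKPV23's soundness analysis uses only the claw-free property and the noisy 2-to-1 structure (not the adaptive hardcore bit, and not the extended injective family), so the only place where the stronger NTCF assumption enters is the correctness side.

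Second, I would trace the inverse-polynomial error through the protocol. With weak NTCF, the state the honest server produces after the image measurement is only $1/\fpoly$-close in trace distance to the ideal $\frac{1}{\sqrt 2}(\ket{x_0}+\ket{x_1})$, and so an honest execution of BGKPV23's protocol yields an RSPV for BB84 whose correctness and soundness errors degrade from negligible to $1/\fpoly$. Critically, the simulator constructed in BGKPV23's soundness proof is defined relative to the ideal preimage-superposition; replacing it with its noisy counterpart perturbs the simulated cq-state by at most the same $1/\fpoly$ quantity (by the operator/trace-distance preservation of subsequent channels), so the resulting object still satisfies the RSPV simulation-based soundness but with inverse-polynomial rather than negligible error. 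In other words, BGKPV23 plugged with weak NTCF directly yields an RSPV for BB84 with $1/\fpoly$ error on both the correctness and soundness sides.

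Third, I would invoke the amplification results for RSPV established in Section \ref{sec:3}. Since BB84 is a tensor-product state family and the amplification framework takes an RSPV with inverse-polynomial error to an RSPV with negligible error in a black-box manner, applying it to the above weak-NTCF-based protocol produces a full RSPV for BB84 from weak NTCF, as claimed. The main obstacle I anticipate is the bookkeeping in the second step: one must verify that the noise does not interact badly with the specific hybrid arguments in BGKPV23's soundness reduction (in particular, that the indistinguishability hops invoking claw-freeness still compose when the honest reference state is only approximately 2-to-1), and that the error accumulated over all rounds remains $1/\fpoly$ rather than blowing up. Once this is confirmed, amplification closes the gap and the fact follows.
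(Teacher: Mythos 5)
There is a genuine gap, and it sits exactly where the paper says the real work is. Your proposal treats \cite{BGKPV23} as if it already delivers an RSPV for BB84 with a simulation-based soundness proof, so that the only task is to propagate the weak-NTCF inverse-polynomial slack through an existing simulator. But \cite{BGKPV23} proves a \emph{test of a qubit}: its soundness is stated as an anticommutation relation on the server's measurement operators (roughly $\tr(\{X_0,X_2\}^2\rho)\leq\fpoly(\delta)$), and the certified qubit is measured and destroyed at the end of the test. There is no simulator ``defined relative to the ideal preimage-superposition'' to perturb. The paper's proof of this fact is precisely the translation you skip: (i) view the test-of-a-qubit protocol as a preRSPV-with-score whose comp mode stops before the destructive measurement, so the qubit survives; (ii) convert the operator-level guarantee into a state-level one by extracting an efficient isometry from the anticommutation relation (Theorem \ref{thm:antiiso}), using near-optimal winning probability to pin down the measured qubit (Lemma \ref{lem:a.7}), and using computational basis-blindness (Theorem \ref{thm:b1} together with Lemma \ref{lem:a.9}) to get indistinguishability of the residual states across different $\theta$, which yields the simulation-based soundness; (iii) amplify the resulting preRSPV-with-score to an RSPV via Section \ref{sec:3.5.2}. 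Without step (ii) your argument never produces the simulator that Definition \ref{defn:3.2} requires.

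A secondary issue is your final step. The amplification machinery of Section \ref{sec:3.5} converts a preRSPV (or preRSPV-with-score) into an RSPV by cut-and-choose over test and comp rounds; it does \emph{not} take an RSPV with inverse-polynomial soundness error to one with negligible error. Indeed the paper explicitly restricts attention to inverse-polynomial approximation error for RSPV soundness (negligible error is stated as open), and Fact \ref{fact:1}, as instantiated in Theorem \ref{thm:6.1}, only claims $\mu$-completeness and $\epsilon$-soundness for inverse-polynomial $\mu,\epsilon$. So the closing move of your plan both invokes the wrong tool and aims at a stronger conclusion than is needed or available. Your observation that weak NTCF only degrades the correctness side (the noisy 2-to-1 property) and that the soundness analysis avoids the adaptive hardcore bit is correct and matches the paper, but it is the easy part of the proof.
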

	As discussed in Section \ref{sec:1.3.3}, this gives us a series of RSPV protocols from weak NTCF.\par
	Then by \cite{FKD}, if the client samples and sends a series of $\ket{+_\theta}$ states to the server, it could do quantum computation verification using these states. Combining Fact \ref{fact:1}, Theorem \ref{thm:1.2} and \cite{FKD}, we get:
	\begin{thm}
		Assuming the existence of weak NTCF, there exists a classical verification of quantum computation protocol.
	\end{thm}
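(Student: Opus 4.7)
The plan is to chain together three ingredients: Fact~\ref{fact:1} (RSPV for BB84 from weak NTCF), Theorem~\ref{thm:1.2} (RSPV for $\ket{+_\theta}$ from RSPV for BB84), and the Fitzsimons--Kashefi--Drouot \cite{FKD} gadget-assisted verification protocol, which is a verification protocol for BQP computations whose only quantum-communication step is the delivery of a polynomial number of $\ket{+_\theta}$ states (for $\theta\in\{0,1,\ldots,7\}$) from the client to the server. Our CVQC protocol is obtained by taking the \cite{FKD} protocol and replacing that single quantum-communication step with an RSPV protocol for $\ket{+_\theta}$ states; every other message in \cite{FKD} is already classical, so the compiled protocol is fully classical and runs in polynomial time.

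First, I would apply Fact~\ref{fact:1} to obtain an RSPV for BB84 under the weak NTCF assumption, then apply Theorem~\ref{thm:1.2} to lift this to an RSPV for the single-qubit family $\{\ket{+_\theta}\}_{\theta\in\{0,\ldots,7\}}$. Since \cite{FKD} requires a tensor product of many $\ket{+_\theta}$ states with independent random $\theta$'s, the next step is to invoke the sequential composability property established in Section~\ref{sec:3}, which yields an RSPV protocol for $\{\ket{+_\theta}\}^{\otimes n}$ directly from the single-copy version. At this point we have a fully classical subprotocol that, by the simulation-based soundness definition, is indistinguishable (from the joint view of the client and any malicious server) from the ideal process in which the server receives the $n$-fold tensor product of random $\ket{+_\theta}$ states through a quantum channel and the client records the description $\theta_1,\ldots,\theta_n$.

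For correctness and soundness of the full CVQC protocol, I would argue as follows. Correctness is immediate: honest execution of the compiled protocol produces exactly the same distribution on the accept bit and output as honest execution of \cite{FKD} with a genuine quantum channel, up to the negligible correctness error of RSPV. For soundness, fix any efficient malicious server $\sA$ in the compiled protocol. The simulation-based soundness of RSPV guarantees the existence of a simulator that, on the ideal state (the cq-state where the client holds $(\theta_1,\ldots,\theta_n)$ and the server holds $\bigotimes_i \ket{+_{\theta_i}}$), reproduces the real joint state up to negligible trace distance (conditioned on non-abort). Plugging this simulator into $\sA$ yields an adversary against the original \cite{FKD} protocol with the same (up to negligible difference) accept probability and output distribution; the soundness of \cite{FKD} then bounds the probability that the client accepts an incorrect output.

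The step I expect to require the most care is the composition between the RSPV simulator and the remainder of the \cite{FKD} protocol, since \cite{FKD} continues to interact with the server \emph{after} the states are delivered, and the server's post-RSPV state is entangled with its internal workspace. What makes this go through cleanly is precisely the simulation-based (rather than rigidity-based) formulation advocated in Section~\ref{sec:1.3.1}: the simulator produces a state on the full client--server cut, so one can substitute it inside any downstream quantum polynomial-time context, including the classical messages and final measurement of \cite{FKD}. The only genuine obligation is to verify that \cite{FKD}'s soundness statement is phrased (or can be rephrased) against adversaries that receive an arbitrary joint input state consistent with the ideal $\ket{+_\theta}^{\otimes n}$ marginal on the server, which is standard for measurement-based verification protocols and follows from the blindness structure underlying \cite{FKD}.
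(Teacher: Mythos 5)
Your overall route is exactly the paper's: obtain RSPV for BB84 from weak NTCF (Fact~\ref{fact:1}/Theorem~\ref{thm:6.1}), lift to RSPV for $\ket{+_\theta}$ via Theorem~\ref{thm:1.2}, take sequential compositions to get the $n$-fold tensor product, and then use the protocol-compiler property of Section~\ref{sec:3.4.2} to replace the quantum-communication step of \cite{FKD}. Your discussion of why simulation-based (rather than rigidity-based) soundness is what makes the substitution into the downstream interactive protocol go through is also the right observation.

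There is, however, one concrete inaccuracy that leaves a gap: you repeatedly assert that the RSPV simulator reproduces the real state up to \emph{negligible} trace distance and that the correctness error of RSPV is negligible. In this framework that is false. The soundness of every RSPV in the paper is only $\epsilon$-soundness for an inverse-polynomial approximation parameter $\epsilon$ (Definition~\ref{defn:3.2}; the paper explicitly notes that negligible approximation error for RSPV remains open), and the instantiation from weak NTCF additionally carries an inverse-polynomial completeness error $\mu$ inherited from the correctness error of the weak NTCF (Theorem~\ref{thm:6.1}). Consequently the compiled protocol does not have "the same up to negligible difference" acceptance probability as \cite{FKD}; it has completeness and soundness errors that are shifted by inverse-polynomial amounts. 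The argument still closes, but you must add the final step the paper uses: choose $\mu$ and $\epsilon$ small enough relative to the (constant or $1/\fpoly$) completeness--soundness gap of \cite{FKD} so that a significant gap survives the compilation, and then amplify that gap to exponentially small error by sequential repetition of the whole CVQC protocol. Without this, the statement "there exists a CVQC protocol" (which implicitly demands a genuine, amplifiable completeness--soundness separation) is not yet established by your write-up.
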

	Finally we recall the results in \cite{AMR22}:% work on the instantiation of weak NTCF. The fact that we allow weak correctness and do not require additional properties (like the adaptive hardcore bit or extended function family) makes the instantiation much easier. By adapting the existing approaches \cite{BCMVV,AMR22}, we are able to construct it from either low-noise LPN (with noise ratio $n^{-\frac{3}{4}}$) \cite{LPN} or CSIDH \cite{csidh}. Thus we have:
	\begin{thm}[\cite{AMR22}]
		Under certain assumptions on cryptographic group actions, there exists a family of weak TCF.
	\end{thm}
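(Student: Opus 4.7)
The plan is to instantiate a weak TCF family using a cryptographic group action $(G,X,\star)$, concretely of the type provided by class group actions on supersingular elliptic curves (CSIDH/CSI-FiSh) together with the standard vectorization hardness assumption. The construction follows the template of NTCF while replacing the LWE-based two-to-one structure by the two orbits of a pair of anchor points.

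First, I would specify the key generation and function family. Sample a public anchor $x_0\in X$ and a secret $s\in G$, publishing the public key $(x_0,x_1)$ with $x_1:=s\star x_0$ and keeping $s$ as the trapdoor. Define the two-branch function by $f_b(g):=g\star x_b$ for $b\in\{0,1\}$ and $g\in G$. Since the action is free and transitive on the orbit of $x_0$, every image $y$ has a unique preimage under $f_0$ and a unique preimage under $f_1$; by commutativity of the action the two preimages of the same $y$ differ exactly by $s$, giving the desired claw structure. The trapdoor property is immediate: with $s$ in hand, given any $(b,g)$ with $y=g\star x_b$, the other preimage is $(1-b,g\cdot s^{(-1)^{b}})$, so the client can invert.

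Next I would address claw-freeness. Any adversary producing $(g_0,g_1)$ with $g_0\star x_0=g_1\star x_1=g_1\star (s\star x_0)$ lets one read off $s=g_1^{-1}g_0$ (using the abelian action), which is precisely the vectorization problem. Hence claw-freeness reduces in a clean, black-box way to the assumed hardness of the group-action discrete log.

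The main obstacle, and the reason the result is phrased in terms of \emph{weak} TCF rather than full NTCF, is that in realistic group actions one cannot sample $g\in G$ from the exactly uniform distribution: in CSIDH the structure of the class group is unknown and one samples from a short-vector-based distribution that is only statistically close to uniform, while even in CSI-FiSh the sampling is not exact. Because of this, the output distributions of $f_0$ and $f_1$ on a uniformly chosen random coin do not match perfectly, and the two-to-one / invertibility structure holds only up to an inverse-polynomial statistical error. I would handle this by (i) explicitly writing the induced distribution on $X$ as the pushforward of the sampling distribution under the action, (ii) bounding its distance to the orbit-uniform distribution by an inverse polynomial using the near-uniformity guarantees of the sampler (or of the known basis of $G$ in CSI-FiSh), and (iii) absorbing this slack into the inverse-polynomial correctness-error budget allowed by the weak TCF definition. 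Everything else — efficient evaluation, the superposition-preparation routine needed on the quantum side, and the trapdoor inversion — is routine once the distributional analysis is in place.
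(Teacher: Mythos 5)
There is a genuine gap, and it sits at the heart of your construction: the \emph{trapdoor} (decoding) property. In the weak (N)TCF definition used here (Definition \ref{defn:ntcf}), $\fDc_\sk$ must take $(b,y)$ alone and output the preimage of $y$ under the branch-$b$ function. In your scheme $f_b(g)=g\star x_b$, so $\fDc_\sk(0,y)$ must recover $g$ from $y=g\star x_0$ — but that is exactly the vectorization (group-action discrete log) problem whose hardness you are simultaneously assuming for claw-freeness. The secret $s$ with $x_1=s\star x_0$ relates the two anchors to each other; it gives no leverage whatsoever for inverting the action on a given image point. Your "inversion" step quietly assumes you are already handed one preimage $(b,g)$ and only need to compute the other, which is not what the decoder is given (and is not enough for the correctness condition \eqref{eq:64co}, where $\fDc_\sk(0,y)$ and $\fDc_\sk(1,y)$ must both be computed from $y$). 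The absence of any known trapdoor for vectorization in CSIDH-style group actions is precisely the obstruction that forces \cite{AMR22} to use a different template: the domain is a binary vector $\mathbf{v}\in\{0,1\}^n$ (also needed so that $\fEv$ can be run on $\ket{+}^{\otimes\kappa}$ and so that the downstream $d\cdot(x_0\oplus x_1)$ arithmetic makes sense), the public key carries a vector of group elements acting on anchor points, and security rests on the \emph{extended linear hidden shift} assumption rather than plain vectorization hardness. Relatedly, the "weak" correctness error in \cite{AMR22} arises because the resulting function is only approximately 2-to-1 (the map from binary vectors into the orbit is not perfectly injective), not merely from imperfect sampling of group elements as your write-up suggests — in your scheme the free transitive action would make $f_b$ exactly bijective, so the claw structure you want would survive but the trapdoor would not exist at all.

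For calibration: the paper itself does not prove this theorem; it is stated as a verbatim import from \cite{AMR22} under their extended linear hidden shift assumption, with the reader referred there for details. So any self-contained proof goes beyond the paper, but the one you propose would need to be replaced by (or reduced to) the actual \cite{AMR22} construction to be salvageable.
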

	Thus we have:
	\begin{thm}
		Under certain assumptions on  cryptographic group actions, there exists a CVQC protocol.
	\end{thm}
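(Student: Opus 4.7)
The strategy is a straightforward composition of the results already assembled in this introduction, so the main task is to check that the primitives chain together properly and that each reduction is black-box in a way that is compatible with the assumption classes. The plan is to proceed in four linked steps, starting from the group action assumption and ending at a CVQC protocol.

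First, I would invoke the result of \cite{AMR22} to instantiate a family of weak (N)TCF from the relevant cryptographic group action assumption; no additional adaptive hardcore bit property is claimed here, so I only need the weak, inverse-polynomial-correctness-error version of the primitive stated in Section \ref{sec:1.2.4}. Second, I would apply Fact \ref{fact:1}: the construction of \cite{BGKPV23}, adapted to tolerate the inverse-polynomial correctness error, yields an RSPV protocol for BB84 states. Here I should take care that ``RSPV'' means the simulation-based notion fixed in Section \ref{sec:1.3.1}, since the downstream composition results rely on it.

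Third, I would feed this BB84 RSPV into Theorem \ref{thm:1.2} to obtain an RSPV protocol for the eight-basis family $\{\ket{+_\theta}:\theta\in\{0,1,\dots,7\}\}$. By the sequential composability established in Section \ref{sec:3}, the resulting protocol lifts immediately to an RSPV for tensor products $\ket{+_{\theta_1}}\otimes\cdots\otimes\ket{+_{\theta_n}}$ for polynomially many copies, which is the form of state actually needed by the server in the FKD verifier. Finally, I would invoke the gadget-assisted verification protocol of \cite{FKD}: once the client can remotely prepare the required $\ket{+_\theta}^{\otimes n}$ resource state on the server with soundness, the FKD protocol provides CVQC for arbitrary BQP computations, with only classical communication between the client and server in the end-to-end compiled protocol.

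The main obstacle in carrying out this plan is not any single reduction but the interface between them. Concretely, I would need to verify that the simulation-based soundness guarantee produced by our RSPV constructions is strong enough to substitute for the ideal quantum-channel preparation assumed in the FKD analysis; this is where the abstract properties from Section \ref{sec:3}, especially sequential composability and amplification to negligible soundness error, do the heavy lifting. A secondary concern is that \cite{BGKPV23} is originally phrased for NTCF rather than weak NTCF, so I would need to carefully verify Fact \ref{fact:1}, namely that their analysis tolerates the inverse-polynomial correctness error of the \cite{AMR22} instantiation, absorbing this error into the protocol's testing round abort probability without damaging the simulation-based soundness statement. Once these glue steps are in place, the chain
\[
\text{group actions} \;\Rightarrow\; \text{weak NTCF} \;\Rightarrow\; \text{RSPV for BB84} \;\Rightarrow\; \text{RSPV for }\ket{+_\theta} \;\Rightarrow\; \text{CVQC}
\]
yields the theorem.
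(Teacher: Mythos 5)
Your plan is correct and follows exactly the same chain as the paper: cryptographic group actions give weak TCF via \cite{AMR22}, which gives RSPV for BB84 via \cite{BGKPV23} (Fact \ref{fact:1}/Theorem \ref{thm:6.1}), which gives RSPV for $\ket{+_\theta}$ via Theorem \ref{thm:1.2}, which is composed sequentially and plugged into \cite{FKD} as a protocol compiler. One small correction to your "glue" discussion: the paper does not (and cannot, as far as is known) amplify the RSPV soundness error to negligible --- it explicitly only targets inverse-polynomial approximation error --- so the compiled verification protocol has inverse-polynomial completeness and soundness errors, and the final exponentially small error is obtained by sequentially repeating the entire CVQC protocol and exploiting its completeness--soundness gap, not by amplifying the RSPV itself.
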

	As an additional note, the work in \cite{AMR22} is largely on how to deal with the adaptive hardcore bit property; the fact that we do not need the adaptive hardcore bit may help us simplify the analysis or even construction in \cite{AMR22}.
	%Here the LPN assumption is a lattice-based assumption that is different from the LWE assumption. As far as we know, there is currently no known reductions between the security of LPN and LWE (within the parameters that we consider). The CSIDH assumption is an assumption on supersingular isogeny of elliptic curves. Compared to the group-action-based assumptions used in related works like \cite{AMR22,GupteVinod}, we consider CSIDH to be more standard. We also note that there is a famous attack about other isogeny-based cryptosystems \cite{CD23}, but luckily the CSIDH is not affected and could still be used.\par
	%As a conclusion, our results make the CVQC survive under more standard assumptions in case the LWE assumption is broken.
	%Our results plausibly makes the CVQC survive in case the LWE assumption is broken. 
	%\paragraph{CVQC from LPN and CSIDH}
	\subsection{More Related Works}\label{sec:1.4}
	\paragraph{Previous versions} %Comparison to the previous version is hidden for anonymous review.
	The major versions of this works could be roughly described as follows.
	\begin{itemize}
		\item In the initial versions (comming out in around 2023), we develop a simple framework for working on RSPV problem, and raised the notion of ROAV.
		\item The next major versions are prepared for ITCS25 and made public in Nov 2024 (below we call it the ITCS version). Compare to the previous versions, the abstract framework parts (described in Section \ref{sec:1.3.1}, \ref{sec:1.3.2}) of this work are sigfinicantly updated, and the concrete constructions part (described in Section \ref{sec:1.3.3}, \ref{sec:1.3.4}) are completely new.
		\item The current versions have the following changes compared to the ITCS version: in Section \ref{sec:r4} we give explicitly the translation from the results in \cite{BGKPV23} to an RSPV for BB84 states (that is, the proof of Fact \ref{fact:1}). The additional technical works come from the fact that \cite{BGKPV23} describes their protocol as a test of a qubit, which is not the same as an RSPV for BB84 states; so some technical works are needed for translation.\footnote{We thank Kaniuar Bacho, James Bartusek, Yasuaki Okinaka and anonymous reviewers for pointing this out.} We also discuss another approach for achieving CVQC from RSPV for BB84 state in Appendix \ref{app:p2}, based on the results in \cite{morimae20}.\footnote{We thank anonymous reviewer for pointing out this approach.}
	\end{itemize}
	\paragraph{Concurrent works} After the ITCS version is made public, we get aware of two other concurrent works \cite{Bacho2024CompiledNG,Bartusek2024kit} that have overlap with our work. In more detail:
	\begin{itemize}
		\item \cite{Bacho2024CompiledNG}:  this work propose a compiler for transforming non-local games to single-server cryptographic protocols. Interestingly, their approach also leads to a protocol for CVQC from any plain (weak) NTCF, which implies a CVQC protocol from cryptographic group actions. Besides this overlap, the focus of their works and our works are different. (Note that although they also make use of a type of remote state preparation in their work, they do not study RSP with verifiablity; for comparison, our work primarily investigates RSPV.)
		\item \cite{Bartusek2024kit}: this work studies a notion that they call \emph{oblivious state preparation}. Interestingly, their approach also leads to a protocol for CVQC from any plain (weak) NTCF, which implies a CVQC protocol from cryptographic group actions. Besides this overlap, the focus of their works and our works are different. (Note that oblivious state preparation could be seen as a type of remote state preparation, but its soundness is quite different from RSPV. Our work focuses on RSPV, which is different from their work.)
	\end{itemize}
	\paragraph{RSP with other types of security} There are many works about remote state preparation but with other types of security (instead of RSPV). These works may or may not use the name ``remote state preparation''. As examples, \cite{GupteVinod} gives an RSP for the gadgets in \cite{DSS16}; \cite{shmueli22} gives an RSP for the quantum money states. When we only consider the honest behavior, RSP is a very general notion.
	\paragraph{Other state-operation analog} There are also several notions in quantum cryptography and complexity theory that have a state-operation analog. As examples, \cite{RY21} studies the complexity of interactive synthesis of states and unitaries. \cite{JLS} studies pseudorandom states and unitaries. In a sense, the relation of states and unitaries in these works is a ``state-operation analog'', as the RSPV and ROAV in our work.% We feel that such type of similarity might also appear in other problems.
	\subsection{Open Questions and Summary}
	Our work gives rise to a series of open questions; we consider the following two particularly interesting.
	\begin{itemize}
	\item One obvious open question coming out of this work is to give a construction for ROAV. Our work focuses on its definitions and applications in an abstract sense; an explicit construction of ROAV would allow us to instantiate these applications.
	\item For RSPV, although our results give new constructions for new state families, this is still far away from a general solution. Ideally we would like to have an RSPV for each computationally efficient state family. Whether this is possible and how to achieve it remain open.
	%\item We make use of our results on RSPV to get new results on CVQC. Could we find other applications of RSPV?
	\end{itemize}
	In summary, two major part of this work is the framework and the constructions of RSPV protocols. By formulating and choosing notions and studying their abstract properties, we build an abstract framework for RSPV that is sufficiently well-behaved, which enables us to build more advanced, complicated protocols from more elementary, easy-to-construct components. Building on this framework, we construct a series of new RSPV protocols that not only (in a sense) simplify existing results but also cover new state families. Then we combine our results with existing works to show that the CVQC problem could be constructed from assumptions on cryptographic group actions. We also raised a new notion for certifying the server's operations. We consider our results as a solid progress in the understanding of RSPV; hopefully this will lay the foundation for further works.
	\section*{Acknowledgements}
%	Hidden for Review.
 	This work is supported by different fundings at different time during its preparation:
 	\begin{itemize}\item Partially supported by the IQIM, an NSF Physics Frontiers Center 
 (NSF Grant PHY-1125565) with support of the Gordon and Betty Moore 
 Foundation (GBMF-12500028).\item This work is partially done when the author was visiting Simons Institute for Theory of Computing. 
 \item This work is partially supported by Zhongguancun Laboratory.
 	\end{itemize}
 The author would like to thank Kaniuar Bacho, James Bartusek, Yasuaki Okinaka, Thomas Vidick, Zhengfeng Ji, Anne Broadbent, Qipeng Liu and anonymous reviewers for discussions.
 \section{Preparation}\label{sec:2}
	\subsection{Basics of Quantum Information and Mathematics}
	%\subsection{Basic notions and notations}
	We refer to \cite{NielsenChuangs} for basics of quantum computing, and refer to \cite{KLtextbook} for basics of cryptography. In this section we clarify basic notations and review basic notions.
	\subsubsection{Basic mathematics}
	\begin{nota}
		We use $[m]$ to denote $\{1,2,\cdots m\}$ and use $[0,m]$ to denote $\{0,1,2,\cdots m\}$. We use $1^n$ to denote the string $\underbrace{111\cdots 1}_{\text{for $n$ times}}$. We use $S\backslash T$ to denote the set difference operation. We use $|S|$ or $\tsize(S)$ to denote the size of a set $S$. We use $\leftarrow_r$, or simply $\leftarrow$, to denote ``sample from''. $|\cdot |$ refers to the Euclidean norm by default when it's applied on a vector. We use $\bN$ to denote the set of positive integers, use $\bZ$ to denote the set of integers, and use $\bZ_q$ to denote $\bZ/q\bZ$. 
	\end{nota}
	Below we review famous bounds in probability \cite{Azuma}.
	\begin{fact}[Markov inequality]
		Suppose $X$ is a non-negative random variable. Then
		$$\Pr[X\geq a]\leq \bE[X]/a.$$
	\end{fact}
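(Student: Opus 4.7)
The plan is to prove Markov's inequality by the standard indicator-function argument, which is the shortest and most transparent route. First I would introduce the indicator random variable $\bbI_{\{X \geq a\}}$ and establish the pointwise inequality $a \cdot \bbI_{\{X \geq a\}} \leq X$ on the entire sample space. Then I would take expectations on both sides, using linearity of expectation together with the identity $\bE[\bbI_{\{X \geq a\}}] = \Pr[X \geq a]$, to deduce $a \Pr[X \geq a] \leq \bE[X]$, and finally divide through by $a$ (assuming $a>0$; the inequality is trivial for $a \leq 0$).

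The one nontrivial checkpoint is verifying the pointwise bound $a \cdot \bbI_{\{X \geq a\}} \leq X$. I would split into two cases: on the event $\{X \geq a\}$ the left side equals $a$ while the right side is at least $a$ by definition of the event; on the complementary event $\{X < a\}$ the left side equals $0$, and the right side is at least $0$ by the non-negativity hypothesis on $X$. This is the unique spot where the hypothesis $X \geq 0$ enters the argument, so I would flag it explicitly.

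The main ``obstacle'' is essentially just presentational: deciding how much formality to impose (e.g., whether to phrase the expectation using a Lebesgue integral versus a sum over a discrete sample space). Since the paper cites this only as a named fact for use later, a single-line indicator calculation is sufficient and no measure-theoretic subtleties need to be introduced. I would also mention briefly that the case $a \leq 0$ makes the inequality vacuous or trivial, so the interesting regime is $a > 0$, justifying the final division.
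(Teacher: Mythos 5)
Your proof is correct: the indicator-function argument with the pointwise bound $a\cdot \bbI_{\{X\geq a\}}\leq X$, followed by taking expectations, is the standard and complete proof of Markov's inequality, and you correctly identify where non-negativity of $X$ is used. The paper states this only as a cited Fact without supplying any proof, so there is nothing to compare against; your argument is the canonical one and fully suffices.
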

	\begin{fact}[Chernoff bounds]
		Suppose for all $i\in [K]$, $s_i$ is a random variable independently sampled from $\{0,1\}$ with probability $1-p,p$ correspondingly. Then
		$$\forall \delta>0,\Pr[\sum_{i\in [K]}s_i\geq (1+\delta)pK]\leq e^{-\delta^2 pK/(2+\delta)}$$
		$$\forall \delta\in (0,1),\Pr[\sum_{i\in [K]}s_i\leq (1-\delta)pK]\leq e^{-\delta^2 pK/2}$$
	\end{fact}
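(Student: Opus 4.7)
The plan is to invoke the standard Chernoff--Bernstein moment generating function technique. Let $X = \sum_{i \in [K]} s_i$ and $\mu = pK = \bE[X]$. For the upper tail, I would fix a parameter $t > 0$ to be chosen later, and observe that the event $\{X \geq (1+\delta)\mu\}$ is equivalent to $\{e^{tX} \geq e^{t(1+\delta)\mu}\}$. Applying the Markov inequality (the preceding fact) to the non-negative random variable $e^{tX}$ yields
\begin{equation*}
\Pr[X \geq (1+\delta)\mu] \leq \frac{\bE[e^{tX}]}{e^{t(1+\delta)\mu}}.
\end{equation*}

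Next I would exploit independence of the $s_i$ to factor $\bE[e^{tX}] = \prod_{i \in [K]} \bE[e^{ts_i}]$. Each factor equals $1 - p + p e^{t} = 1 + p(e^t - 1)$, and the elementary inequality $1 + x \leq e^{x}$ gives $\bE[e^{ts_i}] \leq \exp(p(e^t - 1))$. Combining, $\bE[e^{tX}] \leq \exp(pK(e^t - 1)) = \exp(\mu(e^t - 1))$, hence
\begin{equation*}
\Pr[X \geq (1+\delta)\mu] \leq \exp\bigl(\mu(e^t - 1) - t(1+\delta)\mu\bigr).
\end{equation*}
Optimizing the exponent over $t > 0$ yields $t = \ln(1+\delta)$ and the clean bound $\bigl(e^{\delta}/(1+\delta)^{1+\delta}\bigr)^{\mu}$. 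From here the claimed form $e^{-\delta^2 \mu /(2+\delta)}$ follows from the analytic estimate $(1+\delta)\ln(1+\delta) - \delta \geq \delta^2/(2+\delta)$ for $\delta > 0$, which is a routine calculus fact (both sides vanish at $\delta = 0$ and a derivative comparison settles the rest).

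For the lower tail I would mirror the argument with $t < 0$: apply Markov to $e^{tX}$ with $t = \ln(1-\delta) < 0$, use the same factorization $\bE[e^{tX}] \leq \exp(\mu(e^t - 1))$, and obtain $\Pr[X \leq (1-\delta)\mu] \leq \bigl(e^{-\delta}/(1-\delta)^{1-\delta}\bigr)^{\mu}$. The target bound $e^{-\delta^2 \mu / 2}$ then follows from $(1-\delta)\ln(1-\delta) + \delta \geq \delta^2/2$ on $\delta \in (0,1)$, again a standard calculus estimate obtained by expanding $\ln(1-\delta)$ as a power series and comparing term-by-term.

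I do not expect a serious obstacle, since this is a textbook result; the only mildly subtle step is the scalar inequality $(1+\delta)\ln(1+\delta) - \delta \geq \delta^2/(2+\delta)$ which gives the stated (non-symmetric) denominator $2+\delta$ rather than the looser $e^{-\delta^2 \mu / 3}$ one often sees. Since the fact is merely cited from \cite{Azuma}, the proposal above is essentially a sketch of the classical derivation and need not be spelled out in the paper itself.
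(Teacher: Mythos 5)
Your sketch is the standard exponential-moment (Markov on $e^{tX}$) derivation, and it is correct: both scalar estimates you invoke, $(1+\delta)\ln(1+\delta)-\delta\geq \delta^2/(2+\delta)$ and $(1-\delta)\ln(1-\delta)+\delta\geq \delta^2/2$, do hold and yield exactly the stated exponents. The paper gives no proof of its own here — it states the bounds as a cited fact — so there is nothing further to compare against.
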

	\begin{thm}[Azuma inequality]
		Suppose $(Z_t)_{t\in [0,K]}$ is a martingale. If for each $t\in [K]$, $|Z_t-Z_{t-1}|\leq c_t$, then
		$$\forall \delta>0, \Pr[|Z_K-Z_0|\geq \delta]\leq 2e^{-\frac{\delta^2}{\sum_{t\in [K]}c_t^2}}$$
	\end{thm}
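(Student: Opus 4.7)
The plan is to prove Azuma's inequality by the classical exponential moment (Chernoff-type) method applied to the martingale differences $X_t := Z_t - Z_{t-1}$, which by assumption satisfy $|X_t| \leq c_t$ and $\bE[X_t \mid Z_0, \ldots, Z_{t-1}] = 0$. First I would reduce the two-sided bound to the one-sided bound: once I establish $\Pr[Z_K - Z_0 \geq \delta] \leq e^{-\delta^2/(2\sum_t c_t^2)}$, the symmetric statement for $\Pr[Z_0 - Z_K \geq \delta]$ follows by applying the same argument to the martingale $-Z_t$, and a union bound yields the two-sided bound with a factor of $2$ in front.

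For the one-sided bound, for any $\lambda > 0$ I would use Markov's inequality on the nonnegative random variable $e^{\lambda(Z_K - Z_0)}$ to get
$$\Pr[Z_K - Z_0 \geq \delta] \leq e^{-\lambda \delta}\,\bE\!\left[e^{\lambda(Z_K - Z_0)}\right] = e^{-\lambda \delta}\,\bE\!\left[\prod_{t=1}^K e^{\lambda X_t}\right].$$
Then I would iteratively apply the tower property, conditioning on the filtration up through time $t-1$, so that the outermost factor becomes $\bE[e^{\lambda X_K}\mid Z_0,\ldots,Z_{K-1}]$, and similarly for earlier factors. The key ingredient is Hoeffding's lemma: for any (conditionally) mean-zero random variable $X$ with $|X|\leq c$, one has $\bE[e^{\lambda X}] \leq e^{\lambda^2 c^2 / 2}$. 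The proof of this lemma is the one genuinely technical step; it follows from the convexity of $x\mapsto e^{\lambda x}$, which lets one bound $e^{\lambda X}$ by the linear interpolation between the endpoints $\pm c$, take expectations using $\bE[X]=0$, and then show by a short calculus argument that the resulting function of $\lambda c$ is bounded by $e^{\lambda^2 c^2 / 2}$.

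Applying Hoeffding's lemma inside each conditional expectation gives $\bE[e^{\lambda X_t}\mid \text{past}] \leq e^{\lambda^2 c_t^2/2}$, and iterating across all $K$ factors yields
$$\bE\!\left[e^{\lambda(Z_K - Z_0)}\right] \leq \exp\!\left(\tfrac{\lambda^2}{2}\sum_{t\in[K]} c_t^2\right).$$
Finally I would optimize the resulting bound $e^{-\lambda\delta + \lambda^2(\sum_t c_t^2)/2}$ over $\lambda>0$, choosing $\lambda = \delta/\sum_t c_t^2$, which gives exponent $-\delta^2/(2\sum_t c_t^2)$; combining with the symmetric bound produces the stated inequality (up to the usual factor of $2$ in the denominator of the exponent). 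The main obstacle is the proof of Hoeffding's lemma, but it is a self-contained one-variable calculus fact; the rest is essentially bookkeeping via the tower property.
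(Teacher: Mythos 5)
The paper states this result as a cited background fact and supplies no proof, so there is no in-paper argument to compare against; what you give is the standard exponential-moment proof of the Azuma--Hoeffding inequality (Markov's inequality applied to $e^{\lambda(Z_K-Z_0)}$, the tower property to peel off one increment at a time, Hoeffding's lemma for each conditionally mean-zero bounded increment, and optimization over $\lambda$), and that outline is correct and complete.

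One point deserves emphasis. Your argument establishes $\Pr[|Z_K-Z_0|\geq\delta]\leq 2e^{-\delta^2/(2\sum_{t\in[K]}c_t^2)}$, with a factor of $2$ in the denominator of the exponent, whereas the theorem as printed omits that factor. You dismiss this as ``the usual factor of $2$,'' but it is not cosmetic: the printed form is actually false, so no proof can close the gap. Take $Z_t-Z_{t-1}$ to be i.i.d.\ Rademacher increments ($c_t=1$) and $\delta=K$; then $\Pr[|Z_K-Z_0|\geq K]=2^{1-K}$, which exceeds the printed bound $2e^{-K}$ since $1/2>1/e$, while it does satisfy the bound $2e^{-K/2}$ that your proof yields. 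So your proof and the constant it produces are the correct ones, and the statement should read $2e^{-\delta^2/(2\sum_{t\in[K]}c_t^2)}$. Since the paper invokes this inequality only qualitatively, the discrepancy is harmless downstream, but you should state explicitly that you are proving the corrected form rather than the printed one.
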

	\begin{nota}\label{nota:2.2}
		$\fpoly (n)$ means a function of $n$ asymptotically the same as a polynomial in $n$. $\fneg (n)$ means a function of $n$ asymptotically tends to $0$ faster than any polynomial in $n$.%\flinear(n)%
	\end{nota}
	\subsubsection{Basics of quantum information}
	\begin{nota}\label{nota:2.3}
		We use $\tPos(\cH)$ to denote the set of positive semidefinite operators over some Hilbert space $\cH$ and we use $\tD(\cH)$ to denote the set of density operators over $\cH$.\par
		Recall that density operators are positive semidefinite operators with trace equal to $1$. We call a positive semidefinite operator a subnormalized density operator if its trace is less than or equal to $1$.
	\end{nota}
	\begin{nota}
		For a pure state $\ket{\Phi}$, $\Phi$ is an abbreviation of $\ket{\Phi}\bra{\Phi}$.
	\end{nota}
	\begin{fact}\label{fact:3}
		Any $\rho\in \tD(\cH)$ could be purified to a state $\ket{\varphi}\in \cH\otimes\cH_{\bR}$ such that the partial state of $\ket{\varphi}$ restricted on $\cH$ is $\rho$.
	\end{fact}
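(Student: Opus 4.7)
The plan is to give the standard textbook purification via the spectral decomposition of $\rho$. First I would choose $\cH_{\bR}$ to be any Hilbert space whose dimension is at least $\dim \cH$ (for concreteness, take $\cH_{\bR}\cong \cH$ with a fixed orthonormal basis $\{\ket{i}_{\bR}\}_i$). Since $\rho\in \tD(\cH)$ is Hermitian and positive semidefinite with unit trace, I can spectrally decompose it as $\rho = \sum_i \lambda_i \ket{\psi_i}\bra{\psi_i}$, where $\{\ket{\psi_i}\}$ is an orthonormal eigenbasis of $\cH$ and the eigenvalues satisfy $\lambda_i \geq 0$ and $\sum_i \lambda_i = 1$.

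Next, I would define the candidate purification
\begin{equation*}
\ket{\varphi} \;=\; \sum_i \sqrt{\lambda_i}\, \ket{\psi_i}\otimes \ket{i}_{\bR} \;\in\; \cH\otimes \cH_{\bR}.
\end{equation*}
Normalization is immediate: $\braket{\varphi|\varphi} = \sum_i \lambda_i = 1$, so $\ket{\varphi}$ is a legitimate pure state. It remains to verify that tracing out $\cH_{\bR}$ returns $\rho$. Using the orthonormality $\braket{i|j}_{\bR} = \delta_{ij}$,
\begin{equation*}
\tr_{\bR}\!\big(\ket{\varphi}\bra{\varphi}\big) \;=\; \sum_{i,j} \sqrt{\lambda_i\lambda_j}\,\ket{\psi_i}\bra{\psi_j}\,\braket{j|i}_{\bR} \;=\; \sum_i \lambda_i \ket{\psi_i}\bra{\psi_i} \;=\; \rho,
\end{equation*}
which is exactly the required condition on the partial state restricted to $\cH$.

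There is no genuine obstacle here; the only points that require minor care are (i) that the eigenvalues of a density operator are indeed nonnegative and sum to one (which is the definition of $\tD(\cH)$ recalled in Notation \ref{nota:2.3}), and (ii) that the reference space $\cH_{\bR}$ be chosen large enough to carry an orthonormal family indexed by the eigenvalues of $\rho$, which is trivially satisfied by taking $\dim \cH_{\bR}\geq \dim \cH$. This construction also has the expected freedom: any other purification differs from $\ket{\varphi}$ only by an isometry acting on $\cH_{\bR}$, although this uniqueness-up-to-isometry statement is not needed for the fact itself.
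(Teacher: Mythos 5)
Your proof is correct and is exactly the standard spectral-decomposition purification that the paper implicitly relies on: the paper states this as a Fact without proof (deferring to standard references), and your construction $\ket{\varphi}=\sum_i\sqrt{\lambda_i}\ket{\psi_i}\otimes\ket{i}_{\bR}$ with the partial-trace verification is the canonical argument. Nothing is missing.
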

	% \begin{conv}
	% We use lowercase Greek letters like 
	% \end{conv}
	\begin{nota}\label{nota:1}
		We use $\cE(\rho)$ to denote the operation of an operator (either unitary or superoperator) on (normalized or unnormalized) density operator $\rho$.\par% We also use this notation when $\cE$ is an isometry (say, $V$): it is the same as $V\rho V^\dagger$.\par
		We use CPTP as an abbreviation of ``completely-positive trace-preserving''. We use POVM as an abbreviation of ``positive operator value measurement''. A POVM is described by a tuple of superoperators $(\cE_1,\cE_2,\cdots \cE_D)$.
	\end{nota}
	\begin{nota}
		$\Pr[\cE(\rho)\rightarrow o]$ is defined to be $\tr(\Pi_o(\cE(\rho)))$, where $\Pi_o$ is the projection on the first qubit onto value $o$.
	\end{nota}
	\begin{nota}
		For positive semidefinite operators $\rho,\sigma$ we use $\rho\approx_{\epsilon}\sigma$ to denote $\tTD(\rho,\sigma)\leq \epsilon$, where $\tTD(\cdot, \cdot)$ is the trace distance. We also say ``$\rho$ is $\epsilon$-close to $\sigma$''. For two pure states $\ket{\varphi},\ket{\psi}$, we use $\ket{\varphi}\approx_{\epsilon}\ket{\psi}$ to denote $|\ket{\varphi}-\ket{\psi}|\leq \epsilon$ where $|\cdot |$ is the Euclidean norm.\par
		Note that there is a difference between  trace distance and the Euclidean distance even for pure states.
	\end{nota}
	\begin{defn}[Bell basis]\label{defn:bellbasis}
		In a two qubit system, define the following four states as the Bell basis:
		$$\frac{1}{\sqrt{2}}(\ket{00}+\ket{11}),\frac{1}{\sqrt{2}}(\ket{00}-\ket{11}),$$
		$$\frac{1}{\sqrt{2}}(\ket{01}+\ket{10}),\frac{1}{\sqrt{2}}(\ket{01}-\ket{10}).$$
		Define $\ket{\Phi}=\frac{1}{\sqrt{2}}(\ket{00}+\ket{11})$, then these states could be denoted as $\fX^a\fZ^b\ket{\Phi}$, where $\fX^a$ means to apply $\fX$ if $a=1$ and apply identity if $a=0$. $\fZ^b$ is defined similarly.\par
		Now define the Bell-basis measurement as follows: the projection onto the Bell basis $\fX^a\fZ^b\ket{\Phi}$ has output value $(a,b)$.
	\end{defn}
	% \begin{nota}[cq-states]
		
	% \end{nota}
	Finally we review the local Hamiltonian problem.%Is it formal?
	\begin{defn}[\cite{Grilo17}]\label{defn:2.2}
		The following problem is called the XZ k-local Hamiltonian problem:\par
		Given input $(H,a,b)$ where $H$ is a Hamiltonian on $n$-qubit registers, $a,b$ are real value function of $n$, and they satisfy:
		\begin{equation}\label{eq:1}H=\sum_{j\in [m]}\gamma_jH_j,\quad \forall j,|\gamma_j|\leq 1\end{equation}
		\begin{equation}\label{eq:2}\forall j,H_j\in \{\sigma_X,\sigma_Z,I\}^{\otimes n}\text{ with at most $k$ appearances of non-identity terms}\end{equation}
		Decide which is the case:
		\begin{itemize}
			\item Yes-instance: The ground energy of $H$ is $\leq a$
			\item No-instance: The ground energy of $H$ is $\geq b$.
		\end{itemize}
	\end{defn}
	\begin{thm}[\cite{Ji16}]
		There exist $a(n),b(n)\in [0,1],b-a\geq 1/\fpoly(n)$ such that the XZ 5-local Hamiltonian problem is QMA-complete.
	\end{thm}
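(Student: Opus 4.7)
The plan is to reduce from the standard $k$-local Hamiltonian problem, which is already known to be QMA-complete (Kitaev), and use a circuit-to-Hamiltonian construction whose Pauli content is restricted to $\{X,Z,I\}$. The overall strategy has three layers: first, fix a universal gate set with real matrix entries; second, run the Feynman--Kitaev history state construction on circuits over that gate set; third, check that the resulting Hamiltonian obeys both the $5$-locality and the XZ constraints, and that the promise gap remains inverse-polynomial.

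For the first layer, I would observe that the gate set $\{H,\mathsf{CNOT},\mathsf{Toffoli}\}$ (or $\{H,\mathsf{Toffoli}\}$) is universal for quantum computation and all its elements have real entries. Any QMA verification circuit can therefore be compiled into an equivalent polynomial-size circuit $U=U_T\cdots U_1$ where every $U_t$ is real. Because each $U_t$ is a real unitary acting on at most three qubits, $U_t$ and $U_t^\dagger$ can be written as real linear combinations of tensor products in $\{X,Z,I\}^{\otimes 3}$; crucially, no $Y$ appears because $Y$ is the unique Pauli with purely imaginary entries and a real $U_t$ admits an expansion in the real subalgebra generated by $\{I,X,Z\}$. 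This is the key observation that replaces the usual Pauli decomposition and eliminates $Y$ from the Hamiltonian.

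For the second layer, I would apply the standard history-state construction $H = H_{\mathrm{in}} + H_{\mathrm{out}} + H_{\mathrm{prop}} + H_{\mathrm{clock}}$ on a unary clock register, with $H_{\mathrm{prop}}$ built from the propagation terms $\tfrac12(|t\rangle\!\langle t| + |t{-}1\rangle\!\langle t{-}1| - U_t\otimes |t\rangle\!\langle t{-}1| - U_t^\dagger\otimes |t{-}1\rangle\!\langle t|)$. Because each $U_t$ is real and lives in three qubits, the propagation terms expand as real linear combinations of tensor products over $\{X,Z,I\}$ supported on at most three computational qubits plus two clock qubits (using the unary encoding so that each transition term touches only two adjacent clock bits). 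Then $H_{\mathrm{in}},H_{\mathrm{out}},H_{\mathrm{clock}}$ are already manifestly $X,Z,I$ tensor products, so after regrouping the total Hamiltonian meets the form of Equation \eqref{eq:1}--\eqref{eq:2} with $k\le 5$. Soundness and completeness analysis, together with the Kitaev geometric lemma, give a $1/\fpoly(n)$ gap between yes- and no-instances, which can then be rescaled into $[0,1]$ to produce the asserted $a(n),b(n)$.

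The main obstacle is simultaneously respecting locality $k=5$ and the XZ restriction while preserving the inverse-polynomial promise gap. The tension is that the most natural way to kill $Y$-terms --- conjugating by a fixed Clifford basis change --- tends either to break locality or to rotate $H_{\mathrm{in}},H_{\mathrm{out}}$ away from the standard basis. My proposal sidesteps this by enforcing realness at the circuit level rather than the Hamiltonian level, so no post-hoc conjugation is needed and the locality count $3$ (gate) $+\,2$ (clock) $=5$ is tight but clean. I would therefore spend the bulk of the proof verifying the second step: that every term arising from the expansion of a real three-qubit $U_t$ together with the clock transition is simultaneously in $\{X,Z,I\}^{\otimes n}$ and of weight $\le 5$, and that the standard Kitaev spectral gap bound carries through without degradation when applied to this restricted Hamiltonian.
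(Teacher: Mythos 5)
First, note that the paper does not prove this statement at all: it is imported verbatim as a citation to \cite{Ji16}, so there is no in-paper argument to compare against. Your proposal is therefore an attempt to reconstruct the cited result, and as written it has a genuine gap in exactly the step you identify as the ``key observation.'' The claim that a real $U_t$ expands over $\{I,X,Z\}^{\otimes 3}$ because $Y$ is the only Pauli with imaginary entries is false for multi-qubit operators: a Hermitian operator with real matrix entries decomposes over Pauli strings with an \emph{even} number of $Y$ factors, and $Y\otimes Y$ is itself a real symmetric matrix lying outside $\mathrm{span}\{I,X,Z\}^{\otimes 2}$. (Equivalently, the real \emph{algebra} generated by $\{I,X,Z\}$ already contains $XZ=-\mi Y$, hence all real matrices, so ``real $\Rightarrow$ no $Y$'' cannot follow from realness alone.) Concretely, for a real gate $U_t$ the propagation term $U_t\otimes\ket{t}\bra{t-1}+U_t^{\dagger}\otimes\ket{t-1}\bra{t}$ equals $\frac12\big((U_t+U_t^{T})\otimes X_{\mathrm{clk}}-\mi(U_t-U_t^{T})\otimes Y_{\mathrm{clk}}\big)$ on the relevant clock qubit, so any real gate with nonzero antisymmetric part (e.g.\ $R_y(\theta)$, which appears in standard real-universal 2-qubit gate sets) produces $Y\otimes Y$-type terms. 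Your construction survives only because $H$, $\fCN$ and Toffoli happen to be self-adjoint ($U_t=U_t^{T}$, killing the $Y_{\mathrm{clk}}$ pairing) \emph{and} each happens to have an XZ-only Pauli expansion when verified directly; neither fact is a consequence of realness, and both must be checked explicitly.

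The second problem is the locality arithmetic. An interior unary-clock transition $\ket{t}\bra{t-1}$ must be guarded on three clock qubits ($\ket{110}\bra{100}$ on positions $t-1,t,t+1$) for Kitaev's geometric lemma to apply, so a Toffoli propagation term has weight $3+3=6$, not $5$; your ``$3+2=5$'' count only applies to the boundary terms. Repairing this by moving to $1$- and $2$-qubit gates runs into the tension above: $\{H,\fCN\}$ is not universal, and the known real-universal $2$-qubit gates are not self-adjoint, which reintroduces $Y\otimes Y$ terms. This trade-off between the XZ restriction, $5$-locality, and universality is precisely the nontrivial content of \cite{Ji16} (and of related constructions such as Biamonte--Love, which resort to perturbative gadgets), so the proposal cannot simply wave it through with the standard Feynman--Kitaev bookkeeping.
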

	%\subsection{Simulation-based security and composability}
	%\subsection{An introduction to the NTCF functions}
	\subsection{Registers, States, and Protocols}
	%\subsection{Cryptographic Primitives and Assumptions}
	\subsubsection{Registers}In this work we work on \emph{registers}. Intuitively, a register is like a modeling of a specific space in the memory of a computer. It allows us to use its name to refer to the corresponding state space. There are two types of registers: classical registers and quantum registers. For a classical register, the underlying state space is a finite set; for a quantum register, it corresponds to a Hilbert space. A classical register could be a tuple of other classical registers and a quantum register could be a tuple of other quantum registers. We refer to \cite{watroustqi} for more information.\par
	\begin{nota}In this work we use the bold font (for example, $\bS$) to denote registers. The corresponding Hilbert space of register $\bS$ is denoted as $\cH_\bS$.\end{nota}
	Then when we design quantum protocols, we could write protocols that take registers as part of inputs. So the inputs of a protocol could be either values or registers. This is also called ``call-by-value'' versus ``call-by-reference'' in programming languages like Java or Python. For classical values, both call-by-value and call-by-reference (registers) work; but for quantum states the call-by-value does not work in general due to the quantum no-cloning principle.\par
	\begin{nota}
	We use $\Pi^{\breg}_{v}$ to denote the projection onto the subspace that the value of register $\breg$ is $v$. We use $\Pi^{\breg_1}_{=\breg_2}$ to denote the projection onto the subspace that the value of register $\breg_1$ is equal to the value of register $\breg_2$.
	\end{nota}
	\subsubsection{States}For a classical register $\bbC$, we could consider a probability distribution over states in its state space. For a quantum register $\bS$ and the associated Hilbert space $\cH_{\bS}$, we could consider the (normalized) density operators $\tD(\cH_{\bS})$ as the quantum analog of probability distributions.\par% We recall Notation \ref{nota:2.2} for the background.\par
	The joint state over a classical register $\bbC$ and a quantum register $\bS$ is modeled as a \emph{cq-state}. The density operator could be written as:
	$$\rho=\sum_{c\in \Domain(\bbC)}\underbrace{\ket{c}\bra{c}}_{\bbC}\otimes \underbrace{\rho_c}_{\bS}$$
	where $\forall c, \rho_c\in \tPos(\cH_{\bS})$.
	\subsubsection{Execution model of protocols}\label{sec:2.2.3}
In our work, we consider a setting where a client and a server interacting with each other. A more general modeling of cryptographic protocols might consider multiple parties and the scheduling of operations and messages might be concurrent; in this work we only consider the two-party setting between a client and a server and we could without loss of generality assume the operations are non-concurrent. That is, the overall protocol repeats the following cycle step by step:
\begin{enumerate}
	\item The client does some operations on its own registers.
	\item The client sends a message to the server.
	\item The server does some operations on its own registers.
	\item The server sends a messages to the client.
\end{enumerate}
We call one cycle above as one round.\par
Each party should have a series of operations as the honest execution. We also consider the setting where some party is corrupted (or called malicious, adversarial). In this work we care about the setting where the server could be malicious. The adversary's (that is, malicious server's) operations could also be described by a tuple of operations for each round: $\fAdv=(\fAdv_{\text{round 1}},\fAdv_{\text{round 2}}\cdots \fAdv_{\text{round n}})$, used in the step 3 in the cycle above. In later proofs we may simply use $\fAdv_1$, $\fAdv_2$, etc to refer to the adversarial operations in each round.%becomes the adversarial operation (suppose it's the $i$-th round, then the adversarial operation in this round is $\fAdv_{\text{round n}}$).\par
\begin{nota}\label{nota:2.8}Consider a client-server setting described above. Suppose the protocol takes input registers $\bregs$ and input values $\tvals$ as its arguments. The overall operation of the whole protocol run against an adversary $\fAdv$ is denoted as:
	\begin{equation}\label{eq:3}\text{ProtocolName}^{\fAdv}(\bregs,\tvals)\end{equation}
	Then the final state of the protocol execution is determined by the operation above and the initial state. Suppose the initial state (on the registers considered) is $\rho_{\text{in}}$, the final state could be denoted as
	\begin{equation}\text{ProtocolName}^{\fAdv}(\bregs,\tvals)(\rho_\text{in})\end{equation}
\end{nota}
\paragraph{About the messages and transcripts} In this work we model the message sending operation in step 2 above as an operation that writes messages on a specific server-side empty register. Since this register is on the server side the adversary in the later steps might erase its value. One alternative way for modeling classical messages is to introduce a stand-alone transcript registers explicitly, whose values could not be changed by later operations. These two modelings are equivalent in the problems that we care about.
	\subsection{Basic Notions in Cryptographic Protocols}
	We would like to design protocols to achieve some cryptographic tasks. In this section we review the basic components for defining and constructing cryptographic protocols.
	\subsubsection{Completeness, soundness, efficiency}\label{sec:2.3.1}
		Consider the client-server setting where the client would like to achieve some task with the server but does not trust the server. To define the cryptographic primitive for this problem, typically there are at least two requirements: a completeness (or called correctness) requirement and a soundness (or called security) requirement. These two requirements could roughly go as follows:
		\begin{itemize}
			\item (Completeness) If the server is honest, the task is achieved and the client accepts.
			\item (Soundness) For any malicious server, if the task is not achieved correctly the client catches the server cheating.
		\end{itemize}
		Allowing for the completeness error and soundness error, the definition roughly goes as follows.
		\begin{itemize}
			\item (Completeness) If the server is honest, the task is achieved and the client accepts with probability $c$.
			\item (Soundness) For any malicious server, the probability of ``the task is not achieved correctly and the client does not catch the server cheating'' is at most $s$.
		\end{itemize}
		There should be $0<s<c<1$ for the definition to make sense. $1-c$ is called the completeness error\footnote{We note that for tasks for which no protocol could achieve perfect completeness or perfect soundness, the definitions of completeness error and soundness error might change.}. $s$ is called soundness or soundness error.\par
The discussion above does not consider the efficiency. For many primitives we want the overall protocol to be efficient, and we could only achieve soundness against efficient adversaries. Thus there are three requirements for the primitive:
\begin{itemize}
	\item (Completeness) Same as above.
	\item (Soundness) For any efficient malicious server, the probability of ``the task is not achieved correctly and the client does not catch the server cheating'' is at most $s$.
	\item (Efficiency) The honest opeartion of the whole protocol runs in polynomial time.
\end{itemize}
For many problems, it's not sufficient to simply use the security definitions informally described above. In Section \ref{sec:2.4} we will discuss different paradigms for defining the security.
	\subsubsection{Inputs, the security parameter, outputs, initial states}
	In this section we discuss the register set-ups for a cryptographic protocol and its initial states.
	\paragraph{Inputs} A protocol might take several parameters, input values, and input registers as its arguments.\par %For example, the operation of a protocol call might be $\text{ProtocolName}^{\fAdv}(\bK,1^n,1^\kappa)$ or $\text{ProtocolName}^{\fAdv}(x,1^{1/\epsilon},1^\kappa)$.\par
	 A common special parameter for cryptographic protocols is the \emph{security parameter}, which determines the security level of the protocol.\footnote{In addition to the security, we could also relate the completeness error to the security parameter so that when the security parameter increases both the completeness error and soundness error tend to zero.}
	\begin{nota}\label{nota:2.9} In this paper we denote the security parameter as $\kappa$. It is typically given as inputs in the form of $1^\kappa$.\end{nota}
	%We note that the input of the protocol might also contain other parameters that determine, for example, size of the outputs or soundness error.\par
	%Then there are input registers and input values, which are used during the execution.
\paragraph{Outputs} As discussed in the previous section, there is a flag in the output which shows whether the client accepts or rejects (in other words, pass/fail, non-abort/abort). Thus for the cryptographic protocols that we care about in this work, the output registers consist of two parts: (1) the registers that hold the outputs of the tasks (for example, preparing a specific state); (2)the flag register.
\begin{nota} %When we analyze security of protocols, operators and states are typically families of operators or states parameterized by $\kappa$; in this paper we make it implicit.\par
	The the flag register is denoted as $\bflag$. We denote the projection onto the subspace $\bflag=\fpass$ as $\Pi_{\fpass}$.\par
	In later constructions where there are multiple rounds of calling to some subprotocol, there might be multiple temporary flag registers. We use $\Pi_{\fpass}^{\bflag^{(\leq i)}}$ to denote the projection onto the space that $\bflag^{(1)},\bflag^{(2)}\cdots \bflag^{(i)}$ are all in value $\fpass$.
   \end{nota}
	%\paragraph{Optional register set-ups}  
	\paragraph{Initial states} In quantum cryptography, we need to consider initial states that are possibly entangled with the running environment of the protocol. The environment is not touched by the protocol itself (but will be given to the distinguisher, see Section \ref{sec:2.4.2}). Denote the client-side registers as $\bbC$, denote the server-side registers as $\bS$, and denote the register corresponding to the environment as $\bbE$, the initial state could be described as $\rho_{\text{in}}\in \tD(\cH_{\bbC}\otimes\cH_{\bS}\otimes\cH_{\bbE})$. For the client-server setting (as described in Section \ref{sec:2.2.3}), when there is no client-side input states, we do not need to consider the client-side part of the initial state; then the initial state could be described as $\rho_{\text{in}}\in \tD(\cH_{\bS}\otimes\cH_{\bbE})$. % This part could be used to model everything else that happens outside this protocol and helps to give RSPV the sequential composability property (and hopefully other types of composability).%move this discussion to Section 2.4.
	%\paragraph{Notations and conventions} 
	\paragraph{Paper organization using set-ups} In this paper we organize the information of parameters, registers etc in a latex environment called Set-up. This helps us organize protocol descriptions more clearly.
	\subsubsection{Cryptographic assumptions}
	For many problems, it's not possible to construct a protocol and prove its security unconditionally; we need to rely on \emph{cryptographic assumptions}. One commonly-used and powerful cryptographic assumption is the Learning-with-Errors (LWE) assumption \cite{regevLWE}.
	\begin{defn}[LWE]\label{defn:lwe}
		Suppose $n=n(\kappa),m=m(\kappa),q=q(\kappa)$ are integers, $\chi=\chi(\kappa)$ is a probability distribution. The $\text{LWE}_{n,m,q,\chi}$ assumption posits that, when
		$$\mathbf{A}\leftarrow \bZ_q^{n\times m},\mathbf{s}\leftarrow \bZ_q^n,\mathbf{e}\leftarrow \chi^m,\mathbf{u}\leftarrow \bZ_q^m$$
		for any BQP distinguisher\footnote{It also makes sense to consider non-uniform distinguishers like BQP/qpoly.}, it's hard to distinguish
		$$(\mathbf{A},\mathbf{As}+\mathbf{e})$$
		from
		$$(\mathbf{A},\mathbf{u}).$$
		The $\text{LWE}_{n,q,\chi}$ assumption posits that the $\text{LWE}_{n,m,q,\chi}$ assumption holds for any $m=\fpoly(\kappa)$.
	\end{defn}
	A typical choice of $\chi$ is a suitable discretization of the Gaussian distribution.\par
	Another important cryptographic assumption is the quantum random oracle model (QROM) \cite{QRO}. In this model we work in a setting where there is a global random oracle. This assumption is also used in the studies of RSP problems \cite{jiayu20,cvqcinlt}; in this work we do not work in this model.
	\subsection{Security Definition Paradigms}\label{sec:2.4}
	In this section we discuss different paradigms for defining securities. Especially, we review the simulation-based paradigm, which will be used in this work.
	\subsubsection{State family and the indistinguishability}
	\begin{conv}We first note that when we work on the security, we need to consider a family of states parameterized by the security parameter. More explicitly, consider the family of states $(\rho_{\kappa})_{\kappa\in \bN}$. Then we could use $\rho$ to denote the operator in this state family corresponding to $\kappa\in \bN$. For notation simplicity, we do not explicitly write down the state family and simply work on $\rho$.\footnote{Note that this notation convention is similar to how we express Definition \ref{defn:lwe}: in that definition we use $m$ to denote the value of function $m(\kappa)$ at $\kappa$.}\end{conv}
	Below we review the notion of the indistinguishability, which is the foundation of many security definitions.
	\begin{nota}\label{nota:2.12}
		We write $\rho\approx_{\epsilon}^{ind:\cF}\sigma$ if $\forall D\in \cF,$ $$|\Pr[D(\rho)\rightarrow 1]-\Pr[D(\sigma)\rightarrow 1]|\leq \epsilon+\fneg(\kappa).$$ We write $\rho\approx^{ind}_{\epsilon}\sigma$ when $\cF$ is taken to be all the BQP algorithms. We omit $\epsilon$ when $\epsilon=\fneg(\kappa)$.
	 \end{nota}
	 The $D$ in the definition above is called the distinguisher.\par
	\subsubsection{The simulation-based paradigm}\label{sec:2.4.2}
	In this section we introduce the simulation-based paradigm for security definitions.\par
	In this paradigm, we compare the ``real world'' (or called the real protocol) with the ``ideal world'' (or called the ideal functionality) in the security definition. \begin{itemize}\item The real world is the real execution of the protocol itself. \item The ideal functionality achieves the tasks that we aim at as a trusted third party. So the security in the ideal world itself is simply by definition.\end{itemize}
	 For the simulation-based paradigm, intuitively we want to say, each attempt of attacking the real protocol corresponds to an attempt on the ideal functionality; this ``attempt on the ideal functionality'' is described by a simulator. Since the ideal functionality is ideally secure by definition, we could be satisfied once we achieve the indistinguishability between the real protocol and the ideal functionality.\par
	Let's make the security parameter implicit. Then the simulation-based security roughly goes as follows.\par
	For any polynomial time quantum adversary $\fAdv$, there exists a polynomial time quantum simulator $\fSim$ such that:
	\begin{equation}\label{eq:5}\tReal^{\fAdv}\approx^{ind} \underbrace{\fSim}_{\text{take the role of $\fAdv$}}\circ\quad\tIdeal\end{equation}
	%\paragraph{Subtleties: input uniformity and simulator efficiency}
	%Let's further elaborate the equation \eqref{eq:5}.
	Note that $\fSim$ might contain multiple phases, for preparing inputs and processing outputs of $\tIdeal$.\par 
	Then note that \eqref{eq:5} is about the indistinguishability of operations (instead of solely the indistinguishability of states as defined in Notation \ref{nota:2.12}): it means that the initial state is adversarily chosen for distinguishing the output states. Furthermore, we need to consider the case that the initial state is entangled with some environment: although this part is not affected by the real protocol or the ideal functionality, the distinguisher has access to it.\footnote{Omitting this part in the modeling could lead to missing of important attacks in some problems \cite{QBCreview}.}\par
	Recall that in this work we work in a client-server setting where the server could be malicious; let's expand \eqref{eq:5} in this setting. Below we suppose for the initial states the server-side registers are denoted as $\bS$, and the environment is denoted as $\bbE$. For the output registers, denote the server-side output register as $\bQ$. Then the security definition goes as follows.\par
	For any polynomial time quantum adversary $\fAdv$, there exists a polynomial time quantum simulator $\fSim$ such that for any initial state $\rho_{0}\in \tD(\cH_{\bS}\otimes\cH_{\bbE})$:
	\begin{equation}\label{eq:6}\tReal^{\fAdv}(\rho_{0})\approx^{ind} (\underbrace{\fSim}_{\text{preparing inputs for Ideal and generating states on }\bS,\bQ}\circ\tIdeal)(\rho_{0})\end{equation}
	\paragraph{Variants}We discuss some variants of the definition.
	\begin{itemize}
		\item \eqref{eq:6} is stated with negligible distinguishing advantage. In later sections we will need to work on non-negligible distinguishing advantage.
	\item We could consider different uniformity on the preparation of initial states. This is related to a notion called ``advice'' and we refer to \cite{AroraBarak} for its background.
	\item A variant that is less desirable but still useful is the inefficient simulator version, where the simulator is not required to be efficient. Typically a protocol with security under inefficient simulation could be sequentially composed with information-theoretically secure protocols, but in general the inefficient simulator might break other parts of the protocol in the security analysis.
	\item If stand-alone transcript registers are used explicitly (see Section \ref{sec:2.2.3}), the simulator should also simulate the transcripts.
	\end{itemize}
	\paragraph{An example of the definition of the ideal functionality}We note that in the client-server setting that we consider, a frequently used template of the ideal functionality is as follows.
	\begin{exmp}\label{exmp:2.1} The Ideal takes a bit $b\in \{0,1\}$ on the server side as part of the inputs and writes $\fpass/\ffail$ on a client-side register $\bflag$ as part of the outputs. \begin{itemize}\item If $b=0$, Ideal sets $\bflag$ to be $\fpass$ and prepares the outputs as in the honest behavior. \item If $b=1$, Ideal sets $\bflag$ to be $\ffail$ and leave all the other output registers empty.\end{itemize}
	Intuitively this means that either both parties get the expected outputs in the honest setting, or the server is caught cheating.
	\end{exmp}
	%We note that the definition above is basically only an example and there could be variants on the actual definitions of Ideal.
	\paragraph{Different modes of protocols} In this work and several other works \cite{MIPstar,Grilo17}, we need to deal with a pair of protocols under the same set-up, which are called two modes; typically it includes a test mode and a comp mode. The client does not tell the server which mode it will execute, so the adversary for these two modes could be assumed to be the same operation. The security is typically as follows: if the adversary passes the test mode protocol with high probability, the comp mode will achieve the desired security tasks (which may be defined using a simulation-based notion).% Then it may be possible to design an overall protocol that combines the test mode and 
	\subsubsection{Other security definition paradigms}\label{sec:2.4.3}
	There are also other types of security definition paradigms. For example:%, depending on the problems and what we need. For example:
	\begin{itemize}
		\item Game-based security: in this paradigm there are a challenger and a distinguisher; the challenger samples a challenge bit and gives the distinguisher different states depending on the challenge bit. The distinguisher tries to predict the challenge bit. The security is roughly defined as follows: the distinguisher could not predict the challenge bit with probability non-negligibly higher than $\frac{1}{2}$.
		\item Framework for universal composability: a protocol with the simulation-based security does not necessarily remain secure when it's composed arbitrarily (for example, concurrently) with other protocols. The universal composability framework \cite{UC} and the AC framework \cite{AC} give a stronger security definition which allows universal composability. 
		\item It's also possible to define securities by the unpredictability of certain strings or some information-theoretic properties.%For some problems we could also define security by directly requiring that the output state satisfy some properties, or more generally ``when the initial state satisfies some properties, the output state satisfies some properties''. The properties might not be sufficient for real world applications in general but could still be used as a tool. The properties might be unpredictability of certain strings or some information-theoretic properties.
	\end{itemize}
	Which paradigms or definitions to use could depend on the problems and our goals; stronger security properties are desirable on their own but might be hard to achieve. Depending on what we need, there could be many choices of security definitions even for the same problem.
	%Stronger security is desirable, but may come with a cost.
	\section{Remote State Preparation with Verifiability: Definitions, Variants and Properties}\label{sec:3}
	In this section we formalize the definitions of RSPV, compare it with other variants and other RSP security notions, and study its basic properties.
	\begin{itemize}
		\item In Section \ref{sec:3.1}, we formalize the notion of RSPV. We focus on the random RSPV, which means, the states are sampled (instead of chosen by the client) from a finite set of states.
		\item In Section \ref{sec:3.2} we discuss variants of security definitions.
		\item In Section \ref{sec:3.3} we discuss variants of the problem modeling. For example, we discuss a variant of RSPV where the states are chosen by the client (instead of sampled during the protocol).% (target behaviors).%more variants of RSPV where the target states or target behaviors are modeled differently. For example, we consider RSPV where the problem size is parameterized; we consider RSPV where the client could choose the state to be prepare; we consider RSPV whose target state depends on server-side states.
		%\item In Section \ref{sec:3.3} we review other security notions of remote state preparation that are different from RSPV. (Recall that in Section \ref{sec:1.4} we have a quick discussion.)
		\item In Section \ref{sec:3.4} we study the sequential composability of RSPV.
		\item In Section \ref{sec:3.5} we study the security amplification of RSPV. We define a notion called preRSPV and show that this primitive could be amplified to an RSPV.
	\end{itemize}
	\subsection{Definitions of RSPV}\label{sec:3.1}	
	%Recall that in RSPV, the client aims at creating a state sampled from a state ensemble on the server-side. The client should know the description of the state, while the server holds the state itself. Similar to many cryptographic problems, an RSPV protocol needs to have completeness (correctness) and soundness (verifiability). Furthermore, there are two definitions for the soundness of RSPV: simulation-based soundness \cite{firstrsp,GVRSP,cvqcinlt} and rigidity-based soundness \cite{qfactory,GVRSP,GMP}, which are both used in existing works. In this subsection we choose formal definitions for both variants and study their differences and relations. 
	%\subsubsection{Set-up, completeness and efficiency}
	 Below we consider a random RSPV that samples uniformly from a tuple of states.\par% (We refer to Section \ref{sec:3.3} for other variants.)\par
	 Let's first formalize the register set-up of an RSPV protocol.
	\begin{setup}[Set-up for an RSPV protocol]\label{setup:1}
	Consider a tuple of states $(\ket{\varphi_1},\ket{\varphi_2}\cdots \ket{\varphi_D})$.\par
	The protocol takes two parameters: security parameter $1^\kappa$ and approximation error parameter $1^{1/\epsilon}$.\par
	The output registers are as follows: \begin{itemize}\item the server-side quantum register $\bQ$ whose dimension is suitable for holding the states; \item the client-side classical register $\bD$ with value in $[D]$;\item the client-side classical register $\bflag$ with value in $\{\fpass,\ffail\}$.\end{itemize}\par% The output registers are $\bflag$, $\bbC$ and $\bQ$.\par
	For modeling the initial states in the malicious setting, assume the server-side registers are denoted by register $\bS$, and assume the environment is denoted by register $\bbE$.
	\end{setup}
	%In more detail, we define the target state of RSPV, as follows:
	% \begin{defn}\label{defn:rspvsetting}An RSPV protocol is defined with respect to an ensemble of normalized states and the corresponding probabilities $$((p_1,\ket{\varphi_1}),(p_2,\ket{\varphi_2}),\cdots (p_D,\ket{\varphi_D})),\sum_{i\in [D]}p_i=1.$$
	% The target state of an RSPV protocol is denoted by the following joint state of the client and the server (described in terms of density operators):
	% \begin{equation}\label{eq:tar}\rho_{tar}=\sum_{i\in [D]}p_i\underbrace{\ket{i}\bra{i}}_{\text{client}}\otimes \underbrace{\ket{\varphi_i}\bra{\varphi_i}}_{\text{server}}\end{equation}
	%And we simply call it RSPV for $(\ket{\varphi_1}\cdots \ket{\varphi_D})$ when $(p_i)_{i\in [D]}$ is a uniform distribution.\par
	%Note that \eqref{eq:tar} should be intuitively understood as a cq-state; the fact that the client-side register is classical is equivalent to say any operator (for example, distinguishers that will be used later) that operates on the client-side register in \eqref{eq:tar} only has classical access to it.
	%\end{defn}
 %Then the completeness of an RSPV protocol is defined as follows.
 As discussed in Section \ref{sec:2.3.1}, there are three requirements for an RSPV protocol: completeness, soundness, efficiency. Below we formalize them one by one.
\begin{defn}[Completeness of RSPV]\label{defn:3.1} We say a protocol under Set-up \ref{setup:1} is $\mu$-complete if when the server is honest, the output state of the protocol is $\mu$-close to the following state:
	\begin{equation}\label{eq:pretar}\underbrace{\ket{\fpass}\bra{\fpass}}_{\bflag}\otimes\underbrace{\rho_{tar}}_{\bD,\bQ}\end{equation}
	where 
	%For defining the honest behavior, define the target state $\rho_{tar}\in \tD(\cH_{\bD}\otimes\cH_{\bQ})$ as
	\begin{equation}\label{eq:tar}\rho_{tar}=\sum_{i\in [D]}\frac{1}{D}\underbrace{\ket{i}\bra{i}}_{\bD}\otimes \underbrace{\ket{\varphi_i}\bra{\varphi_i}}_{\bQ}.	
	\end{equation}
	And we simply say the protocol is complete if it has completeness error $\fneg(\kappa)$.	
\end{defn}
The efficiency of the protocol is defined in the normal way. Below we formalize the soundness based on the paradigm in Section \ref{sec:2.4.2}.
	%\subsubsection{Soundness definition}\label{sec:3.1.2}
	%The soundness of RSPV is more subtle; below we formalize and study the two types of soundness definitions.\par
	
		%\item A more important difference is on the composability-related definition details.
	%\end{itemize}
	%Note that this definition is different from existing rigidity-based definitions. We will discuss their differences in Section \ref{}.\par
	%This takes a slightly different form from \cite{GVRSP,GMP}, where the approximation is $\epsilon_{\epsilon}$ and there is a set of different $\sigma$ on the right hand side that are indistinguishable from each other. In Appendix \ref{} we prove these two styles of definitions are equivalent.\par
	%\paragraph{Simulation-based soundness} Different from the rigidity-based soundness, the simulation-based soundness does not certify that the server really holds the state; an interpretation is ``the passing flag certifies that what the adversarial server gets is no more than holding the state''. It's not as strong as the rigidity-based definition on its own, but arguably it's sufficiently strong for many applications and it turns out to have good properties.
		 \begin{defn}[Soundness of RSPV]\label{defn:3.2} We say a protocol $\pi$ under Set-up \ref{setup:1} is $\epsilon$-sound if:\par
		  For any efficient quantum adversary $\fAdv$, there exists an efficient quantum operation $\fSim$ such that for any state $\rho_{0}\in \tD(\cH_{\bS}\otimes \cH_{\bbE})$:
		% \begin{itemize}\item (Small passing probability) Either:
		%	$$\tr(\Pi_{\fpass}(\pi^{\fAdv}(\rho_{in})))\leq \delta,$$\item or:
		\begin{equation}\label{eq:sim}\Pi_{\fpass}(\pi^{\fAdv}(\rho_{0}))\approx^{ind}_{\epsilon} \underbrace{\Pi_{\fpass}}_{\text{on }\bflag}(\underbrace{\fSim}_{\text{on }\bS,\bQ,\bflag}(\underbrace{\rho_{tar}}_{\bD,\bQ}\otimes \underbrace{\rho_{0}}_{\bS,\bbE}))\end{equation}
		%where the distinguisher has classical access to the client side of $\rho_{tar}$ and quantum access to all the other registers (including $\cS$ and $\cT$).
		%\end{itemize}
	\end{defn} 
	Here we only aim at approximate soundness (where the approximation error might be inverse polynomial). This is because in RSPV problems achieving negligible approximation error remains open and seems very difficult; and existing works all aim at inverse-polynomial errors \cite{GVRSP,qfactory,cvqcinlt}.\par
	We note that Definition \ref{defn:3.2} is subtlely different from the discussion of simulation-based paradigm in Section \ref{sec:2.4.2} in terms of how to treat the failing space. We choose Definition \ref{defn:3.2} because we feel that it's easier to work on; below we discuss this difference in more detail.
	\subsubsection{Handling the failing space within Ideal}\label{sec:3.2.1}
	%One difference of Definition \ref{defn:} and the paradigms discussed in Section \ref{sec:} is the appearance of $\Pi_{\fpass}$ operations. 
	A difference of Definition \ref{defn:3.2} from equation \eqref{eq:6} is how we treat the failing (or called aborting, rejection) space. In Definition \ref{defn:3.2} the simulator is responsible for writing the passing/failing choices to the $\bflag$ register, which is slightly different from the usual form of the simulation-based paradigm where the passing/failing flag is handled by the the ideal functionality (see Section \ref{sec:2.4.2}). Then in Section \ref{sec:2.4.2} the simulator simulates both the passing space and the failing space, while in Definition \ref{defn:3.2} only the indistinguishability on the passing space is considered. Below we formalize the variant of the RSPV security that follows the usual form of the simulation-based paradigm.
	\begin{defn}[Variant of the soundness of RSPV]\label{defn:3.3}
	Under Set-up \ref{setup:1}, define RSPVIdeal as follows:
	\begin{enumerate}
\item RSPVIdeal takes a classical bit $b\in \{0,1\}$ from the server side. \begin{itemize}\item If $b=0$: it sets $\bflag$ to be $\fpass$  and prepares $\rho_{tar}$ on $\bD,\bQ$.\item If $b=1$: it sets $\bflag$ to be $\ffail$.\end{itemize}
	\end{enumerate}
Then the soundness of RSPV could be defined as follows. We say a protocol $\pi$ is $\epsilon$-sound if:\par
For any efficient quantum adversary $\fAdv$, there exist efficient quantum operations $\fSim=(\fSim_0,\fSim_1)$ such that for any state $\rho_{0}\in \tD(\cH_{\bS}\otimes \cH_{\bbE})$:
\begin{equation}\label{eq:9}
	\pi^\fAdv(\rho_{0})\approx^{ind}_{\epsilon}\underbrace{\fSim_1}_{\text{on }\bS,\bQ}(\text{RSPVIdeal}(\underbrace{\fSim_0}_{\text{on }\bS,\text{ generate $b\in \{0,1\}$ as the input of RSPVIdeal}}(\rho_{0})))
\end{equation}
\end{defn}
%The reason for their equivalence is as follows.
%\begin{itemize}
	%\item \eqref{eq:sim} implies \eqref{eq:9}:
	%\item 
We compare Definition \ref{defn:3.3} with Definition \ref{defn:3.2} as follows. We first compare the strength of the definitions, and compare the convenience of usage.\par
\paragraph{Comparison of strength} First, we note that Definition \ref{defn:3.3} implies Definition \ref{defn:3.2}. Starting from \eqref{eq:9}, we construct $\fSim$ that satisfies \eqref{eq:sim} as follows: $\fSim$ executes $\fSim_0$, gets $b$, and executes $\fSim_1$; then it sets $\bflag$ to be $\fpass$ if $b=0$ and sets $\bflag$ to be $\ffail$ if $b=1$.\par
But as far as we know, Definition \ref{defn:3.2} does not seem to imply Definition \ref{defn:3.3} directly (at least in an easy way). Intuitively, \eqref{eq:sim} only says the passing space is simulated by $\fSim$, and there is no guarantee that the failing space is also simulated by the same simulator. It's possible to construct another simulator that simulates the failing space but it's not clear how to combine them into a single simulator directly. (If we are working in the classical world, since the initial states could always be cloned, it's much easier to combine these two simulators into one simulator that works for both cases; but in the quantum world since $\rho_{0}$ could not be cloned the simulation might destroy the initial states and the rewinding is not always possible, it's not clear how to go from equation \eqref{eq:sim} to equation \eqref{eq:9}.)\par
However, we will show that, for RSPV problem under Set-up \ref{setup:1}, once we have an RSPV protocol with soundness under Definition \ref{defn:3.2}, we could amplify it to a new protocol with soundness under Definition \ref{defn:3.3}. The amplification is by a simple sequential repetition and cut-and-choose procedure: the same protocol is repeated for many rounds and one round is randomly chosen as the outputs. However, we note that this reduction will not work when there is additional client-side inputs (for example, some variants of RSPV discussed in Section \ref{sec:3.3.1}); the amplification relies on the fact that the random RSPV does not have any client-side or server-side inputs other than the parameters.\par
The amplification protocol is given below.
%below TB polished
\begin{mdframed}[backgroundcolor=black!10]
	Below we assume $\pi$ is an RSPV under Set-up \ref{setup:1} that is complete and $\epsilon_0$-sound under Definition \ref{defn:3.2}.
\begin{prtl}[Soundness amplification from Definition \ref{defn:3.2} to Definition \ref{defn:3.3}]\label{prtl:0}An RSPV protocol under Set-up \ref{setup:1} is constructed as follows.\par%Given an RSPV protocol $\pi$ for target state $\rho_{tar}$ and a repetition number $L$. The cut-and-choose amplification procedure is defined as below.%, this protocol aims at an RSP for $\rho_{tar}$ with soundness $s^\prime$.
	Parameters: approximation error parameter $1^{1/\epsilon}$, security parameter $1^\kappa$. It is required that $\epsilon>\epsilon_0$.\par
	Output registers: client-side classical registers $\bflag^{(\tout)},\bD^{(\tout)}$, server-side quantum register $\bQ^{(\tout)}$.
	\begin{enumerate}
		\item Define $L=\frac{216}{(\epsilon-\epsilon_0)^3}$.\par
		For each $i\in [L]$:\begin{enumerate}
		\item The client executes $\pi$ with the server. Store the outputs in registers $\bflag^{(i)},\bD^{(i)},\bQ^{(i)}$.
		\end{enumerate}
		\item The client randomly chooses $i\in [L]$.\par
		The client sets $\bflag^{(\tout)}$ to be $\ffail$ if any one of $\bflag^{(1)},\bflag^{(2)}\cdots\bflag^{(i)}$ is $\ffail$. Otherwise it sets $\bflag^{(\tout)}$ to be $\fpass$
		 and sends $i$ to the server. Both parties use the states in $\bD^{(i)},\bQ^{(i)}$ as the outputs.
	\end{enumerate}
\end{prtl}
\end{mdframed}
%\paragraph{Comparison with existing works}
%Then we show the protocol above is an RSPV. 
The completeness and efficiency are from the protocol; below we state and prove the soundness.
%We have the following theorem on this cut-and-choose process. Note this process does not reduce the approximation error but make the soundness better.
% \begin{thm}
% If $\pi$ is an RSPV with soundness $s$ and approximation error $\epsilon$, for any $s^\prime<s$, Protocol \ref{prtl:1} has soundness $s^\prime$ and approximation error $\epsilon+\frac{2}{L}\log_s(s^\prime)$.	
% \end{thm}
% Especially, by taking $L=O(\frac{1}{\epsilon(1-s)})$, we are able to amplify the original protocol to a new protocol with a much smaller soundness value, and approximation error $O(\epsilon)$. 
\begin{thm}\label{thm:3.1.1thm}
Protocol \ref{prtl:0} is $\epsilon$-sound under Definition \ref{defn:3.3}.
\end{thm}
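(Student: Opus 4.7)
The plan is to construct a simulator $(\fSim_0,\fSim_1)$ for Definition \ref{defn:3.3} by leveraging the Definition \ref{defn:3.2} simulator for $\pi$ (call it $\fSim^{(\pi)}$) in a single, randomly chosen round. Concretely, $\fSim_0$ will sample $i\in[L]$ uniformly and an index $j^*\in[D]$ coupled to RSPVIdeal's eventual output, then run $\pi$ with $\fAdv$ for rounds $1,\ldots,i-1$, substitute round $i$ with $\fSim^{(\pi)}$ applied to the internal $\rho_{tar}$ corresponding to $j^*$, and finally run $\pi$ with $\fAdv$ for rounds $i+1,\ldots,L$ to obtain the final server state. It then sets $b=0$ iff $\bflag^{(1)},\ldots,\bflag^{(i)}$ are all $\fpass$. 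On the pass side, $\fSim_1$ writes RSPVIdeal's $\rho_{tar}$ onto $\bD^{(\tout)},\bQ^{(\tout)}$ while passing along the internal $\bS,\bbE$; on the fail side it just outputs $(\ffail,\text{empty})$ together with the internal $\bS,\bbE$.

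For the analysis, I fix any initial state and any BQP distinguisher and bound the overall distinguishing advantage. For each fixed choice of $i$, define a hybrid $H_i$ equal to the execution just described. Applying Definition \ref{defn:3.2} to round $i$, with the state after rounds $1,\ldots,i-1$ playing the role of the initial state and the rest of the system playing the role of the environment, the real execution and $H_i$ are $\epsilon_0$-indistinguishable conditioned on $\bflag^{(i)}=\fpass$. Because rounds $i+1,\ldots,L$ are identical in both executions and the final cut-and-choose and flag computation are efficient classical operations, this $\epsilon_0$-indistinguishability is preserved all the way to the final joint state on the pass-$i$ branch.

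On the fail-$i$ branch, Definition \ref{defn:3.2} gives no direct guarantee, so I bound the contribution separately: in both the real protocol and $H_i$ the overall flag equals $\ffail$ with empty $\bD,\bQ$, so the distinguisher only sees $\bS,\bbE$, whose distance is bounded by $O(\Pr[\bflag^{(i)}=\ffail])$. Averaging the per-$i$ bound over the uniform choice of $i$ yields a total distinguishing advantage of order $\epsilon_0+\tfrac{1}{L}\sum_{i\in[L]}\Pr[\bflag^{(i)}=\ffail]$, plus the standard $\fneg(\kappa)$ slack absorbed into $\approx^{ind}$. The dichotomy is then: either this average failure mass is $O(\epsilon-\epsilon_0)$, in which case the bound is already $\leq\epsilon$; or it is large, in which case Protocol \ref{prtl:0} itself passes only with small probability, so the pass-side contribution (the only place the real and ideal truly differ) is correspondingly small. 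Balancing these two regimes through a Markov-type inequality on $\Pr[\bflag^{(i)}=\ffail]$ along the index $i$, and squeezing the approximation slack and the averaged fail mass at matching rates, is exactly what produces the cubic dependence $L=216/(\epsilon-\epsilon_0)^3$ stated in Protocol \ref{prtl:0}.

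The hardest step is this last balancing. The $\epsilon_0$-indistinguishability on the pass branch is clean, but Definition \ref{defn:3.2}'s silence on the fail branch means I must quantitatively tie the residual error to the cut-and-choose structure: fail events in round $i$ both reduce the pass probability of the overall protocol and can cause subsequent rounds to drift apart between real and hybrid, and these two effects must be controlled simultaneously over the same $L$ rounds. Making the Markov cutoff and the averaging rate match up is what forces the $L\sim1/(\epsilon-\epsilon_0)^3$ scaling rather than a cleaner $1/(\epsilon-\epsilon_0)^2$, and verifying that the cutoff argument goes through with the internal-run-of-$\fAdv$ used to match $\bS,\bbE$ on fail is the main technical work of the proof.
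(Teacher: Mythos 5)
Your overall plan --- substitute the Definition \ref{defn:3.2} simulator into a uniformly random round $i$, set $b$ from the early flags, and control the error by separating high-pass from low-pass rounds --- is the same skeleton as the paper's proof. But the simulator you construct does not fit Definition \ref{defn:3.3}, and this is a genuine gap rather than a presentation issue. In Definition \ref{defn:3.3} the order is fixed: $\fSim_0$ acts on $\bS$ and outputs $b$, \emph{then} RSPVIdeal independently samples $\rho_{tar}$ onto $(\bD,\bQ)$, \emph{then} $\fSim_1$ acts on $\bS,\bQ$ only. Your $\fSim_0$ samples its own $j^*$, runs $\fSim^{(\pi)}$ on an internally prepared copy of $\rho_{tar}$, and then runs the adversary for rounds $i+1,\dots,L$ --- all before RSPVIdeal has sampled anything. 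There is no mechanism by which $j^*$ can be ``coupled to RSPVIdeal's eventual output'': RSPVIdeal's sample is independent of $\fSim_0$'s coins, $\fSim_1$ never sees $\bD$, and pasting RSPVIdeal's $\bQ$-state in afterwards cannot repair the correlations. Concretely, an adversary that in rounds $i+1,\dots,L$ measures $\bQ^{(i)}$ and stores the outcome in $\bS$ produces a real output in which $\bS$ is correlated with $\bD^{(\tout)}$, whereas in your simulation $\bS$ is correlated with the independent internal sample $j^*$; a distinguisher reading $\bD^{(\tout)}$ and $\bS$ tells these apart. The paper avoids this by deferring both the round-$i$ substitution and the simulation of rounds $i+1,\dots,L$ to $\fSim_1$, so that $\fSim_i$ is applied to the $\bQ$-part of RSPVIdeal's genuine output and the later rounds act on that state; $\fSim_0$ only simulates rounds $1,\dots,i-1$ and computes $b$ from those flags.

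Two further points on the analysis. First, your closing dichotomy (``either the averaged fail mass is $O(\epsilon-\epsilon_0)$ or the whole protocol passes with small probability'') is too coarse: an adversary can concentrate a few rounds of large conditional fail probability while the overall pass probability stays moderate. The paper's argument is per-prefix: call round $i$ low-pass if the conditional pass probability drops by more than a factor $(1-\tfrac{\epsilon-\epsilon_0}{6})$; at most $36/(\epsilon-\epsilon_0)^2$ such rounds can occur before the cumulative pass mass falls below $\tfrac{\epsilon-\epsilon_0}{6}$, so a uniform $i$ lands in this head with probability at most $\tfrac{36/(\epsilon-\epsilon_0)^2}{L}=\tfrac{\epsilon-\epsilon_0}{6}$, and any $i$ in the tail contributes at most the residual pass mass $\tfrac{\epsilon-\epsilon_0}{6}$; this head/tail split is what actually forces $L=216/(\epsilon-\epsilon_0)^3$. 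Second, the residual pass-side discrepancy is not an abstract $\Pr[\bflag^{(i)}=\ffail]$ but precisely the event that rounds $1,\dots,i-1$ pass while round $i$ fails (the output flag depends on flags $1,\dots,i$, whereas $b$ is computed from flags $1,\dots,i-1$); for rounds outside the low-pass set this conditional mass is at most $\tfrac{\epsilon-\epsilon_0}{6}$ by definition, which is exactly the bound needed. With the simulator reordered as above and this counting in place, your outline closes.
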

\begin{proof}
	Consider an adversary $\fAdv$. Denote the initial state as $\rho_0$, and denote the output state by the end of the $i$-th round of step 1 as $\rho_i$. Then by the soundness of $\pi$ (Definition \ref{defn:3.2}) we get that, for any $i\in [L]$, there exists an efficient simulator $\fSim_i$ such that:
	\begin{equation}\label{eq:18r00}
		\Pi_{\fpass}^{\bflag^{(\leq i)}}(\rho_{i})\approx_{\epsilon_0+\fneg(\kappa)}^{ind}\Pi_{\fpass}^{\bflag^{(i)}}(\underbrace{\fSim_i}_{\text{on }\bS,\bQ^{(i)},\bflag^{(i)}}(\underbrace{\rho_{tar}}_{\bD^{(i)},\bQ^{(i)}}\otimes \Pi_{\fpass}^{\bflag^{(\leq i-1)}}(\rho_{i-1})))
	\end{equation}
	Define $S_{\text{low pass}}$ as the set of $i$ such that $\tr(\Pi_{\fpass}^{\bflag^{(\leq i)}}(\rho_{i}))\leq (1-\frac{1}{6}(\epsilon-\epsilon_0))\tr(\Pi_{\fpass}^{\bflag^{(\leq i-1)}}(\rho_{i-1}))$. Then we note that:
	\begin{itemize}\item For each $i\not\in S_{\text{low pass}}$, \begin{equation}\label{eq:neweq1}\tr(\Pi_{\ffail}^{\bflag^{(i)}}(\pi^{\fAdv_i}(\Pi_{\fpass}^{\bflag^{(\leq i-1)}}(\rho_{i-1}))))=\tr(\Pi_{\fpass}^{\bflag^{(\leq i-1)}}(\rho_{i-1}))-\tr(\Pi_{\fpass}^{\bflag^{(\leq i)}}(\rho_{i}))\leq \frac{1}{6}(\epsilon-\epsilon_0).\end{equation}
		\item For each $i\in S_{\text{low pass}}$ such that\footnote{Recall that $|\cdot|$ denotes the size of a set.} $|S_{\text{low pass}}\cap [i]|\geq 36/(\epsilon-\epsilon_0)^2$, \begin{equation}\label{eq:neweq2}\tr(\Pi_{\fpass}^{\bflag^{(\leq i-1)}}(\rho_{i-1}))\leq \frac{1}{6}(\epsilon-\epsilon_0).\end{equation}
	\end{itemize}
Then the simulator $(\fSim_0,\fSim_1)$ where $\fSim_0$ operates on the server-side of $\rho_0$, and $\fSim_1$ operates on $\text{RSPVIdeal}(\fSim_0(\rho_0))$, is defined as follows (note that we abuse the notation and this $(\fSim_0,\fSim_1)$ is different from the simulator in \eqref{eq:18r00}).\par
$\fSim_0$:
\begin{enumerate}
	\item Sample a random coin $i\leftarrow [L]$. This is for simulating the client's random choice in the second step of Protocol \ref{prtl:1}.
	\item Run $\tilde\pi_{<1.i}$ on $\rho_0$ and get $\tilde\rho_{i-1}$. Here $\tilde\pi_{<1.i}$ denotes the simulated protocol execution of Protocol \ref{prtl:1} until the beginning of the $i$-th round of the first step. Here ``simulated protocol execution'' means that, instead of interacting with the client, the simulator simulates a client on its own. So the difference of $\rho_{i-1}$ and $\tilde\rho_{i-1}$ is only the locations of these ``client-side registers''.
	\item If all the $\bflag$ registers so far have value $\fpass$, set\footnote{Recall Definition \ref{defn:3.3} on the bit $b$ of RSPVIdeal inputs.} $b=0$. If any $\bflag$ registers so far has value $\ffail$, set $b=1$.
\end{enumerate}
$\fSim_1$:
\begin{enumerate}
	\item If $b=0$, run $\fSim_i$.\\ If $b=1$, run $\tilde\pi_{1.i}$ where $\tilde\pi_{1.i}$ is defined as above for simulating the $i$-th rounds of the first step of Protocol \ref{prtl:0}.
	\item Run $\tilde\pi_{>1.i}$. Here $\tilde\pi_{>1.i}$ is defined similarly as above for simulating the $i+1\sim L$ rounds of the first step of Protocol \ref{prtl:0}.
	\item Disgard all the auxiliary registers used to simulate the client-side information.
\end{enumerate}\par
We prove this simulator achieves what we want. We could compare the simulated output states with the output states from the real execution and see where the distinguishing advantage could come from (where the distinguishing advantage refers to the approximation error in the soundness definition; see equation \eqref{eq:9}):
	\begin{itemize}
		%\item In the original protocol $i$ is not sampled randomly but on all the $comp$ rounds, while in the simulation $i$ is sampled randomly from $[L]$. But in the original protocol with high probability the number of $test$ rounds is at most $2pL$ so this part contributes an error of at most $\frac{\epsilon-\epsilon_0}{8}$ on both size of equation \eqref{eq:sim}.
		\item For $i\not\in S_{\text{low pass}}$:
		\begin{itemize}
		\item Equation \eqref{eq:18r00} itself contributes an error of $\epsilon_0$.
		\item $\tr(\Pi_{\ffail}^{\bflag^{(i)}}(\pi^{\fAdv_i}(\Pi_{\fpass}^{\bflag^{(\leq i-1)}}(\rho_{i-1}))))$ contributes an error of $\frac{\epsilon-\epsilon_0}{6}$ by \eqref{eq:neweq1} on both sides of equation \eqref{eq:9}.
		\item $(\bbI-\Pi_{\fpass}^{\bflag^{(\leq i-1)}})(\rho_{i-1})$ part is simulated perfectly in later simulation.
		\end{itemize}
		\item For $i\in S_{\text{low pass}}$, although we do not know the size of $S_{\text{low pass}}$, we could divide this set into $S_{\text{head}}$ and $S_{\text{tail}}$ where $S_{\text{head}}$ is the set of the first $\frac{36}{(\epsilon-\epsilon_0)^2}$ elements and $S_{\text{tail}}$ is the set of the remaining elements. Below we bound the effect of these two sets on the distinguishing advantage.
		\begin{itemize}
			\item The size of $S_{\text{head}}$ is at most $\frac{36}{(\epsilon-\epsilon_0)^2}$ thus it contributes an error of $\frac{\epsilon-\epsilon_0}{6}$ on both size of equation \eqref{eq:9}.
			\item For $i\in S_{\text{tail}}$, this part contributes at most an error of $\frac{\epsilon-\epsilon_0}{6}$ by \eqref{eq:neweq2} on both size of equation \eqref{eq:9}.
			\item $(\bbI-\Pi_{\fpass}^{\bflag^{(\leq i-1)}})(\rho_{i-1})$ part is simulated perfectly in later simulation.
		\end{itemize}
	\end{itemize}
	Summing them up we get the total approximation error to be bounded by $\epsilon_0+2(\frac{\epsilon-\epsilon_0}{6}+\frac{\epsilon-\epsilon_0}{6}+\frac{\epsilon-\epsilon_0}{6})\leq \epsilon$.
\end{proof}
\paragraph{Comparison on the convenience of usage}Definition \ref{defn:3.3} is more standard: it follows the usual form of the simulation-based security (as in Section \ref{sec:2.4.2}). The fact that it seems stronger than Definition \ref{defn:3.2} could also be considered as an advantage. But Definition \ref{defn:3.2} is simpler to work on when we build new RSPV protocols on smaller protocols: we only need to work on a single simulator modeled as a superoperator instead of two.  %We then argue that Definition \ref{defn:3.2} is sufficient for usual usage and protocol design. Although there is no control on the failing space, what we care about in the protocol design is usually the passing space, and the failing flag means that the client catches the server cheating --- which is still what we expect from the primitive. There are works that use definitions in the form of Definition \ref{defn:3.2} (including this work and \cite{cvqcinlt}) and the overall proofs go through normally.\par
%\end{itemize}%Above TBpolished
\paragraph{Convention} Starting from Section \ref{sec:3.3}, we will mainly use Definition \ref{defn:3.2}. 
	\subsection{Variants of the Security Definition}\label{sec:3.2}
	
	% \begin{figure}
	% 	\begin{tikzpicture}
	% 		TBA
	% 	\end{tikzpicture}
	% 	\caption{Ideal functionality for RSPV}
	% \end{figure}
	%About the transcripts
	%\subsubsection{Variants of security definitions and comparisons}
	\subsubsection{The rigidity-based soundness}\label{sec:3.2.2}
	A popular variant of the RSPV security that is subtlely different from the simulation-based soundness is the \emph{rigidity-based soundness}. Roughly speaking, the rigidity-based soundness says the output state, after going through an isometry on the server side, is close to the target state. %An interpretation is ``the passing flag certifies that the server has really got the state''.
	\begin{defn}[Rigidity-based soundness of RSPV] 
		We say a protocol $\pi$ under Set-up \ref{setup:1} is $\epsilon$-sound under rigidity-based soundness if:\par
		  For any efficient quantum adversary $\fAdv$, there exist a efficient isometry $V$ and\footnote{We note that this definition is slightly different from the (rigidity-based) definitions in existing works \cite{GVRSP,GMP}. In \cite{GVRSP,GMP} the left hand side of \eqref{eq:10} is statistically close to a state in the form of $\sum_i\ket{\varphi_i}\bra{\varphi_i}\otimes \sigma_i$, and then a computational indistinguishability requirement is put on $\sigma_i$ for different $i$. We argue that our global indistinguishability captures the same intuition and is more general; what's more, a suitable formulation of variants of definitions in \cite{GVRSP,GMP} should imply this definition.\par
		  One obstacle of showing a direct implication from the definitions in \cite{GVRSP,GMP} is on the efficient simulatable property on these $\sigma_i$: the definitions of rigidity-based soundness used in \cite{GVRSP,GMP} does not seem to imply it could be written in the form of $\fSim(\rho_{0})$. A more careful analysis of the relations between this definition and existing rigidity definitions remains to be done and is out of the scope of this work.} an efficient quantum operation $\fSim$ such that for any state $\rho_{0}\in \tD(\cH_{\bS}\otimes \cH_{\bbE})$:
		%We say a protocol $\pi$ is an RSPV for target state $\rho_{tar}$ with soundness error $\delta$ and approximation error $\epsilon$ under rigidity-based definition if:\par
	 %For any BQP adversary $\fAdv$, any input state $\rho_{in}\in D(\cS\otimes \cT)$ prepared by the adversary where $\cS$ is the server-side system and $\cT$ is a system that will not be touched by any party in the protocol, there exists a server-side efficiently-computable isometry $V^{\fAdv}$ and an efficiently-computable operation $\fSim^{\fAdv}$ operated on $\cS$ such that:
	 %\begin{itemize}\item (Small passing probability) Either:
	%	$$\tr(\Pi_{\fpass}(\pi^{\fAdv}(\rho_{in})))\leq \delta,$$ \item or:
		\begin{equation}\label{eq:10}\Pi_{\fpass}(\underbrace{V}_{\text{on }\bS,\bQ}(\pi^{\fAdv}(\rho_{0})))\approx^{ind}_{\epsilon} \Pi_{\fpass}(\underbrace{\rho_{tar}}_{\bD,\bQ}\otimes \underbrace{\fSim}_{\text{on }\bS,\bflag}(\underbrace{\rho_{0}}_{\bS,\bbE}))\end{equation}
		%where the distinguisher has classical access to the client side of $\rho_{tar}$ and quantum access to all the other registers (including $\cS$ and $\cT$).
		%\end{itemize}
	\end{defn}
	\paragraph{Comparison with the simulation-based soundness} We compare the intuition of simulation-based soundness and rigidity-based soundness as follows.
	\begin{itemize}
		\item An interpretation of the rigidity-based soundness is ``if the protocol passes, the server really gets the state''.
		\item An interpretation of the simulation-based soundness is ``if the protocol passes, what the server gets is no more than the state''.
	\end{itemize}
	And we could prove the simulation-based soundness as defined in Definition \ref{defn:3.2} is no stronger than the rigidity-based soundness defined above:
	\begin{thm}
		Suppose a protocol $\pi$ under Set-up \ref{setup:1} is $\epsilon$-sound under rigidity-based soundness, then it's also $\epsilon$-sound under simulation-based soundness.
	\end{thm}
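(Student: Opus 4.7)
The plan is to construct the required simulator $\fSim'$ of Definition \ref{defn:3.2} from the rigidity witness $(V,\fSim)$ by essentially ``inverting'' $V$, and then to reduce any successful distinguisher for simulation-based soundness to a distinguisher against the rigidity-based bound.

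Given an adversary $\fAdv$, the rigidity statement supplies an efficient isometry $V$ on $\bS,\bQ$ and an efficient simulator $\fSim$ producing an output on $\bS,\bflag$. I would define $\fSim'$ on input $\rho_{tar}\otimes \rho_0$ as follows: first apply $\fSim$ to the $\rho_0$ part, obtaining a joint state on $\bD,\bQ,\bS,\bflag,\bbE$ in which $\rho_{tar}$ still sits on $\bD,\bQ$; then apply a CPTP extension $\tilde V^\dagger$ of $V^\dagger$ on $\bS,\bQ$. Since any efficient isometry $V$ can be realized as a unitary $U$ on $\bS\otimes\bQ\otimes\bA$ with an ancilla $\bA$ initialized to $\ket{0}$, one may take $\tilde V^\dagger(\sigma) := \tr_{\bA}\!\bigl(U^\dagger(\sigma\otimes \ket{0}\bra{0}_{\bA})U\bigr)$, which is efficient and satisfies $\tilde V^\dagger\circ V = \mathrm{id}$ on the input space of $V$.

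For the reduction, suppose some BQP distinguisher $D$ achieves advantage strictly greater than $\epsilon+\fneg(\kappa)$ between $\Pi_{\fpass}(\pi^{\fAdv}(\rho_0))$ and $\Pi_{\fpass}(\fSim'(\rho_{tar}\otimes \rho_0))$. I would build a distinguisher $D'$ against the rigidity-based statement by first applying $\tilde V^\dagger$ on the $\bS,\bQ$ registers of its input and then running $D$. Because $\Pi_{\fpass}$ acts only on $\bflag$, it commutes with $\tilde V^\dagger$; hence on the rigidity LHS $\Pi_{\fpass}(V(\pi^{\fAdv}(\rho_0)))$ this post-processing yields $\Pi_{\fpass}(\pi^{\fAdv}(\rho_0))$ (using $\tilde V^\dagger V=\mathrm{id}$), while on the rigidity RHS $\Pi_{\fpass}(\rho_{tar}\otimes \fSim(\rho_0))$ it yields exactly $\Pi_{\fpass}(\fSim'(\rho_{tar}\otimes \rho_0))$ by the definition of $\fSim'$. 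Thus $D'$ inherits $D$'s distinguishing advantage, contradicting the rigidity bound.

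The only substantive subtlety is to verify that the various registers ($\bbE$ and the client-side registers $\bD,\bflag$) are handled consistently on both sides and that $\tilde V^\dagger$ truly inverts $V$ on the relevant subspace; neither is hard once $\tilde V^\dagger$ is spelled out as above. I do not anticipate a major obstacle: the argument is essentially a one-line post-processing reduction, and the main care is simply in making the CPTP extension of $V^\dagger$ precise so that efficiency of $\fSim'$ and of $D'$ both follow from efficiency of $V$.
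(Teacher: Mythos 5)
Your proposal is correct and is essentially the paper's own argument: the paper also defines the simulator as $\fSim^\prime(\cdot\otimes\cdot)=V^\dagger(\cdot\otimes\fSim(\cdot))$, i.e.\ run the rigidity simulator on $\rho_0$ and undo the isometry, and the indistinguishability then follows exactly by the post-processing reduction you describe. Your write-up merely makes explicit the CPTP left-inverse of $V$ and the distinguisher reduction that the paper leaves implicit (just take care that when $V$ is realized as $U(\cdot\otimes\ket{0}_{\bA})$ the inverse map should act on the already-enlarged space, $\sigma\mapsto\tr_{\bA}(U^\dagger\sigma U)$, rather than tensoring in a second ancilla).
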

	\begin{proof}
		By the rigidity-based soundness we get $V$, $\fSim$ that satisfies \eqref{eq:10}. Then taking $$\fSim^\prime(\underbrace{\cdot}_{\rho_{tar}}\otimes \underbrace{\cdot}_{\rho_{0}})=V^\dagger(\underbrace{\cdot}_{\rho_{tar}}\otimes \fSim(\underbrace{\cdot}_{\rho_{0}}))$$ as the simulator in \eqref{eq:sim} completes the proof.
	\end{proof}
	But the inverse does not seem to be true. Actually, the rigidity-based soundness of RSPV is not even resilient to an additional empty timestep (that is, no party does anything) at the end of the protocol: the adversary could destroy everything in the end to violate the rigidity requirement. For comparison, the simulation-based notion has such resilience: the state destroying operation could be absorbed into the simulator in \eqref{eq:sim}. Arguably this also means the simulation-based notion has more well-behaved properties.\par
	Although the rigidity-based notion seems stronger, intuitively the simulation-based version is as useful as the rigidity-based version in common applications of RSPV. When we construct cryptographic protocols, what we are doing is usually to enforce that the malicious parties could not do undesirable attacks. In the simulation-based soundness it is certified that what the adversary gets is no more than the target state, which intuitively means that it is at least as secure as really getting the target state.
	% \paragraph{Other variants}
	% \begin{itemize}
	% 	\item Approximation error
	% 	\item Uniformity
	% 	\item Inefficient simulation
	% \end{itemize}
	%\paragraph{Simulator efficiency}
	\subsubsection{Other subtleties}
	 We refer to Section \ref{sec:2.4.2} for discussions on simulator efficiency and the modeling of transcripts.\par% (note that if we model the transcript registers separately the simulator in \eqref{eq:sim} should operate on $\bS$, $\bflag$ and the transcript registers).\par
	 %In \eqref{eq:6} we model the initial states as efficiently preparable states (that are possibly non-uniform, see Definition \ref{defn:2.4}); in other words in this definition classical advices are allowed but quantum advices are not.
	 There are other choices of uniformity when we model the inputs and initial states. In Definition \ref{defn:3.2} we consider all initial states $\rho_{0}\in \tD(\cH_{\bS}\otimes \cH_{\bbE})$, which implicitly taking the quantum advice into consideration. Alternatively, we could require that all the inputs and initial states are uniformly efficiently preparable, so there is no advice string or state; or we could require the initial quantum states to be efficiently preparable but allow the classical advice. In this work we feel that allowing the quantum advice is more convenient and intuitive since we need to work on quantum states that are not necessarily efficiently preparable (for example, ground states of Hamiltonians).%\par
	 %Another subtle difference of Definition \ref{defn:3.2} from the previous version \cite{jiayurspfv} is the order of the simulator and the initial state. This work chooses the order to be ``there exists a simulator for any initial state'', while the previous version uses ``for any efficiently preparable initial state there exists a simulator''. The current definition is stronger and seems much more natural and intuitive; but the definition in the previous definition somehow also works. In this work we use the order of quantifier as given in Definition \ref{defn:3.2} (one simulator for all initial states). 
	\subsubsection{Other security notions}
	\paragraph{Basis blindness}
	Besides the RSP with verifiability, there is also a security notion for RSP called \emph{basis blindness}. This is defined in \cite{qfactory} as a security notion for RSP for $\ket{+_\theta}$ states. Intuitively it means that, no malicious server could predict the two less-significant bits of $\theta$; alternatively it could be formulated as a game-based security notion, where the distinguisher needs to distinguish the two less-significant bits from uniformly random bits. An RSP with this type of security does not work as a protocol compiler (for replacing the transmissions of quantum states with cheaper resources) in general; but it could indeed be used to compile certain protocols securely, especially the universal blind quantum computation protocol \cite{UBQC,jiayu20,RSPVImp}.
	\paragraph{Universal composability}A stronger security requirement is the universal composability, as mentioned in Section \ref{sec:2.4.3}. Unfortunately, as shown in \cite{RSPVImp}, it's not possible to construct RSP protocol with this type of security. But this does not rule out the possibility of using weaker security notions, like Definition \ref{defn:3.2} or any other notion mentioned above. We note that although Definition \ref{defn:3.2} does not have universal composability, it still satisfies sequential composability (see Section \ref{sec:3.4}). 
	\paragraph{Unpredictability of keys}For certain state families, we could also formalize the security as unpredictability of certain strings. For example, consider the RSP for states in the form of $\frac{1}{\sqrt{2}}(\ket{x_0}+\ket{x_1})$, $x_0,x_1\in \{0,1\}^n$. A useful security notion is to require that $x_0||x_1$ is unpredictable (where $||$ is the concatenation notation). Then we could use a primitive called NTCF to construct such an RSP (see \cite{BCMVV} for details). As another example, \cite{jiayu20} makes heavy use of this type of security notions (possibly in some advanced forms) as tools.
	\paragraph{Other RSP notions}There are also RSP works with other types of security notions. For example, \cite{shmueli22} studies the remote preparation of quantum money states; the final state satisfies unclonability and verifiability for quantum money states. Note that the verifiability for quantum money states is different from the verifiability in RSPV.
	\subsubsection{Purified joint states}\label{sec:3.2.4}
	As a security proof technique, in the studies of RSP we often need to work on the purified joint states. In all the previous discussions we work on density operators. In quantum information density operators could be purified into pure states by introducing a reference register $\bR$ (see Fact \ref{fact:3}); then superoperators could be equivalently replaced by the corresponding unitaries.\par 
	In more detail, we could first purify the initial state $\rho_{0}$ in Definition \ref{defn:3.2} as a pure state on $\cH_{\bS}\otimes \cH_{\bbE}\otimes\cH_{\bR}$. Then notice that we could enlarge $\bbE$ to contain $\bR$, which gives the distinguisher access to more registers; and this does not weaken the overall statement because indistinguishability on more registers implies indistinguishability on parts of them. Thus to prove Definition \ref{defn:3.2} it's sufficient to only consider pure initial states in $\cH_{\bS}\otimes \cH_{\bbE}$.\par
	Then we could do purification for each classical register appeared in the protocol. There are two ways to do this purification:
	\begin{itemize}
		\item Assume the classical registers are purified by corresponding reference registers.
		\item Simply replace the classical registers and classical states by quantum registers and quantum states that look the same with the original states on computational basis; and then put a requirement that operators operated on them could only read the values (that is, CNOT the contents to other registers) but should not revise the values of these registers. When this requirement is satisfied, the purified state is indistinguishable to the original cq-state.
	\end{itemize}
	Working on purified joint states enables some proof techniques that are hard to use or even unavailable when we work on density operators. Different security proofs could use different ways of purification.\par
	As examples, \cite{B15} purifies the client-side classical information to reduce the original protocol to a entanglement-based protocol; \cite{jiayu20} uses heavily ``state decomposition lemmas'' and the triangle inequality, which works on purified joint states. In this work we will use this type of technique to prove Lemma \ref{prop:5.3} (see Section \ref{sec:5.3.2}), which requires decomposing and combining components of the purified joint state.
	\subsection{Variants and Generalizations of the Problem Modeling}\label{sec:3.3}
	In the previous sections we focus on random RSPV for a fixed size state family; in this section we discuss its variants and generalizations. In Section \ref{sec:3.3.1} we formalize the RSPV notions where the state families are parameterized by problem size parameters (instead of a fixed size) or where the states are chosen by the client (instead of being sampled randomly), and discuss the notion where the states are prepared from some initial resource states (instead of being generated from scratch). In Section \ref{sec:3.3.2} we discuss a generalization that allows us to prepare arbitrary efficiently preparable states (instead of some specific state families).%, including state family parameterized by 
	\subsubsection{Problem size parameters, sampling versus choosing and resource states}\label{sec:3.3.1}
	\paragraph{Problem size parameters}
	In the previous sections we work on a tuple of states of a fixed size (see Set-up \ref{setup:1}). In the later sections we will need to work on state families parameterized by a size parameter. As an example, we need to work on states in $\{\frac{1}{\sqrt{2}}(\ket{0}\ket{x_0}+\ket{1}\ket{x_1}):x_0,x_1\in \{0,1\}^n\}$; here the problem size parameter is $n$. Below we formalize the set-up and the security definition, which are adapted from Set-up \ref{setup:1} by adding the problem size parameter.
	\begin{setup}\label{setup:2}
		Consider a state family parameterized by $1^n$: denoted it as $S=(S_n)_{n\in \bN}$, where each $S_n$ is a set of states; suppose the set of their descriptions is $D_n$, and suppose $S_n=\{\ket{\varphi_k}\}_{k\in D_n}$.\par
		The protocol takes the following parameters: \begin{itemize}\item problem size parameter $1^n$;\item security parameter $1^\kappa$;\item approximation parameter $1^{1/\epsilon}$.\end{itemize}\par
		Corresponding to problem size $1^n$, the output registers are as follows:\begin{itemize}\item the server-side quantum register $\bQ$ for holding states in $S_n$;\item the client-side classical register $\bD$ for holding descriptions in $D_n$;\item the client-side classical register $\bflag$ with value in $\{\fpass,\ffail\}$.\end{itemize}\par% The other register Set-up is the same as Set-up \ref{setup:1}.\par
		For modeling the initial states in the malicious setting, assume the server-side registers are denoted by register $\bS$, and assume the environment is denoted by register $\bbE$.
	\end{setup}
	\begin{defn}Corresponding to problem size parameter $1^n$, define the target state $\rho_{tar}\in \tD(\cH_{\bD}\otimes\cH_{\bQ})$ as
	\begin{equation}\label{eq:tar2}\rho_{tar}=\sum_{k\in D_n}\frac{1}{\tsize(D_n)}\underbrace{\ket{k}\bra{k}}_{\bD}\otimes \underbrace{\ket{\varphi_k}\bra{\varphi_k}}_{\bQ}.	
	\end{equation}
	And the completeness is defined similarly to Definition \ref{defn:3.1}.
\end{defn}
	\begin{defn}
		We say a protocol $\pi$ under Set-up \ref{setup:2} is $\epsilon$-sound if for any $n=\fpoly(\kappa)$, for any efficient quantum adversary, there exists an efficnent quantum operation $\fSim$ such that for any state $\rho_{0}\in \tD(\cH_{\bS}\otimes \cH_{\bbE})$:
		\begin{equation}\label{eq:sim2}\Pi_{\fpass}(\pi^{\fAdv}(1^n)(\rho_{0}))\approx^{ind}_{\epsilon} \underbrace{\Pi_{\fpass}}_{\text{on }\bflag}(\underbrace{\fSim(1^n)}_{\text{on }\bS,\bQ,\bflag}(\underbrace{\rho_{tar}}_{\bD,\bQ\text{ as given in equation \eqref{eq:tar2}}}\otimes \underbrace{\rho_{0}}_{\bS,\bbE}))\end{equation}
	\end{defn}
	Note that in \eqref{eq:sim2} we make $1^n$ explicit and the other two parameters are implicit as before.\par
	In later sections we also need to consider multiple problem size parameters (for example, $1^m,1^n$), and the definitions could be adapted correspondingly. We could also consider problem sizes that satisfy some conditions (for example, $m$ is bigger than some functions of $n$), and the security definition could also be adapted correspondingly.
	\paragraph{RSPV where the client chooses the states}
	So far we focus on the \emph{random} RSPV, where the states are sampled randomly by the protocol. We could also study a form of RSPV notion where the states are chosen by the client.\par% In this notion, the protocol takes an input from the client, which determines the output state.
	Below we formalize this notion for a tuple of states $(\ket{\varphi_1},\ket{\varphi_2}\cdots \ket{\varphi_D})$. %We describe its differences from the random RSPV (Set-up \ref{setup:1}, Definition \ref{defn:3.1}, \ref{defn:3.2}).
	\begin{setup}[Set-up for an RSPV where the client chooses the state]\label{setup:3}
		As in Set-up \ref{setup:1}, consider a tuple of states $(\ket{\varphi_1},\ket{\varphi_2}\cdots \ket{\varphi_D})$.\par
		The protocol takes the security parameter $1^\kappa$ and approximation error parameter $1^{1/\epsilon}$ as in Set-up \ref{setup:1}.\par
		 The protocol takes a classical input $i\in [D]$ from the client.\par
		Compared to Set-up \ref{setup:1}, the output registers are solely $\bflag$ and $\bQ$.\par%, as defined in Set-up \ref{setup:1}.
		$\bS,\bbE$ are defined in the same way as Set-up \ref{setup:1}.
	\end{setup}
	\begin{defn}
		When the input is $i\in [D]$, the target state is simply $\ket{\varphi_i}\bra{\varphi_i}\in \tD(\cH_{\bQ})$.\par
		Then the completeness is defined similarly to Definition \ref{defn:3.1}:\par
		We say a protocol under Set-up \ref{setup:3} is complete if for any input value $i\in [D]$, when the server is honest, the output state of the protocol is negligibly close to the following state:
		$$\underbrace{\ket{\fpass}\bra{\fpass}}_{\bflag}\otimes\underbrace{\ket{\varphi_i}\bra{\varphi_i}}_{\bQ}$$
	\end{defn}
	To formalize the soundness, we first define the ideal functionality, and formalize the soundness using the simulation-based paradigm (see Section \ref{sec:2.4.2}).\footnote{Note that, as discussed in Section \ref{sec:3.2.1}, the simulate-only-passing-space simplification used in Definition \ref{defn:3.2} does not apply to the case where there are client-side inputs, which is the case here.}
	\begin{defn}\label{defn:choseninputs}
		Define $\text{RSPVIdeal}$ as follows:\par
		Ideal takes $i\in [D]$ from the client and $b\in \{0,1\}$ from the server. Then depending on the value of $b$:
		\begin{itemize}
			\item If $b=0$: it sets $\bflag$ to be $\fpass$ and prepares $\ket{\varphi_i}\bra{\varphi_i}$ on $\bQ$.
			\item If $b=1$: it sets $\bflag$ to be $\ffail$.
		\end{itemize}
		We say a protocol under Set-up \ref{setup:3} is $\epsilon$-sound if:\par 
		For any efficient quantum adversary $\fAdv$, there exist efficient quantum operations $\fSim_0,\fSim_1$ such that for any input value $i\in [D]$,
	 any state $\rho_{0}\in \tD(\cH_{\bS}\otimes \cH_{\bbE})$:
	  % \begin{itemize}\item (Small passing probability) Either:
	  %	$$\tr(\Pi_{\fpass}(\pi^{\fAdv}(\rho_{in})))\leq \delta,$$\item or:
	  \begin{equation}\label{eq:sim3}\pi^{\fAdv}(\underbrace{i}_{\text{client-side input}})(\rho_{0})\approx^{ind}_{\epsilon} \underbrace{\fSim_1}_{\text{on }\bS,\bQ}(\text{RSPVIdeal}(\underbrace{\fSim_0}_{\text{on }\bS,\text{ generate $b\in \{0,1\}$ as the input of RSPVIdeal}}(\underbrace{\rho_{0}}_{\bS,\bbE})))\end{equation}
	\end{defn}
	\paragraph{Adding resource states} In later sections, we will need to consider the following variants: the protocol does not prepare states from scratch; instead the server holds some initial resource states while the client knows their descriptions, and the overall protocol could be seen as a transformation from the resource states to the target states. To formalize such a notion the set-up, completeness and soundness need to be adapted to include these resources states. We will encounter a notion in this form in Section \ref{sec:3.5.3}.\par
	Finally we note that we could consider combinations of these variants; the set-ups, completeness and soundness are defined in the natural way.
	\subsubsection{RSPV for arbitrary efficiently preparable states}\label{sec:3.3.2}
	In the previous sections we focus on RSPV for some specific state families. We would like to get RSPV protocols for more and larger state families, and ideally, we could like to have a universal RSPV protocol for arbitrary efficiently preparable states. %In this section we formalize this notion (and leave its constructions for further studies).\par
	There are two notions for this problem that seem natural. These two notions roughly go as follows:
	\begin{itemize}
		\item We could consider an arbitrary efficient quantum circuit $C$ that prepares a cq-state on $\tD(\cH_{\bD}\otimes\cH_{\bQ})$, and use this state as the target state. $C$ is given as a public input string.
		\item We could consider an arbitrary efficient quantum circuit $C$ that prepares a quantum state on $\tD(\cH_{\bQ})$, where $C$ is chosen by the client (in other words, $C$ is a client-side input string). 
	\end{itemize}
	We leave the formalizations and concrete constructions for further studies.
	%Below we formalize this notion under the first inutition above; we note that a formalization of the second intuition should not be difficult.
	%\subsubsection{Adding witness or resource states}
	%\subsection{Security Definitions of Remote State Preparation Other Than Verifiability}\label{sec:3.3}

	%\subsection{Abstract Properties}
	%\subsection{Good properties of simulation-based RSP}
	%Below we prove several useful properties of simulation-based RSPV.%These properties show simulation-based RSP has nicer properties than rigidity-based RSP.\par
	\subsection{Sequential Composition Property of RSPV}\label{sec:3.4}
	\subsubsection{Composition between RSPV protocols}\label{sec:3.4.1}
	First we could prove RSPV (under the simulation-based soundness as defined in Definition \ref{defn:3.2}) has a natural composition property when different RSPV protocols are sequentially composed with each other. %As far as we know, rigidity-based RSPV does not seem to behave well under this property.% This property is not necessarily true for rigidity-based RSP.
	\begin{mdframed}[backgroundcolor=black!10]
	Below we work under Set-up \ref{setup:1}. Suppose $\pi_1$ is an RSPV for target state $\rho_{tar}$ and is $\epsilon_1$-sound, $\pi_2$ is an RSPV for target state $\sigma_{tar}$ and is $\epsilon_2$-sound.
	\begin{prtl}[Sequential composition between different RSPV protocols]\label{prtl:sc}
		%This protocol has target state $\rho_{tar}\otimes \sigma_{tar}$, where $\rho_{tar}\in \tD(\cH_{\bD^{(1)}}\otimes\cH_{\bQ^{(1)}})$, $\sigma_{tar}\in \tD(\cH_{\bD^{(2)}}\otimes\cH_{\bQ^{(2)}})$. 
		Output registers: client-side classical register $\bflag$, client-side classical register  $\bD=(\bD^{(1)},\bD^{(2)})$, server-side quantum register $\bQ=(\bQ^{(1)},\bQ^{(2)})$.
	 \begin{enumerate}
		\item Execute $\pi_1$. Store the outputs in $(\bflag^{(1)},\bD^{(1)},\bQ^{(1)})$. In the honest setting $\bflag^{(1)}$ has value $\fpass$ and $(\bD^{(1)},\bQ^{(1)})$ holds state $\rho_{tar}$.
		\item Execute $\pi_2$. Store the outputs in $(\bflag^{(2)},\bD^{(2)},\bQ^{(2)})$. In the honest setting $\bflag^{(2)}$ has value $\fpass$ and $(\bD^{(2)},\bQ^{(2)})$ holds state $\rho_{tar}$.
		\item The client sets $\bflag$ to be $\ffail$ if any one of $\bflag^{(1)}$, $\bflag^{(2)}$ is $\ffail$. Otherwise it sets $\bflag$ to be $\fpass$.
	 \end{enumerate}
	\end{prtl}
\end{mdframed}
\begin{thm}
	Protocol \ref{prtl:sc} is an RSPV protocol for target state $\rho_{tar}\otimes \sigma_{tar}$ and is $(\epsilon_1+\epsilon_2)$-sound.
\end{thm}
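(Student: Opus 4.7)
The plan is to verify the three requirements in turn, with soundness being the substantive part. Completeness is immediate: by the completeness of $\pi_1$ and $\pi_2$ individually, with overwhelming probability both $\bflag^{(1)}$ and $\bflag^{(2)}$ end in $\fpass$ and the two sub-outputs are $\rho_{tar}$ on $(\bD^{(1)},\bQ^{(1)})$ and $\sigma_{tar}$ on $(\bD^{(2)},\bQ^{(2)})$, which together form $\rho_{tar}\otimes\sigma_{tar}$ on the composed $(\bD,\bQ)$ with overall $\bflag=\fpass$; a union bound absorbs the two negligible errors. Efficiency is clear since the sequential composition of two polynomial-time protocols is polynomial-time.

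For soundness, I would run a two-step hybrid argument. Fix any efficient adversary $\fAdv=(\fAdv_1,\fAdv_2)$ and any initial state $\rho_{0}\in\tD(\cH_{\bS}\otimes\cH_{\bbE})$. By $\epsilon_1$-soundness of $\pi_1$, there is an efficient $\fSim_1$ on $(\bS,\bQ^{(1)},\bflag^{(1)})$ with
\[
\Pi_{\fpass}^{\bflag^{(1)}}\bigl(\pi_1^{\fAdv_1}(\rho_0)\bigr)\ \approx^{ind}_{\epsilon_1}\ \Pi_{\fpass}^{\bflag^{(1)}}\bigl(\fSim_1(\rho_{tar}\otimes\rho_0)\bigr).
\]
Now I treat the output of $\pi_1$ (or of $\fSim_1$) as the initial state for $\pi_2$, viewed with enlarged server register $\bS\cup\bQ^{(1)}$ and enlarged environment $\bbE\cup\bD^{(1)}\cup\bflag^{(1)}$. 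The $\epsilon_2$-soundness of $\pi_2$ then yields an efficient $\fSim_2$ with
\[
\Pi_{\fpass}^{\bflag^{(2)}}\bigl(\pi_2^{\fAdv_2}(\sigma')\bigr)\ \approx^{ind}_{\epsilon_2}\ \Pi_{\fpass}^{\bflag^{(2)}}\bigl(\fSim_2(\sigma_{tar}\otimes\sigma')\bigr)
\]
uniformly in $\sigma'$. Using $\Pi_{\fpass}^{\bflag}=\Pi_{\fpass}^{\bflag^{(1)}}\Pi_{\fpass}^{\bflag^{(2)}}$, I chain three hybrids: Hybrid~0 is $\Pi_{\fpass}^{\bflag}\bigl(\pi_2^{\fAdv_2}(\pi_1^{\fAdv_1}(\rho_0))\bigr)$; Hybrid~1 replaces $\pi_1^{\fAdv_1}$ by $\fSim_1$; Hybrid~2 replaces the outer $\pi_2^{\fAdv_2}$ by $\fSim_2$. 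Since $\pi_2^{\fAdv_2}$ (and $\fSim_2$) do not touch $\bflag^{(1)}$, the projector $\Pi_{\fpass}^{\bflag^{(1)}}$ commutes with them, and the remaining efficient operations get absorbed into the distinguisher. The triangle inequality then yields a total distinguishing advantage of at most $\epsilon_1+\epsilon_2+\fneg(\kappa)$, which is what Definition~\ref{defn:3.2} requires.

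The composed simulator $\fSim$ is then defined to receive $\rho_{tar}\otimes\sigma_{tar}$ on $(\bD,\bQ)$ and $\rho_{0}$ on $(\bS,\bbE)$, run $\fSim_1$ on $(\rho_{tar},\rho_{0})$, run $\fSim_2$ on $\sigma_{tar}$ together with the resulting state, and finally set the composed $\bflag=\fpass$ iff both $\bflag^{(1)},\bflag^{(2)}=\fpass$ (exactly the behavior already built into the protocol). Efficiency of $\fSim$ follows from efficiency of $\fSim_1,\fSim_2$, and the register labels match those required by the target state $\rho_{tar}\otimes\sigma_{tar}$ expressed in the form of equation \eqref{eq:tar}.

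The main subtlety I expect is the hybrid transition in the presence of the simulate-only-the-passing-space convention of Definition~\ref{defn:3.2}: I need to check that prepending/appending the efficient operations $\pi_2^{\fAdv_2}$ and $\Pi_{\fpass}^{\bflag^{(2)}}$ to the $\pi_1$-soundness statement preserves the $\epsilon_1$ bound. The key point is that $\Pi_{\fpass}^{\bflag^{(1)}}$ commutes with everything downstream of $\pi_1$, so any distinguisher for Hybrid~0 vs.\ Hybrid~1 yields a distinguisher of the form required by $\pi_1$'s soundness; the same argument works for Hybrid~1 vs.\ Hybrid~2. Everything else is bookkeeping about which register lies on which side.
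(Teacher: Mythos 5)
Your proof is correct and follows essentially the same route as the paper: apply the soundness of $\pi_1$ and then of $\pi_2$ (with $\bQ^{(1)}$ folded into the server's registers and $\bD^{(1)},\bflag^{(1)}$ into the environment), compose the two simulators, and conclude by the triangle inequality. The only cosmetic difference is that you avoid feeding a sub-normalized post-$\pi_1$ state into $\pi_2$'s soundness statement by commuting $\Pi_{\fpass}^{\bflag^{(1)}}$ past $\pi_2$, whereas the paper applies the soundness directly to the projected state and handles the normalization via a case analysis on its trace.
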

	The completeness and efficiency are from the protocol description. Below we prove the soundness. 
	\begin{proof}
For adversary $\fAdv$, suppose the initial joint state is $\rho_{0}\in \tD(\cH_{\bS}\otimes \cH_{\bbE})$, and the output state of the first step of the protocol (after the execution of $\pi_1$) is $\rho_1$. By the soundness of $\pi_1$ there exists an efficient simulator $\fSim_1$ such that:
\begin{equation}\label{eq:comp1}\Pi_{\fpass}^{\bflag^{(1)}}(\rho_1)\approx^{ind}_{\epsilon_1}\Pi_{\fpass}^{\bflag^{(1)}}(\underbrace{\fSim_1}_{\text{on }\bS,\bQ^{(1)},\bflag^{(1)}}(\rho_{tar}\otimes \rho_0))\end{equation}
Then assume the output state after the second step of the protocol (after the execution of $\pi_2$) is $\rho_2$. Then by the soundness\footnote{We note that to get \eqref{eq:comp2} we apply Definition \ref{defn:3.2} to an unnormalized initial state $\Pi_{\fpass}^{\bflag^{(1)}}(\rho_1)$. This is fine because we could do a case analysis on the trace of $\Pi_{\fpass}^{\bflag^{(1)}}(\rho_1)$ before applying Definition \ref{defn:3.2}: when $\tr(\Pi_{\fpass}^{\bflag^{(1)}}(\rho_1))<\epsilon_2/2$ \eqref{eq:comp2} is automatically true, while when $\tr(\Pi_{\fpass}^{\bflag^{(1)}}(\rho_1))\geq \epsilon_2/2$ we could first rescale the initial state before applying Definition \ref{defn:3.2}.} of $\pi_2$ there exists an efficient simulator $\fSim_2$ such that:
\begin{equation}\label{eq:comp2}\Pi_{\fpass}^{\bflag^{(2)}}\Pi_{\fpass}^{\bflag^{(1)}}(\rho_2)\approx^{ind}_{\epsilon_2}\Pi_{\fpass}^{\bflag^{(2)}}(\underbrace{\fSim_2}_{\text{on }\bS,\bQ^{(1)},\bQ^{(2)},\bflag^{(2)}}(\sigma_{tar}\otimes \Pi_{\fpass}^{\bflag^{(1)}}(\rho_1)))\end{equation}
% Note that \eqref{} implies
% \begin{equation}\label{eq:comp3}\Pi_{\fpass}(\fSim_2(\sigma_{tar}\otimes \rho_1))\approx^{ind}_{\epsilon_1}\Pi_{\fpass}(\fSim_2(\sigma_{tar}\otimes \fSim_1(\rho_{tar}\otimes \rho_0)))\end{equation}
Let's construct the overall simulator $\fSim$. $\fSim$ is defined as follows:
\begin{enumerate}
	\item Execute $\fSim_1$ on registers $\bS$, $\bQ^{(1)}$ and $\tilde\bflag^{(1)}$. (Note that this simulation does not have access to $\bflag^{(1)}$, but we could use a temporary register $\tilde\bflag^{(1)}$ as a replacement.)
	\item Execute $\fSim_2$ on registers $\bS$, $\bQ^{(1)}$, $\bQ^{(2)}$ and $\bflag^{(2)}$. (Note that this simulation does not have access to $\bflag^{(2)}$, but we could use a temporary register $\tilde\bflag^{(2)}$ as a replacement.)
	\item Set $\bflag$ to be $\ffail$ if any one of $\tilde\bflag^{(1)}$, $\tilde\bflag^{(2)}$ is $\ffail$. Otherwise set $\bflag$ to be $\fpass$.
\end{enumerate}
Combining \eqref{eq:comp1}\eqref{eq:comp2} we get that%and choosing
%$$\fSim(\sigma_{tar}\otimes \rho_{tar}\otimes \cdot):=\fSim_2(\sigma_{tar}\otimes \fSim_1(\rho_{tar}\otimes \cdot))$$
%as the final simulator 
$$\Pi_{\fpass}^{\bflag}(\rho_2)\approx^{ind}_{\epsilon_1+\epsilon_2}\Pi_{\fpass}^{\bflag}(\underbrace{\fSim}_{\text{on }\bS,\bQ,\bflag}(\sigma_{tar}\otimes\rho_{tar}\otimes\rho_0))$$
which completes the proof.
	\end{proof}
	\subsubsection{RSPV as a protocol compiler}\label{sec:3.4.2}
	One important application of RSPV is to serve as a protocol compiler: given a protocol starting with ``the client prepares and sends some quantum states'', we could replace this step with an RSPV protocol for these states, and the security property of the overall protocol will be preserved (approximately).\par
	In more detail, assume a protocol $\pi$ has the following set-up: the client holds a classical register $\bD$, the server holds a quantum register $\bQ$, and in the honest setting initially the client and the server should hold the state $\rho_{tar}\in \tD(\cH_{\bD}\otimes \cH_{\bQ})$.\par
	Then suppose $\pi_{\text{RSPV}}$ is an RSPV protocol for target state $\rho_{tar}$ and is $\epsilon$-sound. Then consider the following protocol:
	\begin{enumerate}
		\item Execute $\pi_{\text{RSPV}}$. Break out if the protocol fails.
		\item Execute $\pi$. 
	\end{enumerate}
	Then by the RSPV soundness there is
	$$\Pi_{\fpass}(\pi^{\fAdv_2}(\pi_{\text{RSPV}}^{\fAdv_1}(\rho_{0})))\approx_{\epsilon}^{ind}\Pi_{\fpass}(\pi^{\fAdv_2}(\fSim(\rho_{tar}\otimes \rho_{0})))$$
	which means that we could replace the honest set-up of initial states by a call to the RSPV protocol and the properties of the final states will be approximately preserved. 
	\subsection{PreRSPV and Security Amplification}\label{sec:3.5}
	In this section we discuss the soundness amplification of RSPV. We will define a notion called preRSPV and shows that it could be amplified to an RSPV. The name and the notion of preRSPV have already been used in \cite{cvqcinlt} and here we re-formalize this notion. In more detail:
	\begin{itemize}\item In Section \ref{sec:preRSPVdef} we first introduce the notion of preRSPV. The preRSPV is defined as a pair of protocols $(\pi_{\ttest},\pi_{\tcomp})$ that share the same set-up. The soundness is defined intuitively as follows: if $\pi_{\ttest}$ running against an adversary $\fAdv$ could pass with high probability, then $\pi_{\tcomp}$ running against the same adversary $\fAdv$ will satisfy the RSPV soundness (that is, simulated by the target state).\par
		Then a preRSPV defined above could be amplified to an RSPV by a \emph{cut-and-choose} procedure: for each round, the client randomly chooses to execute either $\pi_{\ttest}$ or $\pi_{\tcomp}$ with some probabilities (without telling the server the choices). If the server could keep passing in many rounds, then for most of the $\pi_{\tcomp}$ rounds the target state is prepared as expected.
		\item In Section \ref{sec:3.5.2} we formalize a variant of preRSPV where the client counts \emph{scores} for the server besides the pass/fail flag. The set-up is as follows: in the end of the protocol the client will record a win/lose decision in a $\bscore$ register, and the honest server could achieve the optimal winning probability. We use $\OPT$ to denote the optimal winning probability. (Note that when $\OPT=1$ the notion downgrades to the notion in Section \ref{sec:preRSPVdef}.) The soundness is roughly defined as follows: if $\pi_{\ttest}$ running against an adversary $\fAdv$ could win with probability close to $\OPT$, then $\pi_{\tcomp}$ running against the same adversary $\fAdv$ will satisfy the RSPV soundness.\par
		In the amplification procedure, the client needs to count the number of winning rounds and see whether the total score is close to the expected value of an honest execution.
		\item In Section \ref{sec:3.5.3} we formalize a temporary variant of preRSPV where there are initial resource states (instead of preparing states from scratch), which is used in later sections.
	\end{itemize}
	In the constructions in later sections, we often need to first construct a preRSPV, and then amplify it to an RSPV. (See Section \ref{sec:5.3}, \ref{sec:5.5}.)% (See Section \ref{sec:} for an application of the preRSPV formalized in Section \ref{sec:preRSPVdef}, and see Section \ref{sec:} for an application of the preRSPV formalized in Section \ref{sec:3.5.2}.)
	\subsubsection{PreRSPV}\label{sec:preRSPVdef}
Let's formalize the preRSPV notion. In this section we work in a set-up of Set-up \ref{setup:1}.
\begin{setup}[Set-up of preRSPV]\label{setup:4rr}
	The set-up is the same as Set-up \ref{setup:1} except that we are considering a pair of protocols $(\pi_{\ttest},\pi_{\tcomp})$ in this set-up instead of solely $\pi$.
\end{setup}
Below we formalize the completeness and soundness.%The preRSPV is defined as a pair of protocols $(\pi_{test},\pi_{comp})$ in this setup. 
\begin{defn}[Completeness of preRSPV]\label{defn:3.9}
	We say $(\pi_{\ttest},\pi_{\tcomp})$ under Set-up \ref{setup:1} is complete if:
	\begin{itemize}
		\item In $\pi_{\ttest}$, when the server is honest, the passing probability (the trace of the output state projected onto $\Pi_{\fpass}$) is negligibly close to $1$.
		\item In $\pi_{\tcomp}$, when the server is honest, the output state of the protocol is negligibly close to 
		$$\underbrace{\ket{\fpass}\bra{\fpass}}_{\bflag}\otimes\underbrace{\rho_{tar}}_{\bD,\bQ}$$
		as described in equation \eqref{eq:pretar}\eqref{eq:tar}.
	\end{itemize}
\end{defn}
\begin{defn}[Soundness of preRSPV]\label{defn:3.10}
	We say $(\pi_{\ttest},\pi_{\tcomp})$ under Set-up \ref{setup:1} is $(\delta,\epsilon)$-sound if:\par
	For any efficient quantum adversary $\fAdv$, there exists an efficient quantum operation $\fSim$ such that for any state $\rho_{0}\in \tD(\cH_{\bS}\otimes \cH_{\bbE})$:\par
	If
	$$\tr(\Pi_{\fpass}(\pi_{\ttest}^{\fAdv}(\rho_{0})))\geq 1-\delta$$
	then
	$$\Pi_{\fpass}(\pi^{\fAdv}_{\tcomp}(\rho_{0}))\approx^{ind}_{\epsilon} \underbrace{\Pi_{\fpass}}_{\text{on }\bflag}(\underbrace{\fSim}_{\text{on }\bS,\bQ,\bflag}(\underbrace{\rho_{tar}}_{\bD,\bQ}\otimes \underbrace{\rho_{0}}_{\bS,\bbE}))$$
\end{defn}
	%\subsubsection{Cut-and-choose soundness amplification procedure}\label{sec:3.2.2}
	% One frequently needed procedure in the study of RSPV is the soundness amplification.
	 %Consider an RSPV protocol with soundness $s$. We want $s$ to be small. However, very frequently, in some initial construction of RSPV, $s$ might be not good enough (for example, $s$ might be very close to $1$). In this case, a soundness amplification procedure is needed.\par
	\paragraph{Amplification to RSPV} As discussed before, a preRSPV defined above could be amplified to an RSPV. %We use \emph{cut-and-choose} to achieve this amplification. %In this technique, to amplify an RSPV protocol $\pi$ with soundness $s$, both parties run many repetitions of $\pi$, and it's required that the server should pass in all the subprotocols. Intuitively if the server wants to pass the overall protocol with high probability, the number of iterations that it could cheat will be relatively small (recall that a cheating server in a single execution of $\pi$ is caught with probability $1-s$). Then a state (and its corresponding classical description) is randomly chosen from these output states. 
	\begin{mdframed}[backgroundcolor=black!10]
		Below we assume $(\pi_{\ttest},\pi_{\tcomp})$ is a preRSPV under Set-up \ref{setup:1} that is complete and $(\delta,\epsilon_0)$-sound.
	\begin{prtl}[Amplification from preRSPV to RSPV]\label{prtl:1}An RSPV protocol under Set-up \ref{setup:1} is constructed as follows.\par%Given an RSPV protocol $\pi$ for target state $\rho_{tar}$ and a repetition number $L$. The cut-and-choose amplification procedure is defined as below.%, this protocol aims at an RSP for $\rho_{tar}$ with soundness $s^\prime$.
		Parameters: approximation error parameter $1^{1/\epsilon}$, security parameter $1^\kappa$. It is required that $\epsilon>\epsilon_0$.\par
		Output registers: client-side classical registers $\bflag^{(\tout)},\bD^{(\tout)}$, server-side quantum register $\bQ^{(\tout)}$.
		\begin{enumerate}
			\item Define $L=\frac{512}{\delta(\epsilon-\epsilon_0)^3}$, $p=\frac{\epsilon-\epsilon_0}{8}$.\par
			For each $i\in [L]$:\begin{enumerate}
				\item The client randomly chooses $\text{mode}^{(i)}=\ttest$ with probability $p$ and $\text{mode}^{(i)}=\tcomp$ with probability $1-p$.
			\item The client executes $\pi_{\text{mode}^{(i)}}$ with the server. Store the outputs in registers $\bflag^{(i)},\bD^{(i)},\bQ^{(i)}$.
			\end{enumerate}
			\item The client sets $\bflag^{(\tout)}$ to be $\ffail$ if any one of $\bflag^{(i)}$ is $\ffail$. Otherwise it sets $\bflag^{(\tout)}$ to be $\fpass$.\par
			The client randomly chooses $i\in [L]$ such that $\tmode^{(i)}=\tcomp$ and sends $i$ to the server. Both parties use the states in $\bD^{(i)},\bQ^{(i)}$ as the outputs.
		\end{enumerate}
	\end{prtl}
\end{mdframed}
	%\paragraph{Comparison with existing works}
	Then we show the protocol above is an RSPV. The completeness and efficiency are from the protocol; below we prove the soundness.
	%We have the following theorem on this cut-and-choose process. Note this process does not reduce the approximation error but make the soundness better.
% \begin{thm}
% If $\pi$ is an RSPV with soundness $s$ and approximation error $\epsilon$, for any $s^\prime<s$, Protocol \ref{prtl:1} has soundness $s^\prime$ and approximation error $\epsilon+\frac{2}{L}\log_s(s^\prime)$.	
% \end{thm}
% Especially, by taking $L=O(\frac{1}{\epsilon(1-s)})$, we are able to amplify the original protocol to a new protocol with a much smaller soundness value, and approximation error $O(\epsilon)$. 
\begin{thm}\label{thm:3.3}
Protocol \ref{prtl:1} is $\epsilon$-sound.
\end{thm}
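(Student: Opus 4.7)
The plan is to follow the structure of the proof of Theorem \ref{thm:3.1.1thm}, with the preRSPV soundness (Definition \ref{defn:3.10}) taking the role played there by Definition \ref{defn:3.2}. Let $\rho_i$ denote the subnormalized state on the $\bflag^{(\leq i)}=\fpass$ subspace after $i$ rounds of step 1 of Protocol \ref{prtl:1}, and let $\hat\rho_{i-1}=\rho_{i-1}/\tr(\rho_{i-1})$ when nonzero. For each round $i$, let $q_i$ denote the probability that the test-mode subprotocol would pass on round $i$ starting from $\hat\rho_{i-1}$ under the round-$i$ adversary $\fAdv_i$. I would call round $i$ \emph{good} if $q_i\geq 1-\delta$ and \emph{bad} otherwise.

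First I would extract the two key consequences of this classification. For a good round, Definition \ref{defn:3.10} applied to the initial state $\hat\rho_{i-1}$ and the round-$i$ adversary yields an efficient simulator $\fSim_i$ with
\[\Pi_{\fpass}(\pi_{\tcomp}^{\fAdv_i}(\hat\rho_{i-1}))\approx^{ind}_{\epsilon_0}\Pi_{\fpass}(\fSim_i(\rho_{tar}\otimes\hat\rho_{i-1})).\]
For a bad round, because the client randomly chooses test with probability $p$, the probability that round $i$ passes is at most $p(1-\delta)+(1-p)=1-p\delta$ irrespective of the comp-mode pass probability, so each bad round strictly shrinks the surviving passing mass by a factor of at most $1-p\delta$. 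Setting $K=\lceil\log(4/(\epsilon-\epsilon_0))/(p\delta)\rceil$, either the number of bad rounds encountered while the passing mass is still $\geq(\epsilon-\epsilon_0)/4$ is at most $K$, or the cumulative passing mass drops below $(\epsilon-\epsilon_0)/4$ and can be absorbed into the distinguishing error.

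Second, I would construct the overall simulator $\fSim$. It first samples a target index $i^\star\in[L]$ whose mode will be set to $\tcomp$, internally simulates a fresh copy of the client and of the adversary $\fAdv$ on rounds $1,\ldots,i^\star-1$ with independently sampled mode flags, intercepts round $i^\star$ by receiving $\rho_{tar}$ on $\bD^{(i^\star)},\bQ^{(i^\star)}$ from the ideal input and applying the per-round simulator $\fSim_{i^\star}$ to place the server-side part into the adversary's working registers, then continues the internal simulation for rounds $i^\star+1,\ldots,L$, and finally sets $\bflag^{(\tout)}$ according to the collected temporary flags. The distinguishing advantage decomposes into three sources: (i) the single-round preRSPV error $\epsilon_0$ incurred at the chosen good comp round; (ii) the probability that $i^\star$ lands in a bad round, bounded via a Chernoff estimate on the number of $\tcomp$ rounds together with the $K$-bound above using $p=(\epsilon-\epsilon_0)/8$ and $L=512/(\delta(\epsilon-\epsilon_0)^3)$; (iii) the residual mass on the branch where the cumulative passing mass has already dropped below $(\epsilon-\epsilon_0)/4$. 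A careful summation of these three contributions then yields a total distinguishing advantage bounded by $\epsilon_0+3(\epsilon-\epsilon_0)/4\leq\epsilon$ up to $\fneg(\kappa)$ terms.

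The main obstacle I anticipate is the conditioning bookkeeping across rounds: the preRSPV guarantee of Definition \ref{defn:3.10} is stated for a normalized single-round initial state, whereas in the amplified protocol we must invoke it on the subnormalized state $\rho_{i-1}$ produced by the previously simulated rounds, which itself depends on the simulator's internal coin flips. As in the proof of Theorem \ref{thm:3.1.1thm}, I would handle this by a case analysis on $\tr(\rho_{i-1})$, rescaling when the remaining mass is not too small and otherwise absorbing the entire branch into the distinguishing error. A secondary subtlety is ensuring that the random choice of $i^\star$ is independent of the history used by the simulator up to $i^\star$, which is why the simulator draws $i^\star$ first and only then plays out the simulated rounds preceding it using its own internal copy of $\fAdv$.
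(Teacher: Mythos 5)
Your proposal is correct and takes essentially the same route as the paper's proof: the good/bad classification of rounds is exactly the paper's $S_{\text{low pass}}$ and its complement, the multiplicative $(1-p\delta)$ shrinkage of the surviving passing mass plays the role of the paper's head/tail split of $S_{\text{low pass}}$, and the simulator (sample the output index first, internally simulate the client and the adversary on the surrounding rounds, splice in the per-round simulator $\fSim_{i^\star}$ at the chosen index, then collect the flags) is the one the paper constructs. The only differences are cosmetic bookkeeping in the error accounting — the paper separates the uniform-over-$[L]$ versus uniform-over-comp-rounds discrepancy and the forced-comp-mode discrepancy into two explicit terms, whereas you fold them into your item (ii) — and both analyses land within the $\epsilon$ budget.
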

The proof has some similarities with the proof of Theorem \ref{thm:3.1.1thm}.
\begin{proof}
Consider an adversary $\fAdv$. Denote the initial state as $\rho_0$, and denote the output state by the end of the $i$-th round of step 1 as $\rho_i$. Then by the soundness of preRSPV we get, for any $i\in [L]$, there exists an efficient simulator $\fSim_i$ such that if %$\Pr[E_i|E_{i-1}]<s$, or \eqref{eq:sim} is satisfied by the end of the $i$-th iteration.\par
\begin{equation}\label{eq:17r}
	\tr(\Pi_{\fpass}^{\bflag^{(i)}}(\pi_{\ttest}^{\fAdv_i}(\Pi_{\fpass}^{\bflag^{(\leq i-1)}}(\rho_{i-1}))))>(1-\delta)\tr(\Pi_{\fpass}^{\bflag^{(\leq i-1)}}(\rho_{i-1}))
\end{equation}
then
\begin{equation}\label{eq:18r}
	\Pi_{\fpass}^{\bflag^{(i)}}(\pi_{\tcomp}^{\fAdv_i}(\Pi_{\fpass}^{\bflag^{(\leq i-1)}}(\rho_{i-1})))\approx_{\epsilon_0}^{ind}\Pi_{\fpass}^{\bflag^{(i)}}(\fSim_i(\underbrace{\rho_{tar}}_{\bD^{(i)},\bQ^{(i)}}\otimes \Pi_{\fpass}^{\bflag^{(\leq i-1)}}(\rho_{i-1})))
\end{equation}
	%Without loss of generality suppose this adversary could pass the overall protocol with probability $\geq \epsilon$ (otherwise the soundness holds automatically). 
	Define $S_{\text{low pass}}$ as the set of $i$ such that \eqref{eq:17r} does not hold (thus \eqref{eq:18r} holds for the complement of $S_{\text{low pass}}$). Then we note that when the size of $S_{\text{low pass}}$ is $\geq \frac{8}{p\delta(\epsilon-\epsilon_0)}$ the overall passing probability is $\leq \frac{\epsilon-\epsilon_0}{8}$.\par%To pass the overall protocol the adversary needs to pass in each iteration, thus to pass the overall protocol with probability $\geq s^\prime$, there has to be $|S_{\text{low pass}}|\leq \frac{1}{\delta\epsilon}$.\par %which means:%and the total failing probability could be bounded by
	%$$\Pr[\text{fail}]\geq \Pr[\lnot E_L|E_{L-1}]+\Pr[\lnot E_L|E_{L-1}]+\cdots +\Pr[\lnot E_L|E_{L-1}]$$
	%, 
	%which implies the number of $i$  is at most $\log_s(s^\prime)$.
	%In the second round the client makes a random choice of $i\in [L]$. Then for each $i$, there exists an efficiently computable simulator $\fSim_i$ such that 
	%Denote the initial state as $\rho_0$, and denote the output state by the end of the $i$-th round as $\rho_i$. 
	% Then for each $i\in [L]-S_{\text{low pass}}$,
	% $$\Pi_{\fpass}(\rho_i)\approx^{ind}_{\epsilon}\Pi_{\fpass}(\fSim_i(\rho_{tar}\otimes \rho_{i-1}))$$
	% which implies
	% \begin{equation}\label{eq:cc1}\Pi_{\fpass}(\pi_{>i}(\rho_i))\approx^{ind}_{\epsilon}\Pi_{\fpass}(\pi_{>i}(\fSim_i(\rho_{tar}\otimes \pi_{<i}(\rho_0))))\end{equation}
	% where $\pi_{>i}$ is the protocol after round $i$, and $\pi_{<i}$ is the protocol before round $i$.\par
	%In the second step of Protocol \ref{prtl:1} the client makes a random choice of $i\in [L]$. We will construct a simulator that simulates the overall state. 
	The simulator $\fSim$ applied on $(\rho_{tar}\otimes \rho_0)$ is defined as follows:
	\begin{enumerate}
		\item Sample a random coin $i\leftarrow [L]$. This is for simulating the client's random choice in the second step of Protocol \ref{prtl:1}.
		\item Run $\tilde\pi_{<1.i}$ on $\rho_0$ and get $\tilde\rho_{i-1}$. Here $\tilde\pi_{<1.i}$ denotes the simulated protocol execution of Protocol \ref{prtl:1} until the beginning of the $i$-th round of the first step. Here ``simulated protocol execution'' means that, instead of interacting with the client, the simulator simulates a client on its own. So the difference of $\rho_{i-1}$ and $\tilde\rho_{i-1}$ is only the locations of these ``client-side registers''.
		\item Run $\fSim_i$ on $\rho_{tar}\otimes \tilde\rho_{i-1}$ where $\rho_{tar}\in \tD(\cH_{\bD^{(\tout)}}\otimes\cH_{\bQ^{(\tout)}})$.
		\item Run $\tilde\pi_{>1.i}$ on $\fSim_i(\rho_{tar}\otimes \tilde\rho_{i-1})$. Here $\tilde\pi_{>1.i}$ is defined as above.
		\item Set $\bflag^{(\tout)}$ to be $\ffail$ if any of the flag or simulated flag is $\ffail$; otherwise set $\bflag^{(\tout)}$ to be $\fpass$. Disgard all the auxiliary registers for simulating the client side.
	\end{enumerate}
	\par
	We prove this simulator achieves what we want. We could compare the simulated output states with the output states from the real execution and see where the distinguishing advantage could come from (see also the proof of Theorem \ref{thm:3.1.1thm}):
	\begin{itemize}
		\item In the original protocol $i$ is not sampled randomly but on all the $comp$ rounds, while in the simulation $i$ is sampled randomly from $[L]$. But in the original protocol with high probability the number of $test$ rounds is at most $2pL$ so this part contributes an error of at most $\frac{\epsilon-\epsilon_0}{8}$ on both size of equation \eqref{eq:sim}.
		\item When $i\not\in S_{\text{low pass}}$, \eqref{eq:18r} holds, which gives the indistinguishabilty for the simulated execution and real execution. Equation \eqref{eq:18r} itself contributes an error of $\epsilon_0$.
		\item The $i$-th round is not necessarily a $comp$ round, but when we simulate the $i$-th round we simply assume it's a $comp$ round for simulating it using \eqref{eq:18r}. This contributes an error of $p$ on both sides of equation \eqref{eq:sim}.
		\item When $i\in S_{\text{low pass}}$ \eqref{eq:18r} does not hold. Although we do not know the size of $S_{\text{low pass}}$, we could divide this set into $S_{\text{head}}$ and $S_{\text{tail}}$ where $S_{\text{head}}$ is the set of the first $\frac{8}{p\delta(\epsilon-\epsilon_0)}$ elements and $S_{\text{tail}}$ is the set of the remaining elements. Below we bound the effect of these two sets on the distinguishing advantage.
		\begin{itemize}
			\item The size of $S_{\text{head}}$ is at most $\frac{8}{p\delta(\epsilon-\epsilon_0)}$ thus it contributes an error of $\frac{8}{Lp\delta(\epsilon-\epsilon_0)}$ on both size of equation \eqref{eq:sim}.
			\item For $i\in S_{\text{tail}}$ the passing probability is no more than $\frac{\epsilon-\epsilon_0}{8}$ thus this part contributes at most an error of $\frac{\epsilon-\epsilon_0}{8}$ on both size of equation \eqref{eq:sim}.
		\end{itemize}
	\end{itemize}
	Summing them up we get the total approximation error to be bounded by $\epsilon_0+2(\frac{\epsilon-\epsilon_0}{8}+p+\frac{8}{Lp\delta(\epsilon-\epsilon_0)}+\frac{\epsilon-\epsilon_0}{8})\leq \epsilon$.
	% use $\fDisgard[\cdots]$ to denote the operation of disgarding the client-side registers with specific indices, which is in the second step of Protocol \ref{prtl:1}. Then by \eqref{eq:cc1} we have
	% \begin{align}&\Pi_{i\in [L]-S_{\text{low pass}}}(\sum_{i\in [L]}\frac{1}{L}\ket{i}\bra{i}\otimes\fDisgard[[L]-i](\Pi_{\fpass}(\pi_{>i}(\rho_i))))\\
	% \approx^{ind}_{\epsilon}&\Pi_{i\in [L]-S_{\text{low pass}}}(\sum_{i\in [L]}\frac{1}{L}\ket{i}\bra{i}\otimes\fDisgard[[L]-i](\Pi_{\fpass}(\pi_{>i}(\fSim_i(\rho_{tar}\otimes \pi_{<i}(\rho_0))))))\end{align}
	% By $|S_{\text{low pass}}|\leq \log_s(s^\prime)$ there is
	% \begin{align}&\Pi_{i\in [L]-S_{\text{low pass}}}(\sum_{i\in [L]}\frac{1}{L}\ket{i}\bra{i}\otimes\fDisgard[[L]-i](\Pi_{\fpass}(\pi_{>i}(\rho_i))))\\
	% \approx_{\frac{1}{L}\log_s(s^\prime)}&\sum_{i\in [L]}\frac{1}{L}\ket{i}\bra{i}\otimes\fDisgard[[L]-i](\Pi_{\fpass}(\pi_{>i}(\rho_i)))\end{align}
	% 	\begin{align}&\Pi_{i\in [L]-S_{\text{low pass}}}(\sum_{i\in [L]}\frac{1}{L}\ket{i}\bra{i}\otimes\fDisgard[[L]-i](\Pi_{\fpass}(\pi_{>i}(\fSim_i(\rho_{tar}\otimes \pi_{<i}(\rho_0))))))\\
	% 	\approx_{\frac{1}{L}\log_s(s^\prime)}&\sum_{i\in [L]}\frac{1}{L}\ket{i}\bra{i}\otimes\fDisgard[[L]-i](\Pi_{\fpass}(\pi_{>i}(\fSim_i(\rho_{tar}\otimes \pi_{<i}(\rho_0)))))
	% 	\end{align}
	% Combining them completes the proof.
\end{proof}
\subsubsection{PreRSPV with the score}\label{sec:3.5.2}
As discussed before, we need to consider a variant of preRSPV where the client counts scores to certify the server's operation. There will be a $\bscore$ register and a optimal winning probability $\OPT$ in the set-up of this notion. Below we formalize the set-up, completeness and soundness.
\begin{setup}[Set-up for a preRSPV with the score]\label{setup:4}
	Compare to Set-up \ref{setup:4rr}, we consider a real number $\OPT\in [0,1]$.\par
	Then we consider an additional client-side classical register $\bscore$ with value in $\{\fwin,\flose,\perp\}$. $\pi_{\ttest}$ outputs a value in $\{\fwin,\flose\}$ into the $\bscore$ register; $\pi_{\tcomp}$ does not touch the $\bscore$ register and leave it in the default value $\perp$.\par The other parts of the set-up are the same as Set-up \ref{setup:4rr}.
\end{setup}
\begin{defn}\label{defn:3.11}
	We say $(\pi_{\ttest},\pi_{\tcomp})$ under Set-up \ref{setup:4} is complete if:
	\begin{itemize}
		\item In $\pi_{\ttest}$, when the server is honest, the passing probability (the trace of the output state projected onto $\Pi_{\fpass}$) is negligibly close to $1$ and the winning probability (the trace of the output state projected onto $\Pi_{\fpass}\Pi_{\fwin}^{\bscore}$) is negligibly close to $\OPT$.
		\item In $\pi_{\tcomp}$, when the server is honest, the output state of the protocol is negligibly close to 
		$$\underbrace{\ket{\fpass}\bra{\fpass}}_{\bflag}\otimes\underbrace{\ket{\perp}\bra{\perp}}_{\bscore}\otimes\underbrace{\rho_{tar}}_{\bD,\bQ}$$
		as described in equation \eqref{eq:pretar}\eqref{eq:tar}.
	\end{itemize}
\end{defn}
\begin{defn}\label{defn:3.12r}
For $(\pi_{\ttest},\pi_{\tcomp})$ under Set-up \ref{setup:4}, we say  winning probability $\OPT$ is $(\delta,\lambda)$-optimal if:\par
For any efficient quantum adversary $\fAdv$, for any state $\rho_{0}\in \tD(\cH_{\bS}\otimes \cH_{\bbE})$, at least one of the following two is true:
\begin{itemize}
	\item (Low passing)
	\begin{equation}\label{eq:ifopt}\tr(\Pi_{\fpass}(\pi_{\ttest}^{\fAdv}(\rho_{0})))\leq 1-\delta+\fneg(\kappa)\end{equation}
	\item (Optimal winning)
	\begin{equation}\label{eq:thenopt}\tr(\Pi_{\fpass}\Pi_{\fwin}^{\bscore}(\pi_{\ttest}^{\fAdv}(\rho_{0})))\leq \OPT+\lambda+\fneg(\kappa)\end{equation}
\end{itemize}
\end{defn}
\begin{defn}\label{defn:3.12}
	We say $(\pi_{\ttest},\pi_{\tcomp})$ under Set-up \ref{setup:4} is $(\delta,\epsilon)$-sound if:\par
	For any efficient quantum adversary $\fAdv$, there exists an efficient quantum operation $\fSim$ such that for any state $\rho_{0}\in \tD(\cH_{\bS}\otimes \cH_{\bbE})$, at least one of the following three is true:
	\begin{itemize}
		\item (Low passing) $$\tr(\Pi_{\fpass}(\pi_{\ttest}^{\fAdv}(\rho_{0})))\leq 1-\delta+\fneg(\kappa)$$
	\item (Low winning) $$\tr(\Pi_{\fpass}\Pi_{\fwin}^{\bscore}(\pi_{\ttest}^{\fAdv}(\rho_{0})))\leq \OPT-\delta$$
	\item (Simulation)
	$$\Pi_{\fpass}(\pi^{\fAdv}_{\tcomp}(\rho_{0}))\approx^{ind}_{\epsilon} \underbrace{\Pi_{\fpass}}_{\text{on }\bflag}(\underbrace{\fSim}_{\text{on }\bS,\bQ,\bflag}(\underbrace{\rho_{tar}}_{\bD,\bQ}\otimes \underbrace{\rho_{0}}_{\bS,\bbE}))$$\end{itemize}
\end{defn}
\paragraph{Variants of Definition \ref{defn:3.12r}, \ref{defn:3.12}}We note that there are also other ways of expressing the optimality of OPT and the soundness; for example, we could only remove the \eqref{eq:ifopt} part (and only keep the \eqref{eq:thenopt} part) and remove the parameter $\delta$ from the definition. The choices of these definitions are based on the consideration of keeping a balance between the difficulty of the preRSPV-with-the-score-to-RSPV amplification process and the difficulty of the construction of the preRSPV-with-the-score protocol. (Note that when the soundness/optimality gets weaker the construction gets easier but the amplification gets harder.)
\paragraph{Amplification to RSPV} Similar to Section \ref{sec:preRSPVdef}, the preRSPV under Set-up \ref{setup:4} could also be amplified to an RSPV. Below we do this amplification in two steps: we first amplify a preRSPV under Set-up \ref{setup:4} to a preRSPV under Set-up \ref{setup:4rr} (that is, removing the need for the score in the modeling). Then we use the results in Section \ref{sec:preRSPVdef} to get an RSPV protocol.
\begin{mdframed}[backgroundcolor=black!10]
	Below we assume $(\pi_{\ttest},\pi_{\tcomp})$ is a preRSPV under Set-up \ref{setup:4} that is complete, has $(\delta_0,\lambda)$-optimal winning probability $\OPT$ and is $(\delta_0,\epsilon_0)$-sound.
\begin{prtl}[Amplification from preRSPV under Set-up \ref{setup:4} to preRSPV under Set-up \ref{setup:4rr}]\label{prtl:3r}A preRSPV protocol $(\pi_{\ttest}^\prime,\pi_{\tcomp}^\prime)$ under Set-up \ref{setup:4rr} is constructed as follows.\par
	Parameters: approximation error parameter $1^{1/\epsilon}$, security parameter $1^\kappa$. It is required that $\epsilon>\epsilon_0$ and $\lambda<\frac{1}{6}\delta_0(\epsilon-\epsilon_0)$.\par% For simplicity of protocol description and security proof we assume $\delta_0<1/100$.\par
	Output registers: client-side classical registers $\bflag^{(\tout)},\bD^{(\tout)}$, server-side quantum register $\bQ^{(\tout)}$.\par
	Define $L=4\kappa/(((\frac{1}{6}\delta_0(\epsilon-\epsilon_0)-\lambda))^2(\epsilon-\epsilon_0))$, $\text{threshold}=(\OPT-\frac{1}{2}(\frac{1}{6}\delta_0(\epsilon-\epsilon_0)-\lambda))L$.\par
	$\pi_{\ttest}^\prime$ is defined as follows:
	\begin{enumerate}
		\item For each $i\in [L]$:\begin{enumerate}\item The client executes $\pi_{\ttest}$ with the server. Store the outputs in registers $\bflag^{(i)},\bscore^{(i)},\bD^{(i)},\bQ^{(i)}$.\end{enumerate}
		\item The client sets $\bflag^{(\tout)}$ to be $\ffail$ if any one of $\bflag^{(i)}$ is $\ffail$. Then the client counts the number of $\fwin$ in all the $\bscore$ registers and sets $\bflag^{(\tout)}$ to be $\ffail$ if the total number of $\fwin$ is less than $\text{threshold}$. Otherwise it sets $\bflag^{(\tout)}$ to be $\fpass$.
	\end{enumerate}\par
	$\pi_{\tcomp}^\prime$ is defined as follows:
	\begin{enumerate}
		%\item Define $L=xx$, $p=xx$.\par
		\item The client randomly chooses $i_{stop}\in [L]$.\par
		For each $i\in [i_{stop}-1]$:\begin{enumerate}
		 \item The client executes $\pi_{\ttest}$ with the server. Store the outputs in registers $\bflag^{(i)},\bscore^{(i)},\bD^{(i)},\bQ^{(i)}$.\end{enumerate}
		\item The client sets $\bflag^{(\tout)}$ to be $\ffail$ if any $\bflag^{(i)}$ is $\ffail$. Otherwise the client executes $\pi_{\tcomp}$ with the server; store the outputs in registers $\bflag^{(\tout)},\bD^{(\tout)},\bQ^{(\tout)}$.\par
		% it sets $\bflag^{(i)}$ to be $\fpass$.\par
	\end{enumerate}
\end{prtl}
\end{mdframed}
\begin{thm}
The protocol is complete.
\end{thm}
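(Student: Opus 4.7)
The plan is to verify separately the two requirements of Definition \ref{defn:3.9} for the constructed pair $(\pi_{\ttest}^\prime,\pi_{\tcomp}^\prime)$: the near-certain passing of $\pi_{\ttest}^\prime$ and the near-target output state of $\pi_{\tcomp}^\prime$. Both parts reduce to combining the per-round completeness guarantees of the underlying $(\pi_{\ttest},\pi_{\tcomp})$ (Definition \ref{defn:3.11}) with elementary concentration bounds, since the honest server behaves identically and statelessly in each round.

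First, for $\pi_{\ttest}^\prime$, I would argue in two steps. The easy step is that, by per-round completeness, each $\bflag^{(i)}$ equals $\ffail$ only with probability $\fneg(\kappa)$; a union bound over $L=\fpoly(\kappa)$ rounds gives that all $\bflag^{(i)}=\fpass$ simultaneously except with negligible probability. The slightly more delicate step is that the number of winning rounds $W=\sum_i \mathbf{1}[\bscore^{(i)}=\fwin]$ must exceed $\text{threshold}$. Since the honest server is stateless, the $\bscore^{(i)}$ are independent Bernoullis with success probability negligibly close to $\OPT$, so $\bE[W]$ is within $L\cdot \fneg(\kappa)$ of $\OPT\cdot L$, and the gap to $\text{threshold}$ is essentially $\Delta:=\frac{1}{2}(\frac{1}{6}\delta_0(\epsilon-\epsilon_0)-\lambda)L$. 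A Chernoff bound gives $\Pr[W<\text{threshold}]\leq \exp(-\Omega(\Delta^2/L))$, and substituting the chosen $L=4\kappa/((\frac{1}{6}\delta_0(\epsilon-\epsilon_0)-\lambda)^2(\epsilon-\epsilon_0))$ collapses this to $e^{-\Omega(\kappa)}=\fneg(\kappa)$. A final union bound combines the two estimates to give $\bflag^{(\tout)}=\fpass$ with probability $1-\fneg(\kappa)$, which is what Definition \ref{defn:3.9} requires for the test mode.

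Second, for $\pi_{\tcomp}^\prime$, the protocol consists of $i_{stop}-1\leq L$ rounds of $\pi_{\ttest}$ followed by exactly one round of $\pi_{\tcomp}$. Per-round completeness and a union bound handle the test prefix: all $\bflag^{(i)}$ for $i<i_{stop}$ are $\fpass$ with probability $1-\fneg(\kappa)$. Conditioned on this event, the completeness of $\pi_{\tcomp}$ guarantees that the output registers $(\bflag^{(\tout)},\bD^{(\tout)},\bQ^{(\tout)})$ carry a state within negligible trace distance of $\ket{\fpass}\bra{\fpass}_{\bflag^{(\tout)}}\otimes \rho_{tar}$; the auxiliary registers from the test prefix and the unused $\bscore$ register are uncorrelated with the target and can be discarded without changing the reduced trace distance on the output registers. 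The triangle inequality for trace distance fuses the two estimates.

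The only step that is more than routine bookkeeping is confirming that the chosen $L$ suffices for the Chernoff concentration, and this is a direct plug-in: with $\Delta=\Theta(L)$, the tail is $\exp(-\Theta(\Delta^2/L))=\exp(-\Theta(\kappa))$. No quantum-specific reasoning is needed beyond the completeness of the component protocols, because in the honest execution there is no adversary and the rounds are genuinely independent; this is precisely why completeness is much easier than soundness here, and I would expect the corresponding soundness theorem (not asked for in this statement) to require a significantly more careful martingale-based argument.
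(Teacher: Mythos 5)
Your proof is correct and follows essentially the same route as the paper's: the paper's (much terser) argument likewise applies the Chernoff bound to show the total score in $\pi_{\ttest}^\prime$ clears the threshold except with negligible probability, and dismisses all other failure sources (per-round flag failures and the comp-mode output) as negligible. Your write-up simply makes explicit the union bounds, the plug-in of $L$ into the concentration tail, and the handling of $\pi_{\tcomp}^\prime$, all of which the paper leaves implicit.
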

\begin{proof}
	Note that the expected value of the total score in $\pi_{\ttest}^\prime$ is $\OPT\cdot L$. By the Chernoff bounds the probability of ``the total score is less than $(\OPT-\frac{1}{2}(\frac{1}{6}\delta_0(\epsilon-\epsilon_0)-\lambda))L$'' is upper bounded by a negligible function of $\kappa$.\par
 The other sources of not passing are all negligible.
\end{proof}
The efficiency is from the protocol; below we prove the soundness.
\begin{thm}
	Protocol \ref{prtl:3r} is $(\delta,\epsilon)$-sound where $\delta=\frac{1}{2}(\frac{1}{6}\delta_0(\epsilon-\epsilon_0)-\lambda)$.
\end{thm}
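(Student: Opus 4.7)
The plan is to reduce to a per-round trichotomy using Definition \ref{defn:3.12} and then combine via a counting/concentration argument. Fix an adversary $\fAdv$ for $\pi_{\ttest}^\prime$ and $\pi_{\tcomp}^\prime$; for each $i\in[L]$ let $\fAdv_i$ be the round-$i$ action and let $\rho_i$ be the joint state after the $i$-th inner call to $\pi_{\ttest}$. Applying Definition \ref{defn:3.12} round by round to the subnormalized state $\Pi_{\fpass}^{\bflag^{(\leq i-1)}}\rho_{i-1}$ (after case-splitting on its trace, as in the proofs of Theorem \ref{thm:3.1.1thm} and Theorem \ref{thm:3.3}), I obtain per-round simulators $\fSim_i$ together with the trichotomy that round $i$ is either (a) low-passing, (b) low-winning, or (c) good, in the sense that the simulation property holds with error $\epsilon_0$.

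Next I would argue that under the hypothesis $\tr(\Pi_{\fpass}(\pi_{\ttest}^{\prime\,\fAdv}(\rho_0)))\geq 1-\delta$ the fraction of ``bad'' (non-(c)) rounds is small. Low-passing rounds decrease the conditional passing trace by a multiplicative factor $1-\delta_0$, so the number of them is limited by a budget of order $\delta/\delta_0$. Low-winning rounds lower the expected winning count by $\delta_0$ per round; by Azuma's inequality applied to the natural martingale $Z_k=\sum_{i\le k}(\mathbf{1}_{\bscore^{(i)}=\fwin}-p_i)$, where $p_i$ is the conditional win probability in round $i$, a large number of low-winning rounds would push the total win count below $\text{threshold}=(\OPT-\frac{1}{2}(\frac{1}{6}\delta_0(\epsilon-\epsilon_0)-\lambda))L$ except with probability $\fneg(\kappa)$, contradicting the passing hypothesis. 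The choice $L=4\kappa/((\frac{1}{6}\delta_0(\epsilon-\epsilon_0)-\lambda)^2(\epsilon-\epsilon_0))$ is exactly what is needed for the concentration slack to be negligible.

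I would then construct the simulator $\fSim$ for $\pi_{\tcomp}^\prime$ as follows: sample $i_{stop}\leftarrow[L]$ internally; simulate the first $i_{stop}-1$ rounds of $\pi_{\ttest}$ against $\fAdv$ locally to produce $\tilde\rho_{i_{stop}-1}$; then apply $\fSim_{i_{stop}}$ on $\rho_{tar}\otimes\tilde\rho_{i_{stop}-1}$ and set the output $\bflag$ in the natural way. The real execution of $\pi_{\tcomp}^\prime$ performs exactly the same first $i_{stop}-1$ test rounds and then runs $\pi_{\tcomp}$, so the only source of distinguishing advantage is the last round, which contributes $\epsilon_0$ when $i_{stop}$ is good (case (c)) plus the probability that $i_{stop}$ falls on a bad round. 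The latter is at most (number of bad rounds)$/L$, and by the previous paragraph this is bounded by $O(1/L)+O(\delta/(\delta_0 L))+\fneg(\kappa)$; together with the $\epsilon_0$ contribution the parameter choices keep the total below $\epsilon$.

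The main obstacle will be the concentration argument in the quantum setting: because $\fAdv_i$ acts on whatever state $\fAdv_{i-1}$ has left behind, the inner rounds are sequentially correlated, so $\bscore^{(i)}$ cannot be treated as independent Bernoulli variables and a martingale (Azuma) argument on conditional win probabilities is required. A secondary technical subtlety, already flagged in the proof of Theorem \ref{thm:3.3}, is handling the application of Definition \ref{defn:3.12} to the unnormalized states $\Pi_{\fpass}^{\bflag^{(\leq i-1)}}\rho_{i-1}$; this is resolved by the same trace case-split used there before invoking the definition.
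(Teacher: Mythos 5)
Your proposal reproduces the paper's overall architecture faithfully: the per-round trichotomy from Definition \ref{defn:3.12} applied to the subnormalized conditioned states (with the same trace case-split), the bound on the number of low-passing rounds, the simulator that samples $i_{stop}$ internally, locally simulates the first $i_{stop}-1$ test rounds, and invokes $\fSim_{i_{stop}}$ on $\rho_{tar}$ tensored with the simulated history, and the final error accounting over bad versus good values of $i_{stop}$. The one substantive divergence is the score-concentration step. The paper bounds the \emph{expected} total score directly as a sum of traces --- using \eqref{eq:17rr} for rounds in $S_{\text{low win}}$ and, crucially, the $(\delta_0,\lambda)$-optimality of $\OPT$ (Definition \ref{defn:3.12r}, instantiated as \eqref{eq:20rr}) to cap the remaining rounds at $\OPT+\lambda$ --- and then applies a single Markov inequality to conclude the passing probability drops below $1-\delta$. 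You instead run Azuma on the martingale of conditional win indicators. Both routes close the argument; yours gives a quantitatively stronger conclusion (negligible rather than merely $<1-\delta$ passing probability when $S_{\text{low win}}$ is large) at the cost of the extra bookkeeping you correctly flag: the per-round bounds \eqref{eq:17rr} and \eqref{eq:20rr} are statements about traces on the all-flags-pass branch, so turning them into conditional probabilities for a martingale requires restricting the filtration to that branch, whereas Markov on the raw trace expectation sidesteps this entirely (the paper's footnote explicitly notes Azuma as the tighter-but-more-restricted alternative). One point you should make explicit: bounding the win rate of the rounds \emph{outside} $S_{\text{low win}}$ by $\OPT+\lambda$ is not automatic from the soundness trichotomy --- it is a separate hypothesis, the $(\delta_0,\lambda)$-optimality of the winning probability, and without invoking it the counting does not close, since nothing else prevents good rounds from compensating for the low-winning ones.
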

%The proof has some similarity with the proof of Theorem \ref{thm:3.3}.
\paragraph{Intuition for the proof}The overall structure of the proof has some similarity with the proof of Theorem \ref{thm:3.3}. One main technical obstacle is how to do the probability calculation for the random process of generating the scores: we would like to show that if the server could win in sufficiently big number of tests, in a large portion of the rounds, the probability that it wins should be big --- which implies that the corresponding comp mode should prepare the target state. Formalizing these probability arguments requires some techniques from the probability theory; here we use the Markov inequality\footnote{In general the Markov inequality is loose but here we do not care about the tightness of the calculation. For a tighter analysis we could consider, for example, the Azuma's inequality (which also has more restrictions).} and make careful analysis.
\begin{proof}
	Consider an adversary $\fAdv$. Denote the initial state as $\rho_0$, and denote the output state by the end of the $i$-th round of step 1 as $\rho_i$. %Then decompose $\rho_i$ by the total score so far: denote the projection of $\rho_i$ onto the space ``the total score so far is $j$ and the flags so far are all $\fpass$'' as $\sigma_{i,j}$. Thus $\sum_{j\in [0,i]}\sigma_{i,j}=\Pi_{\fpass}^{\bflag^{(\leq i)}}(\rho_i)$.\par
	Then by the soundness of $(\pi_{\ttest},\pi_{\tcomp})$ we get, for any $i\in [L]$, there exists an efficient simulator $\fSim_i$ such that at least one of the following three is true:%for any $j\in [0,i-1]$, %$\Pr[E_i|E_{i-1}]<s$, or \eqref{eq:sim} is satisfied by the end of the $i$-th iteration.\par
	\begin{itemize}\item (Low passing) \begin{equation}\label{eq:17rrrr}\tr(\Pi_{\fpass}^{\bflag^{(i)}}(\pi_{\ttest}^{\fAdv_i}(\Pi_{\fpass}^{\bflag^{(\leq i-1)}}(\rho_{i-1}))))\leq(1-\delta_0)\tr(\Pi_{\fpass}^{\bflag^{(\leq i-1)}}(\rho_{i-1}))\end{equation} \item (Low winning) \begin{equation}\label{eq:17rr}
	%	\tr(\Pi_{\fpass}^{\bflag^{(i)}}\Pi_{\fwin}^{\bscore^{(i)}}(\pi_{\ttest}^{\fAdv_i}(\sigma_{i-1,j})))>(\OPT-\delta_0)\tr(\sigma_{i-1,j})
	\tr(\Pi_{\fpass}^{\bflag^{(i)}}\Pi_{\fwin}^{\bscore^{(i)}}(\pi_{\ttest}^{\fAdv_i}(\Pi_{\fpass}^{\bflag^{(\leq i-1)}}(\rho_{i-1}))))\leq(\OPT-\delta_0)\tr(\Pi_{\fpass}^{\bflag^{(\leq i-1)}}(\rho_{i-1}))
	\end{equation}
	\item (Simulation)
	\begin{equation}\label{eq:18rr}
		\Pi_{\fpass}^{\bflag^{(i)}}(\pi_{\tcomp}^{\fAdv_i}(\Pi_{\fpass}^{\bflag^{(\leq i-1)}}(\rho_{i-1})))\approx_{\epsilon_0}^{ind}\Pi_{\fpass}^{\bflag^{(i)}}(\fSim_i(\underbrace{\rho_{tar}}_{\bD^{(i)},\bQ^{(i)}}\otimes \Pi_{\fpass}^{\bflag^{(\leq i-1)}}(\rho_{i-1})))
	\end{equation}\end{itemize}
		%Without loss of generality suppose this adversary could pass the overall protocol with probability $\geq \epsilon$ (otherwise the soundness holds automatically). 
		Define $S_{\text{low pass}}$ as the set of $i$ such that \eqref{eq:17rrrr} holds. By the condition that the overall protocol passes with probability $\geq 1-\delta$, we have that $|S_{\text{low pass}}|\leq 
		\frac{1}{36}(\frac{1}{6}\delta_0(\epsilon-\epsilon_0)-\lambda)L$.\par
		Define $S_{\text{low win}}$ as the set of $i$ such that \eqref{eq:17rr} holds (thus \eqref{eq:18rr} holds for the complement of $S_{\text{low pass}}\cup S_{\text{low win}}$). Below we need to bound the total score on both the $S_{\text{low win}}$ part and its complement. For upper bounding the scores on the complement of $S_{\text{low win}}$ we use the optimality of $\OPT$, which translates to:\par
		For each $i\in [L]-S_{\text{low pass}}$, there is%\footnote{Here the inequalities are up to $\fneg(\kappa)$, which comes from the fact that we need to deal with the case that $\tr(\Pi_{\fpass}^{\bflag^{(\leq i)}}(\rho_i))$ is negligible.}
		\begin{equation}\label{eq:20rr}
			\tr(\Pi_{\fpass}^{\bflag^{(i)}}\Pi_{\fwin}^{\bscore^{(i)}}(\pi_{\ttest}^{\fAdv_i}(\Pi_{\fpass}^{\bflag^{(\leq i-1)}}(\rho_{i-1}))))\leq(\OPT+\lambda)\cdot\tr(\Pi_{\fpass}^{\bflag^{(\leq i-1)}}(\rho_{i-1}))
		\end{equation}
		Below we show the passing probability could not be high under the condition that $|S_{\text{low win}}|\geq \frac{1}{6}(\epsilon-\epsilon_0)L$. The total expectation for the scores in each step is at most $$\underbrace{\frac{1}{6}(\epsilon-\epsilon_0)L\cdot(\OPT-\delta_0)}_{\text{rounds in $S_{\text{low win}}$}}+\underbrace{(L-\frac{1}{6}(\epsilon-\epsilon_0)L)\cdot (\OPT+\lambda)}_{\text{rounds outside $S_{\text{low win}}$}}\leq (\OPT-(\frac{1}{6}\delta_0(\epsilon-\epsilon_0)-\lambda))L$$ which is smaller than the threshold by a significant gap. By the Markov inequality we have the passing probability is no more than $1-\frac{1}{2}(\frac{1}{6}\delta_0(\epsilon-\epsilon_0)-\lambda)/\OPT<1-\delta$, which violates condition of the overall passing probability. In summary, we have proved:
		$$|S_{\text{low win}}|\leq \frac{1}{6}(\epsilon-\epsilon_0)L$$
Let's construct the simulator for the $\pi_{\tcomp}^\prime$ mode and show that it achieves what we need. The simulator $\fSim$ applied on $(\rho_{tar}\otimes \rho_0)$ is defined as follows:
	\begin{enumerate}
		\item Sample a random coin $i_{\text{stop}}\leftarrow [L]$.
		\item Simulate the first $(i_{\text{stop}}-1)$ rounds of $\pi_{\tcomp}^\prime$ to get $\tilde\rho_{i_{\text{stop}}-1}$. %$\tilde\pi_{<1.i}$ on $\rho_0$ and get $\tilde\rho_{i-1}$. 
		Here the simulation is similar to what we did in the proof of Theorem \ref{thm:3.3}; so the difference of $\rho_{i_{\text{stop}}-1}$ and $\tilde\rho_{i_{\text{stop}}-1}$ is only the locations of the ``client-side registers''.
		\item Run $\fSim_i$ on $\rho_{tar}\otimes \tilde\rho_{i_{\text{stop}}-1}$ where $\rho_{tar}\in \tD(\cH_{\bD^{(\tout)}}\otimes\cH_{\bQ^{(\tout)}})$.
		%\item Run $\tilde\pi_{>1.i}$ on $\fSim_i(\rho_{tar}\otimes \tilde\rho_{i-1})$.
		\item Set $\bflag^{(\tout)}$ to be $\ffail$ if any of the flag or simulated flag is $\ffail$; otherwise set $\bflag^{(\tout)}$ to be $\fpass$. Disgard all the auxiliary registers for simulating the client side.
	\end{enumerate}
	% \par
	% We prove this simulator achieves what we want. First we recall that $(\pi_{test}^\prime,\pi_{comp}^\prime)$ is still a preRSPV, which means, what we want is the indistinguishability of comp mode when the following on test mode holds:
	% $$$$
	We prove this simulator simulates the output of $\pi_{\tcomp}^\prime$. We could compare the simulated output states with the output states from the real execution and see where the distinguishing advantage (or called approximation error) could come from:
	\begin{itemize}
		\item Equation \eqref{eq:18rr} itself contributes an error of $\epsilon_0$.
		\item Equation \eqref{eq:18rr} only holds for indices in the complement of $S_{\text{low pass}}\cup S_{\text{low win}}$, but $i_{stop}$ may still fall within $S_{\text{low pass}}$. This contributes an error of no more than $\frac{1}{6}(\epsilon-\epsilon_0)\times 2$.
		%\item The complement of \eqref{eq:25rr} contributes an arror of $\frac{1}{3}(\epsilon-\epsilon_0)\times 2$ (where $\times 2$ comes from the fact that we need to compare the real execution with the simulated execution and this error could affect both of them).
		\item Equation \eqref{eq:18rr} only holds for indices in the complement of $S_{\text{low pass}}\cup S_{\text{low win}}$, but $i_{stop}$ may still fall within $S_{\text{low win}}$. This contributes an error of $\frac{1}{6}(\epsilon-\epsilon_0)\times 2$.
	\end{itemize}
	Summing them up completes the proof.
\end{proof}
After getting a preRSPV under Set-up \ref{setup:4rr}, we could further amplify it using Protocol \ref{prtl:1} to get an RSPV.
\subsubsection{A temporary variant of PreRSPV with initial resource states}\label{sec:3.5.3}
In the later sections, as an intermediate notion, we need to consider a variant of preRSPV that starts with some resource states (recall the discussion on resource states in Section \ref{sec:3.3.1}). These states are assumed to be in the honest form in both the honest setting and malicious setting. Below we formalize this variant, as a preparation for later constructions.
\begin{setup}[Variant of Set-up \ref{setup:1} with initial resource states]\label{setup:5r}
	The registers are as follows: the client holds a classical register $\bD^{(\tin)}$ and the server holds a quantum register $\bQ^{(\tin)}$. The client also holds a classical register $\bD^{(\tout)}$ and the server also holds a quantum register $\bQ^{(\tout)}$, which are both renamed from $\bD,\bQ$ in Set-up \ref{setup:1}. We also need to consider a state $\rho_{rs}\in \tD(\cH_{\bD^{(\tin)}}\otimes\cH_{\bQ^{(\tin)}})$ which is used as part of the initial state in in the honest setting. The other parts of the set-up are the same as Set-up \ref{setup:1}.
\end{setup}
The completeness is defined in the same way as Definition \ref{defn:3.9} with one difference: when we consider the execution of $\pi_{\ttest}$ and $\pi_{\tcomp}$ the initial state should be $\rho_{rs}$. The soundness is defined in the same way as Definition \ref{defn:3.10} with the following differences differences: when we consider the execution of $\pi_{\ttest}$ and $\pi_{\tcomp}$ the initial state should be  $\rho_{rs}\otimes\rho_0$; and the registers that the simulator could work on now consist of $\bS,\bQ^{(\tin)},\bQ^{(\tout)},\bflag$.
\begin{setup}[Variant of Set-up \ref{setup:4} with initial resource states]\label{setup:6r}
	Similar to what we did in Set-up \ref{setup:5r}, we consider registers $\bD^{(\tin)}$, $\bQ^{(\tin)}$ in the set-up and rename $\bD,\bQ$ as $\bD^{(\tout)}$, $\bQ^{(\tout)}$.%The registers are as follows: the client holds a classical register $\bD_{in}$ and the server holds a quantum register $\bQ_{in}$. The client also holds a classical register $\bD_{out}$ and the server also holds a quantum register $\bQ_{out}$, which are both renamed from $\bD,\bQ$ in Set-up \ref{setup:4}. We also need to consider a state $\rho_{rs}\in \tD(\cH_{\bD_{in}}\otimes\cH_{\bQ_{in}})$ which is used as part of the initial state in in the honest setting. The other parts of the Set-up are the same as Set-up \ref{setup:4}.
\end{setup}
The completeness and soundness are adapted similarly.%The completeness is defined in the same way as Definition \ref{defn:3.11} with one difference: when we consider the execution of $\pi_{test}$ and $\pi_{comp}$ the initial state should be $\rho_{rs}$. The soundness is defined in the same way as Definition \ref{defn:3.12} with one difference: when we consider the execution of $\pi_{test}$ and $\pi_{comp}$ the initial state should be  $\rho_{rs}\otimes\rho_0$.
\paragraph{Compilation to the normal form of preRSPV} We note that once we get a protocol under Set-up \ref{setup:5r}, \ref{setup:6r}, we could compile the protocol following the approach in Section \ref{sec:3.4.2}: we could use an RSPV with $\rho_{rs}$ being the target state to prepare the resource states, then the overall protocol will be a preRSPV under the definitions in Section \ref{sec:preRSPVdef}, \ref{sec:3.5.2}. Then we could apply the results in Section \ref{sec:preRSPVdef}, \ref{sec:3.5.2} to amplify them to an RSPV.
\section{RSPV and Test of a Qubit}\label{sec:r4}
One basic primitive in RSPV is the RSPV for BB84 states. Basically this primitive has been studied in several existing works \cite{GVRSP,qfactory,BGKPV23} but these existing works could use different frameworks and notions from our work so some translation is needed. Existing ways to express the relations between operators or states include:
\begin{itemize}
	\item Characterize the underlying states: if the test passes with high probability, then the states to be tested are as desired. This could further include rigidity-based soundness and simulation-based soundness.
	\item Characterize the measurement operators: if the test passes with high probability, then the operators used by the server in the test are as desired. This further includes:
	\begin{itemize}\item the operators satisfy, for example, anticommutation relations.
	\item the operators are, for example, the Pauli X and Z operators up to an isometry. This is similar to the rigidity-based soundness discussed before.
	\end{itemize}
\end{itemize}
%We note that these different types of statements are related to each other and existing works have done many works in proving statements like ``if the states or the operators satisfy these properties, then they satisfy that property''.\par
In this section we study the relation between RSPV and a notion called ``test of a qubit'', and study the protocols for RSPV for BB84 states.\par
By \cite{BGKPV23} we know that we could construct a test for a qubit assuming NTCF (which could be adapted to weak NTCF). Although test for a qubit is closely related to RSPV, they are still not the same\footnote{We thank Kaniuar Bacho, James Bartusek, Yasuaki Okinaka and anonymous reviewers for discussions}, so some technical works are needed for translation of results. A test for a qubit is different from an RSPV in the following ways:
\begin{itemize}
	\item The soundness of a test for a qubit is described in terms of anticommutation of operators \cite{BGKPV23}; the soundness of RSPV is described by simulation, defined on states.% (Also see Section \ref{app:p1} for the comparison.)
	\item In a test for a qubit the qubit is finally tested and destroyed; in RSPV we want to preserve the qubit on the server side.
\end{itemize}
%We formally describe the test of a qubit protocol in Section \ref{sec:b1}.\par
%To construct the RSPV and prove the soundness, our approach is as follows:
We would like to translate the results in the language of test-of-a-qubit to our RSPV framework. We will not start from scratch, but try to make use of existing results as much as possible. Below we describe our approach in more detail.\par
%A test-of-qubit protocol could be translated to an RSPV for BB84 states as follows:
\begin{itemize}
	\item The test of a qubit protocol could be regarded as a preRSPV-with-score (see Section \ref{sec:3.5.2}), which could be amplified to an RSPV. This solves the second issue above.
	\item We translate the soundness property as follows. Starting from the anticommutation relation, we know the server's operators are actually Pauli X and Pauli Z operators up to an efficiently computable isometry. Then this together with the fact that the server could win the test with close-to-optimal probability implies that the underlying states are as desired. Finally we use the (computational) basis-blindness to prove the indistinguishability of remaining parts of the states which implies the simulation-based soundness.
\end{itemize}
In Section \ref{sec:b1} we first review the results in \cite{BGKPV23}; in Section \ref{sec:b2} we give our construction for RSPV-for-BB84 and prove its soundness.
\subsection{Test of a Qubit}\label{sec:b1}
The protocol in \cite{BGKPV23}, instantiated with weak NTCF, could be described as follows.
\begin{mdframed}[backgroundcolor=black!10]
	\begin{prtl}\label{prtl:14}
		Below we describe the test of a qubit protocol as given in \cite{BGKPV23}.\par
		Parameters: completeness error tolerance parameter $1^{1/\mu}$, soundness error $1^{1/\delta}$, security parameter $1^\kappa$.\par
		\begin{enumerate}
			\item (Phase A) The client and the server perform some protocol.\par
			In the end the joint state between the client and the server in the honest setting is $\mu$-close to the following state: the client-side $\bflag$ register holds value $\fpass$, the client holds $\theta_1,\theta_2\leftarrow_r \{0,1\}^2$, the server holds $\ket{+_{4\theta_1+2\theta_2+1}}$.\par
			Break out from the protocol if the $\bflag$ is $\ffail$.
			\item (Phase B) The client randomly samples $c\leftarrow_r\{0,2\}$ and sends it to the server.\par
			The server measures the state on basis $\ket{+_{c}},\ket{+_{c+4}}$ and send back the measurement result to the client. The client checks the server's response is $u_c(4\theta_1+2\theta_2+1)$ (see Definition \ref{defn:a1}) and record $\bscore$ to be $\fwin$ if the check passes.
		\end{enumerate}
	\end{prtl}
\end{mdframed}
Note that we re-describe the protocol to be consistent with the notations in later sections.\par% Section \ref{app:p1}.\par
By \cite{BGKPV23}, the following holds for Protocol \ref{prtl:14}:
\begin{thm}[Computational basis-blindness]\label{thm:b1}
	Suppose the server-side state in the end of phase A corresponding to client-side value $\theta=4\theta_1+2\theta_2+1$ is $\rho_{\theta}$. Then $\rho_{1}+\rho_5\approx^{\tind}\rho_3+\rho_{7}$.
\end{thm}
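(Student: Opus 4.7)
The plan is to reduce the statement to a computational key-indistinguishability property of the underlying weak NTCF, together with an information-theoretic observation about the role of averaging over $\theta_1$. Since the theorem is attributed to \cite{BGKPV23}, the work is primarily one of translating their statement into the notation of Protocol~\ref{prtl:14}; the technical content rests on their construction.

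First I would make explicit where the two bits $\theta_1, \theta_2$ enter Phase A of the BGKPV construction. In the NTCF-based test of a qubit, $\theta_2$ is encoded in the choice of the NTCF key the client sends to the server (roughly, selecting between two indistinguishable branches of key-generation that correspond to two different ``phase-basis'' tests), whereas $\theta_1$ is a parity of the preimages $(x_0,x_1)$ recovered by the client via the trapdoor. The server's view during Phase A---the messages it receives together with its residual state on $\bQ$---is produced by a fixed efficient procedure from the key and the server's own operations, and carries no side information beyond the key and the server's own measurement outcomes. Crucially, $\theta_1$ is correlated with the server's image response (and thus already known, information-theoretically, to the server up to the trapdoor), so averaging over $\theta_1$ is equivalent to tracing out this classical bit from the client side. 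After this rewriting, $\rho_1 + \rho_5$ becomes the reduced state on $\bQ$ obtained from Phase A when the key is sampled from the ``$\theta_2 = 0$'' branch, and $\rho_3 + \rho_7$ the analogous state for the ``$\theta_2 = 1$'' branch.

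With this in hand, the indistinguishability reduces to the computational indistinguishability of the two NTCF key-distributions: any efficient distinguisher separating $\rho_1 + \rho_5$ from $\rho_3 + \rho_7$ can be composed with the efficient server-side execution of Phase A to yield a distinguisher against the two key-distributions, contradicting the weak-NTCF assumption. The main obstacle I expect is purely bookkeeping: verifying that the inverse-polynomial completeness error $\mu$ of the weak NTCF does not itself leak $\theta_2$ through the $\bflag = \fpass$ event. I would handle this by observing that the honest acceptance probability is $1 - \fneg(\kappa)$ in both branches, so conditioning on passing perturbs each side by at most negligible trace distance and therefore cannot produce a non-negligible computational distinguishing advantage. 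Modulo this check, the theorem is a direct corollary of the basis-blindness result of \cite{BGKPV23}.
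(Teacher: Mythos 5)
There is a genuine problem with where you locate the source of the computational hiding. You assert that $\theta_2$ ``is encoded in the choice of the NTCF key the client sends to the server (roughly, selecting between two indistinguishable branches of key-generation)'' and then reduce the theorem to the indistinguishability of two key distributions. That is not how the construction of \cite{BGKPV23} (or Protocol~\ref{prtl:14}) works, and it targets an assumption the paper is explicitly trying to avoid. In these NTCF-based tests a \emph{single} key is sampled; the server holds (an approximation of) $\frac{1}{\sqrt{2}}(\ket{0}\ket{x_0}+\ket{1}\ket{x_1})$ for a claw $(x_0,x_1)$, and the basis bit $\theta_2$ is a function of \emph{both} preimages (an XOR of designated bits of $x_0$ and $x_1$, or an inner product $r\cdot(x_0\oplus x_1)$ for a challenge $r$). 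Its hiding therefore reduces to \emph{claw-freeness}: a distinguisher for $\rho_1+\rho_5$ versus $\rho_3+\rho_7$ yields a predictor for a bit jointly determined by $x_0$ and $x_1$, which (directly, or via a Goldreich--Levin-type extraction — this is precisely the technical contribution of \cite{BGKPV23} that removes the adaptive hardcore bit) produces a claw. Compare the paper's own Lemma~\ref{lem:bst}, which proves the information-theoretic analog by decomposing the state into the $\Pi^{\bQ}_{=\bx_0}$ and $\Pi^{\bQ}_{=\bx_1}$ branches and observing that each branch alone carries no information about the XOR. A ``two indistinguishable key-generation branches'' argument is instead the mechanism behind the \emph{extended function family} (injective vs.\ 2-to-1) property of \cite{MahadevVerification,GVRSP}, which is exactly one of the extra assumptions this paper advertises that it does not need; so your reduction, even if it could be made to work, would prove the theorem under a stronger hypothesis than claimed.

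Two smaller points. First, your description of $\theta_1$ is off: $\theta_1$ is typically $d\cdot(x_0\oplus x_1)$ for the server's Hadamard-measurement outcome $d$, so it is \emph{not} information-theoretically known to the server (the server knows $d$ but not $x_0\oplus x_1$); the reason the theorem only claims indistinguishability of the $\theta_1$-averaged mixtures is that $\theta_1$ \emph{may} legitimately be correlated with the server's residual state, and basis-blindness deliberately does not require hiding it. Second, note that the paper itself offers no proof of Theorem~\ref{thm:b1} — it is imported verbatim from \cite{BGKPV23} — so what is really being asked of a proof here is a faithful account of which property of the weak NTCF is being invoked; the answer is claw-freeness, not key indistinguishability. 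Your closing observation about the inverse-polynomial completeness error $\mu$ and conditioning on $\bflag=\fpass$ is reasonable bookkeeping and can be kept.
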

\begin{thm}
	Suppose the server passes the protocol with probability $\geq 1-\delta$. Then the winning probability is no more than $\cos^2(\pi/8)+\fpoly(\delta)$. 
\end{thm}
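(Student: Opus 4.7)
The plan is to reduce this bound to a robust Tsirelson-type inequality, where the role of spatial separation from CHSH is played by the computational basis-blindness of Theorem \ref{thm:b1}. First I would make the adversary's phase B strategy explicit: for each $c\in\{0,2\}$ the server applies some POVM $\{M_0^c,M_1^c\}$ to its post-phase-A state $\rho_\theta$, so that
\begin{equation}
W \;=\; \frac{1}{8}\sum_{\theta\in\{1,3,5,7\}}\sum_{c\in\{0,2\}} \tr\bigl(M_{u_c(\theta)}^c\,\rho_\theta\bigr).
\end{equation}
Since $u_c(\theta)$ is precisely the rounding of $\theta$ to the nearer of $\ket{+_c},\ket{+_{c+4}}$, this sum is a reweighted CHSH bias: pairing opposite states $(\theta,\theta+4)$ gives four ``questions'' on the state-side and two ``questions'' on the measurement-side, which is exactly the CHSH question structure once one takes signs into account.

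Next I would feed in Theorem \ref{thm:b1} to enforce the CHSH no-communication constraint in the computational setting. The indistinguishability $\rho_1+\rho_5\approx^{\tind}\rho_3+\rho_7$ means that no efficient POVM, and in particular no $M_b^c$ used by the adversary, can separate these two mixtures by more than $\fneg(\kappa)$; this pins down the cross-terms in the CHSH-like expression and rules out any strategy that would distinguish ``which column'' the state lives in. Combined with the hypothesis that phase A passes with probability at least $1-\delta$, the robust self-testing analysis of \cite{BGKPV23} yields operators on the server side that approximately anticommute, with an anticommutation defect of $O(\fpoly(\delta))$ measured against the post-phase-A state.

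Plugging the approximate anticommutation back into $W$ via a sum-of-squares decomposition of the CHSH operator then gives $W\le \cos^2(\pi/8)+\fpoly(\delta)+\fneg(\kappa)$, and absorbing the negligible term into $\fpoly(\delta)$ completes the proof. The main obstacle is quantitatively controlling the $\fpoly(\delta)$ term: one must check that (i) converting computational basis-blindness into an operator-level bound only costs a negligible additive slack, which is standard once $M_b^c$ is efficiently implementable, and (ii) the robust SOS argument for Tsirelson under approximate anticommutation degrades at most polynomially in $\delta$ rather than in some adversarial parameter or in the dimension of the server's Hilbert space. Both ingredients are carried out in \cite{BGKPV23} in a closely related setting, so the task is largely to verify that those bounds transfer to the present formulation of winning in phase B without modification.
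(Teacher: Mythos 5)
The route you sketch — view phase B as a CHSH-type game in which the computational basis-blindness of Theorem \ref{thm:b1} substitutes for spatial separation, and then invoke a Tsirelson bound — is the right general territory; it matches how the paper treats the information-theoretic core (Lemma \ref{lem:a.5} is proved by reducing exactly this expression to the non-local CHSH game, with basis-blindness supplying the virtual second prover). Note, though, that the paper does not re-prove this particular theorem at all: it is quoted directly from \cite{BGKPV23}, and only its statistical analog (Lemma \ref{lem:a.5}) is argued in Section \ref{sec:a.3}.

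There is, however, a genuine logical gap in your middle step. You propose to first extract approximately anticommuting observables ``with an anticommutation defect of $O(\fpoly(\delta))$'' and then plug that anticommutation into an SOS decomposition to bound $W$. In the paper's own framework the approximate anticommutation (Theorem \ref{thm:b2}) is a \emph{consequence} of the server winning with probability $\geq \cos^2(\pi/8)-\delta$; it is not available from passing phase A alone, so using it to establish the upper bound on the winning probability is circular. More importantly, it is unnecessary: the Tsirelson bound is proved for \emph{arbitrary} observables via the operator identity $(A_0\otimes(B_0+B_1)+A_1\otimes(B_0-B_1))^2 = 4\cdot\bbI + [A_0,A_1]\otimes[B_0,B_1] \leq 8\cdot\bbI$, with no commutation hypothesis on the server's side. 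The correct order of the argument is: (i) use basis-blindness to embed the single-server strategy into a genuine two-prover CHSH strategy (this is where the only real work lies, and where the $\fneg(\kappa)$ and $\fpoly(\delta)$ slack enters, from the approximate/computational nature of the blindness and the conditioning on the passing branch); (ii) apply the unconditional Tsirelson bound to that two-prover strategy. The anticommutation statement is then a separate, downstream corollary for near-optimal strategies, not an input. As written, your proof would not go through because the ingredient you rely on in step two is exactly what the near-optimality hypothesis — which you do not have — would be needed to supply.
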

\begin{thm}\label{thm:b2}
	Suppose the server passes the protocol with probability $\geq 1-\delta$ and wins with probability $\geq \cos^2(\pi/8)-\delta$. Suppose $\rho=\sum_{\theta\in \{1,3,5,7\}}\rho_{\theta}$ where $\rho_{\theta}$ is defined in Theorem \ref{thm:b1}. Suppose the server's operations corresponding to $c=0$ and $c=2$ are described by observables $X_0,X_2$. Then
	\begin{equation}\label{eq:b2e}\tr(\{X_0,X_2\}^2\rho)\leq \fpoly(\delta)\end{equation}
\end{thm}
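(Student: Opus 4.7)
The plan is to reduce the anticommutation bound to a near-saturation of matrix Cauchy--Schwarz for a single CHSH-type Bell operator, with the computational basis-blindness from Theorem \ref{thm:b1} playing the role that in the usual two-prover CHSH argument is played by the second spacelike-separated prover. First I lift the post--phase-A state to the cq-state $\sigma=\sum_{\theta\in\{1,3,5,7\}}\rho_{\theta}\otimes\ket{\theta}\bra{\theta}$ on the server register together with a client-side register recording $\theta$, and define the Bell-type operator $B=X_{0}\otimes A_{0}+X_{2}\otimes A_{2}$, where $A_{c}$ is the observable diagonal in the $\theta$-basis with eigenvalue $(-1)^{u_{c}(\theta)}$. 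Unpacking the phase-B probability yields $\Pr[\fpass\wedge\fwin]=\tr(\rho)/2+\tr(B\sigma)/4$, so the two hypotheses of the theorem together translate into $\tr(B\sigma)\ge\sqrt{2}-O(\delta)$, which is the single-server analog of the Tsirelson bound.

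Next I use the algebraic identity $B^{2}=2I+\{X_{0},X_{2}\}\otimes A_{0}A_{2}$, valid because $A_{0}$ and $A_{2}$ are simultaneously diagonal and hence commute. The eigenvalue pattern of $A_{0}A_{2}$ on $\theta=(1,3,5,7)$ is $(+,-,+,-)$, which is exactly the basis-blindness partition $\{1,5\}$ vs.\ $\{3,7\}$. Consequently $\tr\bigl((\{X_{0},X_{2}\}\otimes A_{0}A_{2})\sigma\bigr)=\tr\bigl(\{X_{0},X_{2}\}((\rho_{1}+\rho_{5})-(\rho_{3}+\rho_{7}))\bigr)$, and since $\{X_{0},X_{2}\}$ is an efficient bounded observable, Theorem \ref{thm:b1} implies that this quantity is at most $\fneg(\kappa)$ in absolute value, giving $\tr(B^{2}\sigma)\le 2+\fneg(\kappa)$.

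Now the matrix Cauchy--Schwarz inequality $\tr(B\sigma)^{2}\le\tr(\sigma)\tr(B^{2}\sigma)$ is nearly saturated: the gap is $O(\delta)$, and it equals the squared Frobenius distance $\|B\sqrt{\sigma}-\lambda\sqrt{\sigma}\|_{F}^{2}$ for the scalar $\lambda=\tr(B\sigma)/\tr(\sigma)$, which furthermore satisfies $|\lambda^{2}-2|=O(\delta)$. I propagate this to $B^{2}$ via the operator identity $(B^{2}-\lambda^{2}I)\sqrt{\sigma}=(B+\lambda I)(B-\lambda I)\sqrt{\sigma}$ and the bound $\|B\|\le 2$, which yields $\|B^{2}\sqrt{\sigma}-\lambda^{2}\sqrt{\sigma}\|_{F}=O(\sqrt{\delta})$. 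Substituting $B^{2}=2I+\{X_{0},X_{2}\}A_{0}A_{2}$ and absorbing the $O(\delta)$ contribution $\|(2-\lambda^{2})\sqrt{\sigma}\|_{F}$ via the triangle inequality gives $\|\{X_{0},X_{2}\}(A_{0}A_{2})\sqrt{\sigma}\|_{F}^{2}=O(\delta)$. Using $(A_{0}A_{2})^{2}=I$ together with the cq-structure of $\sigma$, this Frobenius norm equals $\sum_{\theta}\tr(\{X_{0},X_{2}\}^{2}\rho_{\theta})=\tr(\{X_{0},X_{2}\}^{2}\rho)$, which gives the claimed $\fpoly(\delta)$ bound.

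The main obstacle in executing this plan is arranging the sign pattern so that $A_{0}A_{2}$ matches the basis-blindness partition; this alignment is precisely what allows basis-blindness to suppress the anticommutator term appearing in $B^{2}$, and it is the only place where computational hardness enters an argument that is otherwise quantum-information-theoretic. Once this alignment is verified, the rest is standard CHSH-style near-saturation manipulation and the final norm bookkeeping is routine.
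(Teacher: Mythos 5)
Your proposal is correct and complete. It is worth noting, though, that the paper does not actually prove Theorem \ref{thm:b2}: it is imported wholesale from \cite{BGKPV23}, and the only related argument the paper supplies is the sketch for the statistical analog (Lemma \ref{lem:a.5}), which works by using basis-blindness to manufacture a virtual second prover and then invoking the non-local CHSH self-testing theorem as a black box. Your route is genuinely different and, in my view, preferable: you stay entirely in the single-server picture, build the Bell operator $B=X_0\otimes A_0+X_2\otimes A_2$ with the $A_c$ acting on the client's $\theta$-register, and run the standard near-saturation/sum-of-squares argument directly. The identity $\Pr[\fwin]=\tfrac12\tr(\rho)+\tfrac14\tr(B\sigma)$, the sign pattern $(+,-,+,-)$ of $A_0A_2$ matching the $\{1,5\}$ vs.\ $\{3,7\}$ partition, and the chain $\|(B-\lambda)\sqrt{\sigma}\|_F^2=O(\delta)\Rightarrow\|(B^2-\lambda^2)\sqrt{\sigma}\|_F=O(\sqrt{\delta})\Rightarrow\tr(\{X_0,X_2\}^2\rho)=O(\delta)$ all check out (using $\tr(\sigma)\in[1-\delta,1]$ and $(A_0A_2)^2=I$), and you even get an explicit $O(\delta)$ rather than an unspecified $\fpoly(\delta)$. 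This buys two things the paper's sketch does not: it handles the \emph{computational} basis-blindness of Theorem \ref{thm:b1} head-on rather than only the statistical version, and it makes the proof self-contained. The one step you gloss over is the claim that Theorem \ref{thm:b1} kills $\tr\bigl(\{X_0,X_2\}((\rho_1+\rho_5)-(\rho_3+\rho_7))\bigr)$: the anticommutator is not itself a measurement the distinguisher performs, so you should add one line observing that $\tr(\{X_0,X_2\}\tau)$ equals twice the real part of $\tr(X_0X_2\tau)$, which an efficient distinguisher estimates via a Hadamard test with controlled-$X_0$ and controlled-$X_2$ (efficient because $X_0,X_2$ come from the adversary's own circuits); indistinguishability then forces the difference to be $\fneg(\kappa)$. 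This is routine but is the one place the argument leans on efficiency, and it deserves to be explicit.
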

\subsection{RSPV for BB84 States}\label{sec:b2}
Below in Section \ref{sec:a.2.2} and \ref{sec:a.2.3} we first review the lemmas needed for translating test of a qubit to RSPV for BB84 states; then in Section \ref{sec:a.2.4} we construct and prove the soundness of RSPV for BB84 states.
\subsubsection{Extracting the qubit from the anticommutation relation}\label{sec:a.2.2}
From \cite{GVRSP,vidicknotes} we know that the anticommutation implies that the operators could be seen as the Pauli X and Pauli Z operators up to an efficiently computable isometry. Below we review the statement (with a basis rotation for consistency to later sections; note that $\ket{+_0},\ket{+_2},\ket{+_4},\ket{+_6}$ are isometric to BB84 states.)
\begin{thm}[By \cite{vidicknotes}]\label{thm:antiiso}
	Suppose $X_0,X_2,\rho$ satisfies 
	$$\tr(\{X_0,X_2\}^2\rho)\approx_{O(\delta)} 0$$
	Then there exists a quantum isometry $U^{X_0,X_2}$ efficiently computable from $X_0,X_2$ such that
	$$X_0(\rho)\approx_{\fpoly(\delta)} ((U^{X_0,X_2})^\dagger(\ket{+_0}\bra{+_0}-\ket{+_4}\bra{+_4})U^{X_0,X_2})(\rho)$$
	$$X_2(\rho)\approx_{\fpoly(\delta)} ((U^{X_0,X_2})^\dagger(\ket{+_2}\bra{+_2}-\ket{+_6}\bra{+_6})U^{X_0,X_2})(\rho)$$
	%$$O_Z\ket{\psi}\approx (U^{O_X,O_Z})^\dagger\sigma_ZU^{O_X,O_Z}\sigma_X\ket{\psi}$$
\end{thm}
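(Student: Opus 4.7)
The plan is to invoke the standard \emph{swap isometry} construction for approximately anticommuting binary observables from the rigidity and self-testing literature, as worked out in \cite{vidicknotes}. Up to a rotation of the single-qubit basis, the statement of the theorem reduces to constructing an isometry $V$ from the ambient Hilbert space into a qubit tensored with a multiplicity space, such that $V X_0 \ket{\psi} \approx_{\fpoly(\delta)} (\fX \otimes \bbI) V \ket{\psi}$ and $V X_2 \ket{\psi} \approx_{\fpoly(\delta)} (\fY \otimes \bbI) V \ket{\psi}$ for a purification $\ket{\psi}$ of $\rho$. Indeed $\ket{+_0}\bra{+_0}-\ket{+_4}\bra{+_4}$ is Pauli $\fX$ and $\ket{+_2}\bra{+_2}-\ket{+_6}\bra{+_6}$ is Pauli $\fY$, so such a $V$ yields $U^{X_0,X_2}$ after the constant basis change on the ancilla qubit is absorbed into the isometry.

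First I would construct $V$ explicitly. Introduce an ancilla qubit register $\bR$ initialized in $\ket{+}$, and apply a sequence of $\bR$-controlled operations $X_0$ and $X_2$ interleaved with Hadamards and a phase gate on $\bR$. The construction is chosen so that if $X_0, X_2$ exactly anticommuted and squared to $\bbI$, then $V$ would be an exact embedding realizing the Pauli algebra representation generated by the two observables. Since $X_0, X_2$ are efficiently computable by assumption, so is $V$, which gives the efficient computability required in the statement.

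Second I would translate the scalar anticommutation bound into an operator-level bound. The hypothesis $\tr(\{X_0,X_2\}^2 \rho) \leq O(\delta)$ equals the squared $\rho$-norm of $(X_0 X_2 + X_2 X_0)\ket{\psi}$ for a purification $\ket{\psi}$, so it implies $\|(X_0 X_2 + X_2 X_0)\ket{\psi}\| \leq O(\sqrt{\delta})$. Propagating this through the gates of $V$ using the triangle inequality, together with $X_i^2 = \bbI$ and the fact that unitaries preserve vector norms, yields the two approximate intertwining relations. Sandwiching by $V$ and $V^\dagger$ and rotating the ancilla basis then gives the two bounds in the theorem.

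The main obstacle is the state-dependent error bookkeeping. The hypothesis is a trace bound, but the conclusion is a bound on the action of superoperators applied to $\rho$; passing between these requires a careful sequence of Cauchy--Schwarz applications, and tracking how errors accumulate through the noncommutative product structure of $V$ is what yields a polynomial $\fpoly(\delta)$ loss rather than a cleaner linear bound. Fortunately this bookkeeping is standard in the self-testing literature and is carried out in detail in \cite{vidicknotes}, so the proof reduces essentially to specializing that argument to the rotated basis appearing in the statement.
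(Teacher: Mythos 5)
Your proposal is correct and coincides with what the paper relies on: the paper gives no proof of this theorem beyond citing \cite{vidicknotes}, and the argument you outline — the swap-isometry construction for approximately anticommuting binary observables, converting the trace bound $\tr(\{X_0,X_2\}^2\rho)=O(\delta)$ into a state-dependent norm bound $O(\sqrt{\delta})$ on a purification and propagating it through the controlled-$X_0$/controlled-$X_2$ gates — is exactly the standard proof from that reference, specialized via the fixed single-qubit rotation identifying $\{\ket{+_0}\bra{+_0}-\ket{+_4}\bra{+_4},\ \ket{+_2}\bra{+_2}-\ket{+_6}\bra{+_6}\}=\{\sigma_X,\sigma_Y\}$ with the usual $\{\sigma_X,\sigma_Z\}$ pair. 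No gaps.
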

Theorem \ref{thm:antiiso} allows us characterize the operators. In the next section we state a lemma that allows us to further characterize the underlying states to be measured (together with the condition that the server wins the test with close-to-optimal probability).

\subsubsection{Lemmas for characterizing the states to be tested}\label{sec:a.2.3}
\begin{lem}\label{lem:a.7}
	For any density operator $\rho$ there is
	$\frac{1}{2}(\tr(\ket{+_0}\bra{+_0}\rho)+\tr(\ket{+_2}\bra{+_2}\rho))\leq \cos^2(\pi/8)$. And if $\frac{1}{2}(\tr(\ket{+_0}\bra{+_0}\rho)+\tr(\ket{+_2}\bra{+_2}\rho))\approx_{\delta} \cos^2(\pi/8)$, there is $\rho\approx_{\fpoly(\delta)}\ket{+_1}\bra{+_1}\otimes\psi$ for some density operator $\psi$.
\end{lem}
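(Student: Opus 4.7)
The whole lemma boils down to diagonalizing the single-qubit operator
$$P := \tfrac{1}{2}\bigl(\ket{+_0}\bra{+_0}+\ket{+_2}\bra{+_2}\bigr)$$
and then invoking a gentle measurement argument. I would first observe that $\tr(\ket{+_0}\bra{+_0}\rho)$ and $\tr(\ket{+_2}\bra{+_2}\rho)$ only depend on the reduced state $\rho_1$ on the qubit, so $\tr(P\rho)=\tr(P\rho_1)$ and it suffices to study $P$ on a single qubit.

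The main calculation is to verify that $\ket{+_1}$ is the top eigenvector of $P$ with eigenvalue exactly $\cos^2(\pi/8)$. From the convention $\ket{+_\theta}=\tfrac{1}{\sqrt{2}}(\ket{0}+e^{\mi\theta\pi/4}\ket{1})$ one computes $|\langle +_1|+_0\rangle|^2=|\langle +_1|+_2\rangle|^2=\cos^2(\pi/8)$ (the Bloch-sphere angle between $\ket{+_1}$ and either $\ket{+_0}$ or $\ket{+_2}$ is $\pi/4$). A symmetry/direct check shows $P\ket{+_1}=\cos^2(\pi/8)\ket{+_1}$, and the orthogonal eigenvector is $\ket{+_5}$ with eigenvalue $\sin^2(\pi/8)$. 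Thus
$$P=\cos^2(\pi/8)\,\ket{+_1}\bra{+_1}+\sin^2(\pi/8)\,\ket{+_5}\bra{+_5},$$
which immediately yields $\tr(P\rho)\leq \cos^2(\pi/8)$, proving the first part.

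For the second (robustness) part, set $p:=\bra{+_1}\rho_1\ket{+_1}$. The assumption $\tr(P\rho)\approx_\delta \cos^2(\pi/8)$ together with the spectral decomposition above gives
$$\cos^2(\pi/8)\,p+\sin^2(\pi/8)\,(1-p)\approx_{\delta}\cos^2(\pi/8),$$
which simplifies to $\cos(\pi/4)\,p\approx_\delta \cos(\pi/4)$, hence $p\geq 1-O(\delta)$. In other words the projector $\Pi:=\ket{+_1}\bra{+_1}\otimes \bbI$ on the full space satisfies $\tr(\Pi\rho)\geq 1-O(\delta)$. I would then apply the gentle measurement lemma: $\Pi\rho\Pi$ is a subnormalized operator of the product form $\ket{+_1}\bra{+_1}\otimes \tilde\psi$, its trace distance to $\rho$ is $O(\sqrt{\delta})$, and after rescaling $\tilde\psi$ to a density operator $\psi$ one loses an additional $O(\delta)$ in trace distance. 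This yields $\rho\approx_{\fpoly(\delta)}\ket{+_1}\bra{+_1}\otimes \psi$ as claimed.

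The only non-routine step is the spectral identification of $P$; everything else is standard gentle-measurement and algebraic manipulation. The $\sqrt{\delta}$ loss in the gentle measurement step is the reason the statement is phrased with $\fpoly(\delta)$ rather than $O(\delta)$, and this seems unavoidable with this approach.
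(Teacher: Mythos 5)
Your proof is correct and matches what the paper intends: the paper's own "proof" is just the one-line remark that this is a simple linear algebra calculation, and your spectral decomposition of $\tfrac{1}{2}(\ket{+_0}\bra{+_0}+\ket{+_2}\bra{+_2})$ into $\cos^2(\pi/8)\ket{+_1}\bra{+_1}+\sin^2(\pi/8)\ket{+_5}\bra{+_5}$ followed by a gentle-measurement argument is exactly that calculation, carried out correctly (including the $O(\sqrt{\delta})$ loss, which is indeed why the bound is stated as $\fpoly(\delta)$).
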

The proof is a simple linear algebra calculation (see also the appendix of \cite{cvqcinlt}).\par
The following lemma is for proving the indistinguishability on the remaining part of the states (other than the qubits to be measured). We first state the statistical indistinguishability version of it.
\begin{lem}\label{lem:a.8}
	Suppose $\rho_{\theta^\prime}+\rho_{\theta^\prime+4}\approx_{\delta}\rho_{\theta^{\prime\prime}}+\rho_{\theta^{\prime\prime}+4}$ where $\theta^\prime\neq \theta^{\prime\prime}$. Further suppose $\rho_{\theta}=\ket{+_\theta}\bra{+_\theta}\otimes \psi_{\theta}$ for each $\theta\in \{\theta^\prime,\theta^{\prime\prime},\theta^\prime+4,\theta^{\prime\prime}+4\}$. Then $(\psi_{\theta})_{\theta\in \{\theta^\prime,\theta^{\prime\prime},\theta^\prime+4,\theta^{\prime\prime}+4\}}$ are $\fpoly(\delta)$-close to each other.
\end{lem}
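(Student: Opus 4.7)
}
The plan is to extract information about the ``background'' operators $\psi_\theta$ from the trace-distance hypothesis on the ``qubit + background'' operators $\rho_\theta$ by applying partial projections in two different bases on the qubit register. Since for any vector $\ket{\phi}$ the map $A\mapsto \bra{\phi} A\ket{\phi}$ (acting only on the qubit register) has operator norm at most $1$, it is non-increasing on trace distance; applying such a projection to both sides of $\rho_{\theta'}+\rho_{\theta'+4}\approx_\delta \rho_{\theta''}+\rho_{\theta''+4}$ yields inequalities among the $\psi_\theta$'s.

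Concretely, set $\alpha=|\langle +_{\theta''}|+_{\theta'}\rangle|^2=\cos^2((\theta'-\theta'')\pi/8)$; then $|\langle +_{\theta''+4}|+_{\theta'}\rangle|^2=1-\alpha$, and symmetric identities hold with the roles of $\theta',\theta''$ swapped. Contracting the qubit register of the hypothesis against $\ket{+_{\theta''}}$ and $\ket{+_{\theta''+4}}$ respectively, and using $\rho_\theta=\ket{+_\theta}\bra{+_\theta}\otimes\psi_\theta$, gives
\begin{equation*}
\alpha\,\psi_{\theta'}+(1-\alpha)\psi_{\theta'+4}\approx_{\delta}\psi_{\theta''},\qquad (1-\alpha)\psi_{\theta'}+\alpha\,\psi_{\theta'+4}\approx_{\delta}\psi_{\theta''+4}.
\end{equation*}
Contracting instead against $\ket{+_{\theta'}}$ and $\ket{+_{\theta'+4}}$ yields the symmetric pair
\begin{equation*}
\alpha\,\psi_{\theta''}+(1-\alpha)\psi_{\theta''+4}\approx_{\delta}\psi_{\theta'},\qquad (1-\alpha)\psi_{\theta''}+\alpha\,\psi_{\theta''+4}\approx_{\delta}\psi_{\theta'+4}.
\end{equation*}

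Taking the sum of the two inequalities in each block shows $\psi_{\theta'}+\psi_{\theta'+4}\approx_{O(\delta)}\psi_{\theta''}+\psi_{\theta''+4}$; taking differences gives
\begin{equation*}
(2\alpha-1)(\psi_{\theta'}-\psi_{\theta'+4})\approx_{O(\delta)}\psi_{\theta''}-\psi_{\theta''+4},\qquad (2\alpha-1)(\psi_{\theta''}-\psi_{\theta''+4})\approx_{O(\delta)}\psi_{\theta'}-\psi_{\theta'+4}.
\end{equation*}
Chaining these yields $\bigl(1-(2\alpha-1)^2\bigr)(\psi_{\theta'}-\psi_{\theta'+4})\approx_{O(\delta)}0$, and likewise for the $\theta''$ pair. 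Combined with the sum inequalities and the triangle inequality, this will give that all four $\psi_\theta$'s lie within $\fpoly(\delta)$ in trace norm.

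The main subtlety (and the one place that requires care) is the factor $1-(2\alpha-1)^2$: it vanishes exactly when $\theta'-\theta''\equiv 0$ or $4\pmod 8$. The hypothesis $\theta'\neq\theta''$ excludes the first case but not the second; however in the second case $\theta'\equiv\theta''+4\pmod 8$ so the four indices collapse to two and the conclusion is about the two pairs $\{\psi_{\theta'},\psi_{\theta''+4}\}$ and $\{\psi_{\theta'+4},\psi_{\theta''}\}$ of equal operators, which is vacuously true (this is the degenerate regime one should note explicitly). In the intended regime where the four indices $\theta',\theta'+4,\theta'',\theta''+4$ are distinct mod $8$---the only regime in which the lemma is invoked, e.g.\ with $(\theta',\theta'')=(1,3)$ so that $\alpha=1/2$ and $(2\alpha-1)^2=0$---the factor is bounded away from zero by an absolute constant, so dividing through is legal and the claimed $\fpoly(\delta)$-closeness of all four $\psi_\theta$'s follows.
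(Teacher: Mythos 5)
Your argument is correct, and it is worth noting that the paper itself states Lemma \ref{lem:a.8} without proof (only Lemma \ref{lem:a.7} gets the remark that it is ``a simple linear algebra calculation''), so your write-up supplies a derivation rather than duplicating one. The core mechanism is sound: the partial contraction $A\mapsto(\bra{\phi}\otimes I)A(\ket{\phi}\otimes I)$ is completely positive and trace-non-increasing, hence contractive in trace norm, so contracting the hypothesis against $\ket{+_{\theta''}},\ket{+_{\theta''+4}},\ket{+_{\theta'}},\ket{+_{\theta'+4}}$ legitimately produces the four displayed relations, and the sum/difference/chaining manipulation gives $(1-(2\alpha-1)^2)\|\psi_{\theta'}-\psi_{\theta'+4}\|_1=O(\delta)$ with $1-(2\alpha-1)^2=\sin^2((\theta'-\theta'')\pi/4)$ bounded below by an absolute constant whenever the four indices are distinct mod $8$ (in the paper's only invocation, $(\theta',\theta'')=(1,3)$, one even has $2\alpha-1=0$, so the differences are controlled in a single step).

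The one point I would push back on is your treatment of the degenerate case $\theta''\equiv\theta'+4\pmod 8$. You describe the conclusion there as ``vacuously true,'' but the index set collapses to $\{\theta',\theta'+4\}$ and the lemma's conclusion would still assert $\psi_{\theta'}\approx_{\fpoly(\delta)}\psi_{\theta'+4}$, which does not follow from the hypothesis (which becomes the trivially true $\rho_{\theta'}+\rho_{\theta'+4}\approx_0\rho_{\theta'+4}+\rho_{\theta'}$). As literally stated with only $\theta'\neq\theta''$, the lemma is therefore false in that case; the correct fix is to strengthen the hypothesis to $\theta''\not\equiv\theta',\theta'+4\pmod 8$, which is satisfied everywhere the lemma is used. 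This is an edge-case caveat about the statement, not a flaw in your main argument.
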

%Note that in Lemma \ref{lem:a.8} we do not have the condition that these $\rho$ have approximately the same trace value. (But this could be proved from the condition.) 
Then we state the computational analog of Lemma \ref{lem:a.8}.
\begin{lem}\label{lem:a.9}
	Suppose $\rho_{\theta^\prime}+\rho_{\theta^\prime+4}\approx_{\delta}^{\tind}\rho_{\theta^{\prime\prime}}+\rho_{\theta^{\prime\prime}+4}$ where $\theta^\prime\neq \theta^{\prime\prime}$. Further suppose $\rho_{\theta}=\ket{+_\theta}\bra{+_\theta}\otimes \psi_{\theta}$ for each $\theta\in \{\theta^\prime,\theta^{\prime\prime},\theta^\prime+4,\theta^{\prime\prime}+4\}$. Then $(\psi_{\theta})_{\theta\in \{\theta^\prime,\theta^{\prime\prime},\theta^\prime+4,\theta^{\prime\prime}+4\}}$ are $\fpoly(\delta)$-indistinguishable to each other.
\end{lem}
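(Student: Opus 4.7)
The plan is to mirror the proof of the statistical version (Lemma~\ref{lem:a.8}), exploiting the fact that computational indistinguishability is preserved under application of the same efficient CPTP map to both sides: if $\rho \approx^{\tind}_\delta \sigma$ then $\Phi(\rho) \approx^{\tind}_\delta \Phi(\sigma)$ for any efficient $\Phi$, since any distinguisher for the outputs induces one for the inputs. Combined with linearity of the distinguisher's acceptance probability in the state, this reduces the lemma to a linear-algebra calculation on the $\psi_\theta$'s, entirely in parallel with the statistical proof.

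The concrete efficient operation I would apply to both sides of the hypothesis is, for each $\theta_0 \in \{\theta', \theta'+4, \theta'', \theta''+4\}$, ``measure the qubit register in the basis $\{\ket{+_{\theta_0}}, \ket{+_{\theta_0+4}}\}$ and, conditioned on observing $\ket{+_{\theta_0}}$, discard the qubit.'' Setting $\alpha := |\braket{+_{\theta'}|+_{\theta''}}|^2 = \tfrac{1}{2}(1+\cos((\theta'-\theta'')\pi/4))$ and using that $\ket{+_\theta}\perp \ket{+_{\theta+4}}$, the four choices of $\theta_0$ transform the hypothesis into the four relations
\begin{align*}
\psi_{\theta'} &\approx^{\tind}_\delta \alpha\psi_{\theta''} + (1-\alpha)\psi_{\theta''+4}, \\
\psi_{\theta'+4} &\approx^{\tind}_\delta (1-\alpha)\psi_{\theta''} + \alpha\psi_{\theta''+4}, \\
\alpha\psi_{\theta'} + (1-\alpha)\psi_{\theta'+4} &\approx^{\tind}_\delta \psi_{\theta''}, \\
(1-\alpha)\psi_{\theta'} + \alpha\psi_{\theta'+4} &\approx^{\tind}_\delta \psi_{\theta''+4}.
\end{align*}
Each side is a density operator, so these are proper indistinguishability statements. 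Subtracting the first pair and the second pair (at the cost of an $O(1)$ factor in the advantage) yields $\psi_{\theta'}-\psi_{\theta'+4} \approx^{\tind}_{O(\delta)} (2\alpha-1)(\psi_{\theta''}-\psi_{\theta''+4})$ and, symmetrically, $\psi_{\theta''}-\psi_{\theta''+4} \approx^{\tind}_{O(\delta)} (2\alpha-1)(\psi_{\theta'}-\psi_{\theta'+4})$; combining them gives $4\alpha(1-\alpha)(\psi_{\theta''}-\psi_{\theta''+4}) \approx^{\tind}_{O(\delta)} 0$.

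To close the argument, I would observe that for $\theta',\theta'' \in \{0,\ldots,7\}$ with $\theta'-\theta'' \not\equiv 0,4 \pmod 8$ (the excluded cases collapse the four indices to at most two and make the lemma trivial), a direct case check gives $\alpha(1-\alpha) \in \{1/8,1/4\}$, a universal constant; dividing yields $\psi_{\theta''} \approx^{\tind}_{O(\delta)} \psi_{\theta''+4}$ and, by symmetry, $\psi_{\theta'} \approx^{\tind}_{O(\delta)} \psi_{\theta'+4}$, after which plugging into any one of the four displayed relations gives the remaining pairwise indistinguishabilities. I expect the main obstacles to be two technicalities. First, the linear manipulations pass through traceless Hermitian operators such as $\psi_{\theta'}-\psi_{\theta'+4}$, so one must either formalize each step via the linear functional $\tr(M\,\cdot)$ induced by a distinguisher, or rephrase the manipulations as convex combinations plus an explicit renormalization; this is bookkeeping but must be handled cleanly to track the constant blow-up in $\delta$. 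Second, the case $\alpha=1/2$ (which is the natural BB84 case of $\theta'-\theta''\equiv 2 \pmod 4$) turns the central relation into a tautology and so needs a separate short argument: the first two displayed relations directly give $\psi_{\theta'}\approx^{\tind}_{O(\delta)}\psi_{\theta'+4}$, the last two give $\psi_{\theta''}\approx^{\tind}_{O(\delta)}\psi_{\theta''+4}$, and any single relation then bridges the two pairs.
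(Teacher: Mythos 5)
The paper states Lemma \ref{lem:a.9} (like Lemma \ref{lem:a.8}) without proof, so there is no argument of the paper's to compare against line by line; judged on its own, your proposal is correct and would serve as a complete proof. The key move — that computational indistinguishability is preserved under a fixed efficient trace-non-increasing filter, so projecting the qubit onto $\ket{+_{\theta_0}}$ for each $\theta_0\in\{\theta',\theta'+4,\theta'',\theta''+4\}$ converts the single hypothesis into four affine relations among the $\psi_\theta$ with respect to every distinguisher's acceptance functional $\tr(M\,\cdot)$ — is exactly right, and the resulting linear system in the two differences $f(\psi_{\theta'})-f(\psi_{\theta'+4})$ and $f(\psi_{\theta''})-f(\psi_{\theta''+4})$ solves as you describe. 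One caveat on your aside that ``each side is a density operator'': in the paper's application the $\rho_\theta$ and hence the $\psi_\theta$ are sub-normalized, but Notation \ref{nota:2.12} and the linear-functional formalization you propose handle that without change, and your argument then also delivers the approximate equality of the traces $\tr\psi_\theta$ (via the constant distinguisher), which is precisely the point the paper flags in the remark following the lemma.

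Two misstatements, neither of which affects validity. First, at $\alpha=1/2$ the relation $4\alpha(1-\alpha)(\psi_{\theta''}-\psi_{\theta''+4})\approx^{\tind}_{O(\delta)}0$ is not a tautology: $4\alpha(1-\alpha)=1$ there, so this is the strongest instance of the relation, not a degenerate one; your separate short argument for that case is valid but redundant. (The application in Lemma \ref{lem:b3} has $\theta'-\theta''\equiv 2$, i.e.\ exactly this case.) Second, the excluded case $\theta''\equiv\theta'+4\pmod 8$ does not make the lemma ``trivial'': there the hypothesis is vacuously true (both sides are the identical operator) while the conclusion $\psi_{\theta'}\approx^{\tind}\psi_{\theta'+4}$ remains a nontrivial claim, so the lemma is actually false for that parameter choice. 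Your exclusion of it is therefore necessary, but it should be stated as an implicit hypothesis $\theta'\not\equiv\theta''\pmod 4$ (satisfied in the paper's use, where $\{\theta',\theta''\}=\{1,3\}$) rather than as a case that holds trivially.
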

Note that, similar to Lemma \ref{lem:a.8} we do not have the condition that these $\rho$ have approximately the same trace value (but this could be proved from the condition).
\subsubsection{Constructions and proofs for RSPV for BB84}\label{sec:a.2.4}
Below we first prove Protocol \ref{prtl:14} could be seen as a preRSPV-with-score.
\begin{lem}\label{lem:b3}
	Consider $(\pi_{\ttest},\pi_{\tcomp})$ where $\pi_{\ttest}$ is the whole protocol of Protocol \ref{prtl:14}, and the computation-mode is the phase A part. Then this protocol is a preRSPV-with-score with optimal winning probability $\cos^2(\pi/8)$ and is $(\delta,\fpoly(\delta))$-sound.
\end{lem}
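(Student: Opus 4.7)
My plan is to derive the lemma by threading the results of Section \ref{sec:b1} and Sections \ref{sec:a.2.2}--\ref{sec:a.2.3} in exactly the order that matches the preRSPV-with-score definition. First, completeness and the value of $\OPT$ are immediate from the honest analysis: honest Phase A produces the target cq-state $\rho_{tar}$ negligibly closely (this is exactly what Protocol \ref{prtl:14} promises in the honest setting), so $\pi_{\tcomp}$ is complete; and honest Phase B computes to winning probability $\cos^2(\pi/8)$ by the standard overlap calculation between $\ket{+_\theta}$ for $\theta\in\{1,3,5,7\}$ and the $\{\ket{+_c},\ket{+_{c+4}}\}$ basis for $c\in\{0,2\}$. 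The fact that no strategy passing Phase A with probability $\geq 1-\delta$ can beat $\cos^2(\pi/8)+\fpoly(\delta)$ in Phase B is the first part of Theorem \ref{thm:b2}, which gives the $(\delta,\fpoly(\delta))$-optimality of $\OPT=\cos^2(\pi/8)$ in the sense of Definition \ref{defn:3.12r}.

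The real work is the soundness, which I would prove by falling into the ``simulation'' branch of Definition \ref{defn:3.12} whenever the server both passes Phase A with probability $\geq 1-\delta$ and wins Phase B with probability $\geq \cos^2(\pi/8)-\delta$. In that regime, the second half of Theorem \ref{thm:b2} yields $\tr(\{X_0,X_2\}^2\rho)\leq \fpoly(\delta)$ for the post-Phase-A mixture $\rho=\sum_{\theta\in\{1,3,5,7\}}\rho_\theta$ and the observables $X_0,X_2$ implemented by $\fAdv$ in Phase B. Theorem \ref{thm:antiiso} then hands me an efficiently computable isometry $U=U^{X_0,X_2}$ rotating $X_0,X_2$ into the canonical $\ket{+_0}\bra{+_0}-\ket{+_4}\bra{+_4}$ and $\ket{+_2}\bra{+_2}-\ket{+_6}\bra{+_6}$ observables up to $\fpoly(\delta)$ error when acting on $U\rho U^\dagger$.

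Next I would combine the near-optimal winning probability with this operator-level structure: averaging $\cos^2(\pi/8)-\fpoly(\delta)$ over $\theta$ and $c$, non-negativity of each term forces, for every $\theta\in\{1,3,5,7\}$, the conditional Born probabilities $\tr(\ket{+_c}\bra{+_c}\cdot U\rho_\theta U^\dagger)/\tr(\rho_\theta)$ to be near-optimal, which is exactly the hypothesis of Lemma \ref{lem:a.7} and forces $U\rho_\theta U^\dagger \approx_{\fpoly(\delta)} \ket{+_\theta}\bra{+_\theta}\otimes \psi_\theta$ for some subnormalized $\psi_\theta$. I then transport the computational basis-blindness $\rho_1+\rho_5\approx^{\tind}\rho_3+\rho_7$ of Theorem \ref{thm:b1} through $U$ (any distinguisher on the $U$-conjugated side pulls back to an efficient distinguisher on the original side because $U$ is efficiently implementable) and apply Lemma \ref{lem:a.9} with $\theta'=1,\theta''=3$ to conclude that $\psi_1,\psi_3,\psi_5,\psi_7$ are pairwise $\fpoly(\delta)$-indistinguishable.

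The simulator for the $\pi_{\tcomp}$ output is then natural. On input $\rho_{tar}\otimes\rho_0$ it internally runs honest Phase A between a simulated client and $\fAdv$ starting from $\rho_0$, producing a simulated value $\theta_0$ on an internal client register and a server state approximately $U^\dagger(\ket{+_{\theta_0}}\bra{+_{\theta_0}}\otimes \psi_{\theta_0})U$; it then applies $U$ on $\bS$, swaps the rotated qubit subregister with the $\bQ$ part of $\rho_{tar}$ (which carries $\ket{+_\theta}$), applies $U^\dagger$, discards the simulated client ancillas including $\theta_0$, and writes $\fpass$ into $\bflag$. The resulting joint state on $(\bS,\bQ,\bD,\bflag)$ lies $\fpoly(\delta)$-close to $U^\dagger(\ket{+_\theta}\bra{+_\theta}\otimes \psi_{\theta_0})U$ paired with the $\bD$-register of $\rho_{tar}$; by the pairwise indistinguishability of the $\psi_\theta$'s, this is $\fpoly(\delta)$-indistinguishable from the real passing output, yielding $(\delta,\fpoly(\delta))$-soundness. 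Efficiency of the simulator follows from efficiency of $\fAdv$ and of $U$. The step I expect to be the main obstacle is the bookkeeping around the isometry: $U$ acts on an auxiliary workspace inside $\bS$, so I must precisely identify the qubit subregister produced by $U$ that carries $\ket{+_{\theta_0}}$, check that the swap-under-isometry acts as intended on the $\psi_{\theta_0}$ factor, and keep the dependence of $U$ on the adversary explicit throughout the indistinguishability chain.
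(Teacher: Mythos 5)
Your proposal follows exactly the paper's chain of reasoning: Theorem \ref{thm:b2} gives the anticommutation bound, Theorem \ref{thm:antiiso} extracts the efficient isometry, Lemma \ref{lem:a.7} pins each $\rho_\theta$ to $\ket{+_\theta}\bra{+_\theta}\otimes\psi_\theta$, and Theorem \ref{thm:b1} with Lemma \ref{lem:a.9} gives pairwise indistinguishability of the $\psi_\theta$; your explicit swap-under-isometry simulator is just an unpacking of the paper's final step, which establishes rigidity-based soundness and then invokes the general implication to simulation-based soundness. The approach is correct and essentially identical to the paper's.
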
%, where the test-mode is the whole protocol, 
The optimality of winning probability $\cos^2(\pi/8)$ is proved in \cite{BGKPV23} and does not need translation.
\begin{proof}[Proof of Lemma \ref{lem:b3}]
	First apply Theorem \ref{thm:b2} we know if the server passes the protocol with probability $\geq 1-\delta$ and wins with probability $\geq \cos^2(\pi/8)-\delta$, \eqref{eq:b2e} holds. Apply Theorem \ref{thm:antiiso} we know $X_0,X_2$ are efficiently isometric to $\ket{+_0}\bra{+_0}-\ket{+_4}\bra{+_4}$ and $\ket{+_2}\bra{+_2}-\ket{+_6}\bra{+_6}$. Use the condition that the server wins with probability $\geq \cos^2(\pi/8)-\delta$ again, by Lemma \ref{lem:a.7} we know $\rho_{\theta}\approx_{\fpoly(\delta)}\ket{+_\theta}\bra{+_\theta}\otimes\psi_\theta$ for each $\theta\in \{1,3,5,7\}$. Finally by Lemma \ref{lem:a.9} $\psi_{\theta}$ are approximately indistinguishable to each other, which proves the rigidity-based soundness and implies the simulation-based soundness.
\end{proof}
Then this preRSPV-with-score could be amplified to an RSPV for BB84:
\begin{thm}
Under the same assumptions as Protocol \ref{prtl:14}, there exists an RSPV for BB84 states.
\end{thm}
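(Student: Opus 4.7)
The plan is to chain the amplification machinery already developed in this paper. Lemma \ref{lem:b3} supplies a preRSPV-with-score $(\pi_{\ttest},\pi_{\tcomp})$ under Set-up \ref{setup:4}, with optimal winning probability $\OPT=\cos^2(\pi/8)$ and $(\delta,\fpoly(\delta))$-soundness, where the honest comp-mode output on the server side is $\ket{+_{4\theta_1+2\theta_2+1}}$ with $(\theta_1,\theta_2)\in\{0,1\}^2$ held by the client. This is the starting point.

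Next, I would invoke Protocol \ref{prtl:3r} to convert this preRSPV-with-score into an ordinary preRSPV under Set-up \ref{setup:4rr}, eliminating the $\bscore$ register. The only point requiring care is parameter compatibility: Protocol \ref{prtl:3r} requires $\lambda<\tfrac{1}{6}\delta_0(\epsilon-\epsilon_0)$, and here both the optimality-gap $\lambda$ and the soundness error $\epsilon_0$ coming from Lemma \ref{lem:b3} are of the form $\fpoly(\delta_0)$. Since these polynomial bounds can be made arbitrarily small relative to $\delta_0$ by shrinking $\delta_0$, the constraint is satisfiable by choosing $\delta_0$ small enough in terms of the desired final approximation error $\epsilon$. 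Then Protocol \ref{prtl:1} further amplifies the result into a full RSPV with target state $\rho_{tar}=\tfrac{1}{4}\sum_{\theta\in\{1,3,5,7\}}\ket{\theta}\bra{\theta}\otimes\ket{+_\theta}\bra{+_\theta}$.

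The last step is to reconcile the prepared state family with the BB84 family. The four states $\{\ket{+_1},\ket{+_3},\ket{+_5},\ket{+_7}\}$ decompose into two mutually unbiased orthonormal bases $\{\ket{+_1},\ket{+_5}\}$ and $\{\ket{+_3},\ket{+_7}\}$, so they already form a BB84-type family (a pair of MUBs on a qubit). If a specific canonical BB84 variant is required, the client publicly announces a relabeling of $\bD$ and instructs the server to apply a fixed Clifford rotation that sends the two MUBs to the desired pair; since this is a deterministic public post-processing, it is absorbed into the simulator and preserves the simulation-based soundness established by the amplification chain.

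The main obstacle I anticipate is the bookkeeping of parameters across the three chained amplifications --- concretely, verifying that the explicit polynomial in $(\delta,\fpoly(\delta))$-soundness from Lemma \ref{lem:b3}, combined with the quantitative constraints in Protocols \ref{prtl:3r} and \ref{prtl:1}, closes to give any desired inverse-polynomial $\epsilon$ with only $\fpoly(\kappa)$ rounds. Beyond that, the argument is a direct composition of already-established machinery.
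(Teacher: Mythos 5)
Your proposal is correct and follows essentially the same route as the paper: the paper's proof is exactly "apply the amplification of Section \ref{sec:3.5.2} (Protocol \ref{prtl:3r} followed by Protocol \ref{prtl:1}) to the preRSPV-with-score of Lemma \ref{lem:b3}," with the identification of $\{\ket{+_1},\ket{+_3},\ket{+_5},\ket{+_7}\}$ with BB84 states via a fixed rotation left implicit. Your additional attention to the parameter constraint $\lambda<\tfrac{1}{6}\delta_0(\epsilon-\epsilon_0)$ and to the final basis relabeling only makes explicit what the paper glosses over.
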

\begin{proof}%[Proof of Theorem \ref{thm:6.1}]
	Apply the amplification described in Section \ref{sec:3.5.2} to the preRSPV-with-score described above completes the proof.
\end{proof}
\subsection{Detailed Description of the Information-theoretic Core for Test of a Qubit}\label{sec:a.3}
In this section we give the missing details when we describe the test of a qubit in Protocol \ref{prtl:14}. Here we describe the information-theoretic core, which will be used in later sections. Note that although \cite{BGKPV23} proves stronger results than this information-theoretic core (in more detail, they only need computational basis-blindness instead of statistical basis-blindness), an explicit discussion of the information-theoretic core makes our work and the analysis later more self-contained and accessible.% this also makes the  %what's missing in Protocol \ref{prtl:14} is given in Definition \ref{defn:a1}.\par
%Below we review the 
\begin{defn}[Repeat of Definition 3.1 in \cite{GVRSP}]\label{defn:a1}
%A $2\mapsto 1$ quantum random access code (QRAC) is specified by 
Consider four positive semi-definite operators $(\phi_\theta)_{\theta\in \{1,3,5,7\}}$ and two single-qubit observables $X_0$ and $X_2$. Recall that for single-qubit observables the eigenvalues are $\pm 1$; for each $i\in \{0,2\}$, use $X^{0}_i$ to denote the projection onto the eigenvector of $X_i$ with eigenvalue $+1$, and use $X^{1}_i$ to denote the projection onto the space of eigenvector of $X_i$ with eigenvalue $-1$. Thus $X_i=X^0_i-X^1_i$.\par%  $\tr(X_{i}^{u_i(\theta)} \phi_\theta)$ is the probability that projecting $\phi_{\theta}$ onto the eigenvectors of $X_i$ and falls into the space of $X_{i}^{u_i(\varphi)}$.\par
 For $\theta\in \{1,3,5,7\}$ let $u_0(\theta),u_2(\theta)\in\{0,1\}$ be functions as follows: \begin{itemize}\item $u_0(\theta) = 0$ if and only if $\theta\in\{1,7\}$;\item $u_2(\theta) = 0$ if and only if $\theta\in\{1,3\}$.\end{itemize} The winning probability is% of the QRAC is defined as  
\begin{equation}\label{eq:67} \frac{1}{4} \sum_{\theta\in\{1,3,5,7\}} \frac{1}{2} \sum_{i\in\{0,2\}}\tr\big(X_{i}^{u_i(\theta)} \phi_\theta\big) \;.\end{equation}
\end{defn}
% \begin{nota}[Repeat of Notation under Definition 3.1 in \cite{GVRSP}]
%     $\textsc{opt}_Q = \frac{1}{2} + \frac{1}{2\sqrt{2}}$.
% \end{nota}
We give some explanation of the definition. The special choice of indexing is for matching the main protocol in \cite{GVRSP}. $\theta\in \{1,3,5,7\}$ encodes two classical bits, which are $u_0(\theta)$ and $u_2(\theta)$. The correspondence is as follows:
\begin{itemize}
    \item $\theta=1$: $u_0(\theta)=0,u_2(\theta)=0$;
    \item $\theta=3$: $u_0(\theta)=1,u_2(\theta)=0$;
    \item $\theta=5$: $u_0(\theta)=1, u_2(\theta)=1$;
    \item $\theta=7$: $u_0(\theta)=0, u_2(\theta)=1$.    
\end{itemize}
The following states and observables achieve the optimal winning probability $\cos^2(\pi/8)$. From this example we could also get an intuitive interpretation of the test and the indices.
\begin{fact}\label{fact:100}
    Recall
    $$|+_\theta\rangle=\frac{1}{\sqrt{2}}(|0\rangle+e^{\mi\theta}|1\rangle),\theta\in\{0,1,2\cdots 7\}$$
    Then under the following $(\phi_\theta)_{\theta\in \{1,3,5,7\}}$ and $X_0,X_2$, the winning probability is $\cos^2(\pi/8)$:
    $$\phi_\theta=|+_\theta\rangle\langle +_\theta|,\quad \forall \theta\in \{1,3,5,7\}$$
    $$X_0^0:=\text{ projection onto }|+_0\rangle$$
    $$X_0^1:=\text{ projection onto }|+_4\rangle$$
    $$X_2^0:=\text{ projection onto }|+_2\rangle$$
    $$X_2^1:=\text{ projection onto }|+_6\rangle$$
    Recall for all $i\in \{0,2\}$, $X_i=X^0_i-X^1_i$.
\end{fact}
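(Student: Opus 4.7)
The plan is to prove Fact \ref{fact:100} by direct calculation. First, I would observe the following pattern in the definitions of $u_0$ and $u_2$: for every $\theta\in\{1,3,5,7\}$, the indices $(u_0(\theta), u_2(\theta))$ are designed so that the two projectors $X_0^{u_0(\theta)}$ and $X_2^{u_2(\theta)}$ are precisely the projectors onto the two states among $\{|+_0\rangle,|+_2\rangle,|+_4\rangle,|+_6\rangle\}$ that are \emph{angularly closest} to $|+_\theta\rangle$ on the equator of the Bloch sphere (namely $|+_{\theta-1}\rangle$ and $|+_{\theta+1}\rangle$, indices mod $8$). I would verify this by enumerating the four cases: for $\theta=1$ the honest projectors are $|+_0\rangle\langle+_0|$ and $|+_2\rangle\langle+_2|$; for $\theta=3$ they are $|+_4\rangle\langle+_4|$ and $|+_2\rangle\langle+_2|$; for $\theta=5$ they are $|+_4\rangle\langle+_4|$ and $|+_6\rangle\langle+_6|$; and for $\theta=7$ they are $|+_0\rangle\langle+_0|$ and $|+_6\rangle\langle+_6|$.

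Next I would compute the fundamental overlap
\[
|\langle +_\alpha | +_\beta\rangle|^2 \;=\; \Bigl|\tfrac{1}{2}\bigl(1+e^{\mi(\beta-\alpha)\pi/4}\bigr)\Bigr|^2 \;=\; \cos^2\!\Bigl(\tfrac{(\beta-\alpha)\pi}{8}\Bigr),
\]
using the standard identity $|1+e^{\mi\varphi}|^2 = 2(1+\cos\varphi) = 4\cos^2(\varphi/2)$. In particular, whenever $|\beta-\alpha|\equiv 1 \pmod 8$ this overlap equals $\cos^2(\pi/8)$.

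Combining these two observations gives $\tr(X_i^{u_i(\theta)}\phi_\theta) = \cos^2(\pi/8)$ for every one of the eight pairs $(\theta,i)\in\{1,3,5,7\}\times\{0,2\}$, since in each pair the honest projector is onto a state at $\theta$-distance one from $|+_\theta\rangle$. Substituting into \eqref{eq:67}, the winning probability becomes
\[
\frac{1}{4}\sum_{\theta\in\{1,3,5,7\}}\frac{1}{2}\sum_{i\in\{0,2\}}\cos^2(\pi/8) \;=\; \cos^2(\pi/8),
\]
since the averaging is uniform over $4\cdot 2 = 8$ identical contributions. The proof has no real obstacle; it is a bookkeeping calculation, and the only step that requires attention is the case enumeration showing that the $u_i(\theta)$ values given in Definition \ref{defn:a1} indeed select the nearest-neighbour projector in each case, so that the single trigonometric identity above suffices to cover all eight terms.
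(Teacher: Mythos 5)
Your calculation is correct and is exactly the routine verification the paper omits (Fact \ref{fact:100} is stated without proof): the case enumeration of $u_0,u_2$ matches Definition \ref{defn:a1}, each of the eight terms $\tr(X_i^{u_i(\theta)}\phi_\theta)$ equals $|\langle +_{\theta\pm 1}|+_\theta\rangle|^2=\cos^2(\pi/8)$, and the uniform average gives $\cos^2(\pi/8)$. Note that you (correctly) read the phase as $e^{\mi\theta\pi/4}$, consistent with the paper's convention elsewhere, even though the display in Fact \ref{fact:100} writes $e^{\mi\theta}$ — that is evidently a typo in the statement, and your interpretation is the one under which the claim holds.
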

\paragraph{Comparison to \cite{GVRSP,qfactory}} First we change several changes on notations compared to \cite{GVRSP}; for example, compared to Definition 3.1 in \cite{GVRSP}, we make it explicit that $u_0$, $u_2$ are functions from $\{1,3,5,7\}$ to $\{0,1\}$. Then we note that in \cite{GVRSP} the self-testing property is based on the trace-1 condition on the initial states; but we feel that it's more convenient to formalize the self-testing property based on the basis-blindness property, which is similar to the approaches in \cite{qfactory} (which is called \emph{blind self-testing} in \cite{qfactory}).\par
\begin{lem}\label{lem:a.5}
	Let $(\phi_\theta)_{\theta\in\{1,3,5,7\}}$ be a tuple of positive semidefinite operators. For simplicity define $N=\tr(\frac{1}{4}\sum_{\theta\in \{1,3,5,7\}}\phi_{\theta})$. Suppose $(\phi_\theta/\tr(\frac{1}{4}\sum_{\theta\in \{1,3,5,7\}}\phi_{\theta}))_{\theta\in\{1,3,5,7\}}$ are information-theoretic basis-blind (that is, $\phi_1+\phi_5\approx_{\fneg(\kappa)N}\phi_3+\phi_7$.)  Consider Definition \ref{defn:a1} defined on this tuple of states. Then \eqref{eq:67} is no more than ${N}(\cos^2(\pi/8)+\fneg(\kappa))$. What's more, if \eqref{eq:67} is $\delta N$-close to ${N}(\cos^2(\pi/8))$, then 
	\[ \sum_{\theta\in\{1,3,5,7\}} \tr\big( \{X_0,X_{2}\}^2 \phi_\theta \big) \,=\,(O({\delta})+\fneg(\kappa))\cdot N\;.\]
\end{lem}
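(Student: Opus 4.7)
The plan is to compute the winning probability explicitly as a bilinear form in the observables and the states, and then bound it via Cauchy--Schwarz combined with basis-blindness. Throughout, approximation errors of size $\fneg(\kappa)N$ coming from basis-blindness are absorbed into the final $\fneg(\kappa)$ terms.

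First, substituting $X_i^{u_i(\theta)} = \tfrac{1}{2}(I + (-1)^{u_i(\theta)}X_i)$ into \eqref{eq:67} and reading off the signs $(-1)^{u_i(\theta)}$ from the case table following Definition~\ref{defn:a1}, a direct expansion yields
\[ \text{winning probability} \;=\; \tfrac{1}{2}N \;+\; \tfrac{1}{16}W, \qquad W := \tr\!\bigl((X_0+X_2)(\phi_1-\phi_5)\bigr) + \tr\!\bigl((X_2-X_0)(\phi_3-\phi_7)\bigr). \]
The $N/2$ is the classical random-guess baseline and $W$ captures the quantum advantage. For any self-adjoint $M$ and positive $\rho,\sigma$ of trace $N$, two applications of Cauchy--Schwarz (first $(a-b)^2\le 2(a^2+b^2)$, then $|\tr(M\rho)|^2 \le \tr(M^2\rho)\tr(\rho)$) give the key estimate $|\tr(M(\rho-\sigma))|^2 \le 2N\,\tr(M^2(\rho+\sigma))$. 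Applying this with $M = K_\pm := X_0\pm X_2$ and then using basis-blindness to substitute $\phi_1+\phi_5$ and $\phi_3+\phi_7$ each by $2\phi$ (where $\phi:=\tfrac{1}{4}\sum_\theta\phi_\theta$ has trace $N$), I obtain $W \le 2\sqrt{N\tr(K_+^2\phi)} + 2\sqrt{N\tr(K_-^2\phi)} + \fneg(\kappa)N$. Since $K_+^2+K_-^2 = 4I$ forces $\tr(K_+^2\phi)+\tr(K_-^2\phi)=4N$, the elementary inequality $\sqrt{a}+\sqrt{b}\le \sqrt{2(a+b)}$ yields $W \le 4\sqrt{2}\,N + \fneg(\kappa)N$, proving the winning-probability upper bound $N(\cos^2(\pi/8)+\fneg(\kappa))$.

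For the rigidity statement I track the slack in the chain above. Writing $K_\pm^2 = 2I\pm\{X_0,X_2\}$ and setting $s := \tr(\{X_0,X_2\}\phi)/(2N)\in[-1,1]$, the refined inequality becomes $W \le 2\sqrt{2}\,N\bigl(\sqrt{1+s}+\sqrt{1-s}\bigr) + \fneg(\kappa)N$. Because $\sqrt{1+s}+\sqrt{1-s}\le 2-s^2/4$ on $[-1,1]$, a winning-probability deficit of $\delta N$ forces $\tr(\{X_0,X_2\}\phi)^2 = O(\delta)\cdot N^2$; in other words, the first moment of the anticommutator against $\phi$ is $O(\sqrt{\delta})\,N$. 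The lemma, however, demands the stronger second-moment bound $\sum_\theta\tr(\{X_0,X_2\}^2\phi_\theta) = 4\tr(\{X_0,X_2\}^2\phi) = O(\delta)\,N$.

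To close this gap I plan to invoke Jordan's lemma applied to the pair of $\pm 1$-observables $(X_0,X_2)$: the ambient Hilbert space simultaneously decomposes into invariant blocks of dimension at most two, and on each such block $\{X_0,X_2\}$ acts as a scalar $2\cos(2\alpha_k)I_k$ determined by the local angle $\alpha_k$ between the two observables. Basis-blindness, being a relation between the global operators $\phi_1+\phi_5$ and $\phi_3+\phi_7$, restricts to basis-blindness on each block, so the Cauchy--Schwarz argument above applies verbatim within each block and delivers a block-wise deficit of order $\cos^2(2\alpha_k)\,\tr(P_k\phi)$. Summing over blocks recovers $\tr(\{X_0,X_2\}^2\phi) = O(\delta)\,N + \fneg(\kappa)N$, which is the claimed inequality. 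The main obstacle is precisely this last step---upgrading the global first-moment rigidity that falls out of the Cauchy--Schwarz chain to the second-moment rigidity on the anticommutator---for which the Jordan-block decomposition (or equivalently a Tsirelson-style sum-of-squares identity) appears to be the natural, and essentially necessary, tool.
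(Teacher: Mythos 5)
Your proposal is correct in substance but takes a genuinely different route from the paper. The paper does not argue analytically at all: it reinterprets the single-party test as a two-prover CHSH game (Alice holds the basis bit $u_2(\theta)$, and statistical basis-blindness supplies the local Alice-side operation that steers Bob's state between $\{\phi_1,\phi_5\}$ and $\{\phi_3,\phi_7\}$), and then imports the known CHSH self-testing bound, whose standard proof via the Tsirelson sum-of-squares identity delivers the second-moment anticommutator bound directly. You instead expand \eqref{eq:67} by hand into $N/2+W/16$ and run Cauchy--Schwarz against the basis-blindness condition; this is more self-contained but, as you correctly diagnose, the global version only yields first-moment rigidity $|\tr(\{X_0,X_2\}\phi)|=O(\sqrt{\delta})N$, which is strictly weaker than the claimed $\tr(\{X_0,X_2\}^2\phi)=O(\delta)N$. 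Your proposed fix via Jordan's lemma does close the gap, and for the record the three facts that make it go through are: (i) the Jordan blocks are jointly invariant under $X_0,X_2$, so the winning probability and $W$ decompose additively over blocks; (ii) the pinching onto the block decomposition contracts trace norm, so the total basis-blindness error summed over all blocks is still $\fneg(\kappa)N$; and (iii) on each block $\{X_0,X_2\}$ is the scalar $2\cos(2\alpha_k)$, so the blockwise first-moment deficit $\cos^2(2\alpha_k)\tr(P_k\phi)$ sums to exactly $\tfrac14\tr(\{X_0,X_2\}^2\phi)$, giving the $O(\delta)N$ second-moment bound. So your argument is complete once that step is written out; the paper's reduction buys brevity by outsourcing precisely this computation to the CHSH literature, while yours makes the mechanism explicit and avoids constructing the fictitious second prover.
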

%\begin{proof}[Proof of Lemma \ref{lem:a.5}] 
\paragraph{On the proof of Lemma \ref{lem:a.5}}One way to prove Lemma \ref{lem:a.5} is to reduce it to the self-testing property of the CHSH game in the non-local game setting; below we call the two prover Alice and Bob and the verifier is simply named as verifier. The self-testing property of the CHSH game could be found in many existing works \cite{NZ23}.\par
To do the reduction, imagine Bob is holding $\phi_{\theta}\in \{\phi_1,\phi_5\}$ and Alice is holding $u_2(\theta)$. (Then the Bob-side's state, after tracing out the Alice's system, is $\phi_1+\phi_5$). By the statistical basis-blindness, we know there exists a local operation on Alice's side such that after this operation Bob is holding $\phi_{\theta}\in \{\phi_3,\phi_7\}$ and Alice is holding $u_2(\theta)$. Then the verifier's question to Alice is whether or not to apply this local operation before the measurement, which will collapse Bob's system to either $\{\phi_1,\phi_5\}$ or $\{\phi_3,\phi_7\}$; the question to Bob is whether to measure the state using observable $X_0$ or $X_2$. Then the winning probability of the CHSH game is exactly described by \eqref{eq:67} and the self-testing property of the CHSH translates to Lemma \ref{lem:a.5}.

%This lemma is purely a linear algebra lemma and could be proved similar to the rigidity of CHSH tests. \cite{NZ23,MIPstar}\par% The proof is by applying suitable inequalities, similar to the proof of Proposition 4.5 in \cite{BGKPV23}.\par
	% Without loss of generality we assume $N=1$. Compare to Lemma \ref{lem:qrac}, what remains to do is to prove each of these states are (approximately) trace $1$.\par
	% By linear algebra facts we know that, if states in a two-dimensional space satisfy $\rho_{\theta}+\rho_{\theta+4}\approx_{\fneg(\kappa)}\rho_{\theta^\prime}+\rho_{\theta^\prime+4}$, $\tr(\rho_{\theta}+\rho_{\theta+4})\approx_{\fneg(\kappa)}1$ there are the following cases:
	% \begin{itemize}
	% 	\item 
	% \end{itemize}
%\end{proof}
%Note that in the lemma above we do not require the states to have approximately the same trace in the conditions (but the basis-blindness and the fact that the server could win the tests with close-to-optimal probability still imply it).
\section{Remote Operator Application with Verifiability}\label{sec:4}
\subsection{Overview}
In this section we introduce a new notion named \emph{remote operator application with verifiability} (ROAV), for certifying server's operations.
\paragraph{Definitions of ROAV} In Section \ref{sec:4.1} we formalize the notion of ROAV for a POVM $(\cE_1,\cE_2\cdots \cE_D)$. Recall that in Section \ref{sec:1.3.2} we have informally introduced the notion of ROAV, which is defined to be a tuple $(\rho_{\ttest},\pi_{\ttest},\pi_{\tcomp})$ that certifies server-side operations.\par
We will also formalize a variant of ROAV where the input register of $(\cE_1,\cE_2\cdots \cE_D)$ is bigger than the server-side of $\rho_{\ttest}$; so the remaining part will be left to the server to decide. In Section \ref{sec:4.3.2} this part will be used to hold the witness state in a Hamiltonian ground state testing protocol.
	%\item In Section \ref{sec:4.2}
\paragraph{Potential Applications of ROAV} In Section \ref{sec:4.3r} we discuss two potential applications of ROAV. In more detail:
	\begin{itemize}\item In Section \ref{sec:4.3} we show an approach for constructing RSPV from ROAV.\par
	Recall that we currently do not have a ``universal RSPV'', which is an RSPV for general state families. Our approach provides a potential way for constructing more complicated RSPV from simpler RSPV and ROAV that are possibly easier to construct, which potentially makes progress to the problem of constructing universal RSPV.\par
	In more detail, suppose we would like to construct an RSPV for the target state $\cE(\rho)$; but $\cE(\rho)$ might have a complicated form and it's not easy to construct RSPV for it directly. In this section we show that this problem could be decomposed into two pieces: the constructions of an ROAV for $\cE$ and a specific RSPV that is not related to $\cE$.% this is only an abstract approach and hasn't been instantiated yet.)
	\item In Section \ref{sec:4.3.2} we show an approach for constructing Hamiltonian ground state testing protocol using ROAV. As a review, existing Hamiltonian ground energy testing protocols like \cite{FHM,Grilo17} has a structure as follows:\par
Input: a local XZ-Hamiltonian $H=\sum_i\gamma_iH_i$. We would like to design a protocol do decide whether its ground state energy is $\leq a$ or $\geq b$.\par
The honest server gets a witness state $w$ that achieves the ground state energy.
\begin{enumerate}
	%\item Repeat (sequentially or in parallel) the following for polynomial number of times:\par
	\item Repeat (sequentially or in parallel) the following for many rounds: the client samples a random $H_i$ and uses some protocols to get the measurement results of operator $H_i$ on the server-side state.% The server is not able to know which operator the client is measuring.
	\item The client calculates the weighted average of the measurement results and compares it to $a,b$ to decide.
\end{enumerate}
%The first step seems to have a form of ROAV, in the sense that the protocols aim at certifying that the server has measured an operator obliviously. Although the witness state is not held by the client, this is still within reach of our definition in Section \ref{sec:4.2.2}. A more subtle difference here is that an ROAV protocol is defined for a fixed family of operators, while in the protocol described above the target operators depends on the input Hamiltonian $H$. We expect that it's typically harder to construct ROAV compared to other primitives like RSPV, thus we want to find a way to reduce this task to simpler primitives.\par
%Let's recall the relation of local XZ-Hamiltonian and QMA in \ref{defn:2.2}. 
Typically the difficulty in the template above is how to make the server measure the $H_i$ honestly. % If the server does it honestly the protocol above achieves completeness $c=$ and soundness $s=$.\par
 In this subsection we give a new approach for this problem. We show a protocol that reduces this problem to the following two protocols:
\begin{enumerate}
	\item An ROAV for tensor products of Bell basis measurements.
	\item A specific RSPV for state families that depend on the input Hamiltonian and the test state of the ROAV.% (but still as simple as products of simple states).
\end{enumerate}
The protocol takes ideas from the Grilo's Hamiltonian testing protocol \cite{Grilo17} in the non-local game setting. Roughly speaking, in Grilo's protocol the verifier randomly executes one of two modes of protocols, the $\foprtortest$ mode and $\fenergytest$ mode, where:%designs a pair of protocol $(\pi_{\ttest},\pi_{\tcomp})$ such that:
\begin{itemize}
	\item In $\foprtortest$ mode the vrifier certifies that the provers do the measurements required. Especially, Prover 2 should do Bell basis measurements to teleport some states to Prover 1.
	\item In $\fenergytest$ mode the witness state is teleported to Prover 1 and Prover 1 should do the energy testing on the state.
\end{itemize}
By assuming the existence of ROAV for Bell basis measurements and the RSPV for the states needed (the $\rho_{test}$ in ROAV and a state $\rho_{\tcomp}$ for energy testing), we could take this protocol to the single-server setting. Note that we need a variant of ROAV where the server could prepare part of the input states of the Bell measurements for holding the witness state.\par
 In more detail, in the $\foprtortest$ mode the client executes $\pi_{\ttest}$ to certify that the server has perform Bell basis measurements on registers $\bQ^{(\tin)}$ and $\bw$, where $\bQ^{(\tin)}$ is used to hold the server side of both $\rho_{\ttest}$ and $\ket{\Phi}$; the state of $\bw$ is decided by the server. If the server passes in the $\foprtortest$ mode, by the soundness of ROAV, in the $\fenergytest$ mode a Bell basis measurement is applied on the server side of $\ket{\Phi}$ and $\bw$, which teleports the state in $\bw$ to the register $\bP$ .\par
 However, we want a protocol where the client is completely classical so it could not do quantum testing directly on the state in register $\bP$. However, we could imagine that $\bP$ has already been measured beforehand and consider the post-measurement state on $\bQ^{(\tin)}$; we denote this state as $\rho_{\tcomp}$. Assuming an RSPV for $\rho_{\tcomp}$, the client could test the energy in the $\fenergytest$ mode.
%$$\overbrace{\bP\qquad\bQ}_{\text{entanglement}}\quad \bw$$
% This is defined in Section \ref{sec:4.1.2}.
\end{itemize}
Finally we note that we still need to construct ROAV concretely to make these reductions work; this is left to future works.
%\end{itemize} %We will give the definition, and show how to use this notion to construct other RSPV protocols and the energy test protocol.
% \subsection{Notation Preparation}
% Recall that RSPV is defined to be a protocol that satisfies certain properties. For ROAV, what we need to consider is a protocol with an undetermined initial state, defined below.
% \begin{defn}[Protocol with an undetermined initial state]\label{defn:4.1} A protocol $\pi$ with an undetermined initial state $\chi$, denoted by $\pi[\chi]$, is defined to be a protocol where before the first step a joint state $\chi$ is built between the client and the server. For simplicity, we additionally require the server-side of $\chi$ is of a fixed dimension.%, and the intermediate execution of $\pi$ does not depend on the client-side of $\rho$ (but the client could use the information in $\rho$ to decide acceptance or rejection).
% \end{defn}
% In other words, substituting the variable $\chi$ by a concrete state leads to a concrete protocol; the choices of notation $\pi[\chi]$ is similar to the notation of functions ($f(x)$).\par
% Below we use $\pi$ to denote the protocol execution steps of $\pi[\chi]$ excluding the Set-up of $\chi$.
%In other words, $\pi[\rho]$ could be understood as a function that maps $\Xi$ to $\pi[\rho]$. As an example, $\Xi$ could be chosen to be ``the client sends a state to the server and stores its description'' for a state sampled from a state family.
\subsection{Definitions of ROAV}\label{sec:4.1}
%Below we define the notion of ROAV for a POVM described by a tuple of superoperators $(\cE_1,\cE_2\cdots \cE_D)$. 
Let's first formalize the set-up of an ROAV protocol.
\begin{setup}[Set-up for an ROAV]\label{setup:5}An ROAV for a POVM $(\cE_1,\cE_2\cdots \cE_D)$ is in the form of $(\rho_{\ttest},\pi_{\ttest},\pi_{\tcomp})$ under the following set-up:\par
	The protocols $\pi_{\ttest},\pi_{\tcomp}$ take the following parameters: approximation error parameter $1^{1/\epsilon}$, security parameter $1^\kappa$.\par
	Input registers: the client-side classical register $\bD^{(\tin)}$, the server-side quantum register $\bQ^{(\tin)}$.\par
	Output registers: the client-side classical register $\bD^{(\tout)}$ with value in $[D]$, the server-side quantum register $\bQ^{(\tout)}$, and the client-side classical register $\bflag$ with value in $\{\fpass,\ffail\}$.\par
	%For modeling the soundness:\par
	The environment holds a quantum register $\bP$.\par
	$\rho_{\ttest}\in \tD(\cH_{\bD^{(\tin)}}\otimes\cH_{\bQ^{(\tin)}})$. $(\cE_1,\cE_2\cdots \cE_D)$ maps states in $\bQ^{(\tin)}$ to $\bQ^{(\tout)}$. $\dim(\bP)=\dim(\bQ^{(\tin)})$.\par
	Define the state \begin{equation}\label{eq:16}\ket{\Phi}=\frac{1}{\sqrt{D}}\sum_{i\in [D]}\underbrace{\ket{i}}_{\bP}\otimes \underbrace{\ket{i}}_{\bQ^{(\tin)}}.\end{equation}
For modeling the initial states in the malicious setting, assume the server-side registers (excluding $\bQ^{(\tin)},\bQ^{(\tout)}$) are denoted by register $\bS$, and assume the environment (excluding $\bP$) is denoted by register $\bbE$.
\end{setup}
 The completeness and soundness are defined as follows.
\begin{defn}[Completeness of ROAV]
	We say $(\rho_{\ttest},\pi_{\ttest},\pi_{\tcomp})$ under Set-up \ref{setup:5} is complete if when the server is honest:
	\begin{itemize}\item In $\pi_{\ttest}$ when the initial state is $\rho_{\ttest}$ the passing probability is negligibly close to $1$.
		\item In $\pi_{\tcomp}$ when the initial state is $\ket{\Phi}$ the output state of the protocol is negligibly close to
		$$\underbrace{\ket{\fpass}\bra{\fpass}}_{\bflag}\otimes(\underbrace{\bbI}_{\text{on }\bP}\otimes \cE_{tar})(\Phi)$$
	where
	$$\cE_{tar}(\underbrace{\cdot}_{\bQ^{(\tin)}})=\sum_{i\in [D]}\underbrace{\ket{i}\bra{i}}_{\bD^{(\tout)}}\otimes \underbrace{\cE_i(\cdot)}_{\bQ^{(\tout)}}$$
	\end{itemize}
\end{defn}
%The soundness is formulated by a simulation-based definition.
%\subsubsection{Simpler case: the server-side of $\cE$ is contained in the server-side register of $\chi$}
\begin{defn}[Soundness of ROAV]\label{defn:4.4}
	We say $(\rho_{\ttest},\pi_{\ttest},\pi_{\tcomp})$ under Set-up \ref{setup:5} is $(\delta,\epsilon)$-sound if:\par  %the following two conditions hold:
	%\begin{itemize}
	%	\item (Mode indistinguishability) Define \begin{equation}\label{eq:16}\ket{\Phi}=\frac{1}{\sqrt{D}}\sum_{i\in [D]}\underbrace{\ket{i}}_{\text{client}}\otimes \underbrace{\ket{i}}_{\text{server}}\end{equation} For any BQP adversary $\fAdv$, for any input state $\rho_{in}\in D(\cS\otimes \cT)$ prepared by the adversary, 
	%	$$\Pi_{\fpass}(\pi[\rho_{test}]^{\fAdv}(\rho_{in}))\approx^{ind}\Pi_{\fpass}(\pi[\Phi]^{\fAdv}(\rho_{in}))$$
	%	where the distinguisher operates on the server-side register $\cS$.
	%\item (Operator application verifiability) 
	For any efficient quantum adversary $\fAdv$, there exists an efficient quantum operation $\fSim$ such that for any state $\rho_{0}\in \tD(\cH_{\bS}\otimes \cH_{\bbE})$:\par
	If
	 $$\tr(\Pi_{\fpass}(\pi_{\ttest}^\fAdv(\rho_{\ttest}\otimes\rho_{0})))\leq 1-\delta$$
	\item %Define $$\ket{\Phi}=\frac{1}{\sqrt{D}}\sum_{i\in [D]}\underbrace{\ket{i}}_{\text{client}}\otimes \underbrace{\ket{i}}_{\text{server}}$$ 
	 then \begin{equation}\label{eq:13}\Pi_{\fpass}(\pi_{\tcomp}^{\fAdv}(\Phi\otimes\rho_{0}))\approx^{ind}_{\epsilon}\underbrace{\Pi_{\fpass}}_{\text{on }\bflag}(\underbrace{\fSim}_{\text{on }\bS,\bQ^{(\tin)},\bQ^{(\tout)},\bflag}(\underbrace{\cE_{tar}(\Phi)}_{\bD^{(\tout)},\bQ^{(\tout)},\bP}\otimes \underbrace{\rho_{0}}_{\bS,\bbE}))\end{equation}
%\end{itemize}
\end{defn}
% and the undetermined part could be used %one additional subtlety is whether the server-side registers of $\cE$ is contained in or could be bigger than the server-side of $\chi$. We will first formulate the simpler case where the server-side of $\cE$ is contained in $\chi$; then we formulate the more general case in Section \ref{sec:4.2}.
\subsubsection{Variant: ROAV with extra server-side states (besides the EPR parts)}\label{sec:4.1.2}
Note that in the definitions above we assume the input register of $(\cE_1,\cE_2\cdots \cE_D)$ is the same as the server-side of $\rho_{\ttest}$ (which are both $\bQ^{(\tin)}$). Below we formalize a variant where the input register of $(\cE_1,\cE_2\cdots \cE_D)$ is bigger than the server-side of $\rho_{\ttest}$.%, so the remaining part will be left to the server to determine. Below we formalize such a variant.%; in Section \ref{sec:4.3.2} this part will be used to hold the witness of Hamiltonian ground state testing.
\begin{setup}[Set-up for an ROAV with extra server-side states]\label{setup:9}
	Compare to Set-up \ref{setup:5}, the server holds an additional quantum register $\bw$. %Then $\rho_{\ttest}\in \tD(\cH_{\bD^{(\tin)}}\otimes\cH_{\bQ^{(\tin)}})$. $(\cE_1,\cE_2\cdots \cE_D)$ maps states in $(\bQ^{(\tin)},\bw)$ to $\bQ^{(\tout)}$. $\dim(\bP)=\dim(\bQ^{(\tin)})$. Then for the initial states in the honest setting we also need to consider the environment register $\bbE$. %_____________
	The other parts of the set-up are the same as Set-up \ref{setup:5}.
\end{setup}
\begin{defn}
We say $(\rho_{\ttest},\pi_{\ttest},\pi_{\tcomp})$ under Set-up \ref{setup:9} is complete if for any $\rho_{0}\in \tD(\cH_{\bw}\otimes\cH_{\bbE})$:
\begin{itemize}
\item In $\pi_{\ttest}$ when the initial state is $\rho_{\ttest}\otimes \rho_0$ the passing probability is negligibly close to $1$.
\item In $\pi_{\tcomp}$ when the initial state is $\Phi\otimes \rho_0$ the output state of the protocol is negligibly close to
 $$\underbrace{\ket{\fpass}\bra{\fpass}}_{\bflag}\otimes(\underbrace{\bbI}_{\bP}\otimes\cE_{tar}\otimes \underbrace{\bbI}_{\bbE})(\Phi\otimes\rho_{0})$$
 where
 \begin{equation}\label{eq:23r}\cE_{tar}(\underbrace{\cdot}_{\bQ^{(\tin)},\bw})=\sum_{i\in [D]}\underbrace{\ket{i}\bra{i}}_{\bD^{(\tout)}}\otimes \underbrace{\cE_i(\cdot)}_{\bQ^{(\tout)}}\end{equation}
\end{itemize}
\end{defn}
%Furthermore, we consider the case where the server might hold a state unknown to the client, and the ROAV is suppose to be applied on this state. Such a scenario is common in quantum cryptographic protocols: for example, for verifying QMA, the server will hold a witness state and client needs to instruct (or force) the server to perform specific measurements on this state. To accomodate this setting we further generalize the ROAV notion to include an additional witness state.\par
% The basic setting and completeness follow Definition \ref{}, \ref{}. The soundness is adapted from \ref{} by adding a witness state in \eqref{}:
% \begin{defn}[ROAV with witness]
% 	TBA
% \end{defn}
% \begin{defn}[Completeness for ROAV with witness]
% 	TBA
% \end{defn}
%Here we further generalize the notion to the setting where $(\cE_1,\cE_2\cdots \cE_D)$ might operate on a server-side register that is possibly bigger than the server-side of $\rho_{test}$. 
The soundness definition contains an additional simulator for simulating the states on $\bw$ compared to Definition \ref{defn:4.4}.%, with differences on \eqref{eq:13} and an additional simulator.
\begin{defn}
	We say $(\rho_{\ttest},\pi_{\ttest},\pi_{\tcomp})$ under Set-up \ref{setup:9} is $(\delta,\epsilon)$-sound if:\par
	For any efficient quantum adversary $\fAdv$, there exist efficient quantum operations $\fSim=(\fSim_0,\fSim_1)$ such that for any state $\rho_{0}\in \tD(\cH_{\bS}\otimes \cH_{\bbE})$:\par
	If
	 $$\tr(\Pi_{\fpass}(\pi_{\ttest}^\fAdv(\rho_{\ttest}\otimes\rho_{0})))\leq 1-\delta$$
	\item %Define $$\ket{\Phi}=\frac{1}{\sqrt{D}}\sum_{i\in [D]}\underbrace{\ket{i}}_{\text{client}}\otimes \underbrace{\ket{i}}_{\text{server}}$$ 
	 then \begin{equation}\Pi_{\fpass}(\pi_{\tcomp}^{\fAdv}(\Phi\otimes\rho_{0}))\approx^{ind}_{\epsilon}\underbrace{\Pi_{\fpass}}_{\text{on }\bflag}(\underbrace{\fSim_1}_{\text{on }\bS,\bQ^{(\tin)},\bQ^{(\tout)},\bflag}((\underbrace{\bbI}_{\bP}\otimes\underbrace{\cE_{tar}}_{\text{as }\eqref{eq:23r}}\otimes \underbrace{\bbI}_{\bbE})(\Phi\otimes \underbrace{\fSim_0}_{\text{on }\bw,\bS}(\underbrace{\rho_{0}}_{\bS,\bbE}))))\end{equation}
\end{defn}
\subsection{Potential Applications of ROAV}\label{sec:4.3r}\subsubsection{Building RSPV from ROAV}\label{sec:4.3}
In this subsection we give a protocol for building RSPV protocols from ROAV and RSPV that are possibly more basic. The intuition is discussed in the beginning of Section \ref{sec:4} where $\rho$ there corresponds to $\rho_{\tcomp}$ below.%This suggests that ROAV is potentially useful for building more advanced RSPV that are not easy to construct directly.\par
% As a preparation we formulate a condition on the target state (Equation \eqref{eq:tar}).
% \begin{defn}
% 	Consider a target state (formulated in \eqref{eq:tar}):
% 	\begin{equation}\label{eq:tars}\rho_{tar}=\sum_{i\in [D]}p_i\underbrace{\ket{i}\bra{i}}_{\text{client}}\otimes \underbrace{\ket{\varphi_i}\bra{\varphi_i}}_{\text{server}}\end{equation}
% 	If $\{\ket{\varphi_i}\}_{i\in [D]}$ is an orthogonal normal basis, we say \eqref{eq:tars} is a target state with respect to an orthogonal normal basis.
% \end{defn}
\begin{mdframed}[backgroundcolor=black!10]
	Suppose $(\rho_{\ttest},\pi_{\ttest},\pi_{\tcomp})$ is an ROAV under Set-up \ref{setup:5} for target operator $\cE$. Suppose the client also holds a classical register $\bD^{(\tcomp)}$ and consider a state $\rho_{\tcomp}\in \tD(\cH_{\bD^{(\tcomp)}}\otimes \cH_{\bQ^{(\tin)}})$. Suppose $\pi_0(\tmode),\tmode\in \{\ttest,\tcomp\}$ is an RSPV (where the client could choose the states, as described in Set-up \ref{setup:3}, Section \ref{sec:3.3.1}) for the following honest behavior:\begin{itemize}\item If $\tmode=\ttest$, prepare the state $\rho_{\ttest}$ on registers $\bD^{(\tin)},\bQ^{(\tin)}$. \item If $\tmode=\tcomp$, prepare the state $\rho_{\tcomp}$ on registers $\bD^{(\tcomp)},\bQ^{(\tin)}$.\end{itemize}
	Suppose $\pi_0$ is $\epsilon_0$-sound; suppose $(\rho_{\ttest},\pi_{\ttest},\pi_{\tcomp})$ is $(\delta,\epsilon_1)$-sound.
\begin{prtl}\label{prtl:2}$(\pi_{\ttest}^\prime,\pi_{\tcomp}^\prime)$ below achieves a preRSPV for target state $\cE(\rho_{\tcomp})$.\par
	Output registers: client-side classical registers $\bD^{(\tcomp)}$, $\bD^{(\tout)}$, $\bflag$; server-side quantum register $\bQ^{(\tout)}$.\par 
	%  $p$ is a constant in $(0,1)$. $\rho_0$ is a target state as formulated in \eqref{eq:tars}. 
	% 	where the $\roundtype$ register is on the client side, and the server-side of $\rho_{test}$ and $\rho_0$ are of the same dimension.
	%\begin{enumerate}
	%	\item Repeat $\kappa$ times:
	$\pi_{\ttest}^\prime$ is defined as follows:
		\begin{enumerate}
			\item The client executes $\pi_0(\ttest)$ with the server. Store the outputs in $\bflag^{(1)},\bD^{(\tin)},\bQ^{(\tin)}$.% Reject if $\pi_0$ rejects.
		\item The client executes $\pi_{\ttest}$ with the server. Store the outputs in $\bflag^{(2)}$.
		\item The client sets $\bflag$ to be $\ffail$ if any one of $\bflag^{(1)}, \bflag^{(2)}$ is $\ffail$; otherwise it sets $\bflag$ to be $\fpass$.
		% Depending on the value of $\roundtype$:%Run protocol $\pi$ on the output state of $\pi_0$. Reject if either $\pi$ rejects. Then:
		% \begin{itemize}
		% 	\item If $\roundtype=\ftest$, run $\pi_{test}$.% and reject if it rejects.%output $\perp$.
		% 	\item If $\roundtype=\fcomp$, run $\pi_{comp}$ and keeps the output.%keeps the output state of $\pi$ as the overall output.% Terminate the protocol.
		% \end{itemize} 
	\end{enumerate}\par
	$\pi_{\tcomp}^\prime$ is defined as follows:
	\begin{enumerate}
		\item The client executes $\pi_0(\tcomp)$ with the server. Store the outputs in $\bflag^{(1)},\bD^{(\tcomp)},\bQ^{(\tin)}$.% Reject if $\pi_0$ rejects.
		\item The client executes $\pi_{\tcomp}$ with the server. Store the outputs in $\bflag^{(2)}, \bD^{(\tout)},\bQ^{(\tout)}$.
		\item The client sets $\bflag$ to be $\ffail$ if any one of $\bflag^{(1)}, \bflag^{(2)}$ is $\ffail$; otherwise it sets $\bflag$ to be $\fpass$.
	\end{enumerate}
	%\end{enumerate}
\end{prtl}
\end{mdframed}
The completeness is from the protocol description: it passes in the test mode and prepares $\cE(\rho_{\tcomp})$ in the comp mode. The efficiency is also trivial. Below we state the soundness.
\begin{thm}\label{thm:4.3}
	% $\pi[\Xi]$ is an ROAV for target operator $\cE$, $\pi^\prime$ is an RSPV for target state $\rho_{tar}$, 
	%Suppose $(\rho_{test},\pi_{test},\pi_{comp})$ is an ROAV for $\cE$ with soundness error $\delta$ and approximation error $\epsilon$. $\pi_0$ is an RSPV for \eqref{eq:26n} with soundness error $\delta$ and approximation error $\epsilon_0$. Then 
	Protocol \ref{prtl:2} is a preRSPV for target state $\cE(\rho_{\tcomp})$ that is $(\epsilon_0+\delta,\epsilon_0+\epsilon_1)$-sound.% soundness error $\delta^\prime=\delta+\epsilon_0$ and approximation error $\epsilon^\prime=4p+\epsilon+\epsilon_0$.
\end{thm}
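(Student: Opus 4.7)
Fix any efficient adversary $\fAdv$, decomposed as $(\fAdv_1,\fAdv_2)$ corresponding to step 1 ($\pi_0$) and step 2 (the ROAV stage) of either outer mode, and any initial $\rho_0\in \tD(\cH_{\bS}\otimes\cH_{\bbE})$. Assume the outer test mode $\pi_{\ttest}^\prime$ passes on $\rho_0$ with probability at least $1-(\epsilon_0+\delta)$; the goal is to exhibit an efficient $\fSim$ simulating the outer comp mode $\pi_{\tcomp}^\prime$ with error $\epsilon_0+\epsilon_1$. The strategy is to peel the protocol in two stages: first replace the $\pi_0$ invocation by its ideal functionality, then invoke ROAV soundness on the resulting configuration.

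\emph{Stage 1 (absorbing $\pi_0$).} Apply Definition~\ref{defn:choseninputs} to $\pi_0$ against $\fAdv_1$. It produces a \emph{single} pair $(\fSim_0,\fSim_1)$ of efficient simulators, independent of the chosen input $\tmode$, such that
\[
\pi_0(\tmode)^{\fAdv_1}(\rho_0)\;\approx^{\tind}_{\epsilon_0}\;\fSim_1\bigl(\text{RSPVIdeal}(\fSim_0(\rho_0);\tmode)\bigr)\qquad\text{for }\tmode\in\{\ttest,\tcomp\}.
\]
Write $\sigma_p:=\Pi_{b=0}\fSim_0(\rho_0)$ for the passing slice. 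Absorbing $\fSim_1$ into $\fAdv_2$'s first server operation produces an efficient modified ROAV adversary $\fAdv_2'$. Because post-composition with an efficient operation preserves indistinguishability, both outer modes are $\epsilon_0$-indistinguishable from the corresponding ROAV protocol run on $\rho_{\tmode}\otimes\sigma_p$ against $\fAdv_2'$. In particular, the test-passing hypothesis transfers: $\tr(\Pi_{\fpass}^{\bflag^{(2)}}\pi_{\ttest}^{\fAdv_2'}(\rho_{\ttest}\otimes\sigma_p))\geq \tr(\sigma_p)-\delta-\fneg(\kappa)$; combined with $\tr(\sigma_p)\geq 1-(\epsilon_0+\delta)-\fneg(\kappa)$, the conditional (normalised) passing probability is at least $1-\delta$, so the ROAV $(\delta,\epsilon_1)$-soundness fires.

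\emph{Stage 2 (invoking ROAV and handling the input mismatch).} ROAV soundness yields an efficient $\fSim^{\mathrm{ROAV}}$ with
\[
\Pi_{\fpass}\bigl(\pi_{\tcomp}^{\fAdv_2'}(\Phi\otimes\sigma_p)\bigr)\;\approx^{\tind}_{\epsilon_1}\;\Pi_{\fpass}\bigl(\fSim^{\mathrm{ROAV}}(\cE_{tar}(\Phi)\otimes\sigma_p)\bigr).
\]
The outer comp mode, however, feeds $\rho_{\tcomp}$ on $(\bD^{(\tcomp)},\bQ^{(\tin)})$ into $\pi_{\tcomp}$ rather than $\Phi$ on $(\bP,\bQ^{(\tin)})$. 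Because $\pi_{\tcomp}$ and $\fSim^{\mathrm{ROAV}}$ act only on $\bQ^{(\tin)}$ and related registers (never on $\bP$), I would upgrade the $\Phi$-input simulation to a $\rho_{\tcomp}$-input simulation by a Choi--Jamio\l kowski argument: any state on $\bD^{(\tcomp)}\otimes\bQ^{(\tin)}$ equals $(\mathcal T\otimes\bbI_{\bQ^{(\tin)}})(\Phi)$ for some CP map $\mathcal T:\bP\to\bD^{(\tcomp)}$, and $\mathcal T$ commutes past both $\pi_{\tcomp}^{\fAdv_2'}$ and $\fSim^{\mathrm{ROAV}}$ because it acts on a disjoint register. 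This yields
\[
\Pi_{\fpass}\bigl(\pi_{\tcomp}^{\fAdv_2'}(\rho_{\tcomp}\otimes\sigma_p)\bigr)\;\approx^{\tind}_{\epsilon_1}\;\Pi_{\fpass}\bigl(\fSim^{\mathrm{ROAV}}(\cE_{tar}(\rho_{\tcomp})\otimes\sigma_p)\bigr).
\]
Assembling the outer simulator from $\fSim_0$, $\fSim_1$, and $\fSim^{\mathrm{ROAV}}$ is then routine (run $\fSim_0$ to get the bit $b$; on $b=0$ hand $\cE(\rho_{\tcomp})\otimes\sigma_p$ to $\fSim^{\mathrm{ROAV}}$ and finally $\fSim_1$; on $b=1$ write $\ffail$ to $\bflag$). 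Triangle inequality across Stages~1 and~2 gives total simulation error $\epsilon_0+\epsilon_1$.

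\emph{Main obstacle.} The delicate step is the channel-level lift from the $\Phi$-input simulation to the $\rho_{\tcomp}$-input simulation in Stage~2. Statistically the Choi--Jamio\l kowski move is automatic from linearity on a disjoint register, but preserving the computational $\epsilon_1$ bound needs an efficient reduction, since the CP map $\mathcal T$ extracting $\rho_{\tcomp}$ from $\Phi$ need not itself be efficient. I expect this to go through because $\rho_{\tcomp}$ is efficiently preparable --- it is exactly what the honest run of $\pi_0(\tcomp)$ outputs --- so any efficient distinguisher against the $\rho_{\tcomp}$-input instance can be converted, by locally preparing $\rho_{\tcomp}$ on the distinguisher's side, into an efficient distinguisher against the $\Phi$-input instance, contradicting the ROAV soundness guarantee used above.
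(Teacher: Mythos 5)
Your proposal follows essentially the same route as the paper's proof: Stage 1 matches the paper's use of the $\pi_0$ soundness to absorb its simulator into the ROAV adversary and transfer the passing hypothesis, and Stage 2 matches the paper's invocation of the ROAV soundness followed by converting the $\Phi$-input instance to the $\rho_{\tcomp}$-input instance via an operation acting only on the untouched register $\bP$ (the paper's Fact \ref{fact:2} phrases your Choi--Jamio\l kowski map as a POVM on $\bP$ that the distinguisher applies before distinguishing). The efficiency concern you flag for that conversion map is legitimate, but it is equally implicit in the paper's argument, so it does not change the fact that the two proofs are the same in structure and substance.
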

%Thus to use this protocol we need to make $p$ small to keep the approximation error small, which leads to an RSPV protocol with large soundness error. But this could be solved by taking Protocol \ref{prtl:2} to the cut-and-choose amplification protocol in Section \ref{sec:3.2.2}.\par
%The following fact is useful for proving Theorem \ref{thm:4.3}.
Compare the ROAV soundness with what we want, a missing step is how to relate the $\Phi$ in the ROAV soundness with $\rho_{\tcomp}$. This step is by the following fact.
\begin{fact}\label{fact:2}
	%Suppose $\ket{\Phi}=\sum_{i\in [D]}\frac{1}{\sqrt{D}}\ket{i}\otimes\ket{i}$, $(\ket{\varphi_1},\ket{\varphi_2},\cdots \ket{\varphi_D})$ is an orthogonal normal basis, $U\in \bC^{D\times D}$ is defined as $U\ket{i}=\ket{\varphi_i}$, then
	%$$(U^\dagger\otimes I)\ket{\Phi}=(I\otimes U)\ket{\Phi}$$
	For $\Phi$ defined in Set-up \ref{setup:5}, there exists a POVM such that measuring $\bP$ and storing results in $\bD^{(\tcomp)}$ transforms $\Phi$ to $\rho_{\tcomp}$.
\end{fact}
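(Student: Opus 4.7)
The plan is to reduce Fact \ref{fact:2} to the standard Schr\"odinger--HJW theorem (equivalently, the Gisin--Hughston--Jozsa--Wootters result) on remote preparation of ensembles via measurements on one half of a purification. First, I would decompose $\rho_{\tcomp}$ in its cq form
\[
\rho_{\tcomp} \;=\; \sum_{c} p_c \,\underbrace{\ket{c}\bra{c}}_{\bD^{(\tcomp)}} \otimes \underbrace{\sigma_c}_{\bQ^{(\tin)}},
\]
so that ``producing $\rho_{\tcomp}$'' amounts to sampling a classical label $c$ with probability $p_c$ and conditionally preparing $\sigma_c$ on $\bQ^{(\tin)}$.

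Then I would verify the marginal condition: because $\ket{\Phi}$ is maximally entangled on $\bP\otimes \bQ^{(\tin)}$ with $\dim(\bP)=\dim(\bQ^{(\tin)})=D$, its reduced state on $\bQ^{(\tin)}$ is $\bbI/D$, and any local POVM on $\bP$ preserves this marginal. So the fact is meaningful precisely when $\sum_c p_c \sigma_c = \bbI/D$; I expect this compatibility to be implicit in how $\rho_{\tcomp}$ is chosen within the set-up of Section \ref{sec:4.3} (in particular, a random-RSPV-style target has maximally mixed marginal by design). Under that condition, the Schr\"odinger--HJW theorem delivers a POVM $\{E_c\}$ on $\bP$ such that measuring yields outcome $c$ with probability $p_c$ and leaves $\bQ^{(\tin)}$ in the state $\sigma_c$. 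Concretely, one may first purify each $\sigma_c$ to a pure $\ket{\psi_c}$ on $\bQ^{(\tin)}\otimes \bR_c$, invoke the pure-state version of HJW on the enlarged system to obtain a rank-one POVM, and then trace out the auxiliary purifying registers to recover the POVM on $\bP$ alone. The CPTP map that performs $\{E_c\}$ on $\bP$ and coherently writes the classical outcome into $\bD^{(\tcomp)}$ then sends $\Phi$ to $\sum_c p_c \ket{c}\bra{c}_{\bD^{(\tcomp)}}\otimes \sigma_c = \rho_{\tcomp}$, as claimed.

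The main obstacle is conceptual rather than computational: if $\mathrm{Tr}_{\bD^{(\tcomp)}}(\rho_{\tcomp}) \neq \bbI/D$, then \emph{no} local POVM on $\bP$ can map $\Phi$ to $\rho_{\tcomp}$, so the statement has to be read as ranging only over $\rho_{\tcomp}$ whose $\bQ^{(\tin)}$-marginal matches that of $\Phi$. In the wider argument of Section \ref{sec:4.3} one would either restrict to such $\rho_{\tcomp}$ explicitly, or observe that without loss of generality $\rho_{\tcomp}$ can be padded by a suitable uniformly distributed classical auxiliary label in $\bD^{(\tcomp)}$ to enforce the marginal condition, since the only use of $\rho_{\tcomp}$ downstream is via the subsequent application of $\cE_{tar}$. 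Once that caveat is settled, the proof itself is a direct appeal to HJW.
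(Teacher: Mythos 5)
The paper states this Fact without any proof at all --- it is treated as a standard steering/purification fact --- so there is no ``paper proof'' to compare against; your appeal to Schr\"odinger--HJW is exactly the justification the authors are implicitly relying on, and it is the right one. The substantive content of your write-up is the marginal-condition caveat, and you are correct to flag it: as literally stated in the context of Protocol \ref{prtl:2}, $\rho_{\tcomp}$ is an \emph{arbitrary} cq-state in $\tD(\cH_{\bD^{(\tcomp)}}\otimes\cH_{\bQ^{(\tin)}})$, and no POVM on $\bP$ can steer $\Phi$ to $\rho_{\tcomp}$ unless $\tr_{\bD^{(\tcomp)}}(\rho_{\tcomp})=\bbI/D$, since local measurements on $\bP$ preserve the $\bQ^{(\tin)}$-marginal. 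So the Fact needs this hypothesis added (or $\rho_{\tcomp}$ needs to be restricted) for the abstract construction of Section \ref{sec:4.3} to be well-posed. Note that in the concrete instantiation of Section \ref{sec:4.3.2} (Set-up \ref{setup:10}) the issue evaporates, because there $\rho_{\tcomp}$ is \emph{defined} as the outcome of a specified measurement on the $\bP$ half of $\Phi$, so the required POVM exists by construction and HJW is not even needed.

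One small technical remark on your reduction for mixed ensemble members: purifying each $\sigma_c$ onto an auxiliary register $\bR_c$ and then ``invoking pure-state HJW on the enlarged system'' does not quite parse, because the global state $\Phi$ contains no such registers to measure or trace out. The cleaner route is to decompose each $\sigma_c=\sum_j q_{c,j}\ket{\psi_{c,j}}\bra{\psi_{c,j}}$, note that $\sum_{c,j}p_c q_{c,j}\ket{\psi_{c,j}}\bra{\psi_{c,j}}=\bbI/D$, apply the pure-state HJW theorem to the refined ensemble to obtain rank-one POVM elements $E_{c,j}$ on $\bP$ (for the maximally entangled state one can take $E_{c,j}=D\,p_c q_{c,j}\,\overline{\ket{\psi_{c,j}}\bra{\psi_{c,j}}}$ in the Schmidt basis), and then coarse-grain $E_c=\sum_j E_{c,j}$, recording only $c$ in $\bD^{(\tcomp)}$. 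With that adjustment and the marginal hypothesis made explicit, your proof is complete.
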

%A corollary is a state in the form of \eqref{eq:tars} could be prepared by operating on the client side of $\ket{\Phi}$.
\begin{proof}[Proof for Theorem \ref{thm:4.3}]
	Suppose the adversary is $\fAdv$, by the soundness of $\pi_0$ (Definition \ref{defn:choseninputs}) there exists an efficient quantum operation $\fSim_1$ such that for any $\rho_0\in\tD(\cH_{\bS}\otimes\cH_{\bbE})$,
	\begin{equation}\label{eq:30p}\Pi_{\fpass}^{\bflag^{(1)}}(\pi_0^{\fAdv_1}(\ttest)(\rho_0))\approx_{\epsilon_0}^{ind}\Pi_{\fpass}^{\bflag^{(1)}}(\underbrace{\fSim_1}_{\text{on }\bQ^{(\tin)},\bS,\bflag^{(1)}}(\rho_{\ttest}\otimes\rho_0))\end{equation}
	\begin{equation}\label{eq:31p}\Pi_{\fpass}^{\bflag^{(1)}}(\pi_0^{\fAdv_1}(\tcomp)(\rho_0))\approx_{\epsilon_0}^{ind}\Pi_{\fpass}^{\bflag^{(1)}}(\fSim_1(\rho_{\tcomp}\otimes\rho_0))\end{equation}
	To prove the soundness of Protocol \ref{prtl:2}, let's assume $\fAdv$ could make $\pi_{\ftest}^\prime$ passes with probability $\geq 1-(\epsilon_0+\delta)$. This combined with \eqref{eq:30p} implies that if the protocol $\pi_{test}$ is executed on initial states $\rho_{\ttest}\otimes\rho_0$ against adversary $\fAdv_2\circ\fSim_1$, the passing probability is $\geq 1-\delta$. Then by the soundness of ROAV we have that, there exists an efficient quantum operation $\fSim_2$ such that
	% there 
	% the input state is $\rho_{in}$ and Protocol \ref{prtl:2} passes with probability $>\delta^\prime$. More formally, denoting the step 2 in Protocol \ref{prtl:2} as $\pi_{\text{step2}}$, there is
	% \begin{equation}\label{eq:28r}
	% 	\tr(\Pi_{\fpass}((\pi_{\text{step2}}\circ\pi_0)^{\fAdv}(\rho_{in})))>\delta^\prime.
	% \end{equation}
	% First this implies the $\pi_0$ step passes with probability $>\delta^\prime>\delta$. By the soundness of $\pi_0$ there exists an efficiently computable server-side simulator $\fSim_0^{\fAdv}$ such that
	% \begin{equation}\label{eq:27a}\Pi_{\fpass}(\pi_0^{\fAdv}(\rho_{in}))\approx^{ind}_{\epsilon_0} \Pi_{\fpass}(\fSim_0^{\fAdv}(\eqref{eq:26n}\otimes \rho_{in}))\end{equation}
	% This together with \eqref{eq:28r} implies
	% \begin{equation}\label{eq:29r}
	% 	\tr(\Pi_{\fpass}(\pi_{\text{step2}}^{\fAdv}(\fSim_0^{\fAdv}(\eqref{eq:26n}\otimes \rho_{in}))))>\delta^\prime-\epsilon_0.
	% \end{equation}
	% which further implies
	% \begin{equation}\label{eq:30n}\tr(\Pi_{\fpass}(\ket{\ftest}\bra{\ftest}\otimes(\pi_{test}^{\fAdv}(\fSim_0^{\fAdv}(\rho_{test}\otimes \rho_{in})))))>\delta\end{equation}
	%$$\tr(\Pi_{\fpass}(\ket{\fcomp}\bra{\fcomp}\otimes(\pi(\fSim_0^{\fAdv}(\rho_{0}\otimes \rho_{in})))))>1-\frac{p}{1-p}(1-\delta)$$
	%$$\tr(\Pi_{\fpass}(\Pi_{\ftest}(\pi(\fSim^{\fAdv}(\eqref{eq:26}\otimes \rho_{in})))))<p(1-\delta)$$
	%From \eqref{eq:30n}, by the ROAV soundness there exists an efficiently computable server-side simulator $\fSim^{\fAdv}$ such that
	\begin{equation}\label{eq:32}\Pi_{\fpass}(\pi_{\fcomp}^{\fAdv_2}(\fSim_1(\Phi\otimes \rho_0)))\approx^{ind}_{\epsilon_1}\Pi_{\fpass}(\underbrace{\fSim_2}_{\text{on }\bS,\bQ^{(\tin)},\bQ^{(\tout)},\bflag^{(2)}}(\underbrace{\cE}_{\bQ^{(\tin)}\rightarrow \bQ^{(\tout)}}(\underbrace{\Phi}_{\bP,\bQ^{(\tin)}})\otimes\rho_0))\end{equation}
	Note that the register $\bP$ is in the environment, which is accessible by the distinguisher but not by any operators explicitly appeared in \eqref{eq:32}. Thus we could first apply Fact \ref{fact:2} and make the distinguisher measure the $\bP$ register of $\Phi$ to collapse $\Phi$ to $\rho_{\tcomp}$, and then assume the collapsing happen in the beginning. Thus we get 
	\begin{equation}\label{eq:30}\Pi_{\fpass}(\pi_{\fcomp}^{\fAdv_2}(\fSim_1(\rho_{\tcomp}\otimes \rho_0)))\approx^{ind}_{\epsilon_1}\Pi_{\fpass}(\fSim_2(\cE(\rho_{\tcomp})\otimes\rho_0))\end{equation}
	Combining \eqref{eq:30} and \eqref{eq:31p} completes the proof.
% 	Notice that $\fSim_0^{\fAdv}(\rho_0\otimes \rho_{in})\approx_{2p}\fSim_0^{\fAdv}(\eqref{eq:26n}\otimes\rho_{in})$, this together with \eqref{eq:27a}\eqref{eq:30} implies
% $$\Pi_{\fpass}((\pi_{\fcomp}\circ\pi_0)^{\fAdv}(\rho_{in}))\approx^{ind}_{2p+\epsilon_0+\epsilon}\Pi_{\fpass}(\fSim^{\fAdv}(\cE(\rho_0)\otimes\rho_{in}))$$
% Noticing that $\pi_{\fcomp}(\cdot)\approx_{2p}\pi_{\text{step2}}(\cdot)$ completes the proof.
\end{proof}
\subsubsection{Testing ground state energy by ROAV}\label{sec:4.3.2}
In this subsection we give a Hamiltonian ground energy testing protocol based on specific ROAV and RSPV.
We first give the set-ups for the ROAV in our protocol. The set-up is based on the template set-up for ROAV with the server-side states (Set-up \ref{setup:9}). % and RSPV used in our protocol.
\begin{setup}\label{setup:10}
	Parameters: \begin{itemize}\item problem size parameter $1^n$ which describes the size of the witness of the Hamiltonian;\item $1^K$ which describes the number of repetition;\item approximation error parameter $1^{1/\epsilon}$;\item security parameter $1^\kappa$.\end{itemize}\par
	%The RSPV protocol might also take an input $H$ that describes an XZ 5-local Hamiltonian.\par
	Registers: 
	\begin{itemize}\item client-side classical registers $\bD^{(\tin)},\bD^{(\tcomp)},\bD^{(\tout)}$; $\bD^{(\tout)}$ holds $2nK$ classical bits. $\bD^{(\tin)},\bD^{(\tcomp)}$ are defined to be compatible with the states below. \item server-side quantum register $\bQ^{(\tin)}$, $\bw$; each of both holds $nK$ qubits.\item a register $\bP$ in the environment; $\dim(\bP)=\dim(\bQ^{(\tin)})$.\end{itemize} For $k\in [K]$, denote the $k$-th block of $\bD^{(\tout)}$ as $\bD^{(\tout)}_k$ which holds $2n$ bits and $\bD^{(\tcomp)}_k,\bQ^{(\tin)}_k$, $\bP_k$ are defined similarly.\par
	Define the following state and operations:\par
	$\rho_{\ttest}\in \tD(\cH_{\bD^{(\tin)}}\otimes\cH_{\bQ^{(\tin)}})$. $\ket{\Phi}$ is the EPR entanglement between $\bP$ and $\bQ^{(\tin)}$.\par
	Define $\cE_{Bells}$ to be the operation that measures $\bQ^{(\tin)},\bw$ in the Bell basis and stores the results in $\bD^{(\tout)}$; thus when the state in $\bP,\bQ^{(\tin)}$ is $\ket{\Phi}$, the application of $\cE_{Bell}$ teleports the state in $\bw$ to $\bP$.\par
	Corresponding to an XZ local Hamimltonian $H=\sum_{j\in [m]}\gamma_jH_j$, $1^K$, define $\rho_{\tcomp}\in \tD(\cH_{\bD^{(\tcomp)}}\otimes\cH_{\bQ^{(\tin)}})$ as the outcome of the following operations applied on $\ket{\Phi}$: 
	\begin{enumerate}\item For each $k\in [K]$:\par
		 The client randomly chooses $j^{(k)}\in [m]$ and $\bP_k$ is measured as follows:\par
\begin{itemize}\item If the observable on the $t$-th qubit of $H_{j^{(k)}}$ is $\sigma_Z$, measure the $t$-th qubit of $\bP_k$ on the computational basis. The measurement outcome is represented by value in $\{0,1\}$.
	\item If the observable on the $t$-th qubit of $H_{j^{(k)}}$ is $\sigma_X$, measure the $t$-th qubit of $\bP_k$ on the Hadamard basis. The measurement outcome is represented by value in $\{0,1\}$.
\end{itemize}
Store $j^{(k)}$ together with the measurement results on $\bD^{(\tcomp)}_k$. Suppose the index $j^{(k)}$ is stored in register $\bD^{(\tcomp)(\tindex)}_k$ and the measurement results are stored in register $\bD^{(\tcomp)(\tmr)}_k$
	\end{enumerate}
\end{setup}
Thus the application of $\cE_{Bells}$ on $\rho_{\tcomp}$ and $w\in\tD(\cH_{\bw})$ could be understood as follows: $w\in \tD(\cH_{\bw})$ is teleported to $\bP$, and then the client samples $j^{(k)}$ for each $k\in [K]$ and measures $H_{j^{(k)}}$ and stores the outcome in $\bD^{(\tcomp)(\tmr)}_k$. $\bD^{(\tcomp)(\tmr)}_k$ records the outcome in each round of Hamiltonian ground state testing, under the one-time-pad-encryption under corresponding keys in $\bD^{(\tout)}_k$. \par
Then we introduce notations for calculating the estimated energy in such a procedure:
\begin{nota}
	Then define $\val^H(\bD^{(\tcomp)},\bD^{(\tout)})$ as follows:
$$\val^H(\bD^{(\tcomp)},\bD^{(\tout)})=\frac{1}{K}\sum_{k\in [K]}\val^H(\bD^{(\tcomp)}_k,\bD^{(\tout)}_k)$$
$$\val^H(\bD^{(\tcomp)}_k,\bD^{(\tout)}_k)=\gamma_{j^{(k)}}\cdot (-1)^{\tParity(\bD^{(\tcomp)(\tmr)}_k\oplus\tdecodekey^{H_{j^{(k)}}}(\bD^{(\tout)}_k))}$$
where $j^{(k)}$ is the value of $\bD^{(\tcomp)(\tindex)}_k$, and $\tdecodekey^{H_{j^{(k)}}}(\bD^{(\tout)}_k)$ is defined as follows: 
\begin{itemize}\item  If the observable on the $t$-th qubit of $H_{j^{(k)}}$ is $\sigma_Z$, the decode key for the corresponding bit in $\bD^{(\tcomp)(\tmr)}_k$ is the $(2t-1)$-th bit of $\bD^{(\tout)}_k$.
	\item If the observable on the $t$-th qubit of $H_{j^{(k)}}$ is $\sigma_X$, the decode key for the corresponding bit in $\bD^{(\tcomp)(\tmr)}_k$ is the $2t$-th bit of $\bD^{(\tout)}_k$.
\end{itemize}
\end{nota}
Thus the $\val^H(\bD^{(\tcomp)},\bD^{(\tout)})$ is the estimation of the Hamiltonian ground state energy by taking the weighted average of $K$ sampling.
\begin{mdframed}[backgroundcolor=black!10]
	Under Set-up \ref{setup:10}, suppose $(\rho_{\ftest},\pi_{\ftest},\pi_{\fcomp})$ is an ROAV for $\cE_{Bells}$ that is $(\delta,\epsilon)$-sound. Suppose $\pi_{0}(\tmode,H,1^K)$ as an RSPV (where the client could choose the states, as described in Set-up \ref{setup:3}, Section \ref{sec:3.3.1}) for the following honest behavior:
	\begin{itemize}
		\item If $\tmode=\ttest$, prepare $\rho_{\ttest}$.% (Note that $H$ is not used here and could be replaced by $1^n$.)
		\item If $\tmode=\tcomp$, prepare $\rho_{\tcomp}$ as defined in Set-up \ref{setup:10}.
	\end{itemize}
	%\begin{equation}\label{eq:27n}\frac{1}{2}\underbrace{\ket{\foprtortest}}_{\roundtype}\bra{\foprtortest}\otimes\rho_{test}+\frac{1}{2}\ket{\fenergytest}\bra{\fenergytest}\otimes(\rho_{comp})^{\otimes K}\end{equation}
	and $\pi_0$ is $\epsilon_0$-sound.
\begin{prtl}\label{prtl:3}
	Input: an XZ 5-local Hamiltonian $H=\sum_{j\in [m]}\gamma_jH_j$, $a,b$, $b-a\geq 1/\fpoly(n)$, as Definition \ref{defn:2.2}. Witness size parameter $1^n$. Security parameter $1^\kappa$.\par
	Take $K=100\kappa^2\frac{1}{(b-a)^2}$. %Consider qubit registers indexed by $(w,i,k)$, $w\in \{1,2,3\},i\in [n], k\in [K]$. 
		%where 
		%Note the $\roundtype$ information is kept on the client side and hidden from the server.\par
	% Define the following protocols or channels:
	% \begin{itemize}
	% 	\item $\Xi_1$ as
	% 	$$\{\}^{\otimes |w|}$$
	% 	\item $\Xi_2$ as
	% 	$$(\ket{00}+\ket{11})^{\otimes |w|}$$
	% 	\item $\Pi$ as the ROAV for 
	% \end{itemize}
	\begin{enumerate}
		%\item With probability $p$ and $1-p$, the client executes one of the following two protocols with the server:
		%\begin{itemize}
			%\item % Reject if $\pi_0$ rejects.
			\item The client samples $\tmode\in\{\foprtortest,\fenergytest\}$ with probability $(\frac{1}{2},\frac{1}{2})$ randomly. Depending on the value of $\roundtype$: %Execute protocol $\pi$. % starting from the output of $\pi_0$; Reject if $\pi$ rejects.
			\begin{itemize}
				\item If $\tmode=\foprtortest$:\begin{enumerate}
				\item The client executes $\pi_0(\ttest,1^n,1^K)$. \item Then the client executes $\pi_{\ttest}$ with the server. 
				\item Reject if any step fails and accept otherwise.
				\end{enumerate}
				\item If $\tmode=\fenergytest$:\begin{enumerate}
				\item The client executes $\pi_0(\tcomp,H,1^K)$. \item Then the client executes $\pi_{\tcomp}$ with the server. 
				\item Reject if any step fails or $\val^H(\bD^{(\tcomp)},\bD^{(\tout)})\geq \frac{a+b}{2}$ and accept otherwise.
				\end{enumerate}
			\end{itemize}
		%	\item The client samples a set of observables $xx$ from $H$, define $\rho$ as:
		%	$$xx$$
		%	Execute protocol $\pi[\rho]$.
		%\end{itemize}
		%\item Apply $\pi$
	\end{enumerate}
\end{prtl}
\end{mdframed}
The completeness is by an application of the Chernoff bound and the efficiency is trivial. Below we state and prove the soundness.
% \begin{thm}
% 	Suppose $(\rho_{test},\pi_{test},\pi_{comp})$ is complete, $\pi_0$ is complete, then Protocol \ref{prtl:3} is complete.
% \end{thm}
% \begin{proof}
% 	From the completeness we know the first step of Protocol \ref{prtl:3} succeeds with $1-\fneg(\kappa)$ probability, the $\foprtortest$ succeeds with  $1-\fneg(\kappa)$ probability, and the energy test implements \eqref{eq:23} up to only a negligible error. For a yes-instance, for each $k\in [K]$, by the promise there is $\bE[\val^H(T_k)]\leq a$. Thus for $K=100\kappa^2\cdot \frac{1}{(b-a)^2}$ by Chernoff's bound there is $\Pr[\val^H(T)\leq \frac{a+b}{2}]\leq 2^{-\kappa}$.
% \end{proof}
\begin{thm}
 When $H$ has ground state energy $\geq b$, Protocol \ref{prtl:3} accepts with probability at most $\fneg(\kappa)+\max\{1-\frac{1}{2}(\delta-\epsilon_0),\frac{1}{2}+\frac{1}{2}(\epsilon+\epsilon_0)\}$.% when $\delta,\epsilon,\epsilon_0\leq O(1)$.
\end{thm}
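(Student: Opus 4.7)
The plan is a case split on the adversary's passing probability in the $\foprtortest$ mode. Fix an adversary $\fAdv$ and an initial state $\rho_0$. Let $p_{\opt}$ and $p_{\eng}$ denote the passing probabilities of the $\foprtortest$ and $\fenergytest$ branches respectively; the overall acceptance is $\frac{1}{2}(p_{\opt}+p_{\eng})$. In the easy case, if $p_{\opt}\leq 1-(\delta-\epsilon_0)$, then bounding $p_{\eng}\leq 1$ immediately yields acceptance $\leq 1-\tfrac{1}{2}(\delta-\epsilon_0)$, matching the first term of the max. Otherwise we are in the interesting regime $p_{\opt}>1-(\delta-\epsilon_0)$.

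In that regime, first apply the RSPV soundness of $\pi_0$ (Definition \ref{defn:choseninputs}, with chosen input $\ttest$) to replace the $\pi_0(\ttest)$ phase by its simulated ideal preparation of $\rho_{\ttest}$; this shifts the subsequent $\pi_{\ttest}$ passing probability by at most $\epsilon_0$. Hence the composite adversary (i.e.\ $\fAdv$ conjugated with the RSPV simulator) makes $\pi_{\ttest}$ accept on the honest input $\rho_{\ttest}$ with probability $>1-\delta$, which is exactly the hypothesis of the ROAV soundness for $(\rho_{\ttest},\pi_{\ttest},\pi_{\tcomp})$. Now I carry the same composite adversary over to the $\fenergytest$ branch. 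Applying RSPV soundness to $\pi_0(\tcomp)$ costs another $\epsilon_0$ and replaces its output by $\rho_{\tcomp}$; by Fact \ref{fact:2}, $\rho_{\tcomp}$ is precisely what one gets by measuring $\bP$ of the maximally entangled $\ket{\Phi}$ with the sampled bases determined by $j^{(k)}$ and $H_{j^{(k)}}$. Since $\pi_{\tcomp}$ does not touch $\bP$, that measurement commutes past the execution of $\pi_{\tcomp}$, so the $\fenergytest$ branch is $\epsilon_0$-indistinguishable from: run $\pi_{\tcomp}$ on $\ket{\Phi}\otimes(\text{some state on }\bS,\bw)$, and only then measure $\bP$ in the Hamiltonian-dictated bases.

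Applying ROAV soundness (Definition for Set-up \ref{setup:9}) to the inner $\pi_{\tcomp}$ call then costs $\epsilon$, and replaces it by $\fSim_1((\bbI\otimes\cE_{Bells}\otimes\bbI)(\Phi\otimes\fSim_0(\rho_0')))$. Physically, this is exactly the Bell-basis teleportation gadget: the state the adversary placed on $\bw$ is teleported into $\bP$ up to Pauli corrections, and those corrections are recorded in $\bD^{(\tout)}$. Consequently, after the deferred measurement of $\bP$, the recorded outcomes in $\bD^{(\tcomp)}$, once decoded with $\tdecodekey^{H_{j^{(k)}}}(\bD^{(\tout)}_k)$, are exactly i.i.d.\ random samples of $\pm 1$ with expectation $\gamma_{j^{(k)}}\langle w|H_{j^{(k)}}|w\rangle$ averaged uniformly in $j^{(k)}$, i.e.\ $\langle w|H|w\rangle\geq b$ since the ground energy is $\geq b$. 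With $K=100\kappa^2/(b-a)^2$ rounds, a Chernoff/Hoeffding bound gives $\Pr[\val^H<(a+b)/2]\leq \fneg(\kappa)$ in this simulated world.

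Collecting the indistinguishability budget ($\epsilon_0$ for RSPV, $\epsilon$ for ROAV, plus the Chernoff-negligible energy-test failure), I conclude $p_{\eng}\leq \epsilon_0+\epsilon+\fneg(\kappa)$ in Case 2, so overall acceptance is at most $\tfrac{1}{2}+\tfrac{1}{2}(\epsilon+\epsilon_0)+\fneg(\kappa)$. Combined with Case 1, the bound is $\fneg(\kappa)+\max\{1-\tfrac{1}{2}(\delta-\epsilon_0),\tfrac{1}{2}+\tfrac{1}{2}(\epsilon+\epsilon_0)\}$. The main obstacle I anticipate is the bookkeeping for the commutation of the $\bP$-measurement with the ROAV simulation, and verifying that the decoding keys recorded in $\bD^{(\tout)}$ really do undo the Pauli byproducts of Bell-basis teleportation so that $\val^H$ is an unbiased estimator of $\langle w|H|w\rangle$; everything else is a standard chain of triangle-inequality applications plus a tail bound.
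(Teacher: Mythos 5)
Your proposal is correct and follows essentially the same route as the paper: a case split on the $\foprtortest$-mode passing probability, then (in the interesting case) invoking the RSPV soundness of $\pi_0$ to pay $\epsilon_0$ and reduce to honest inputs, the ROAV soundness to pay $\epsilon$ and replace $\pi_{\tcomp}$ by the Bell-measurement/teleportation ideal, commuting the deferred $\bP$-measurement (Fact \ref{fact:2}) past the protocol to move between $\Phi$ and $\rho_{\tcomp}$, and finishing with a Chernoff bound on the energy estimate. The only caveat, shared with the paper's own write-up, is that the $K$ samples need not be literally i.i.d.\ when the adversary entangles the witness across blocks, so the tail bound technically needs a martingale-style justification rather than plain Chernoff; this does not affect the conclusion.
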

%By substituting suitable parameters it's possible to make the soundness smaller than the completeness.
\begin{proof}
	Suppose $H$ has ground state energy $\geq b$, the adversary is $\fAdv$, \footnote{Note that we omit the initial state which is previously described by $\rho_0\in \tD(\cH_{\bS}\otimes\cH_{\bbE})$. This is because since we only need to bound the passing probability of the overall protocol instead of proving a simulation-based soundness, we could simply assume the initial state is prepared by $\fAdv$.} and the protocol accepts with probability\footnote{Below we omit the $\fneg(\kappa)$ part in the soundness error during the proof.} $\max\{1-\delta+\epsilon_0,\epsilon+\epsilon_0+\frac{1}{2}\}$. This implies that in the $\foprtortest$ mode the protocol passes with probability $\geq 1-\delta+\epsilon_0$. By the soundness of $\pi_0$ there exists a simulator $\fSim_1$ that simulates the state after the first step with approximation error $\epsilon_0$, which implies that $\pi_{\ttest}$ running from state $\rho_{\ttest}$ passes with probability $\geq \delta$. By the soundness of ROAV we have that, there exists an efficent simulator $\fSim_2$, a quantum state $w\in \tD(\cH_{\bw})$ such that%That is, \footnote{We omit the initial states since it's not important here and could be $\ket{0}$.} 
	%\begin{equation}\label{eq:27}\tr(\Pi_{\fpass}(\pi_{\text{step2}}\circ\pi_0)^{\fAdv})> \delta^\prime\end{equation}%1-\frac{1}{10}(1-\delta)$$
	%This implies $\tr(\Pi_{\fpass}(\pi_0^{\fAdv}))> \delta$. %By the soundness of  
	%$$\tr(\Pi_{\fpass}(\Pi_{\foprtortest}(\pi\circ\pi_0^{\fAdv}(\ket{0}\bra{0}))))\geq 1-\frac{1}{10}(1-\delta)$$
	%By the soundness property of $\pi_0$ there exists an efficiently computable server-side simulator $\fSim_0^{\fAdv}$ such that
	%\begin{equation}\label{eq:28}\Pi_{\fpass}(\pi_0^{\fAdv})\approx^{ind}_{\epsilon_0}\Pi_{\fpass}(\fSim_0^{\fAdv}(\text{equation }\eqref{eq:27n}))\end{equation}
	%Further note \eqref{eq:27} implies $\tr(\Pi_{\fpass}\Pi_{\foprtortest}(\pi_{\text{step2}}\circ\pi_0)^{\fAdv})> \delta^\prime$. Combining it with \eqref{eq:28} we get
	% \begin{equation}
	% 	\tr(\Pi_{\fpass}\pi_{\ftest}^{\fAdv\circ\fSim_0^{\fAdv}}(\rho_{\ftest}))>1-2(1-\delta^\prime+\epsilon_0)>\delta
	% \end{equation}
%By the soundness of ROAV this implies there exists an efficiently computable server-side simulator $\fSim^{\fAdv}$, a server-side state $\rho$ (corresponding to the output of $\fSim_{in}^{\fAdv}$ in the soundness of ROAV) such that
\begin{equation}\label{eq:39}
	\Pi_{\fpass}(\pi_{\fcomp}^{\fAdv_2\circ\fSim_1}(\Phi))\approx^{ind}_{\epsilon}\Pi_{\fpass}(\fSim((\underbrace{\bbI}_{\bP}\otimes\cE_{Bells})(\Phi\otimes w)))
\end{equation}
Now we consider a distinguisher that measures the register $\bP$ as described in Set-up \ref{setup:10} to collapse it to $\rho_{\tcomp}$. %This is done as follows:\begin{enumerate}\item First sample the client-side register corresponding to $\vec{j}$ in $\rho_{comp}$; \item Then by Fact \ref{fact:2} there is a client-side measurement that collapses the server-side to $\fH^{\fvecx(H_j)}(\phi)$ (see \eqref{eq:34}).\end{enumerate}  
This implies
\begin{equation}\label{eq:41}
	\Pi_{\fpass}(\pi_{\fcomp}^{\fAdv_2\circ\fSim_1}(\rho_{\tcomp}))\approx^{ind}_{\epsilon}\Pi_{\fpass}(\fSim((\underbrace{\bbI}_{\bD^{(\tcomp)}}\otimes\cE_{Bells})(\rho_{\tcomp}\otimes w)))
\end{equation}
The passing probability of the right hand side could be bounded directly by Chernoff bound, which is negligible when the ground state energy is $\geq b$. Then \eqref{eq:41}, the approximation error of $\pi_0$ together with the fact that the $\fenergytest$ mode happens with probability $\frac{1}{2}$ imply that the passing probability is at most $\frac{1}{2}+\frac{1}{2}(\epsilon+\epsilon_0)$.
% Porjecting \eqref{eq:28} onto $\roundtype=\fcomp$ we get $$\tr(\Pi_{\fpass}(\pi_{\text{step2}}\circ\pi_0)^{\fAdv}))\leq 1-\frac{1}{2}(1-\tr(\Pi_{\fpass}(\ket{\fenergytest}\bra{\fenergytest}\otimes(\pi_{\fcomp}^{\fAdv}(\fSim^{\fAdv}_0(\rho_{\fcomp}^{\otimes K}))))))+\epsilon_0$$ Combining it with the left hand side of \eqref{eq:41} we get
% \begin{equation}\label{eq:42}\delta^\prime\leq \frac{1}{2}\tr(\Pi_{\fpass}(\ket{\fenergytest}\bra{\fenergytest}\otimes(\fSim^{\fAdv}(\cE(\rho\otimes \rho_{\fcomp}^{\otimes K})))))+\epsilon_0+\epsilon+\frac{1}{2}\end{equation}
% Now we analyze the energy test passing probability in \eqref{eq:42}. By the definition of $\cE$ and $\rho_{\fcomp}^{\otimes K}$, $\cE(\rho\otimes \rho_{\fcomp}^{\otimes K})$ is applying the energy test described in the beginning of Section \ref{sec:4.4.1}, and see whether $\val^H(T)<\frac{a+b}{2}$ holds (see \eqref{eq:val} for the definition of $\val(T)$). By the fact that the ground energy of $H$ is $\geq b$, we know for each $k\in [K]$, conditioned on any possible outcome of $T_1,\cdots T_{k-1}$, there is  $\bE[\val^H(T_k)]\geq b$; then we could apply the Chernoff's bound and get
% $$\tr(\Pi_{\val^H(T)<\frac{a+b}{2}}(\cE(\rho\otimes \rho_{\fcomp}^{\otimes K})))<2^{-\kappa}$$
% Substituting it to \eqref{eq:42} completes the proof.
\end{proof}
%\begin{proof}
	%Combining \eqref{eq:27}\eqref{eq:28} we get
	% $$\tr(\Pi_{\fpass}(\pi[\text{equation }\eqref{eq:27n}]^{\fAdv\circ\fSim_0^{\fAdv}}))> 1-\frac{1}{2}(1-\delta)$$
	% Thus
	% $$\tr(\Pi_{\fpass}(\pi[\rho_{test}]^{\fAdv\circ\fSim_0^{\fAdv}}))> \delta$$
	% By the soundness of ROAV there exist efficiently computable server-side simulator $\fSim^{\fAdv}, \fSim_{in}^{\fAdv}$ such that for $\Phi$ defined in \eqref{eq:17}, 
	% $$\Pi_{\fpass}(\pi[\Phi]^{\fAdv\circ\fSim_0^{\fAdv}})\approx^{ind}_{\epsilon}\Pi_{\fpass}(\fSim^{\fAdv}(\cE(\fSim_{in}^{\fAdv}(\ket{0}\bra{0})\otimes\Phi)))$$
	% By Fact \ref{} this implies
	% $$\Pi_{\fpass}(\pi[\rho_{comp}^{\otimes K}]^{\fAdv\circ\fSim_0^{\fAdv}})\approx^{ind}_{\epsilon} \Pi_{\fpass}(\fSim^{\fAdv}(\cE(\fSim_{in}^{\fAdv}(\ket{0}\bra{0})\otimes\rho_{comp}^{\otimes K})))$$
	% Use \eqref{eq:28} again there is
	%$$\Pi_{\fpass}\Pi_{\fenergytest}(\pi\circ\pi_0^{\fAdv}(\ket{0}\bra{0}))\approx^{ind}_{\frac{1}{2}\epsilon+\epsilon_0}\Pi_{\fpass}(\ket{\fenergytest}\bra{\fenergytest}\otimes(\fSim^{\fAdv}(\cE(\fSim_{in}^{\fAdv}(\ket{0}\bra{0})\otimes\rho_{comp}^{\otimes K}))))$$
	% we analyze the passing probability of the right hand side directly under the condition that the ground state energy of $H$ is $\geq b$.
%\end{proof}

\section{Concrete Constructions of RSPV Protocols}\label{sec:5}
\subsection{Overview}\label{sec:5.1}
Let's recall the reduction diagram (Figure \ref{fig:2}). In the following sections we will complete these reductions step by step; let's first give an overview that elaborates the intuition of each step.
\subsubsection{From BB84 to $\fOneBlock$ and $\fOneBlockTensor$} In Protocol \ref{prtl:4} we build an RSPV for state family $$\{\frac{1}{\sqrt{2}}(\ket{x_0}+\ket{x_1}:x_0,x_1\in \{0,1\}^m,\tHW(x_0\oplus x_1)=1,\tParity(x_0)=0)\}$$ from the RSPV for BB84 states; this protocol is called $\fOneBlock$. The intuition is as described in Section \ref{sec:1.3.3}: notice that states in this family could be written as sequence of BB84 states where there is only one $\ket{+}$ state and no $\ket{-}$ state; thus the client only needs to do a repeat-and-pick on RSPV-for-BB84 protocol.\par
	 In Protocol \ref{prtl:5} we build the RSPV for 
	 \begin{equation}\label{eq:46}\{\frac{1}{\sqrt{2}}(\ket{x_0}+\ket{x_1}):x_0,x_1\in \{0,1\}^m,\text{HW}(x_0\oplus x_1)=1,\tParity(x_0)=0\}^{\otimes n}\end{equation}
	 which is the tensor products of the state family in $\fOneBlock$; this protocol is called $\fOneBlockTensor$. This is by taking the sequential repetition (as discussed in Section \ref{sec:3.4.1}) of the protocol $\fOneBlock$.
	 \subsubsection{Construction of $\fMultiBlock$} In Section \ref{sec:5.3} we go from $\fOneBlockTensor$ to construct an RSPV for the following state family:
\begin{equation*}\{\frac{1}{\sqrt{2}}(\ket{x_0^{(1)}||x_0^{(2)}||\cdots ||x_0^{(n)}}+\ket{x_1^{(1)}||x_1^{(2)}||\cdots ||x_1^{(n)}}):\end{equation*}
\begin{equation}\label{eq:47}\forall i\in [n],x_0^{(i)},x_1^{(i)}\in \{0,1\}^m,\text{HW}(x_0^{(i)}\oplus x_1^{(i)})=1;\tParity(x_0^{(1)})=0\}\end{equation}
Roughly speaking, the client forces the server to measure the state in \eqref{eq:46} to get \eqref{eq:47}. To see this clearly, we could first consider two blocks as an example. Suppose the server initially holds the state
$$(\ket{x_0^{(1)}}+\ket{x_1^{(1)}})\otimes (\ket{x_0^{(2)}}+\ket{x_1^{(2)}}),\quad\forall b,i,\tParity(x_b^{(i)})=b$$
The client asks the server to measure the total parity of the strings it holds. Then if the server performs the measurement honestly, the state will collapse to:
\begin{equation}\label{eq:outcome0}\text{outcome}=0:\ket{x_0^{(1)}||x_0^{(2)}}+\ket{x_1^{(1)}||x_1^{(2)}}\end{equation}
\begin{equation}\label{eq:outcome1}\text{outcome}=1:\ket{x_0^{(1)}||x_1^{(2)}}+\ket{x_1^{(1)}||x_0^{(2)}}\end{equation}
The client could update the keys (that is, the state description) using the reported outcome and the original keys. If the client does the same for each $i=2,3\cdots n$, in the honest setting the state in \eqref{eq:46} will collapse to \eqref{eq:47}.\par
So what if the server cheats? One possible attack is that the server may not do the measurements and report the total parity honestly. To detect this attack, we will first construct a preRSPV $(\fMultiBlockTest,\fMultiBlockComp)$ and use $\fMultiBlockTest$ to test the server's behavior. In $\fMultiBlockTest$ after getting the total parities the client will ask the server to measure all the states on the computational basis and report the measurement results; the client could check the results with its keys and see whether the results is consistent with the original keys and the total parities. As a concrete example, in \eqref{eq:outcome0} if the client asks the server to measure all the states, the measurement results should be either $x_0^{(1)}||x_0^{(2)}$ or $x_1^{(1)}||x_1^{(2)}$, otherwise the server is caught cheating.\par%; the client could check whether the reported keys is consistent with the results
After we get a preRSPV, we could make use of the amplification in Section \ref{sec:preRSPVdef} to get an RSPV for \eqref{eq:47}.\par
Let's briefly discuss how the security proof goes through. One desirable property of the security proof of this step is that, we only need to analyze a ``information-theoretic core'': if we assume the initial state is \eqref{eq:46}, the analysis of the preRSPV will be purely information-theoretic, which means, the soundness holds against unbounded provers and we do not need to work on computational notions in the security analysis. After we prove the soundness of this information-theoretic part, we could prove the soundness of the overall protocol by calling the abstract properties in Section \ref{sec:3.4.2}, \ref{sec:preRSPVdef}.%Let's further explain how the security proof goes through for the preRSPV part.%why this works in more detail.
	 \subsubsection{Construction of $\fKP$} In Section \ref{sec:5.4} we go from $\fMultiBlock$ to construct an RSPV for state family 
	 \begin{equation}\label{eq:kp}\{\frac{1}{\sqrt{2}}(\ket{0}\ket{x_0}+\ket{1}\ket{x_1}):x_0,x_1\in \{0,1\}^n\}\end{equation}
This is by first calling $\fMultiBlock$ to prepare a sufficiently big state in the form of \eqref{eq:47}, and then letting the client reveals suitable information to allow the server to transform \eqref{eq:47} to \eqref{eq:kp}. Let's explain the intuition.\par
We first note that by calculating the parity of the first block, \eqref{eq:47} could be transformed to
\begin{equation}\label{eq:46pl}\frac{1}{\sqrt{2}}(\ket{0}\ket{x_0^{(1)}||x_0^{(2)}||\cdots }+\ket{1}\ket{x_1^{(1)}||x_1^{(2)}||\cdots}):
\forall i,x_0^{(i)},x_1^{(i)}\in \{0,1\}^m,\text{HW}(x_0^{(i)}\oplus x_1^{(i)})=1\end{equation}
The difference to \eqref{eq:kp} is that these keys are not sampled uniformly randomly. Note that in \eqref{eq:46pl} we omit some conditions on the keys and focus on the most significant one.\par
 Then the client will reveal lots of information about these keys, which allows the server to transform each two blocks in \eqref{eq:46pl} to a pair of uniform random bits. Let's use the first two blocks as an example. The client will reveal $x_0^{(1)}$ and $x_1^{(2)}$. Then by doing xor between them and the corresponding blocks, \eqref{eq:46pl} becomes:
 \begin{equation}\label{eq:46pl2}\frac{1}{\sqrt{2}}(\ket{0}\ket{0^m||000\cdots 1000\cdots||\cdots }+\ket{1}\ket{000\cdots 1000\cdots||0^m||\cdots})\end{equation}
  Now the $000\cdots 1000\cdots$ could be seen as a unary encoding of a random number in $[m]$. By choosing $m$ to be power of 2, converting unary encoding to binary encoding and trimming out extra zeros, this state becomes
  \begin{equation}\label{eq:46pl3}\frac{1}{\sqrt{2}}(\ket{0}\ket{\gamma_0^{(1)}||\cdots }+\ket{1}\ket{\gamma_1^{(1)}||\cdots})\end{equation}
  where $\gamma_0,\gamma_1\in \{0,1\}^{[\log m]}$ and are uniformly random. Doing this for each two blocks in \eqref{eq:46pl} gives \eqref{eq:kp}.
	 \subsubsection{Construction of $\fQFac$ (RSPV for $\ket{+_\theta}$)} Now we have an RSPV for states $\frac{1}{\sqrt{2}}(\ket{0}\ket{x_0}+\ket{1}\ket{x_1})$ with uniformly random $x_0,x_1$; we are going to construct an RSPV for the state\footnote{The protocol name $\fQFac$ is from \cite{qfactory}.}
	 $$\ket{+_\theta}=\frac{1}{\sqrt{2}}(\ket{0}+e^{\mi\pi\theta/4}\ket{1}),\theta\in \{0,1,2\cdots 7\}.$$
	 We first note that existing work \cite{GVRSP} also takes a similar approach: the client first instruct the server to prepare the state $\ket{0}\ket{x_0}+\ket{1}\ket{x_1}$, and then transforms it to $\ket{+_\theta}$.\par
Let's explain the constructions. The overall ideas for the construction are basically from \cite{GVRSP} (adapted to the languages of our framework). We will first construct a preRSPV with the score, as follows: in both the test mode and comp mode the client will instruct the server to prepare $\ket{+_\theta}$ state, then in the test mode the client will instruct the server to measure $\ket{+_\theta}$, and stores a score based on the reported result; in the comp mode $\ket{+_\theta}$ will be kept. Then once we show this protocol is indeed a preRSPV, we could amplify it to an RSPV as in Section \ref{sec:3.5.2}.\par
The first step is to allow the honest server to transform $\frac{1}{\sqrt{2}}(\ket{0}\ket{x_0}+\ket{1}\ket{x_1})$ to $\ket{+_\theta}$. There are multiple ways to do it, for example:% and in this work we use an approach that is different from \cite{GVRSP}:
\begin{enumerate}\item The client will first instruct the server to introduce a phase of $e^{\mi\pi\theta_{2,3}/4},\theta_{2,3}\in \{0,1,2,3\}$ where $\theta_{2,3}$ are hidden in the server's view. This could be done by selecting the xor of the first two bits of $x_0,x_1$ as $\theta_2,\theta_3$. Then on the one hand $\theta_{2,3}=2\theta_2+\theta_3$ will be completely hidden; on the other hand using the phase-table-like technique in \cite{revgt,cvqcinlt} the honest server could introduce a phase of $e^{\mi\pi\theta_{2,3}/4}$ to the $x_1$ branch up to a global phase.
	\item The server does a Hadamard measurement on each bit of the $x$-part and get a measurement result $d$; this introduces a phase of $e^{\mi\pi(d\cdot (x_0+x_1))}$ to the qubits. Then the server sends back $d$ to the client and the client could calculate $\theta_1=(d\cdot (x_0+x_1))\mod 2$. The server holds a single qubit in the state $\ket{+_\theta},\theta=4\theta_1+2\theta_2+\theta_3$.
\end{enumerate}
Now in the comp mode we are done. In the test mode the client will continue to ask the server to make measurement on a random basis $\ket{+_\varphi},\ket{+_{\varphi+4}},\varphi\leftarrow_r\{0,1,2\cdots 7\}$. First we could see that when $\theta=\varphi$ the measurement will collapse to $\ket{+_\varphi}$ with probability $1$, and when $\theta=\varphi+4$ the measurement will collapse to $\ket{+_{\varphi+4}}$ with probability $1$. Thus the client could record a ``win'' score if he has seen such an outcome. But solely doing this does not give us a full control on the server's behavior and states; an important idea is that, when $\varphi$ is close to $\theta$ (or $\theta+4$), the measurement should also collapse to $\ket{+_\varphi}$ (or $\ket{+_{\varphi+4}}$, correspondingly). This gives some probability of losing even for an honest server; however by analyzing the game it's possible to say ``if the server wins with probability close to the optimal winning probability, the state before the testing measurement has to be close to the target state up to an isometry'', which is still sufficient for amplification.
\paragraph{Security proofs, existing works and their limitations, and our approach}
So how could the quoted claim just now be proved? Existing works like \cite{GVRSP,qfactory} has already done a lot of works on this part. In \cite{qfactory} the authors introduce a notion called \emph{blind self-testing}. In more detail, let's denote the server-side state by the time that the comp mode is done corresponding to the client-side phase $\theta$ as $\rho_{\theta}$. (In other words, the overall state is $\sum_{\theta}\ket{\theta}\bra{\theta}\otimes\rho_{\theta}$.) Then the blind self-testing requires that the state $\rho_{\theta_{2,3}}+\rho_{\theta_{2,3}+4}$ is the same for any $\theta_{2,3}$, which is called (information-theoretic) \emph{basis-blindness} (here $\theta_{2,3}$ is the ``basis'' and the notion means that the basis is completely hidden after randomization of $\theta_1$). \cite{qfactory} shows that if the initial state satisfies the basis blindness property, the claim ``high winning probability $\Rightarrow$ close to the target state up to an isometry'' holds.\par
However, the proof of this claim given in \cite{qfactory} does not generalize to the computational analog of basis blindness. \cite{qfactory} does not solve the problem and leave the security of the whole protocol as a conjecture. \cite{GVRSP} makes use of computational indistinguishability between states in the form of $\rho_{\theta_{2,3}}+\rho_{\theta_{2,3}+4}$ together with some quantum information theoretic arguments to prove the claim; in their proofs computational indistinguishability and quantum information theoretic arguments are mixed together, which could be complicated to work on and lead to sophisticated details \cite{discussionwithVidick}.\par
%However, in their protocol the preparation of $\ket{0}\ket{x_0}+\ket{1}\ket{x_1}$ is not known to satisfy the RSPV soundness in the malicious setting; this leads to big burdens in later proofs, and they have to mix computational indistinguishability and quantum information theoretic arguments.\par
	 Our new hammer for getting rid of this problem is the $\fKP$ protocol, which is an RSPV for $\ket{0}\ket{x_0}+\ket{1}\ket{x_1}$. (Note that in previous works the preparation of $\ket{0}\ket{x_0}+\ket{1}\ket{x_1}$ is not known to satisfy the RSPV soundness in the malicious setting.) By starting from $\fKP$, we are able to prove the security in a much nicer way:
	 \begin{enumerate}\item The ``information-theoretic core'': in the security proof we could first simply assume the server holds exactly the state $\ket{0}\ket{x_0}+\ket{1}\ket{x_1}$; then we could prove the $\rho_{\theta}$ generated from it has the (information-theoretic) basis blindness property; then the analysis of the testing on $\ket{+_\theta}$ is basically from existing results \cite{GVRSP,qfactory}. \item Once we complete the analysis of this information-theoretic core, we could compile it using results in Section \ref{sec:3.4.2} to get the desired soundness for the overall protocol, and then compile the preRSPV to an RSPV by amplification procedure in Section \ref{sec:3.5.2}.\end{enumerate}
\subsubsection{A summary}
In summary we have achieve each step of the reductions as shown in Figure \ref{fig:2}. The construction of $\fOneBlock$ is by a repeat-and-pick procedure, $\fOneBlockTensor$ is by the sequential composition, $\fKP$ is by revealing some information to allow the server to transform the state to some other forms. For constructions of $\fMultiBlock$ and $\fQFac$, we first analyze an ``information-theoretic core'' where we only need to work on statistical closeness, and compile the IT-core (IT$=$information-theoretic) to a full protocol by calling existing soundness properties.
\subsection{From BB84 to $\fOneBlock$ and $\fOneBlockTensor$}
We define protocol $\fBB$ as follows.
\begin{defn}\label{defn:bbdf}
$\fBB(1^{1/\epsilon},1^\kappa)$ is defined to be an RSPV protocol for $\{\ket{0},\ket{1},\ket{+},\ket{-}\}$ that is complete, efficient and $\epsilon$-sound.
\end{defn}
The protocols below will build on the $\fBB$ protocol.
\begin{mdframed}[backgroundcolor=black!10]
	\begin{prtl}[$\fOneBlock$]\label{prtl:4}This is the RSPV for state family $\{\frac{1}{\sqrt{2}}(\ket{x_0}+\ket{x_1}:x_0,x_1\in \{0,1\}^m,\tHW(x_0\oplus x_1)=1,\tParity(x_0)=0)\}$.\par
		Parameters: problem size $1^m$, 
            approximation error parameter $1^{1/\epsilon}$, security parameter $1^\kappa$.\par
			Output registers: client-side classical registers $\bK^{(\tout)}=(\bx^{(\tout)}_0,\bx^{(\tout)}_1)$, where each of both holds $m$ bits; client-side classical register $\bflag$ with value in $\{\fpass,\ffail\}$; server-side quantum register $\bQ^{(\tout)}$ which holds $m$ qubits.\par
			Take $L=4(m+\kappa)$.
			\begin{enumerate}
				\item For $i\in [L]$:\begin{enumerate}
				 \item Execute $\fBB(1^{L/\epsilon},1^\kappa)$. The client stores the outcome in $\{0,1,+,-\}$ in register $\bD^{(\ttemp)(i)}$ and the server stores the outcome in $\bQ^{(\ttemp)(i)}$.
				\end{enumerate}
				The client sets $\bflag$ to be $\ffail$ if any round fails.
				\item The client randomly samples indices $i^{(1)},i^{(2)}\cdots i^{(m)}\in [L]^m$ such that there is no repetition, there is exactly one $i$ among them such that $\bD^{(\ttemp)(i)}$ is ``+'', and there is no $i$ among them such that $\bD^{(\ttemp)(i)}$ is ``-''. The client tells the server its choices and server could store the states in $\bQ^{(\ttemp)(i^{(1)})},\bQ^{(\ttemp)(i^{(2)})},\cdots \bQ^{(\ttemp)(i^{(m)})}$ in $\bQ^{(\tout)}$. Then the state in $\bQ^{(\tout)}$ could be equivalently written as
				\begin{equation}\label{eq:44}\frac{1}{\sqrt{2}}(\ket{x_0}+\ket{x_1}:x_0,x_1\in \{0,1\}^m,\tHW(x_0\oplus x_1)=1,\tParity(x_0)=0)\end{equation}
				and the client calculates $x_0,x_1$ and stores them in $\bx^{(\tout)}_0,\bx^{(\tout)}_1$.
			\end{enumerate}
	\end{prtl}
\end{mdframed}
The completeness and efficiency are from the protocol description.
\begin{prop}
	Protocol \ref{prtl:4} is $\epsilon$-sound.
\end{prop}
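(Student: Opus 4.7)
The plan is to reduce the soundness of $\fOneBlock$ to that of $\fBB$ via the sequential composition property established in Section~\ref{sec:3.4.1}, and then show that the classical-quantum post-processing in step~2 can be absorbed into a simulator for the $\fOneBlock$ target state. First I would split the protocol into its two phases: let $\pi_1$ be the $L$ iterations of $\fBB(1^{L/\epsilon},1^\kappa)$ in step~1, and let $\pi_2$ be the sampling, communication and register-relocation in step~2. Applying the sequential composition theorem to $\pi_1$ yields, for any adversary $\fAdv_1$, an efficient simulator $\fSim_1$ with
$$\Pi_{\fpass}\bigl(\pi_1^{\fAdv_1}(\rho_0)\bigr)\ \approx^{ind}_{\epsilon}\ \Pi_{\fpass}\bigl(\fSim_1(\rho_{tar}^{\fBB\otimes L}\otimes \rho_0)\bigr),$$
with total soundness error $L\cdot(\epsilon/L)=\epsilon$, where $\rho_{tar}^{\fBB\otimes L}$ is the $L$-fold tensor of BB84 target states. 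Post-composing both sides with the step-2 operation $\pi_2^{\fAdv_2}$ preserves this indistinguishability.

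Next I would construct the final simulator $\fSim$ for $\fOneBlock$. Given $\rho_{tar}^{\fOneBlock}\otimes\rho_0$, $\fSim$ must produce a joint state indistinguishable from $\pi_2^{\fAdv_2}\circ\fSim_1(\rho_{tar}^{\fBB\otimes L}\otimes\rho_0)$. The crucial observation enabling this is that every target state of $\fOneBlock$ is already a product state: if $j^*$ is the unique index where $x_0$ and $x_1$ differ and $y_i=(x_0)_i=(x_1)_i$ for $i\neq j^*$, then
$$\tfrac{1}{\sqrt{2}}(\ket{x_0}+\ket{x_1})\ =\ \bigotimes_{i\neq j^*}\ket{y_i}\otimes \ket{+}_{j^*}.$$
So $\fSim$ can cheaply split the $m$ qubits of $\bQ^{(\tout)}$ into their BB84 constituents, uniformly choose positions $(p^{(1)},\ldots,p^{(m)})\in[L]^m$ without repetition, place each constituent together with its description ($\ket{y_i}$ or $\ket{+}$) into the corresponding $\fBB$ slot, and then fill the remaining $L-m$ slots with independent uniformly sampled BB84 descriptions and the associated pure states. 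Feeding this synthesized $\rho_{tar}^{\fBB\otimes L}$-distributed state to $\fSim_1$ and then running $\pi_2^{\fAdv_2}$ produces the required simulation.

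The correctness argument will be a short symmetry/counting check: the valid outcomes of the honest cut-and-choose applied to $\rho_{tar}^{\fBB\otimes L}$ are exactly pairs $(x_0,x_1)$ with $\tHW(x_0\oplus x_1)=1$ and $\tParity(x_0)=0$, of which there are $m\cdot 2^{m-1}$, each sampled with equal weight (the ``+'' position uniform in $[m]$, the $m-1$ agreeing bits uniform in $\{0,1\}$). This is exactly the uniform distribution underlying $\rho_{tar}^{\fOneBlock}$, so the synthesis is distribution-matching; meanwhile everything else (the $L-m$ leftover BB84 slots and the positions $p^{(i)}$) is public information that can be treated as junk both in the simulator and in the honest expansion.

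The main obstacle I anticipate is the edge case where the honest cut-and-choose cannot even proceed, namely when among the $L$ BB84 outcomes there are no ``+'' results or there are fewer than $m-1$ ``$0$''/``$1$'' results. Here the choice $L=4(m+\kappa)$ is essential: a Chernoff bound gives that these bad events each occur with probability $\fneg(\kappa)$, so they are absorbed into the negligible slack of the soundness parameter and do not interfere with the $\epsilon$-bound. Once this tail estimate is in place, the rest of the argument is a routine combination of sequential composability and the product-state decomposition noted above.
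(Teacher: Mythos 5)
Your proposal is correct and follows essentially the same route as the paper: sequential composability of $\fBB$ handles step~1, and step~2 is handled by a simulator that re-synthesizes the discarded BB84 slots from the $\fOneBlock$ target state and reverses the honest relocation (your explicit product-state decomposition and distribution-matching count is just a more detailed spelling-out of the paper's one-line ``prepare the discarded state and reverse the honest execution''). The only addition is your Chernoff-bound treatment of the case where no valid index tuple exists, which the paper leaves implicit; it is a reasonable extra care but does not change the argument.
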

\begin{proof}
	By sequential composability of RSPV (Section \ref{sec:3.4.1}) there exists a simulator that simulates the outputs of the first step from the state $\rho_{tar,BB84}^{\otimes L}$ where $\rho_{tar,BB84}$ to denote the target state of $\fBB$.\par
	Then the client samples indices, calculates $\bD^{(\tout)}_0,\bD^{(\tout)}_1$ and disgards the other registers. We note that, if the client does these operations from  $\rho_{tar,BB84}^{\otimes L}$, the remaining state could be perfectly simulated from \eqref{eq:44} (up to different locations of client-side registers): the simulator prepares the state being disgarded and reverses the honest execution.\par
	Combining the two simulators above completes the proof.
\end{proof}
    \begin{mdframed}[backgroundcolor=black!10]
        \begin{prtl}[$\fOneBlockTensor$]\label{prtl:5} The state family is $$\{\frac{1}{\sqrt{2}}(\ket{x_0}+\ket{x_1}):x_0,x_1\in \{0,1\}^m,\tHW(x_0\oplus x_1)=1,\tParity(x_0)=0\}^{\otimes n}$$.\par
            Parameters: problem size $1^m,1^n$, approximation error parameter $1^{1/\epsilon}$, security parameter $1^\kappa$.\par
            Output registers: client-side classical registers $\bK=(\bK^{(i)})_{i\in [n]}$, $\bK^{(i)}=(\bx^{(i)}_0,\bx^{(i)}_1)$, where each of both holds $m$ classical bits; client-side classical register $\bflag$ with value in $\{\fpass,\ffail\}$; server-side quantum register $\bQ=(\bQ^{(i)})_{i\in [n]}$, where each of them holds $m$ qubits.
			\begin{enumerate}
				\item For each $i\in [n]$:\begin{enumerate}
				 \item Execute $\fOneBlock(1^m,1^{n/\epsilon},1^\kappa)$. The client stores the outcome in $\bx^{(i)}_0,\bx^{(i)}_1$, and the server stores the outcome in $\bQ^{(i)}$.\end{enumerate}
				 The client sets $\bflag$ to be $\ffail$ if any round fails.
			\end{enumerate}
        \end{prtl}
    \end{mdframed}
	The completeness and efficiency are from the protocol description.
	\begin{prop}
		Protocol \ref{prtl:5} is $\epsilon$-sound.
	\end{prop}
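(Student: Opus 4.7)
The plan is to derive the soundness directly from the sequential composability result established in Section \ref{sec:3.4.1}. Protocol \ref{prtl:5} is literally an $n$-fold sequential repetition of $\fOneBlock(1^m, 1^{n/\epsilon}, 1^\kappa)$, each instance of which is $(\epsilon/n)$-sound by the preceding proposition, with the final $\bflag$ set to $\fpass$ iff every round passes. This is exactly the shape of Protocol \ref{prtl:sc} iterated $n$ times.

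First I would observe that the target state of Protocol \ref{prtl:5}, namely the tensor product
\[
\rho_{tar}^{(n)} \;=\; \bigotimes_{i\in[n]} \rho_{tar,\fOneBlock}
\]
where $\rho_{tar,\fOneBlock}$ is the uniform mixture over the one-block family given by \eqref{eq:44}, is precisely the tensor product of the per-round target states, matching the conclusion of Theorem on sequential composition. Then I would apply that theorem inductively: composing two $(\epsilon/n)$-sound RSPVs yields a $(2\epsilon/n)$-sound RSPV for the tensor product target, and iterating this $n-1$ times gives an $(n\cdot\epsilon/n)=\epsilon$-sound RSPV for $\rho_{tar}^{(n)}$.

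Concretely, the simulator is built compositionally: given any efficient adversary $\fAdv = (\fAdv_1, \ldots, \fAdv_n)$ against Protocol \ref{prtl:5}, for each $i$ the $(\epsilon/n)$-soundness of $\fOneBlock$ yields a simulator $\fSim_i$ for the $i$-th round; the overall simulator runs them in sequence (using temporary flag registers $\tilde\bflag^{(i)}$ in place of the real $\bflag$ registers as in the proof of the sequential composition theorem), and writes $\bflag = \ffail$ iff any $\tilde\bflag^{(i)} = \ffail$. The triangle inequality on the passing-space indistinguishabilities, applied round by round, yields the total approximation error $\epsilon$.

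I do not expect any serious obstacle: the protocol contains no new ingredient beyond repeated calls to $\fOneBlock$, and the composition theorem of Section \ref{sec:3.4.1} was stated in a form (two rounds, arbitrary target states and soundness errors) that iterates trivially. The only thing to double-check is the bookkeeping of the approximation parameter --- each $\fOneBlock$ instance is invoked with $1^{n/\epsilon}$ so that the per-round soundness error is $\epsilon/n$, and $n$ such errors add to $\epsilon$, exactly as needed. The efficiency claim is immediate since $n = \fpoly(\kappa)$ rounds of an efficient subprotocol remain efficient.
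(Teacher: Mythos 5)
Your proposal is correct and matches the paper's argument exactly: the paper proves this proposition by a single appeal to the sequential composition property of Section \ref{sec:3.4.1}, with each of the $n$ rounds invoked at approximation parameter $1^{n/\epsilon}$ so that the per-round errors $\epsilon/n$ sum to $\epsilon$. Your compositional construction of the simulator and the parameter bookkeeping are precisely the details the paper leaves implicit.
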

	The proof is by the sequential composition of RSPV (Section \ref{sec:3.4.1}).
    \subsection{Construction of $\fMultiBlock$}\label{sec:5.3}
	In this section we construct RSPV for states
	\begin{equation*}\{\frac{1}{\sqrt{2}}(\ket{x_0^{(1)}||x_0^{(2)}||\cdots ||x_0^{(n)}}+\ket{x_1^{(1)}||x_1^{(2)}||\cdots ||x_1^{(n)}}):\end{equation*}
\begin{equation}\label{eq:47rr}\forall i\in [n],x_0^{(i)},x_1^{(i)}\in \{0,1\}^m,\text{HW}(x_0^{(i)}\oplus x_1^{(i)})=1;\tParity(x_0^{(1)})=0\}\end{equation}
We will give the preRSPV protocol, analyze its information-theoretic core, and prove the soundness of the preRSPV and amplify it to an RSPV protocol.
	\subsubsection{PreRSPV protocol for \eqref{eq:47rr}}
    \begin{mdframed}[backgroundcolor=black!10]
        \begin{prtl}\label{prtl:8}
            The state family is \eqref{eq:47rr}. Below we construct a preRSPV for it.\par
            Parameters: problem size $1^m,1^n$,
            approximation error parameter $1^{1/\epsilon}$, security parameter $1^\kappa$. It is required that $\epsilon>11n/\sqrt{m}$.\par
			Output registers: client-side classical registers $\bK^{(\tout)}=(\bK^{(\tout)(i)})_{i\in [n]}$, $\bK^{(\tout)(i)}=(\bx^{(\tout)(i)}_0,\bx^{(\tout)(i)}_1)$, where each of both holds $m$ classical bits; client-side classical register $\bflag$ with value in $\{\fpass,\ffail\}$; server-side quantum register $\bQ=(\bQ^{(i)})_{i\in [n]}$, where each of them holds $m$ qubits.\par
			Take $\epsilon_0=\epsilon-10n/\sqrt{m}$.\par
			$\fMultiBlockTest$ is defined as:
            \begin{enumerate}
                \item Both parties run $\fOneBlockTensor(1^m,1^n,1^{1/\epsilon_0},1^\kappa)$; store the client-size output in register $\bK^{(\ttemp)}$ and the server-side output states in registers $\bQ$. The client sets $\bflag$ to be $\ffail$ if it fails.
                \item For each $i\in [2,\cdots n]$, the server evaluates $\fxorparity(1,i)$ defined as the xor of parities of $\bQ^{(1)}$ and $\bQ^{(i)}$. The server sends the value back to the client and the client stores it in $\bxor^{(i)}$.% and measures and stores the result in register $\bxor^{(i)}$.\\
                %The server calculates and measures $b^{(0)}+b^{(i)}$ and stores it in $xor^{(i)}$. 
				\item The client asks the server to measure all the $\bQ$ register on the computational basis and reveal the results. The client checks: \begin{itemize}\item For each $i\in [n]$ the server's measurement results for $\bQ^{(i)}$ is in $\bK^{(\ttemp)(i)}$.\item For each $i\in [2,\cdots n]$ the xor of parities of measurement results for $\bQ^{(1)}$ and $\bQ^{(i)}$ is the same as $\bxor^{(i)}$.\end{itemize}
				The client sets $\bflag$ to be $\ffail$ of check fails and $\fpass$ otherwise.
            \end{enumerate}\par
			$\fMultiBlockComp$ is defined as:
			\begin{enumerate}
				\item The same as the first step of $\fMultiBlockTest$.
				\item First do the same thing as the second step of $\fMultiBlockTest$.\par
				The client calculates $\bK^{(\tout)(i)}$ as follows: $\bK^{(\tout)(1)}=\bK^{(\ttemp)(1)}$. For $i\geq 2$ $(\bx^{(\tout)(i)}_0,\bx^{(\tout)(i)}_1)=(\bx^{(\ttemp)(i)}_{\bxor^{(i)}},\bx^{(\ttemp)(i)}_{1\oplus \bxor^{(i)}})$. 
			\end{enumerate}
        \end{prtl}
    \end{mdframed}
	% \begin{mdframed}
	% 	\begin{prtl}[]
	% 		TBA
	% 	\end{prtl}
	% \end{mdframed}
	The completeness and efficiency are from the protocol description. To prove its soundness, we first analyze its second and third steps, which is its ``information-theoretic core''.
	\subsubsection{Analysis of the information-theoretic core}\label{sec:5.3.2}
	Let's formalize a set-up that describes the information-theoretic core of Protocol \ref{prtl:8}.
	\begin{setup}\label{setup:11}
		Parameters: $1^m,1^n$.\par
		Consider the following registers as used in Protocol \ref{prtl:8}: client-side classical registers $\bK^{(\ttemp)}$ for holding the classical description of the states prepared by $\fOneBlockTensor$. Server-side quantum register $\bQ$ for holding the quantum states from $\fOneBlockTensor$. Client-side classical register $\bxor=(\bxor^{(2)},\bxor^{(3)}\cdots \bxor^{(n)})$ where each of them is a single bit.\par
		Part of the initial state is $\rho_{OBT}\in \tD(\cH_{\bK^{(\ttemp)}}\otimes\cH_{\bQ})$ where $\rho_{OBT}$ is the target state of $\fOneBlockTensor$. Use $\fsendxorparity$ to denote the operation that measures $\fxorparity(1,i)$ for each $i\in [2,n]$ and sends the results to register $\bxor$. Then the target state of Protocol \ref{prtl:8} is $\fsendxorparity(\rho_{OBT})$ (up to a change of client side descriptions). For modeling the initial states in the malicious setting, consider server-side quantum register $\bS$ and environment register $\bbE$. The initial states in the malicious setting that we consider could be described as $\rho_{OBT}\otimes\rho_0$ where $\rho_0\in \tD(\cH_{\bS}\otimes\cH_{\bbE})$.\par
		We use $(\fMultiBlockTest_{\geq 2},\fMultiBlockComp_{\geq 2})$ to denote the protocols that starts from the second step of Protocol \ref{prtl:8}. Below we introduce notations for describing each step of protocol executions.\begin{itemize}
		\item Use $\Pi_{\in \bK^{(\ttemp)}}^{\bQ}$ to denote the projection onto the space that the value of $\bQ$ is within $\bK^{(\ttemp)}$. Use $\Pi_{\fxorparity(\bQ)=\bxor}$ to denote the projection onto the space that for each $i\in [2,\cdots n]$ the xor of parities of $\bQ^{(1)}$ and $\bQ^{(i)}$ is the same as the value in $\bxor^{(i)}$. Thus the passing condition in the third step of $\fMultiBlockTest$ corresponds to\footnote{In other words, if the adversary could pass the checking in the third step of $\fMultiBlockTest$ with high probability, then consider the state before the adversary sends back the measurement results, this state is approximately in the space of $\Pi_{\in \bK^{(\ttemp)}}^{\bQ}\Pi_{\fxorparity(\bQ)=\bxor}$.} $\Pi_{\in \bK^{(\ttemp)}}^{\bQ}\Pi_{\fxorparity(\bQ)=\bxor}$.
		\item Suppose $\fAdv_2,\fAdv_3$ are two superoperators operated on $\bQ,\bS$, which describe the adversary's operations in the second and third steps; we could purify them to be unitaries on $\bQ,\bS$ and some reference registers. Use $\fsend_{\text{to }\bxor}$ to denote the message sending operation that sends the outcome of $\fAdv_2$ to $\bxor$.
	\end{itemize}\par
		For purifying the density operators, consider reference registers $\bR_{\bK^{(\ttemp)}},\bR_{\bxor}$. Use $\ket{\varphi_{OBT}}$ to denote the purification of $\rho_{OBT}$ where $\bK^{(\ttemp)}$ is purified by $\bR_{\bK^{(\ttemp)}}$.
	\end{setup}
We will show that if the adversary could pass in the test mode, the output state in the comp mode could be statistically simulated from the target state (we could simply work on $\fsendxorparity(\rho_{OBT})$, which is the same as the target state up to a client-side change of representation of descriptions). Below we define the simulator and the corresponding statement is Corollary \ref{cor:5.5}. Corollary \ref{cor:5.5} is a corollary of Proposition \ref{prop:5.3}, which relates the real execution with simulation using low-level descriptions. To prove Proposition \ref{prop:5.3}, we first prove Lemma \ref{lem:5.4}. These statements are stated and prove below.   
\begin{defn}\label{defn:simcons}
	Under Set-up \ref{setup:11}, for any $\fAdv_2$, define $\fSim$ as follows:
		\begin{enumerate}
			\item Apply $\fAdv_2$.
			\item Instead of doing $\fsend_{\text{to }\bxor}$, the simulator simply copies\footnote{We mean bit-wise CNOT.} the response to a temporary register $\bsimxor$. Denote this step as $\fsend_{\text{to }\bsimxor}$.\par
			Then $\bsimxor$ is disgarded. Denote this step as $\fDisgard_{\bsimxor}$.
		\end{enumerate}
\end{defn}
	\begin{prop}\label{prop:5.3}Under Set-up \ref{setup:11}, for each $\rho_0\in \tD(\cH_{\bS}\otimes\cH_{\bbE})$, $\fAdv_2,\fAdv_3$, there is
		 \begin{align}&\Pi_{\in \bK^{(\ttemp)}}^{\bQ}\Pi_{\fxorparity(\bQ)=\bxor}\circ\fAdv_3\circ\fsend_{\text{to }\bxor}\circ\fAdv_2(\rho_{OBT}\otimes\rho_0)\label{eq:462}\\\approx_{4n/\sqrt{m}}&\Pi_{\in \bK^{(\ttemp)}}^{\bQ}\circ\fDisgard_{\bsimxor}\circ\Pi_{\bxor=\bsimxor}\circ\fAdv_3\circ\fsend_{\text{to }\bsimxor}\circ\fAdv_2(\fsendxorparity(\rho_{OBT})\otimes\rho_0)\label{eq:472}\end{align}
		 Here $\Pi_{\bxor=\bsimxor}$ denotes the projection onto the space that the value of $\bxor$ is the same as the value of $\bsimxor$.
	\end{prop}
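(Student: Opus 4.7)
The plan is to work throughout with the purified versions of the states and operations as described in Section~\ref{sec:3.2.4}. I would first purify $\rho_{OBT}$ to $\ket{\varphi_{OBT}}$ on $\bR_{\bK^{(\ttemp)}}, \bK^{(\ttemp)}, \bQ$, treat $\rho_0$ as a pure state on $\bS \otimes \bbE$ (the general case follows by purifying $\bbE$ and giving the distinguisher access to the purifying register), and replace the CPTP maps $\fAdv_2, \fAdv_3$ by unitaries on $\bQ, \bS$ together with fresh output registers. Under this view, both sides of the claim are subnormalized pure states on common registers, and the trace distance reduces to comparing norms of state vectors. For each $i \in [2,n]$, let $P^{(i)}_{b_i}$ be the projector on $\cH_{\bQ^{(1)}} \otimes \cH_{\bQ^{(i)}}$ onto the xor-parity-$b_i$ subspace; these commute across blocks and partition $\cH_{\bQ}$ into subspaces $\{P_{\vec b}\}_{\vec b \in \{0,1\}^{n-1}}$. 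The state $\rho_{OBT}$ is a coherent superposition across these subspaces, whereas $\fsendxorparity(\rho_{OBT})$ is the incoherent mixture obtained by measuring $\vec b$ into $\bxor$.

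The crucial structural difference between \eqref{eq:462} and \eqref{eq:472} is that \eqref{eq:462} projects the \emph{final} $\bQ$ onto the subspace where $\fxorparity(\bQ) = \bxor$, whereas \eqref{eq:472} only forces the adversary's response $\bsimxor$ to match the \emph{initial} xor value set by $\fsendxorparity$, leaving the final $\bQ$'s xor parity unconstrained. I would decompose both expressions along $\vec b$; on the ``diagonal'' where the final $\bQ$'s parity pattern agrees with $\bxor$, the two sides produce identical contributions (once one recognizes that $\fsendxorparity$ followed by a final computational measurement of $\bQ$ yields the same classical distribution as the single final computational measurement together with the xor check in \eqref{eq:462}). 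The $4n/\sqrt{m}$ error must therefore come entirely from the off-diagonal contribution in \eqref{eq:472} where $\fAdv_3$ successfully flips the xor parity of $\bQ$ on some block while leaving $\bQ$ inside $\bK^{(\ttemp)}$.

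The main obstacle is bounding this off-diagonal contribution, which I expect to be exactly what Lemma~\ref{lem:5.4} provides. To flip the parity on block $i$ while staying inside $\bK^{(\ttemp)}$, the adversary must implement a transition with nonzero matrix element between $\ket{x_0^{(i)}}$ and $\ket{x_1^{(i)}}$, two computational basis states at Hamming distance exactly $1$. Since the key pair $(x_0^{(i)}, x_1^{(i)})$ is purified by $\bR_{\bK^{(\ttemp)}}$ and is inaccessible to $\fAdv_3$, one can average the squared matrix element $|\bra{x_1^{(i)}} U_3 \ket{x_0^{(i)}}|^2$ over the $m \cdot 2^{m-1}$ Hamming-$1$ pairs and bound the sum by the total row norm $\sum_{y} |\bra{y} U_3 \ket{x_0^{(i)}}|^2 = 1$; dividing by $m$ and taking a square root gives a per-block amplitude bound of $O(1/\sqrt{m})$. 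Summing over $n$ blocks by triangle inequality, and tracking the factor from the different possible error events (wrong block, wrong position within a block), yields the $4n/\sqrt{m}$ bound. The technical subtlety is carrying this block-by-block amplitude bound through the presence of an arbitrary side state $\rho_0$ entangled with the environment; the purified-joint-state formalism is what makes this clean, since it forces the adversary to see the secret key as genuinely uniform and so its operation factors through an $m$-dimensional position-guess with the stated amplitude bound.
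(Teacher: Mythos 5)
Your proposal is correct and follows essentially the same route as the paper's proof: purify the joint state, decompose along the key/parity branches, identify the diagonal contributions of the two sides, and bound the cross terms by the $O(1/\sqrt{m})$ amplitude for guessing the hidden position of the single differing bit in each block — which is precisely the content of Lemma \ref{lem:5.4}, applied (as in the paper) to both sides, so that the off-diagonal error arises from \eqref{eq:462} as well as \eqref{eq:472} and the two $2n/\sqrt{m}$ contributions sum to $4n/\sqrt{m}$.
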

	The form of Proposition \ref{prop:5.3} strikes a balance between what is easy to prove and what we want. What we want is something like ``if the test mode passes, the output state of the comp mode is the same as the output of $\fSim$''. Such a statement is given in Corollary \ref{cor:5.5} which will be proved as a corollary of Proposition \ref{prop:5.3}. So what does Proposition \ref{prop:5.3} mean? We first note in \eqref{eq:472} there are indeed the operators used in the construction of $\fSim$. Then based on the operations of $\fSim$, \eqref{eq:472} continues to apply $\fAdv_3$ and $\Pi_{\in \bK^{(\ttemp)}}^{\bQ}$, and insert a projection $\Pi_{\bxor=\bsimxor}$, which makes it a simulation of a subspace of the test mode. Proposition \ref{prop:5.3} does not rely on the condition that the test mode passes with high probability; in Corollary \ref{cor:5.5} we will combine this condition with Proposition \ref{prop:5.3} and remove these extra operators to get a simulation of the comp mode.\par %In Proposition \ref{prop:5.3} we state something that is easier to prove and 
	We will purify and decompose the state to prove Proposition \ref{prop:5.3}. To do this let's first prove a lemma.
	\begin{lem}\label{lem:5.4}
		For any normalized pure state $\ket{\varphi_0}\in \cH_{\bS}\otimes\cH_{\bbE}$, any operator $O$ that operates on $\bQ,\bS$ and could be written as unitaries and projections, there is
		\begin{equation}\label{eq:48ne}\Pi_{\in \bK^{(\ttemp)}}^{\bQ}O(\ket{\varphi_{OBT}}\otimes\ket{\varphi_0})\approx_{2n/\sqrt{m}}\sum_{\vec{b}\in \{0,1\}^n}\Pi_{\vec{\bx}_{\vec{b}}^{(\ttemp)}}^{\bQ}O\Pi_{\vec{\bx}_{\vec{b}}^{(\ttemp)}}^{\bQ}(\ket{\varphi_{OBT}}\otimes\ket{\varphi_0})\end{equation}
		where $\Pi_{\vec{\bx}_{\vec{b}}^{(\ttemp)}}^{\bQ}$ denotes the projection onto the space that the value of $\bQ$ is the same as $\bx^{(\ttemp)(1)}_{b^{(1)}}||\bx^{(\ttemp)(2)}_{b^{(2)}}\cdots \bx^{(\ttemp)(n)}_{b^{(n)}}$, where we use $b^{(1)}b^{(2)}\cdots b^{(n)}$ to denote the coordinates of $\vec{b}$.
	\end{lem}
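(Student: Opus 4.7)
The plan is to rewrite both sides as sums over ``branches'' indexed by $\vec b \in \{0,1\}^n$ and reduce to bounding the off-diagonal ``cross-branch'' terms via a hybrid argument across blocks combined with averaging over the client's hidden keys. First, because $x_0^{(i)}$ and $x_1^{(i)}$ differ in exactly one bit for each $i$, the $2^n$ concatenated strings $\bx_{\vec b}^{(\ttemp)}$ are mutually distinct for each key assignment, so $\Pi_{\in \bK^{(\ttemp)}}^{\bQ} = \sum_{\vec b} \Pi_{\vec{\bx}_{\vec b}^{(\ttemp)}}^{\bQ}$; moreover the $\bQ$-part of $\ket{\varphi_{OBT}}$ is a uniform superposition of these branch strings and thus lies in the range of this sum. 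Substituting these two identities, the difference between the two sides of \eqref{eq:48ne} equals the off-diagonal sum $\sum_{\vec b \neq \vec b'} \Pi_{\vec{\bx}_{\vec b}^{(\ttemp)}}^{\bQ} O \Pi_{\vec{\bx}_{\vec b'}^{(\ttemp)}}^{\bQ}(\ket{\varphi_{OBT}}\otimes\ket{\varphi_0})$.

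Next, I would partition the off-diagonal pairs $(\vec b, \vec b')$ by the first block $k \in [n]$ at which they disagree, obtaining operators $D_1, \ldots, D_n$ whose sum is the full off-diagonal; by triangle inequality it suffices to prove $\|D_k(\ket{\varphi_{OBT}}\otimes\ket{\varphi_0})\|^2 \leq 2/m$ for each $k$. For a fixed key assignment, the summands of $D_k$ produce mutually orthogonal outputs (distinguished by $\bQ^{(<k)}$ or by $\bQ^{(k)}$), so $\|D_k\ket{\phi_{\text{keys}}}\|^2$ decomposes as a sum over $(\vec c \in \{0,1\}^{k-1}, b^{(k)} \in \{0,1\})$ of per-summand squared norms. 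Dropping the outer projections upper bounds each summand by $\tfrac{1}{2^k}\|\bra{x_{b^{(k)}}^{(k)}}_{\bQ^{(k)}} O\ket{x_{\vec c}^{(<k)}}\ket{x_{1\oplus b^{(k)}}^{(k)}}\ket{\Psi^{>k}_{\text{keys}}}\|^2$. The crucial step is averaging over the keys of block $k$, whose uniform distribution places $(x_0^{(k)}, p^{(k)})$ uniformly subject to $\tParity(x_0^{(k)}) = 0$: the sum over $p^{(k)}\in[m]$ is dominated by a sum over all basis inputs $y\in\{0,1\}^m$, and the sum over $x_0^{(k)}$ is in turn bounded via $\sum_{x_0^{(k)}}\|\bra{x_0^{(k)}} O(\cdots)\ket{y}\|^2 \leq \|O(\cdots)\ket{y}\|^2 \leq 1$ using the contraction property of $O$, giving a total of at most $2^m$. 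Dividing by $|K_k|=2^{m-1}m$ yields an average of at most $2/m$; summing over the $2^k$ terms gives $\|D_k(\ket{\varphi_{OBT}}\otimes\ket{\varphi_0})\|^2 \leq 2/m$, and summing the square roots over $k\in [n]$ produces the bound $n\sqrt{2/m} \leq 2n/\sqrt{m}$.

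The main obstacle is this per-block averaging step: one has to exploit that $O$ acts only on $\bQ,\bS$ and therefore cannot see the secret positions $p^{(k)}$ where $x_0^{(k)}$ and $x_1^{(k)}$ differ, so that its amplitude for ``hopping'' from one valid branch to another is small after averaging over the hidden keys. Working with the purified state (with $\bR_{\bK^{(\ttemp)}}$ purifying the classical key register as in the set-up) is what lets the squared Euclidean norm of the full state decompose cleanly into an average of per-key squared norms, so that the averaging argument closes.
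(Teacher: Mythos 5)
Your proposal is correct and takes essentially the same route as the paper: your grouping of the off-diagonal terms by the first block of disagreement yields exactly the same per-block cross operators as the paper's telescoping hybrid over $t\in[0,n]$, and your averaging-over-keys computation (reindexing the block-$k$ key sum and using orthogonality of the output projections plus contractivity of $O$) is a fully spelled-out version of the paper's one-line ``guessing one of $x_0^{(t)},x_1^{(t)}$ given the other among $m$ equally likely flip positions'' bound. Your per-block bound $\sqrt{2/m}$ is in fact slightly tighter than the paper's $2/\sqrt{m}$, so the final constant goes through.
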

	A variant of this lemma where the states are computationally indistinguishable is used in \cite{cvqcinlt}. Below we give the proof.
	\begin{proof}[Proof of Lemma \ref{lem:5.4}]
		We consider a sequence of states in the following form, for each $i\in [0,n]$:
		\begin{equation}\label{eq:49ne}\Pi_{\in \bK^{(\ttemp)(>t)}}^{\bQ^{(>t)}}\sum_{\vec{b}\in \{0,1\}^t}\Pi_{\vec{\bx}_{\vec{b}}^{(\ttemp)(\leq t)}}^{\bQ^{(\leq t)}}O\Pi_{\vec{\bx}_{\vec{b}}^{(\ttemp)(\leq t)}}^{\bQ^{(\leq t)}}(\ket{\varphi_{OBT}}\otimes\ket{\varphi_0})\end{equation}
		where $\Pi_{\vec{\bx}_{\vec{b}}^{(\ttemp)(\leq t)}}^{\bQ^{(\leq t)}}$ denotes the projection onto the space that the value of $\bQ^{(\leq t)}$ is the same as $\bx^{(\ttemp)(1)}_{b^{(1)}}||\bx^{(\ttemp)(2)}_{b^{(2)}}\cdots \bx^{(\ttemp)(t)}_{b^{(t)}}$, where we use $b^{(1)}b^{(2)}\cdots b^{(t)}$ to denote the coordinates of $\vec{b}$, and $\Pi_{\in \bK^{(\ttemp)(>t)}}^{\bQ^{(>t)}}$ denotes the projection onto the space that the value of $\bQ^{(> t)}$ is within $\bK^{(\ttemp)(> t)}$. Thus for $t=0$ we get the left hand side of \eqref{eq:48ne} and for $t=n$ we get the right hand side of \eqref{eq:48ne}.\par
		We only need to prove the difference of \eqref{eq:49ne} has difference at most $\frac{2}{\sqrt{m}}$ for between $t-1,t$ for each $t\in [n]$. The subtraction of these two states gives:
		\begin{equation*}\Pi_{\in \bK^{(\ttemp)(>t)}}^{\bQ^{(>t)}}\sum_{\vec{b}\in \{0,1\}^{t-1}}\Pi_{=\bx_{1}^{(\ttemp)(t)}}^{\bQ^{(t)}}\Pi_{\vec{\bx}_{\vec{b}}^{(\ttemp)(< t)}}^{\bQ^{(< t)}}O\Pi_{\vec{\bx}_{\vec{b}}^{(\ttemp)(< t)}}^{\bQ^{(< t)}}\Pi_{=\bx_{0}^{(\ttemp)(t)}}^{\bQ^{(t)}}(\ket{\varphi_{OBT}}\otimes\ket{\varphi_0})\end{equation*}
		\begin{equation}\label{eq:50ne}+\Pi_{\in \bK^{(\ttemp)(>t)}}^{\bQ^{(>t)}}\sum_{\vec{b}\in \{0,1\}^{t-1}}\Pi_{=\bx_{0}^{(\ttemp)(t)}}^{\bQ^{(t)}}\Pi_{\vec{\bx}_{\vec{b}}^{(\ttemp)(< t)}}^{\bQ^{(< t)}}O\Pi_{\vec{\bx}_{\vec{b}}^{(\ttemp)(< t)}}^{\bQ^{(< t)}}\Pi_{=\bx_{1}^{(\ttemp)(t)}}^{\bQ^{(t)}}(\ket{\varphi_{OBT}}\otimes\ket{\varphi_0})\end{equation}
		each of them could be understood as guessing one of $\bx^{(\ttemp)(t)}$ given the other. The relation between $\bx^{(\ttemp)(t)}_0,\bx^{(\ttemp)(t)}_1$ is that they differ in one bit, thus there are $m$ choices that are equally possible. Thus the norm of \eqref{eq:50ne} could be upper bounded by $2/\sqrt{m}$, which completes the proof.
	\end{proof}
	\begin{proof}[Proof of Proposition \ref{prop:5.3}]
		Consider the purification of Proposition \ref{prop:5.3}. Then $\rho_{0}$ is replace by a pure state $\ket{\varphi_0}$ in $\cH_{\bS}\otimes\cH_{\bbE}$, $\rho_{OBT}$ is replaced by $\ket{\varphi_{OBT}}$, $\fAdv$s are considered to be unitaries, $\fsend$ operators will also copy the information to the corrsponding reference register (corresponding to $\bxor$, it's $\bR_{\bxor}$). The partial trace $\fDisgard_{\bsimxor}$ is handled as follows: the purification simply removes it from \eqref{eq:472} and requires that the purification of both sides of \eqref{eq:462}\eqref{eq:472} could be transformed to each other by a local operaion on $\bR_{\bxor}$ and ${\bsimxor}$ (in other words, we do not require the purified states to be close to each other; what we need is only the closeness when $\bR_{\bxor}$ and ${\bsimxor}$ are all traced out).\par
		Thus \eqref{eq:462} becomes
		\begin{equation}\label{eq:512}\Pi_{\in \bK^{(\ttemp)}}^{\bQ}\Pi_{\fxorparity(\bQ)=\bxor}\fAdv_3\fsend_{\text{to }\bxor}\fAdv_2(\ket{\varphi_{OBT}}\otimes\ket{\varphi_0})\end{equation}
		and \eqref{eq:472} becomes
		\begin{equation}\label{eq:522}\Pi_{\in \bK^{(\ttemp)}}^{\bQ}\Pi_{\bxor=\bsimxor}\fAdv_3\fsend_{\text{to }\bsimxor}\fAdv_2(\fsendxorparity\ket{\varphi_{OBT}}\otimes\ket{\varphi_0})\end{equation}
		Applying Lemma \ref{lem:5.4} to \eqref{eq:512}\eqref{eq:522}, we get that \eqref{eq:512} is $(2n/\sqrt{m})$-close to
		\begin{equation}\label{eq:532}
			\sum_{\vec{b}\in \{0,1\}^n}\Pi_{\vec{\bx}_{\vec{b}}^{(\ttemp)}}^{\bQ}\Pi_{\fxorparity(\bQ)=\bxor}\fAdv_3\fsend_{\text{to }\bxor}\fAdv_2\Pi_{\vec{\bx}_{\vec{b}}^{(\ttemp)}}^{\bQ}(\ket{\varphi_{OBT}}\otimes\ket{\varphi_0})
		\end{equation}
		and \eqref{eq:522} is $(2n/\sqrt{m})$-close to
		\begin{equation}\label{eq:542}
			\sum_{\vec{b}\in \{0,1\}^n}\Pi_{\vec{\bx}_{\vec{b}}^{(\ttemp)}}^{\bQ}\Pi_{\bxor=\bsimxor}\fAdv_3\fsend_{\text{to }\bsimxor}\fAdv_2(\fsendxorparity\Pi_{\vec{\bx}_{\vec{b}}^{(\ttemp)}}^{\bQ}\ket{\varphi_{OBT}}\otimes\ket{\varphi_0})
		\end{equation}
		We could compare \eqref{eq:532} and \eqref{eq:542} directly. Both of them are a summation of $2^n$ different branches so we could compare each branch of them. On each branch given by $\Pi_{\vec{\bx}_{\vec{b}}^{(\ttemp)}}^{\bQ}$, $\fxorparity(\bQ)$ has a fixed value; for simplicity let's denote it as $\alpha$. In \eqref{eq:532} $\fsend$ sends a value to $\bxor$ (which is not necessarily $\alpha$ but contains a sub-branch with value $\alpha$) and finally a projection onto $\alpha$ is performed (notice that we could assume \eqref{eq:532} first do $\Pi_{\vec{\bx}_{\vec{b}}^{(\ttemp)}}^{\bQ}$ and then do $\Pi_{\fxorparity(\bQ)=\bxor}$). In \eqref{eq:542} $\fsendxorparity$ writes $\alpha$ to $\bxor$, and $\fsend$ sends a value to $\bsimxor$ (which is not necessarily $\alpha$ but contains a sub-branch with value $\alpha$) and finally a projection $\Pi_{\bxor=\bsimxor}$ is performed. We could see the only difference is the usage of $\bsimxor$; thus \eqref{eq:532}\eqref{eq:542} are the same up to a local operation on $\bsimxor,\bR_{\bxor}$.\par
		This completes the proof.
	\end{proof}
	\begin{cor}\label{cor:5.5}
	Under Set-up \ref{setup:11}, $(\fMultiBlockTest_{\geq 2},\fMultiBlockComp_{\geq 2})$ is $(n/\sqrt{m},10n/\sqrt{m})$-sound.
	\end{cor}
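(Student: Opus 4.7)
The plan is to deduce Corollary~\ref{cor:5.5} directly from Proposition~\ref{prop:5.3}. Fix an adversary $(\fAdv_2,\fAdv_3)$ whose test mode passes with probability at least $1-n/\sqrt{m}$, and let $\fSim$ be the simulator built from $\fAdv_2$ via Definition~\ref{defn:simcons}.

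First I would reinterpret the test-mode passing projection as $\Pi_{\in\bK^{(\ttemp)}}^{\bQ}\Pi_{\fxorparity(\bQ)=\bxor}\circ\fAdv_3$ applied after $\fsend_{\text{to }\bxor}\circ\fAdv_2$, so that the hypothesis translates to a trace lower bound of $1-n/\sqrt{m}$ on the LHS of Proposition~\ref{prop:5.3}. By Proposition~\ref{prop:5.3} the RHS therefore has trace at least $1-5n/\sqrt{m}$. Since $\fAdv_3$ is CPTP, $\fDisgard_{\bsimxor}$ is trace-preserving, and $\fAdv_3$ commutes with the client-side projection $\Pi_{\bxor=\bsimxor}$, dropping the $\Pi_{\in\bK^{(\ttemp)}}^{\bQ}$ only increases the trace; this yields
\[\tr\bigl(\Pi_{\bxor=\bsimxor}\,\fsend_{\text{to }\bsimxor}\,\fAdv_2(\fsendxorparity(\rho_{OBT})\otimes\rho_0)\bigr)\geq 1-5n/\sqrt{m},\]
meaning that on the simulator side $\fAdv_2$'s response equals the honest xor-parity values except with probability $5n/\sqrt{m}$.

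Second I would reduce the soundness inequality to a trace-distance bound between the comp-mode output $\fsend_{\text{to }\bxor}\fAdv_2(\rho_{OBT}\otimes\rho_0)$ and the $\fSim$ output $\fDisgard_{\bsimxor}\fsend_{\text{to }\bsimxor}\fAdv_2(\fsendxorparity(\rho_{OBT})\otimes\rho_0)$; the client-side key relabeling is identical on both sides and does not affect the distinguishing advantage. These two outputs differ in two ways: (i) the $\bxor$ register carries $\fAdv_2$'s response in the comp versus the honest value in the $\fSim$, and (ii) $\fAdv_2$ acts on the uncollapsed $\rho_{OBT}$ versus on the collapsed $\fsendxorparity(\rho_{OBT})$. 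On the $\bxor=\bsimxor$ subspace (weight at least $1-5n/\sqrt{m}$ by the previous step), difference (i) disappears because the two labels carry the same value and disgarding $\bsimxor$ leaves exactly the comp-side register; difference (ii) is precisely the content of Proposition~\ref{prop:5.3} specialised to $\fAdv_3=I$, whose branch-decomposition argument (Lemma~\ref{lem:5.4}) attenuates the off-diagonal contributions between distinct xor-parity sectors by at most $4n/\sqrt{m}$ in trace distance on the matching subspace. Summing these contributions (plus a further $n/\sqrt{m}$ slack from the orthogonal complement on the comp side) yields the desired $10n/\sqrt{m}$ bound.

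The main obstacle is the careful triangle-inequality accounting in the second step, ensuring no double counting between the Proposition~\ref{prop:5.3} slack and the $\bxor\neq\bsimxor$ complement. The cleanest presentation fixes an arbitrary distinguisher $D$, absorbs $\fAdv_3$ together with the passing-check projection $\Pi_{\in\bK^{(\ttemp)}}^{\bQ}\Pi_{\fxorparity(\bQ)=\bxor}$ into $D$ to form an extended distinguisher $D'$, and bounds $D$'s advantage on comp versus $\fSim$ directly by the Proposition~\ref{prop:5.3} trace distance plus the orthogonal-complement bounds established in the first step.
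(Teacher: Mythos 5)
Your first step is sound and matches the paper's accounting: the passing hypothesis gives $\tr(\eqref{eq:462})\geq 1-n/\sqrt{m}$, Proposition \ref{prop:5.3} transfers this to $\tr(\eqref{eq:472})\geq 1-5n/\sqrt{m}$, and since $\Pi_{\bxor=\bsimxor}$ commutes with $\fAdv_3$ while $\fDisgard_{\bsimxor}$ and $\fAdv_3$ are trace-preserving, you may drop $\Pi_{\in \bK^{(\ttemp)}}^{\bQ}$ and $\fAdv_3$ to get the stated weight of the matching subspace.

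The gap is in your handling of difference (ii). Proposition \ref{prop:5.3} specialised to $\fAdv_3=I$ only compares the two states \emph{after} the projections $\Pi_{\in \bK^{(\ttemp)}}^{\bQ}\Pi_{\fxorparity(\bQ)=\bxor}$ (resp.\ $\Pi_{\in \bK^{(\ttemp)}}^{\bQ}\Pi_{\bxor=\bsimxor}$), and you have no lower bound on the weight of $\Pi_{\in \bK^{(\ttemp)}}^{\bQ}$ applied directly after $\fAdv_2$: the test-passing hypothesis constrains this projection only after the \emph{real} $\fAdv_3$ (the server's response to the measure-everything request) has acted. Without that projection, $\fAdv_2$ acting on the coherent $\rho_{OBT}$ and on the dephased $\fsendxorparity(\rho_{OBT})$ can be genuinely far apart even on the $\bxor=\bsimxor$ subspace --- the cross terms between parity sectors are only suppressed by Lemma \ref{lem:5.4} once the value of $\bQ$ has been forced into $\bK^{(\ttemp)}$ --- so ``Proposition \ref{prop:5.3} with $\fAdv_3=I$'' does not bound difference (ii). The route gestured at in your final paragraph is the one that works, and it is the paper's: keep the real $\fAdv_3$ (purified to a unitary per Set-up \ref{setup:11}) throughout; use the two trace lower bounds to conclude that \eqref{eq:462} and \eqref{eq:472} are respectively $(n/\sqrt{m})$- and $(5n/\sqrt{m})$-close to their unprojected counterparts $\fAdv_3\circ\fsend_{\text{to }\bxor}\circ\fAdv_2(\rho_{OBT}\otimes\rho_0)$ and $\fAdv_3\circ\fSim(\fsendxorparity(\rho_{OBT})\otimes\rho_0)$; chain the three closenesses to get $10n/\sqrt{m}$; and then \emph{invert} the unitary $\fAdv_3$, which preserves trace distance. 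Note that ``absorbing $\fAdv_3$ and the check into the distinguisher'' does not literally bound the advantage of an arbitrary $D$ (it only bounds distinguishers of the special form $D\circ\Pi\circ\fAdv_3$); the inversion step, and hence the purification of $\fAdv_3$ to a unitary, is what makes the argument close.
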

	\begin{proof}[Proof of Corollary \ref{cor:5.5}]
		For any adversary $\fAdv$ and $\rho_0\in \tD(\cH_{\bS}\otimes\cH_{\bbE})$, $\fMultiBlockTest_{\geq 2}$ passes with probability $\geq 1-n/\sqrt{m}$ translates to:
		\begin{equation}\label{eq:552}\tr(\Pi_{\in \bK^{(\ttemp)}}^{\bQ}\Pi_{\fxorparity(\bQ)=\bxor}\circ\fAdv_3\circ\fsend_{\text{to }\bxor}\circ\fAdv_2(\rho_{OBT}\otimes\rho_0))\geq 1-n/\sqrt{m}\end{equation}
		This is as given in \eqref{eq:462}, which by Proposition \ref{prop:5.3} is $(4n/\sqrt{m})$-close to \eqref{eq:472}. Both \eqref{eq:462}\eqref{eq:472} has a form of projecting a trace-1 operator to a subspace; this fact, \eqref{eq:462}\eqref{eq:472} and \eqref{eq:552} imply that the states before the projection are also close to each other:\footnote{The details are as follows. We could first use \eqref{eq:552} and \eqref{eq:462}\eqref{eq:472} to show the trace of \eqref{eq:472} is $(5n/\sqrt{m})$-close to 1, which implies that \eqref{eq:472} is $(5n/\sqrt{m})$-close to the state before the projection. Combining it with \eqref{eq:552} and \eqref{eq:462}\eqref{eq:472} completes the proof.}
		\begin{equation}
			\fAdv_3\circ\fsend_{\text{to }\bxor}\circ\fAdv_2(\rho_{OBT}\otimes\rho_0)\approx_{10n/\sqrt{m}}\fAdv_3\circ\fSim(\fsendxorparity(\rho_{OBT})\otimes\rho_0)
		\end{equation}
		where we encapsulate \eqref{eq:472} using the construction of $\fSim$ in Definition \ref{defn:simcons}.\par
		Inverting $\fAdv_3$ completes the proof.
	\end{proof}
		\subsubsection{Compilation and amplification to preRSPV and RSPV}\label{sec:5.3.3}
		We could first prove the soundness of Protocol \ref{prtl:8} using Corollary \ref{cor:5.5} and the soundness of $\fOneBlockTensor$. (see \ref{sec:3.4.2}).
		\begin{thm}
			Protocol \ref{prtl:8} is $(\epsilon-11n/\sqrt{m},\epsilon)$-sound.
		\end{thm}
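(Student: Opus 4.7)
\emph{Proof plan.} The plan is to combine two ingredients via the compilation framework of Sections~\ref{sec:3.4.2} and \ref{sec:3.5.3}: the $\epsilon_0$-soundness of the inner $\fOneBlockTensor$ subprotocol, where the protocol sets $\epsilon_0=\epsilon-10n/\sqrt{m}$, together with the soundness of the information-theoretic core $(\fMultiBlockTest_{\geq 2},\fMultiBlockComp_{\geq 2})$ established in Corollary~\ref{cor:5.5}. I would read Corollary~\ref{cor:5.5} in the parameterized form apparent from its proof: for any admissible threshold $\delta$, the IT-core is $(\delta,\,2\delta+O(n/\sqrt{m}))$-sound, since the $4n/\sqrt{m}$ bound in Proposition~\ref{prop:5.3} is unconditional and the remaining slack comes solely from relating projected and unprojected states.

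Concretely, fix any efficient adversary $\fAdv=(\fAdv_1,\fAdv_{\geq 2})$ and any initial state $\rho_0\in\tD(\cH_{\bS}\otimes\cH_{\bbE})$ for which the test-mode passing probability of Protocol~\ref{prtl:8} is at least $1-(\epsilon-11n/\sqrt{m})$. First I would invoke the RSPV soundness of $\fOneBlockTensor$ (Definition~\ref{defn:3.2}) to produce an efficient simulator $\fSim_1$ with
\begin{equation*}
\Pi_{\fpass}^{\bflag^{(1)}}\bigl(\pi_1^{\fAdv_1}(\rho_0)\bigr)\;\approx^{ind}_{\epsilon_0}\;\fSim_1(\rho_{OBT}\otimes\rho_0),
\end{equation*}
where $\rho_{OBT}$ denotes the target state of $\fOneBlockTensor$. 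Because $\fSim_1$ acts only on server-side registers, the composition $\fAdv':=\fAdv_{\geq 2}\circ\fSim_1$ can be absorbed into a fresh adversary attacking $(\fMultiBlockTest_{\geq 2},\fMultiBlockComp_{\geq 2})$ on the resource state $\rho_{OBT}\otimes\rho_0$, which is precisely the setting of Set-up~\ref{setup:11}. Taking the remainder of the test protocol (steps~2--3 together with the final passing check) as a distinguisher, the hypothesised lower bound on the passing probability translates, up to additive error $\epsilon_0$, into a lower bound on the test-mode passing probability of $\fMultiBlockTest_{\geq 2}$ against $\fAdv'$ starting from $\rho_{OBT}\otimes\rho_0$. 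Feeding this into Corollary~\ref{cor:5.5} yields an efficient simulator $\fSim_2$ such that the comp-mode output $\fMultiBlockComp_{\geq 2}^{\fAdv'}(\rho_{OBT}\otimes\rho_0)$ is statistically close to $\fSim_2(\fsendxorparity(\rho_{OBT})\otimes\rho_0)$.

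Finally, I would assemble the overall simulator $\fSim$ for Protocol~\ref{prtl:8} as $\fSim_1$ composed with $\fSim_2$ followed by the deterministic client-side key-renaming of step~3 of $\fMultiBlockComp$, which is precisely what turns $\fsendxorparity(\rho_{OBT})$ into the Protocol~\ref{prtl:8} target state. A triangle-inequality chain---first replacing the real post-step-1 state by $\fSim_1(\rho_{OBT}\otimes\rho_0)$ at cost $\epsilon_0$ (using the compilation lemma of Section~\ref{sec:3.4.2}), then replacing the IT-core comp-mode output by $\fSim_2(\cdot)$ using Corollary~\ref{cor:5.5}---delivers the claimed $\epsilon$-indistinguishability on the passing space. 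The main technical obstacle is bookkeeping: one must check that the passing-probability degradation incurred by the compilation step still leaves enough margin to apply Corollary~\ref{cor:5.5}'s precondition, and that the sum of the RSPV error $\epsilon_0$ and the IT-core error stays within the budget $\epsilon$. This is precisely why the protocol reserves the buffer $\epsilon_0=\epsilon-10n/\sqrt{m}$ and places the passing-probability threshold at $1-(\epsilon-11n/\sqrt{m})$ rather than at $1-\epsilon$, carving out the $n/\sqrt{m}$-scale slack demanded by Proposition~\ref{prop:5.3}.
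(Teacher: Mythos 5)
Your proposal follows essentially the same route as the paper's proof: invoke the $\epsilon_0$-soundness of $\fOneBlockTensor$ to replace the post-step-1 state by $\fSim_1(\rho_{OBT}\otimes\rho_0)$, absorb $\fSim_1$ into the adversary so that Corollary~\ref{cor:5.5} applies to $(\fMultiBlockTest_{\geq 2},\fMultiBlockComp_{\geq 2})$ starting from $\rho_{OBT}\otimes\rho_0$, and then compose the two simulators together with the client-side key relabelling. The one delicate point you flag --- whether the passing-probability loss incurred in the compilation step still meets the $\geq 1-n/\sqrt{m}$ precondition of Corollary~\ref{cor:5.5} --- is precisely the step the paper's own proof asserts without computation, so your plan is faithful to, and no less rigorous than, the published argument.
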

		\begin{proof}
			By the soundness of $\fOneBlockTensor$ the output state of the first step of Protocol \ref{prtl:8} could be simulated from $\rho_{OBT}$ with approximation error $\epsilon_0=\epsilon-10n/\sqrt{m}$. This together with the condition that $\fMultiBlockTest$ passes with probability $\geq 1-(\epsilon-11n/\sqrt{m})$ implies that $\fMultiBlockTest_{\geq 2}$ passes with probability $\geq 1-n/\sqrt{m}$ (against the corresponding adversary). By Corollary \ref{cor:5.5} $\fMultiBlockComp_{\geq 2}$ is simulated by $\fSim$ with approximation error $10n/\sqrt{m}$. Combining $\fSim$ with the simulator for the first step of Protocol \ref{prtl:8} gives a simulator for $\fMultiBlockComp$ (up to a change of client-side representation of descriptions). This completes the proof.
		\end{proof}
		Then we could amplify Protocol \ref{prtl:8} (preRSPV for \eqref{eq:47rr}) to an RSPV protocol:
	\begin{mdframed}[backgroundcolor=black!10]
		\begin{prtl}[$\fMultiBlock$]
			The set-up is the same as Protocol \ref{prtl:8}.\par
			%It is required that $\epsilon>12n/\sqrt{m}$.\par
			The protocol is the amplification procedure (Protocol \ref{prtl:1}) running on Protocol \ref{prtl:8}.
		\end{prtl}
	\end{mdframed}
	The completeness, efficiency and soundness are from the properties of Protocol \ref{prtl:8} and the preRSPV-to-RSPV amplification procedure (Section \ref{sec:preRSPVdef}).
    \subsection{Construction of $\fKP$}\label{sec:5.4}
	\begin{mdframed}[backgroundcolor=black!10]
		To formalize the protocol, define function $\tutob:\{0,1\}^m\rightarrow\{0,1\}^{\log_2(m)}$ as follows: given a string in the form of $000\cdots 1000\cdots$, it outputs the binary representation for the locations of number $1$.
		\begin{prtl}[$\fKP$]\label{prtl:10} This is the RSPV for state family $\{\frac{1}{\sqrt{2}}(\ket{0}\ket{x_0}+\ket{1}\ket{x_1}):x_0,x_1\in \{0,1\}^n\}$.\par
		Parameters: problem size $1^n$, approximation error parameter $1^{1/\epsilon}$, security parameter $1^\kappa$.\par
			Output registers: client-side classical registers $\bK^{(\tout)}=(\bx^{(\tout)}_0,\bx^{(\tout)}_1)$, where each of both holds $n$ bits; client-side classical register $\bflag$ with value in $\{\fpass,\ffail\}$; server-side quantum register $\bQ^{(\tout)}=(\bQ^{(\tout)(\tsubs)},\bQ^{(\tout)(\tkey)})$ which hold $1$ qubit and $n$ qubits correspondingly.\par
			Take $n_0=2n$, $m_0$ to be the smallest power of 2 such that $m>(12n_0/\epsilon)^2$.\par
			\begin{enumerate}
				\item Execute $\fMultiBlock(1^{m_0},1^{n_0},1^{1/\epsilon},1^\kappa)$. The client stores the outcome in $\bK^{(\ttemp)}=(\bK^{(\ttemp)(i)})_{i\in [n_0]},\bK^{(\ttemp)(i)}=(\bx_0^{(\ttemp)(i)},\bx_1^{(\ttemp)(i)})$ and the server stores the outcome in $\bQ^{(\ttemp)}$. In the honest setting the server holds\begin{equation*}\{\frac{1}{\sqrt{2}}(\ket{x_0^{(\ttemp)(1)}||x_0^{(\ttemp)(2)}||\cdots ||x_0^{(\ttemp)(2n)}}+\ket{x_1^{(\ttemp)(1)}||x_1^{(\ttemp)(2)}||\cdots ||x_1^{(\ttemp)(2n)}}):\end{equation*}
				\begin{equation}\label{eq:47af}\forall i\in [2n],x_0^{(\ttemp)(i)},x_1^{(\ttemp)(i)}\in \{0,1\}^m,\text{HW}(x_0^{(\ttemp)(i)}\oplus x_1^{(\ttemp)(i)})=1;\tParity(x_0^{(\ttemp)(1)})=0\}\end{equation}
				and the client holds all these keys.\par
				The client sets $\bflag$ to be $\ffail$ if this step fails.
				\item For each $i\in [n_0/2]$, the client sends the following information to the server:
				\begin{itemize}
					\item $x^{(\ttemp)(2i-1)}_0,x^{(\ttemp)(2i)}_1$;
					\item The bits of $\tutob(x^{(\ttemp)(2i-1)}_0\oplus x^{(\ttemp)(2i-1)}_1)$ excluding the first bits; the bits of $\tutob(x^{(\ttemp)(2i)}_0\oplus x^{(\ttemp)(2i)}_1)$ excluding the first bits. (The length of this part is $2(\log_2(m_0)-1)$.)
				\end{itemize}
				With these information, the server could do the following transformation on the state:\par
				It first calculates and stores the parity of $\bQ^{(\ttemp)(1)}$ in $\bQ^{(\tout)(\tsubs)}$.\par
				Then for each $i$:
				\begin{enumerate}\item It xors $x^{(\ttemp)(2i-1)}_0,x^{(\ttemp)(2i)}_1$ to each block to transform each block into the form of $000\cdots 1000\cdots$;\item Then it transforms unary representation to binary representation for the location of non-zero bits.\item Then it only keeps the first bits for each block and transform the remaining bits to $0$ using the second part of the client-side messages.\end{enumerate} Denote the first bit of $\tutob(x^{(\ttemp)(2i)}_0\oplus x^{(\ttemp)(2i)}_1)$ as $b^{(out)(i)}_0$ and denote the first bit of $\tutob(x^{(\ttemp)(2i-1)}_0\oplus x^{(\ttemp)(2i-1)}_1)$ as $b^{(out)(i)}_1$, the server-side state in the end is
				\begin{equation}\label{eq:57fn}\frac{1}{\sqrt{2}}(\underbrace{\ket{0}}_{\bQ^{(\tout)(\tsubs)}}\underbrace{\ket{b_0^{(out)(1)}||b_0^{(out)(2)}||\cdots b_0^{(out)(n)}}}_{\bQ^{(\tout)(\tkey)}}+\ket{1}\ket{b_1^{(out)(1)}||b_1^{(out)(2)}||\cdots b_1^{(out)(n)}})\end{equation}
				Note that for each $i$, $b_0^{(out)(i)},b_1^{(out)(i)}$ are uniformly random bits from $\{0,1\}^2$. The client then calculates and stores $(b_0^{(out)(1)}||b_0^{(out)(2)}||\cdots b_0^{(out)(n)},b_1^{(out)(1)}||b_1^{(out)(2)}||\cdots b_1^{(out)(n)})$ in register $(\bx_0^{(\tout)},\bx_1^{(\tout)})$.
			\end{enumerate}
		\end{prtl}
	\end{mdframed}
	The completeness and efficiency are from the protocol description. Below we state and prove the soundness.
	\begin{thm}
		Protocol \ref{prtl:10} is $\epsilon$-sound.
	\end{thm}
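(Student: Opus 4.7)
The plan is to reduce the soundness of $\fKP$ to the soundness of $\fMultiBlock$ (used in step 1), by observing that step 2 is an honest client message plus an invertible-modulo-randomness server operation, so it can be wrapped around the $\fMultiBlock$ simulator to yield a simulator for $\fKP$. In fact this step fits the general ``compilation'' template of Section \ref{sec:3.4.2} together with a mild extra argument that the $\fKP$ target state is exactly the honest-step-2 image of the $\fMultiBlock$ target state.

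First I would invoke the $\epsilon$-soundness of $\fMultiBlock$: for any efficient $\fAdv_1$ acting in step 1 and any $\rho_0\in\tD(\cH_{\bS}\otimes\cH_{\bbE})$, there exists an efficient simulator $\fSim_{MB}$ such that
\[ \Pi_{\fpass}\bigl(\fMultiBlock^{\fAdv_1}(\rho_0)\bigr)\;\approx^{\tind}_{\epsilon}\;\Pi_{\fpass}\bigl(\fSim_{MB}(\rho_{tar,MB}\otimes\rho_0)\bigr), \]
where $\rho_{tar,MB}$ is the uniform cq-target on $(\bK^{(\ttemp)},\bQ^{(\ttemp)})$ described by \eqref{eq:47af}.

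The key information-theoretic observation is that applying the honest step 2 of Protocol \ref{prtl:10} to $\rho_{tar,MB}$ yields exactly $\rho_{tar,KP}$, up to a trivial reshuffling of registers. To see this, note that for every $j$ the difference $x_0^{(\ttemp)(j)}\oplus x_1^{(\ttemp)(j)}$ is uniformly distributed over Hamming-weight-one strings in $\{0,1\}^{m_0}$, so $\tutob(x_0^{(\ttemp)(j)}\oplus x_1^{(\ttemp)(j)})$ is uniform in $\{0,1\}^{\log_2 m_0}$ and in particular its first bit is a uniformly random bit independent of other blocks. Hence the classical marginal $(\bx_0^{(\tout)},\bx_1^{(\tout)})$ produced by step 2 is uniform on $\{0,1\}^{2n}$, and the companion quantum state on $\bQ^{(\tout)}$ is precisely \eqref{eq:57fn}. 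Consequently there is an efficient operation $\cE$ that \emph{extends} $\rho_{tar,KP}$ to $\rho_{tar,MB}$: sample uniformly the auxiliary classical data (the higher bits of the $\tutob$ outputs together with the xor masks $x_0^{(\ttemp)(2i-1)},x_1^{(\ttemp)(2i)}$ of every block pair, subject to the parity constraint on $x_0^{(\ttemp)(1)}$) and apply the inverse of the server's classically-controlled unitary in step 2. Since these sampled bits are mutually independent of $(\bx_0^{(\tout)},\bx_1^{(\tout)})$, the output of $\cE$ has exactly the same distribution as $\rho_{tar,MB}$.

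I then define $\fSim_{KP}$ as the composition ``apply $\cE$, apply $\fSim_{MB}$, then simulate step 2 honestly against $\fAdv_2$''. Writing the full protocol as $\fKP^{\fAdv}=\pi_{2}^{\fAdv_2}\circ \fAdv_1\circ\fOneBlockTensor^{honest}$ with $\pi_2$ denoting step 2, the chain
\[ \Pi_{\fpass}\,\pi_2^{\fAdv_2}\bigl(\fAdv_1(\fOneBlockTensor^{honest}(\rho_0))\bigr)\;\approx^{\tind}_{\epsilon}\;\Pi_{\fpass}\,\pi_2^{\fAdv_2}\bigl(\fSim_{MB}(\rho_{tar,MB}\otimes\rho_0)\bigr)\;=\;\Pi_{\fpass}\,\fSim_{KP}(\rho_{tar,KP}\otimes\rho_0) \]
gives the required $\epsilon$-soundness, where the first step uses the soundness of $\fMultiBlock$ and the fact that applying the same efficient $\pi_2^{\fAdv_2}$ to both sides preserves computational indistinguishability, and the equality uses $\cE(\rho_{tar,KP})=\rho_{tar,MB}$ (as a distribution on the relevant registers).

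The only real bookkeeping item, which is the closest thing to an obstacle, is verifying that the sampling distribution used by $\cE$ is correct: the parity constraint $\tParity(x_0^{(\ttemp)(1)})=0$ from \eqref{eq:47af} is accommodated automatically because $\fKP$ also takes the first block's parity as $\bQ^{(\tout)(\tsubs)}$; and conditioning on the first bit of each $\tutob$-output fixes only one bit out of $\log_2 m_0$, so the remaining bits are still uniform. With that checked, $\cE$ is manifestly efficient since step 2 uses only xor masks and unary/binary conversions that are all classically invertible given the client's sampled data, which completes the reduction.
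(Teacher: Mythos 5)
Your proposal is correct and follows essentially the same route as the paper: invoke the soundness of $\fMultiBlock$ for step 1, then note that the client-revealed data in step 2 is all discarded, so a simulator can resample it and reverse the honest step-2 transformation to recover the $\fMultiBlock$ target from the $\fKP$ target, and compose the two simulators. Your argument is just a more explicit verification of the distributional claim (uniformity and independence of the discarded bits) that the paper leaves implicit, aside from a cosmetic slip in writing step 1 as $\fAdv_1\circ\fOneBlockTensor^{honest}$ instead of $\fMultiBlock^{\fAdv_1}$.
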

	\begin{proof}
		By the soundness of $\fMultiBlock$ there exists a simulator that simulates the output of the first step from \eqref{eq:47af}.\par
	 Then the client reveals lots of information to allow the server to transform the state to \eqref{eq:57fn}. Note that the information revealed by the client are all disgarded by the client; the client only calculates and keeps $\bx_0^{(\tout)},\bx_1^{(\tout)}$. Then starting from \eqref{eq:57fn}, the simulator could simulate the disgarded information on its own and reverse the transformation to to simulate the joint state corresponding to \eqref{eq:47af} (up to a cnange of locations of client-side registers).\par
		Combining the two simulators above completes the proof.
	\end{proof}
    \subsection{Construction of $\fQFac$ (RSPV for $\ket{+_\theta}$)}\label{sec:5.5}
	In this section we construct RSPV for $\ket{+_{\theta}}$ from $\frac{1}{\sqrt{2}}(\ket{0}\ket{x_0}+\ket{1}\ket{x_1})$. We will give the preRSPV-with-score protocol for it, analyze its information-theoretic core, and prove the soundness of the preRSPV and amplify it to an RSPV protocol.
	\subsubsection{PreRSPV-with-score for $\ket{+_\theta}$}
	\begin{mdframed}[backgroundcolor=black!10]
		\begin{prtl}\label{prtl:11}
			Below we construct a preRSPV with the score for $\ket{+_\theta},\theta\in \{0,1,2\cdots 7\}$.\par
			Parameters: a temporary approximation error parameter $1^{1/\epsilon_0}$, security parameter $1^\kappa$.\par
			Output registers:\begin{itemize}\item client-side classical register $\btheta$ with value in $\{0,1,2\cdots 7\}$. We also say $\btheta=(\btheta_1,\btheta_2,\btheta_3)$ where each of them is a classical bit and $\btheta=4\btheta_1+2\btheta_2+\btheta_3$.
				\item Client-side classical register $\bflag$ with value in $\{\fpass,\ffail\}$; client-side classical register $\bscore$ with value in $\{\fwin,\flose,\perp\}$.
				\item Server side quantum register $\bq$ which holds a single qubit.
			\end{itemize}\par
			Take $n=\kappa$.\par
			$\fQFacTest$ is defined as:
			\begin{enumerate}
				\item Both parties run $\fKP(1^n,1^{1/\epsilon_0},1^\kappa)$; store the client-size output in register $\bK=(\bx_0,\bx_1)$ and the server-side output states in registers $(\bq,\bQ)$:
				\begin{equation}\label{eq:59n}\frac{1}{\sqrt{2}}(\underbrace{\ket{0}}_{\bq}\underbrace{\ket{x_0}}_{\bQ}+\ket{1}\ket{x_1})\end{equation} The client sets $\bflag$ to be $\ffail$ if it fails.
				\item In \eqref{eq:59n}, denote the first two bits of $x_0$ as $b^{(1)}_0$ and $b^{(2)}_0$, denote the first two bits of $x_1$ as $b^{(1)}_1$ and $b^{(2)}_1$. The server could do control-phase operations to add the following phases to the $\bq$ register:
				\begin{align}
					&\frac{1}{\sqrt{2}}(\ket{0}\ket{x_0}+\ket{1}\ket{x_1})\\
					\rightarrow&\frac{1}{\sqrt{2}}(e^{\mi\pi(2b_0^{(1)}+b_0^{(2)})/4}\ket{0}\ket{x_0}+e^{\mi\pi(2b_1^{(1)}+b_1^{(2)})/4}\ket{1}\ket{x_1})\\
					=&(\frac{1}{\sqrt{2}}(\ket{0}\ket{x_0}+e^{\mi\pi(2(b_0^{(1)}\oplus b_1^{(1)})+(b_0^{(2)}\oplus b_1^{(2)}))/4}\ket{1}\ket{x_1}))\cdot\text{global phase}\label{eq:62n}
				\end{align}
				The client stores $b_0^{(1)}\oplus b_1^{(1)}$ in register $\btheta^{(2)}$ and stores $b_0^{(2)}\oplus b_1^{(2)}$ in register $\btheta_3$.\par
				Then the server does Hadamard measurements for each bit in $\bQ$; suppose the measurement result is $d$. This transforms \eqref{eq:62n} to
				$$(\frac{1}{\sqrt{2}}(\ket{0}+e^{\mi\pi(4d\cdot(x_0\oplus x_1)+2(b_0^{(1)}\oplus b_1^{(1)})+(b_0^{(2)}\oplus b_1^{(2)}))/4}\ket{1}))\cdot\text{global phase}$$
				The client stores $d\cdot(x_0\oplus x_1)\mod 2$ in register $\btheta_1$; the client sets $\bflag$ to be $\ffail$ if $d=0^n$ and $\fpass$ otherwise. The client keeps $\btheta,\bflag,\bscore$ and disgards all the other registers (including $\bK$ and $d$).
				\item The client randomly samples $\varphi\leftarrow \{0,1\cdots 7\}$ and sends $\varphi$ to the server.\par
				The server measures $\bq$ on basis $\ket{+_\varphi}$ and $\ket{+_{\varphi+4}}$ and gets a measurement result $r$: $r=0$ if the result is the former and $r=1$ if the result is the latter. The server sends back the measurement results $r$.\par
				The client sets the $\bflag$ register as follows:
				\begin{itemize}
					\item If $\theta-\varphi=0$, set $\bflag$ to be $\fpass$ if $r=0$ and $\ffail$ otherwise.
					\item If $\theta-\varphi+4=0$ (modulo $8$), set $\bflag$ to be $\fpass$ if $r=1$ and $\ffail$ otherwise.
					\item %If $|\theta-\varphi|=2$, 
					In other cases, simply set $\bflag$ to be $\fpass$.
				\end{itemize}\par
				The client sets the $\bscore$ register as follows:
				\begin{itemize}
					\item If $|\theta-\varphi|\leq 1$ (where the distance is modulo $8$), set $\bscore$ to be $\fwin$ if $r=0$ and $\flose$ otherwise.
					\item If $|\theta-\varphi+4|\leq 1$ (where the distance is modulo $8$), set $\bscore$ to be $\fwin$ if $r=1$ and $\flose$ otherwise.
					\item If $|\theta-\varphi|=2$, simply set $\bscore$ to be $\fwin$.
				\end{itemize}
			\end{enumerate}\par
			$\fQFacComp$ is defined as:
			\begin{enumerate}
				\item Same as the first step of $\fQFacTest$.
				\item Same as the first step of $\fQFacTest$.
			\end{enumerate}
		\end{prtl}
	\end{mdframed}
	Note that the approximation error $\epsilon_0$ is only the approximation error for the first step and not the approximation error for the whole protocol. This makes it easier to tune the parameters in later proofs and constructions.\par 
	The completeness and efficiency are from the protocol description. The honest server wins in $\fQFacTest$ with probability $\frac{1}{2}+\frac{1}{2}\cos^2(\pi/8)$.
	\subsubsection{Analysis of the information-theoretic core}
	Let's formalize the set-up for the information-theoretic core of Protocol \ref{prtl:11}.
	\begin{setup}\label{setup:12}Parameter: $1^n$.\par
		$\OPT=\frac{1}{2}+\frac{1}{2}\cos^2(\pi/8)$.\par
		Consider the folowing registers, as used in the honest execution: client-side classical register $\bK=(\bx_0,\bx_1)$ where each of them holds $n$ bits, client-side classical register $\btheta$ with value in $\{0,1\cdots 7\}$, server-side quantum register $\bq,\bQ$ which hold 1 qubit and $n$ qubits each.\par
		Part of the initial state is $\rho_{KP}\in \tD(\cH_{\bK}\otimes\cH_{\bq,\bQ})$ where $\rho_{KP}$ is the target state of $\fKP$. For modeling the initial states in the malicious setting, consider server-side quantum register $\bS$ and environment register $\bbE$. The initial states in the malicious setting that we consider could be described as $\rho_{KP}\otimes\rho_{\text{ini}}$ where\footnote{Previously we use $\rho_0$ for this part but below we need to define $\rho_{\theta}$ so we change the notation here to avoid conflicts.} $\rho_{\text{ini}}\in \tD(\cH_{\bS}\otimes\cH_{\bbE})$.\par
		We use $(\fQFacTest_{\geq 2},\fQFacComp_{\geq 2})$ to denote the protocols that starts from the second step of Protocol \ref{prtl:11}, and use $\fQFacTest_{\text{step }3}$ to denote the third step of the $\fQFacTest$ protocol.\par
		 We use $\fAdv_{\text{step }2}$ to denote the adversary's operation on the second step of the protocols; then corresponding to $\rho_{KP}\otimes\rho_{\text{ini}}$ and $\fAdv_{\text{step }2}$, the joint state on the passing space after the step 2 of Protocol \ref{prtl:11} could be denoted as $$\underbrace{\ket{\fpass}\bra{\fpass}}_{\bflag}\sum_{\theta\in \{0,1\cdots 7\}}(\underbrace{\ket{\theta}\bra{\theta}}_{\btheta}\otimes\underbrace{\rho_{\theta}}_{\bq,\bQ,\bS,\bbE})$$ %where we use $\rho_{\theta}$ to denote the server-side state corresponding to $\btheta=\theta$.\par
		 In other words, $\rho_{\theta}$ is the component of the output state of the step 2 where $\btheta$ is in value $\theta$ and $\bflag$ is in value $\fpass$.\par
		  Use $\fAdv_{\text{step }3}=(\fAdv_{\varphi})_{\varphi\in \{0,1\cdots 7\}}$ to denote the adversary's operation on the third step where $\fAdv_{\varphi}$ corresponds to the client's question $\varphi$.
	\end{setup}
	As discussed before, an important notion on the states after the second step is that the states $(\rho_{\theta})_{\theta\in \{0,1\cdots 7\}}$ satisfies a condition called \emph{(information-theoretic) basis blindness}, as follows.
	\begin{defn}[(Information-theoretic) basis blindness \cite{qfactory}]\label{defn:5.3}
		We say $(\rho_{\theta})_{\theta\in \{0,1\cdots 7\}}$ has (Information-theoretic) basis blindness if $\forall \theta_{2,3}\in \{0,1,2,3\},\frac{1}{2}(\rho_{\theta_{2,3}}+\rho_{\theta_{2,3}+4})\approx_{\fneg(n)}\frac{1}{8}\sum_{\theta\in \{0,1\cdots 7\}}\rho_{\theta}$
	\end{defn}
	\begin{lem}\label{lem:bst}
		$(\rho_{\theta})_{\theta\in \{0,1\cdots 7\}}$ generated in Set-up \ref{setup:12} has information-theoretic basis blindness.
	\end{lem}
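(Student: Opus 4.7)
The plan is to reduce the claim to a contractivity argument on the server's reduced input state. The adversary in step 2 of Protocol \ref{prtl:11} (call it $\fAdv_2$) acts only on $\bq,\bQ,\bS$ and produces a classical response $d$, while the client carries $\bK$ along classically and computes $(\theta_2,\theta_3)$ from $\bK$ alone and $\theta_1$ from $\bK$ together with $d$. The key observation is that summing over $\theta_1$ removes the postselection that depends jointly on $\bK$ and $d$, leaving only $d\neq 0$ on the response side and $(\theta_2,\theta_3)=\theta^{*}$ on $\bK$. Since $\fAdv_2$ does not touch $\bK$ and the initial state is a tensor product $\rho_{KP}\otimes\rho_{\text{ini}}$, a short rearrangement gives
\[
\sum_{\theta_1}\rho_{\theta_1,\theta^{*}}\;=\;\tfrac14\,\tr_{\tilde d}\bigl(\Pi_{d\neq 0}\,\fAdv_2(\eta_{\theta^{*}}\otimes\rho_{\text{ini}})\bigr),
\]
where $\eta_{\theta^{*}}\in\tD(\cH_{\bq}\otimes\cH_{\bQ})$ denotes the server's marginal state conditioned on $(\theta_2,\theta_3)=\theta^{*}$.

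The core computation is to show that $\eta_{\theta^{*}}$ depends only exponentially weakly on $\theta^{*}$. Since $\rho_{KP}$ puts the uniform distribution on $(x_0,x_1)$ with quantum branches $\tfrac{1}{\sqrt 2}(\ket{0,x_0}+\ket{1,x_1})$, and since restricting to $x_0^{(1)}\oplus x_1^{(1)}=\theta_2^{*}$, $x_0^{(2)}\oplus x_1^{(2)}=\theta_3^{*}$ still leaves all of $x_0$ and the last $n-2$ bits of $x_1$ free, a direct calculation yields
\[
\eta_{\theta^{*}}\;=\;\frac{1}{2^{n+1}}\Bigl(\bbI\;+\;\ket{0}\!\bra{1}\otimes X^{\theta^{*}}\otimes\ket{+^{n-2}}\!\bra{+^{n-2}}\;+\;\mathrm{h.c.}\Bigr),
\]
where $X^{\theta^{*}}$ acts on the first two qubits of $\bQ$ and $\ket{+^{n-2}}$ lives on the remaining $n-2$ qubits. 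Only the off-diagonal block depends on $\theta^{*}$, and because it is tensored with the rank-one operator $\ket{+^{n-2}}\!\bra{+^{n-2}}$, the trace-norm gap $\|\eta_{\theta^{*}}-\eta_{\tilde\theta^{*}}\|_1$ is $O(2^{-n})$ for any $\theta^{*},\tilde\theta^{*}\in\{0,1\}^{2}$.

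Data processing then finishes the proof. Both $\fAdv_2$ and the subsequent projection $\Pi_{d\neq 0}$ plus partial trace are contractive in trace norm, so
\[
\bigl\|\textstyle\sum_{\theta_1}\rho_{\theta_1,\theta^{*}}-\sum_{\theta_1}\rho_{\theta_1,\tilde\theta^{*}}\bigr\|_1\;\leq\;\tfrac14\,\|\eta_{\theta^{*}}-\eta_{\tilde\theta^{*}}\|_1\;=\;O(2^{-n}).
\]
Averaging this bound over $\tilde\theta^{*}\in\{0,1\}^{2}$ and using $\tfrac18\sum_{\theta}\rho_{\theta}=\tfrac14\sum_{\tilde\theta^{*}}\tfrac12(\rho_{\tilde\theta^{*}}+\rho_{\tilde\theta^{*}+4})$ yields the required $\tfrac12(\rho_{\theta^{*}}+\rho_{\theta^{*}+4})\approx_{O(2^{-n})}\tfrac18\sum_{\theta}\rho_{\theta}$, which is $\fneg(n)$ since $n=\kappa$. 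The conceptual step is realizing that only the server's $(\bq,\bQ)$-marginal of the initial state matters for $\fAdv_2$, so the whole question collapses to the $\theta^{*}$-dependence of that marginal; the main technical obstacle is the careful density-matrix bookkeeping that isolates the off-diagonal rank-one factor $\ket{+^{n-2}}\!\bra{+^{n-2}}$ responsible for the exponential gap.
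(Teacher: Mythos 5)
Your proof is correct, and it takes a more explicit route than the paper's. The paper first (implicitly) reduces basis blindness to the statement that an adversary without access to $\bK$ cannot predict $(\theta_2,\theta_3)=(b_0^{(1)}\oplus b_1^{(1)},\,b_0^{(2)}\oplus b_1^{(2)})$ with probability better than $\frac14+\fneg(n)$; it then expands $\rho_{KP}$ via the projectors $\Pi^{\bQ}_{=\bx_0},\Pi^{\bQ}_{=\bx_1}$, observes that the two diagonal terms contribute exactly $\frac14$ (each branch alone carries no information about the xor of the first bits), and bounds the two cross terms by the negligible amplitude of guessing $x_1$ from $\ket{0}\ket{x_0}$ and vice versa. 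You instead compute the server's conditional marginal $\eta_{\theta^*}$ in closed form, isolate the $\theta^*$-dependence in a rank-one off-diagonal block of trace norm $O(2^{-n})$, and conclude by contractivity of $\fAdv_2$, the projection $\Pi_{d\neq 0}$, and the partial traces. The two arguments rest on the same underlying fact --- the diagonal blocks are $\theta^*$-independent and the cross terms are exponentially small --- but your version buys a concrete quantitative bound ($O(2^{-n})$ in trace distance) and avoids the paper's unstated step of converting a $\frac14+\fneg(n)$ guessing bound back into pairwise statistical closeness of the four states $\frac12(\rho_{\theta^*}+\rho_{\theta^*+4})$, whereas the paper's version is shorter and generalizes more readily to the computational setting (where one cannot write down the marginal explicitly and must argue via a prediction game). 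One small point worth making explicit in your write-up: the identity $\sum_{\theta_1}\rho_{\theta_1,\theta^*}=\frac14\tr_{\tilde d}\bigl(\Pi_{d\neq 0}\,\fAdv_2(\eta_{\theta^*}\otimes\rho_{\text{ini}})\bigr)$ uses that the client sends nothing key-dependent to the server in step 2 and that $\bK$ and $d$ are discarded at the end of step 2 (as specified in Set-up \ref{setup:12}), so that only the postselection on $(\theta_2,\theta_3)=\theta^*$ and $d\neq 0$ survives once $\theta_1$ is summed out; both facts hold, but they are what make the reduction to the marginal legitimate.
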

	\begin{proof}
		This is equivalent to prove that any adversary working on $\rho_{KP}\otimes\rho_{\text{ini}}$ that does not have access to the client-side keys $\bK$ could only predict $(b_0^{(1)}\oplus b_1^{(1)},b_0^{(2)}\oplus b_1^{(2)})$ with probability $\frac{1}{4}+\fneg(n)$. To calculate the probability that an operation predicts this information from $\rho_{KP}\otimes\rho_{\text{ini}}$, we could write $\rho_{KP}$ as $(\Pi^{\bQ}_{=\bx_0}+\Pi^{\bQ}_{=\bx_1})\rho_{KP}(\Pi^{\bQ}_{=\bx_0}+\Pi^{\bQ}_{=\bx_1})$ to expand the expression for this probability, then this probability could be upper bounded by the sum of the following terms:
		\begin{itemize}
			\item The probability\footnote{We mean the trace of the post-projection state onto the space that the event happens} that the operation operating on $\Pi^{\bQ}_{=\bx_0}\rho_{KP}\Pi^{\bQ}_{=\bx_0}$ could predict $(b_0^{(1)}\oplus b_1^{(1)},b_0^{(2)}\oplus b_1^{(2)})$.
			\item The probability that the operation operating on $\Pi^{\bQ}_{=\bx_1}\rho_{KP}\Pi^{\bQ}_{=\bx_1}$ could predict $(b_0^{(1)}\oplus b_1^{(1)},b_0^{(2)}\oplus b_1^{(2)})$.
			\item The norm that the operation starting from $\ket{0}\ket{x_0}$ could predict $x_1$.
			\item The norm that the operation starting from $\ket{0}\ket{x_1}$ could predict $x_0$.
		\end{itemize}
		The last two terms are negligibly small and the first two terms sum to $\frac{1}{4}$, which completes the proof.
	\end{proof}
	Then by \cite{GVRSP,qfactory} we have the optimality of $\OPT$ and the  self-testing property for the test used in $\fQFacTest_{\text{step }3}$ given the condition that the input state satisfies information-theoretic basis blindness.
	Below we directly state it as the soundness of $(\fQFacTest_{\geq 2},\fQFacComp_{\geq 2})$ under Set-up \ref{setup:12}.
	\begin{thm}\label{thm:qfacbeforec}
		%Take $\fpoly_2=8\cdot \fpoly$ where $\fpoly$ is given in \eqref{eq:intropoly}. 
		Under Set-up \ref{setup:12}, $(\fQFacTest_{\geq 2},\fQFacComp_{\geq 2})$ has $(\delta, \fpoly_1(\delta))$-optimal winning probability $\OPT$ and is $(\delta, \fpoly_2(\delta))$-sound for all $\delta$.
	\end{thm}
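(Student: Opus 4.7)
The plan is to combine Lemma \ref{lem:bst} with the self-testing analysis of the 8-basis measurement test that is already available in \cite{GVRSP,qfactory}. First, by Lemma \ref{lem:bst} the family $(\rho_\theta)_{\theta\in\{0,1,\ldots,7\}}$ produced immediately after step 2 of Protocol \ref{prtl:11} satisfies the information-theoretic basis blindness of Definition \ref{defn:5.3}. This is precisely the hypothesis required by the blind self-testing analysis (in the terminology of \cite{qfactory}) of the step-3 test. Applying that analysis yields, on the one hand, the bound $\OPT+\fneg(n)$ on the winning probability against any adversary, which immediately gives the $(\delta,\fpoly_1(\delta))$-optimal winning probability claim. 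On the other hand, when the winning probability is within $\delta$ of $\OPT$, the adversary's step-3 observables satisfy an approximate anticommutation-type relation from which one can extract an efficient isometry $V$ on the server registers (in the spirit of Theorem \ref{thm:antiiso}, but adapted to the 8-basis setting), such that $V\rho_\theta V^\dagger \approx_{\fpoly(\delta)} \ket{+_\theta}\bra{+_\theta}\otimes \psi_\theta$ for every $\theta$, where $\psi_\theta$ is a subnormalized auxiliary state on the server's remaining registers together with the environment.

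Second, I would invoke the basis blindness to eliminate the $\theta$-dependence of the $\psi_\theta$'s. The pairwise closeness $\rho_{\theta_{2,3}}+\rho_{\theta_{2,3}+4}\approx_{\fneg(n)}\rho_{\theta_{2,3}'}+\rho_{\theta_{2,3}'+4}$ for arbitrary $\theta_{2,3},\theta_{2,3}'$, combined with the previous state characterization, forces the $\psi_\theta$'s to be statistically $\fpoly(\delta)$-close to a common state $\psi$; this is an 8-basis variant of Lemma \ref{lem:a.8}. With $V$ and $\psi$ in hand, the simulator for $\fQFacComp_{\geq 2}$ is natural: given the target state $\frac{1}{8}\sum_{\theta}\ket{\theta}\bra{\theta}\otimes\ket{+_\theta}\bra{+_\theta}$ together with $\rho_{\text{ini}}$, the simulator internally prepares a copy of $\rho_{KP}$, runs the adversary's step-2 operation on $\rho_{KP}\otimes\rho_{\text{ini}}$ to extract a copy of $\psi$ (on auxiliary registers used only internally), and then outputs $V^\dagger(\ket{+_\theta}\bra{+_\theta}\otimes \psi)V$ on the server-side registers, using the $\ket{+_\theta}$ taken from the target state. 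Efficiency follows from the efficiency of $V$ (read off from the adversary's circuit) and of step 2 itself. The three clauses of Definition \ref{defn:3.12} then sort themselves out: if the step-3 passing probability falls below $1-\delta$ we land in the low-passing clause; otherwise if the winning probability is bounded away from $\OPT$ by $\delta$ we land in the low-winning clause; in the remaining case the isometry-based simulator just described handles the simulation clause with error $\fpoly_2(\delta)$.

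The main obstacle is bridging the rigidity-style statements that already appear in \cite{GVRSP,qfactory} with the explicit efficient isometry-extraction and the simulation-based form we need here. Concretely, the 8-basis analog of Theorem \ref{thm:antiiso}, the corresponding variant of Lemma \ref{lem:a.5} giving both the optimality bound and the approximate relation on observables, and the careful trace/normalization bookkeeping needed to collapse the $\psi_\theta$'s to a common $\psi$ (analogous to Lemma \ref{lem:a.8}) all need to be spelled out; fortunately these are essentially already implicit in the prior works. A secondary subtlety is that the simulator must manufacture $\psi$ from $\rho_{\text{ini}}$ alone via a simulated honest step-2 execution; correctness of this internal simulation leans on the fact that the basis blindness from Lemma \ref{lem:bst} is information-theoretic rather than merely computational, so the $\psi_\theta\approx\psi$ produced inside the simulator matches the one obtained from the real adversarial execution up to the claimed statistical error.
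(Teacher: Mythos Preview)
Your proposal is correct and follows essentially the same route as the paper. The paper's proof (Appendix \ref{app:p1}) also combines Lemma \ref{lem:bst} with the blind self-testing machinery from \cite{GVRSP,qfactory}, then follows the template of Lemma \ref{lem:b3} using the statistical lemmas (Lemma \ref{lem:a.5}, Theorem \ref{thm:antiiso}, Lemma \ref{lem:a.7}, Lemma \ref{lem:a.8}) to obtain rigidity and then the simulation-based soundness; the only presentational difference is that the paper explicitly decomposes the step-3 test into four rotated copies of the test-of-a-qubit (the $\varphi\in\{0,2\}$, $\{1,3\}$, $\{4,6\}$, $\{5,7\}$ groups, each matching Definition \ref{defn:a1}) rather than invoking an 8-basis analogue directly, which lets it quote Lemma \ref{lem:a.5} verbatim instead of stating an 8-basis variant.
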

	See Appendix \ref{app:p1} for details.
	% \begin{proof}
	% 	By Lemma \ref{lem:bst} the output state of the second step satisfies basis blindness.\par
	% 	The optimality of $\OPT$ is simply by Proposition \ref{prop:n1}.\par
	% 	To prove the soundness we first apply Proposition \ref{prop:n2} to get an isometry $V$ that satisfies \eqref{eq:intropoly}. The proof of simulation-based soundness requires to construct a simulator, which could be constructed similarly to the discussion in Section \ref{sec:3.2.2}:
	% 	\begin{enumerate}
	% 		\item The simulator starts with $\rho_{KP}\otimes\rho_{\text{ini}}$, simulates the protocol execution, applies $V$ and gets $\sigma$ in \eqref{eq:intropoly}.
	% 		\item The simulator applies $V^\dagger$ to invert the right hand side of \eqref{eq:intropoly}. Note that the initial state of $\fQFacTest_{\text{step }3}$ is exactly the output state of $\fQFacComp_{\geq 2}$.
	% 	\end{enumerate}
	% 	To count the approximation error, we note that we need to sum up the $\fpoly(\delta)$ error in \eqref{eq:intropoly} for each $\theta\in \{0,1\cdots 7\}$. This completes the proof.
	% \end{proof}
	\subsubsection{Compilation and amplification to preRSPV-with-score and RSPV}
	The remaining steps are similar to what we did in Section \ref{sec:5.3.3}.
	\begin{thm}
		For Protocol \ref{prtl:11}, $\OPT$ is $(\delta-\epsilon_0,\fpoly_1(\delta)+\epsilon_0)$-optimal.
	\end{thm}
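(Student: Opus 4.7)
The plan is to compile the information-theoretic statement in Theorem \ref{thm:qfacbeforec} through the RSPV soundness of $\fKP$ to obtain the claim about the full Protocol \ref{prtl:11}. Recall that Protocol \ref{prtl:11} consists of first running $\fKP(1^n,1^{1/\epsilon_0},1^\kappa)$ in step 1, and then running $(\fQFacTest_{\geq 2}, \fQFacComp_{\geq 2})$ which under Set-up \ref{setup:12} takes $\rho_{KP}$ as its starting state. Because $\fKP$ is $\epsilon_0$-sound, any adversarial execution of step 1 can be replaced (up to indistinguishability $\epsilon_0$) by presenting the adversary with the ideal target state $\rho_{KP}$ together with a simulator-prepared side state. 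This is exactly the setup in which Theorem \ref{thm:qfacbeforec} applies.

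Concretely, I would argue contrapositively along the lines of Definition \ref{defn:3.12r}. Fix an efficient adversary $\fAdv$ and initial state $\rho_0 \in \tD(\cH_{\bS}\otimes \cH_{\bbE})$ for Protocol \ref{prtl:11}. Decompose $\fAdv = (\fAdv_{\text{step }1}, \fAdv_{\geq 2})$. By the $\epsilon_0$-soundness of $\fKP$, there is an efficient simulator $\fSim$ such that the output of step 1 on the passing branch is $\epsilon_0$-indistinguishable from $\fSim(\rho_{KP}\otimes \rho_0)$. Composing $\fAdv_{\geq 2}$ on both sides, we obtain that the joint distribution of $(\bflag,\bscore)$ after the full $\fQFacTest$ against $\fAdv$ is $\epsilon_0$-indistinguishable from the corresponding distribution when $(\fQFacTest_{\geq 2})$ is run under Set-up \ref{setup:12} against the composed adversary $\fAdv_{\geq 2}\circ \fSim$ starting from $\rho_{KP}\otimes \rho_0$.

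Now apply Theorem \ref{thm:qfacbeforec} to this induced IT-core execution with parameter $\delta$: either the passing probability in the IT-core is at most $1 - \delta + \fneg(\kappa)$, or the winning probability is at most $\OPT + \fpoly_1(\delta) + \fneg(\kappa)$. Transferring back through the $\epsilon_0$ indistinguishability (which can shift each of these probabilities by at most $\epsilon_0$), one of the following holds for the original protocol: the passing probability is at most $1-(\delta-\epsilon_0)+\fneg(\kappa)$, or the winning probability is at most $\OPT + \fpoly_1(\delta) + \epsilon_0 + \fneg(\kappa)$. This is exactly the definition of $(\delta-\epsilon_0, \fpoly_1(\delta)+\epsilon_0)$-optimality.

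The only nontrivial point is ensuring the indistinguishability bound transfers cleanly to both the passing probability and the winning probability, which are both read from client-side classical registers ($\bflag$ and $\bscore$) that the distinguisher is allowed to inspect; since these are classical events the indistinguishability reduces to statistical closeness on single-bit outcomes and contributes at most $\epsilon_0$ to each, so no additional obstacle arises. The rest is just bookkeeping of error terms. Thus the proof essentially boils down to the one-line observation that the $\fKP$-soundness plus Theorem \ref{thm:qfacbeforec} compose as expected.
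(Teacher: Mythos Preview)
Your proposal is correct and follows essentially the same approach as the paper: the paper's proof is the one-line remark that the claim follows ``by combining the soundness of $\fKP$ (which leads to an approximation error $\epsilon_0$) and Theorem~\ref{thm:qfacbeforec}'', which is precisely the compilation argument you spell out in detail.
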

	\begin{thm}\label{thm:5.13r}
		Protocol \ref{prtl:11} is $(\delta-\epsilon_0,\fpoly_2(\delta)+\epsilon_0)$-sound for all $\delta$.
	\end{thm}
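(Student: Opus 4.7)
The plan is to mirror the compilation argument carried out in Section \ref{sec:5.3.3} for $\fMultiBlock$: first invoke the soundness of $\fKP$ to reduce Protocol \ref{prtl:11} to its information-theoretic core under Set-up \ref{setup:12}, and then invoke Theorem \ref{thm:qfacbeforec} to finish. Concretely, fix an efficient adversary $\fAdv$ and an initial state $\rho_0 \in \tD(\cH_{\bS}\otimes\cH_{\bbE})$. Since the first step of both $\fQFacTest$ and $\fQFacComp$ is an execution of $\fKP(1^n,1^{1/\epsilon_0},1^\kappa)$, the $\epsilon_0$-soundness of $\fKP$ yields an efficient simulator $\fSim_{KP}$ such that the (passing) output of step 1, run against the induced adversary $\fAdv_1$, is $\epsilon_0$-indistinguishable from $\fSim_{KP}(\rho_{KP}\otimes\rho_0)$ (as in Definition \ref{defn:3.2}). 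This reduces the analysis to the composed adversary $\fAdv'=(\fAdv_{\text{step }2},\fAdv_{\text{step }3})\circ\fSim_{KP}^{-1}\circ\fAdv_{\text{step }1}$ acting on the idealized input $\rho_{KP}\otimes\rho_0$, which is exactly the setting of Set-up \ref{setup:12}.

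Next, I would do a case analysis matching the three alternatives of Definition \ref{defn:3.12}. If Protocol \ref{prtl:11}'s $\fQFacTest$ passes with probability at most $1-(\delta-\epsilon_0)$, then the first alternative (low passing) of Definition \ref{defn:3.12} holds and we are done. Otherwise, the overall test passes with probability $>1-(\delta-\epsilon_0)$; combined with the $\epsilon_0$-closeness supplied by $\fSim_{KP}$, the induced adversary $\fAdv'$ makes $\fQFacTest_{\geq 2}$ pass with probability $>1-\delta$ when started from $\rho_{KP}\otimes\rho_0$. Applying Theorem \ref{thm:qfacbeforec} to this situation, either the low-winning alternative holds (and again pulls back through $\fSim_{KP}$ with an extra $\epsilon_0$ loss, giving alternative two of Definition \ref{defn:3.12}), or there is an efficient $\fSim_{\tcomp}$ achieving the simulation conclusion with error $\fpoly_2(\delta)$ for $\fQFacComp_{\geq 2}$. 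In the last case I would compose $\fSim_{\tcomp}$ with $\fSim_{KP}$ to get the overall simulator for $\fQFacComp$; by the triangle inequality the total approximation error is at most $\fpoly_2(\delta)+\epsilon_0$, as required.

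The main obstacle I expect is purely bookkeeping: handling the unnormalized subnormalized states carefully when transferring passing conditions from the real protocol to the information-theoretic core (the trace may drop through $\fSim_{KP}$), and arguing that the gap between ``low passing with probability $1-(\delta-\epsilon_0)$'' for the outer protocol and ``low passing with probability $1-\delta$'' for the inner core really follows from the $\epsilon_0$ indistinguishability. This is the same kind of rescaling step used in footnote of the proof of sequential composition in Section \ref{sec:3.4.1}, and can be handled by a short case analysis on whether the passing trace before step 2 is below a small threshold. The low-winning pullback is analogous. Once these minor trace-accounting issues are resolved, the proof reduces to two applications of a simulator-composition argument, which matches the template used throughout Section \ref{sec:5}.
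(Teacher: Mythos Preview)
Your proposal is correct and follows exactly the paper's approach: the paper's own proof is the single sentence ``combine the soundness of $\fKP$ (which leads to an approximation error $\epsilon_0$) with Theorem~\ref{thm:qfacbeforec}'', and you have unpacked precisely that compilation argument in the style of Section~\ref{sec:5.3.3}. One small notational slip: the induced adversary for the IT-core is $(\fAdv_{\text{step }2}\circ\fSim_{KP},\,\fAdv_{\text{step }3})$ acting on $\rho_{KP}\otimes\rho_0$, not anything involving $\fSim_{KP}^{-1}\circ\fAdv_{\text{step }1}$; but the rest of your bookkeeping (the $\epsilon_0$ shifts in both the low-passing and low-winning thresholds, and the simulator composition) is exactly right.
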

	\begin{proof}
		The proofs of the two theorems above are by combining the soundness of $\fKP$ (which leads to an approximation error $\epsilon_0$) and Theorem \ref{thm:qfacbeforec}.
	\end{proof}
	Below we give $\fQFac$, our RSPV for $\ket{+_\theta}$ states.
	\begin{mdframed}[backgroundcolor=black!10]
		\begin{prtl}[$\fQFac$]
			%The Set-up is the same as Protocol \ref{prtl:}.\par
			%It is required that $\epsilon>12n/\sqrt{m}$.\par
			Parameters: approximation error parameter $1^{1/\epsilon}$, security parameter $1^\kappa$.\par
			Output registers: client-side classical register $\btheta$ with value in $\{0,1\cdots 7\}$, client-side classical register $\bflag$ with value in $\fpass,\ffail$, server-side quantum register $\bq$ holding a single qubit.\par 
			The protocol is the amplification procedure (Protocol \ref{prtl:3r}) running on Protocol \ref{prtl:11}. The condition $\epsilon_0<\epsilon$, $\lambda<\frac{1}{6}\delta_0(\epsilon-\epsilon_0)$ in Protocol \ref{prtl:3r} is satisfied by taking the $\epsilon_0$ in Protocol \ref{prtl:11} sufficiently small and tuning $\delta$ in Theorem \ref{thm:5.13r} to be suitably small.
		\end{prtl}
	\end{mdframed}
	The completeness, efficiency and soundness are from the properties of Protocol \ref{prtl:11} and the preRSPV-to-RSPV amplification procedure (Section \ref{sec:3.5.2}).
\section{Classical Verification of Quantum Computations from Cryptographic Group Actions}\label{sec:6}
In this section we combine our results with existing results to get new results on CVQC.
\subsection{RSPV and CVQC from weak NTCF}
Now we are ready to state the weak noisy trapdoor claw-free function (weak NTCF) assumption. We refer to Section \ref{sec:1.3.4} for an intuitive introduction. Below we give its formal definition. Note that there are also multiple styles for defining it; here we use the NTCF definition in \cite{cvqcinlt} and adapt it to weak NTCF:
\begin{defn}[Weak $\fNTCF$]\label{defn:ntcf}
	We define weak noisy trapdoor claw-free function (weak NTCF) as follows. It is parameterized by security parameter $\kappa$ and correctness error $\mu$ and is defined to be a class of polynomial time algorithms as below. $\fKg$ is a sampling algorithm. $\fDc$, $\fCHK$ are deterministic algorithms. $\fEv$ is allowed to be a sampling algorithm. $ \fpoly^\prime$ is a polynomial that determines the the range size. $$\fKg(1^{1/\mu}, 1^\kappa)\rightarrow (\sk,\pk),$$ $$\fEv_\pk: \{0,1\}\times \{0,1\}^{\kappa}\rightarrow \{0,1\}^{\fpoly^\prime(\kappa)},$$ $$\fDc_\sk: \{0,1\}\times \{0,1\}^{\fpoly^\prime(\kappa)}\rightarrow \{0,1\}^{\kappa}\cup \{\bot\},$$ $$\fCHK_{\pk}: \{0,1\}\times \{0,1\}^{\kappa}\times \{0,1\}^{\fpoly^\prime(\kappa)}\rightarrow \{\ftrue ,\ffalse\}$$ And they satisfy the following properties:\par
	\begin{itemize}
	\item (Correctness) 
	\begin{itemize}
	\item (Noisy 2-to-1) For all possible $(\sk,\pk)$ in the range of $\fKg(1^{1/\mu}, 1^\kappa)$ there exists a sub-normalized probability distribution $(p_y)_{y\in \{0,1\}^{\fpoly^\prime(\kappa)}}$ that satisfies: for any $y$ such that $p_y\neq 0$, $\forall b\in \{0,1\}$, there is $\fDc_\sk(b,y)\neq \bot$, and
	\begin{equation}\label{eq:64co}\fEv_\pk(\ket{+}^{\otimes \kappa})\approx_{\mu}\sum_{y:p_y\neq 0}\frac{1}{\sqrt{2}}(\ket{\fDc_\sk(0,y)}+\ket{\fDc_\sk(1,y)})\otimes \sqrt{p_y}\ket{y}\end{equation}
	\item (Correctness of $\fCHK$) For all possible $(\sk,\pk)$ in the range of $\fKg(1^\kappa)$, $\forall x\in \{0,1\}^{\kappa}, \forall b\in \{0,1\}$:
	$$\fCHK_\pk(b,x,y)=\ftrue\Leftrightarrow\fDc_{\sk}(b,y)=x$$
	\end{itemize}
	\item (Claw-free) For any BQP adversary $\fAdv$,
	\begin{equation}\Pr\left[\begin{aligned}&(\sk,\pk)\leftarrow \fKg(1^{1/\mu},1^\kappa),\\&\fAdv(\pk,1^{1/\mu},1^\kappa)\rightarrow (x_0,x_1,y):\quad x_0\neq \bot,x_1\neq \bot, x_0\neq x_1\\&\fDc_\sk(0,y)=x_0,\fDc_\sk(1,y)=x_1\end{aligned}\right]\leq \fneg(\kappa)\end{equation}
		\end{itemize}
	\end{defn}
	The ``noisy'' comes from the fact that $\fEv$ is allowed to be a sampling algorithm and the ``weak'' comes from the error term in \eqref{eq:64co}.\par
	We refer \cite{BCMVV,BKVV,cvqcinlt} for more information on NTCF techniques.\par
	 The following theorem could be proved based on the results of \cite{BGKPV23}:%NTCF:
	\begin{thm}\label{thm:6.1}
		Assuming the existence of weak NTCF, there exists an RSPV protocol for BB84 states $\fBB(1^{1/\mu},1^{1/\epsilon},1^\kappa)$ that is $\mu$-complete and $\epsilon$-sound for any $\mu,\epsilon=1/\fpoly(\kappa)$.%(as formalized in Definition \ref{defn:bbdf}).
	\end{thm}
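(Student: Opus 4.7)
The plan is to combine the test-of-a-qubit construction of \cite{BGKPV23} with the general translation machinery already built up in Section \ref{sec:b2}, treating the weak-NTCF correctness error as an additional $O(\mu)$ slack that gets absorbed by the existing $\fpoly(\delta)$ bounds. First I would instantiate phase~A of Protocol \ref{prtl:14} using a weak NTCF from Definition \ref{defn:ntcf}: the client samples $(\sk,\pk)\leftarrow\fKg(1^{1/\mu_0},1^\kappa)$ for some sufficiently small inverse-polynomial $\mu_0$, the server applies $\fEv_{\pk}$ on $\ket{+}^{\otimes\kappa}$, measures the image register, and uses the claw-free two-preimage structure together with an equality-test / Hadamard-measurement round to produce the claimed $\mu$-approximation to the cq-state $(\theta_1,\theta_2,\ket{+_{4\theta_1+2\theta_2+1}})$. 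By \eqref{eq:64co}, the output deviates by at most $\mu_0$ in trace distance from the ideal claw output, so the $\mu$-completeness of phase~A follows immediately by choosing $\mu_0\leq\mu$.

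Next I would verify that the three soundness statements imported from \cite{BGKPV23} survive this relaxation: Theorem \ref{thm:b1} (computational basis-blindness) now holds up to an additional $O(\mu_0)$ additive term because the weak-NTCF pseudorandomness of $\pk$ is unchanged and only the prepared state itself is $\mu_0$-perturbed; Theorem \ref{thm:b2}'s anticommutation bound \eqref{eq:b2e} acquires an extra $\fpoly(\mu_0)$ summand through the CHSH-style reduction sketched after Lemma \ref{lem:a.5}, since a $\mu_0$-perturbation of the input state translates into at most a $\fpoly(\mu_0)$ change in winning probability; and the optimal-winning upper bound $\cos^2(\pi/8)+\fpoly(\delta)$ similarly degrades to $\cos^2(\pi/8)+\fpoly(\delta)+\fpoly(\mu_0)$. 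Taking $\mu_0=\fpoly(\delta)$ keeps all of these bounds in the form $\fpoly(\delta)$.

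Having established that Protocol \ref{prtl:14} instantiated with weak NTCF is still a test-of-a-qubit satisfying the hypotheses of Section \ref{sec:b2}, I would directly invoke Lemma \ref{lem:b3} to conclude that the pair $(\pi_{\ttest},\pi_{\tcomp})$ defined there (with $\pi_{\tcomp}$ equal to phase~A alone) is a preRSPV-with-score for BB84 states with optimal winning probability $\cos^2(\pi/8)$ and $(\delta,\fpoly(\delta))$-soundness. The pipeline inside that lemma --- Theorem \ref{thm:antiiso} to extract an efficient isometry turning $X_0,X_2$ into the BB84 Pauli observables, Lemma \ref{lem:a.7} to force the underlying state into $\ket{+_\theta}\otimes\psi_\theta$ form on each branch, and Lemma \ref{lem:a.9} to promote computational basis-blindness into indistinguishability of the $\psi_\theta$ --- goes through unchanged. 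Finally I would apply the preRSPV-with-score-to-RSPV amplification of Section \ref{sec:3.5.2}, first using Protocol \ref{prtl:3r} to strip the score and then Protocol \ref{prtl:1} for the full cut-and-choose, calibrating the internal parameters $\delta_0,\lambda,\epsilon_0$ and the weak-NTCF error $\mu_0$ so that the requirement $\epsilon>\epsilon_0$ and $\lambda<\tfrac{1}{6}\delta_0(\epsilon-\epsilon_0)$ is satisfied for any target $\epsilon=1/\fpoly(\kappa)$.

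The main obstacle I anticipate is precisely the bookkeeping in step two: one has to check that every inequality used in \cite{BGKPV23} degrades only additively by $\fpoly(\mu_0)$, not multiplicatively, since a multiplicative blow-up would prevent us from trading $\mu_0$ against $\delta$ freely. A secondary subtlety is that \cite{BGKPV23}'s original statement is phrased for the ``strong'' NTCF, so when replacing it by weak NTCF one must confirm that the reduction to the claw-free property does not anywhere invoke a perfect noiseless two-to-one structure; if any such step appears, it has to be replaced by a hybrid argument that substitutes the noisy claw state for the ideal one at a cost of $\mu_0$ in trace distance.
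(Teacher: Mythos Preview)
Your proposal is correct and follows the same route as the paper: instantiate Protocol~\ref{prtl:14} with a weak NTCF, apply Lemma~\ref{lem:b3} to obtain a preRSPV-with-score, and amplify via Section~\ref{sec:3.5.2}. One simplification worth noting: the soundness statements Theorems~\ref{thm:b1} and~\ref{thm:b2} do not in fact acquire any $\fpoly(\mu_0)$ slack, because they are statements about \emph{arbitrary} adversaries and rest solely on the claw-free property (which is identical in Definition~\ref{defn:ntcf} to the non-weak version); the correctness error in \eqref{eq:64co} constrains only what the \emph{honest} server can achieve and therefore feeds only into completeness, so the bookkeeping you flag as the main obstacle is essentially vacuous.
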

	Note that the correctness error in weak NTCF leads to a completeness error $\mu$ in the protocol.\par
	\cite{BGKPV23} constructed a test of a qubit from NTCF without assuming the adaptive hardcore bit property, which could be adapted easily to weak NTCF. In Section \ref{sec:b2} we already give the proof of Theorem \ref{thm:6.1} by tranlating test-of-a-qubit to RSPV for BB84.\par
	Combining it with our results in Section \ref{sec:5}, we could prove that all the protocols in Section \ref{sec:5} could be constructed from weak NTCF. Especially, we have:
	\begin{thm}\label{thm:1.2pn}
		Assuming the existence of weak NTCF, there exists an RSPV for state family $\{\frac{1}{\sqrt{2}}(\ket{0}\ket{x_0}+\ket{1}\ket{x_1}),x_0,x_1\in \{0,1\}^n\}$ that is $\mu$-complete and $\epsilon$-sound for any $\mu,\epsilon=1/\fpoly(\kappa)$..
	\end{thm}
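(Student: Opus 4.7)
The plan is to chain together Theorem \ref{thm:6.1} with the reduction pipeline from Section \ref{sec:5}, which shows that an RSPV for BB84 states suffices to construct an RSPV for the target state family. First I would invoke Theorem \ref{thm:6.1} to obtain a protocol $\fBB(1^{1/\mu_0},1^{1/\epsilon_0},1^\kappa)$ that is $\mu_0$-complete and $\epsilon_0$-sound, with $\mu_0,\epsilon_0$ chosen as arbitrary inverse polynomials to be tuned later. This instantiates the abstract primitive of Definition \ref{defn:bbdf} on which all subsequent constructions are built.

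Next, I would trace the chain $\fBB \to \fOneBlock \to \fOneBlockTensor \to \fMultiBlock \to \fKP$ established in Section \ref{sec:5}, noting that $\fKP$ (Protocol \ref{prtl:10}) is by construction an RSPV for exactly the target family $\{\frac{1}{\sqrt{2}}(\ket{0}\ket{x_0}+\ket{1}\ket{x_1}) : x_0,x_1 \in \{0,1\}^n\}$. Each step of the chain has already been shown sound: $\fOneBlock$ by a repeat-and-pick argument, $\fOneBlockTensor$ by sequential composition (Section \ref{sec:3.4.1}), $\fMultiBlock$ by analyzing the information-theoretic core (Proposition \ref{prop:5.3} and Corollary \ref{cor:5.5}) and then applying the preRSPV-to-RSPV amplification (Protocol \ref{prtl:1}), and finally $\fKP$ by a revealing-and-transforming argument. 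Since each reduction is black-box in $\fBB$, the only task is to check that the weak NTCF-based $\fBB$ satisfies the interface.

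For the quantitative part, I would tune parameters backwards from the desired $(\mu,\epsilon)$. The soundness blow-up through each reduction is at worst polynomial: $\fOneBlock$ incurs none beyond the internal $\fBB$ parameter; $\fOneBlockTensor$ loses an additive $n\epsilon_0$; $\fMultiBlock$ needs $\epsilon > 11n/\sqrt{m}$ and a further amplification factor $L = \Theta(1/((\epsilon-\epsilon_0)^3 \delta))$; and $\fKP$ selects $m_0 = \Theta((n/\epsilon)^2)$ rounded up to a power of two. Hence for any inverse-polynomial $\epsilon$ one can pick $\epsilon_0 = \epsilon/C$ for a sufficiently large polynomial $C(\kappa,n)$, and the whole protocol makes only $\fpoly(\kappa,n,1/\epsilon)$ calls to $\fBB$. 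The completeness error aggregates additively across these calls, so taking $\mu_0 = \mu/\fpoly(\kappa,n,1/\epsilon)$ (still inverse polynomial, hence supported by Theorem \ref{thm:6.1}) yields total completeness error at most $\mu$.

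The main obstacle I expect is purely bookkeeping: tracking how the completeness error $\mu_0$ of weak NTCF propagates through all $\fpoly(\kappa,n,1/\epsilon)$ invocations of $\fBB$ inside the nested amplifications of $\fMultiBlock$ and the downstream $\fKP$ reduction, and verifying that the constraint $\epsilon > 11n/\sqrt{m}$ in Protocol \ref{prtl:8} can be simultaneously satisfied with the amplification parameter $L$ while keeping the honest runtime polynomial. No new conceptual step is required beyond what is already in Sections \ref{sec:5} and \ref{sec:b2}; the statement follows by assembling these pieces and choosing parameters carefully.
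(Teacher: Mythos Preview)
Your proposal is correct and follows exactly the paper's approach: the paper does not give a standalone proof of this theorem but simply states that it follows by combining Theorem~\ref{thm:6.1} (RSPV for BB84 from weak NTCF) with the reduction chain of Section~\ref{sec:5} culminating in $\fKP$ (Protocol~\ref{prtl:10}). Your parameter-tracking is in fact more detailed than anything the paper writes out explicitly.
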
 
	\begin{thm}\label{thm:1.2n}
		Assuming the existence of weak NTCF, there exists an RSPV for state family $\{\ket{+_\theta}:=\frac{1}{\sqrt{2}}(\ket{0}+e^{\mi\pi\theta/4}\ket{1}),\theta\in \{0,1,2\cdots 7\}\}$ that is $\mu$-complete and $\epsilon$-sound for any $\mu,\epsilon=1/\fpoly(\kappa)$..
	\end{thm}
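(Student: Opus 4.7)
The plan is to simply chain together the constructions already assembled in Sections \ref{sec:5} and \ref{sec:b2}, using Theorem \ref{thm:6.1} as the base case. First, I would invoke Theorem \ref{thm:6.1} to instantiate $\fBB(1^{1/\mu},1^{1/\epsilon},1^\kappa)$ (Definition \ref{defn:bbdf}) from weak NTCF, with $\mu$ and $\epsilon$ being any desired inverse-polynomial functions of $\kappa$. This is the one step where the cryptographic assumption actually enters; every subsequent step is a black-box reduction that preserves polynomial efficiency.

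Second, I would apply the reduction chain of Section \ref{sec:5} in order: $\fBB \Rightarrow \fOneBlock$ (Protocol \ref{prtl:4}, a repeat-and-pick procedure) $\Rightarrow \fOneBlockTensor$ (Protocol \ref{prtl:5}, sequential composition) $\Rightarrow \fMultiBlock$ (Section \ref{sec:5.3}, which first builds a preRSPV via Protocol \ref{prtl:8} and then amplifies using Protocol \ref{prtl:1}) $\Rightarrow \fKP$ (Protocol \ref{prtl:10}, an RSPV for $\frac{1}{\sqrt{2}}(\ket{0}\ket{x_0}+\ket{1}\ket{x_1})$) $\Rightarrow \fQFac$ (Section \ref{sec:5.5}, which builds a preRSPV-with-score Protocol \ref{prtl:11} and amplifies via Protocol \ref{prtl:3r} followed by Protocol \ref{prtl:1}). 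The final output is exactly the RSPV for $\{\ket{+_\theta}\}_{\theta\in\{0,\ldots,7\}}$ that the theorem requires.

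The only real bookkeeping is error propagation. Each reduction either (a) composes an inner protocol sequentially a polynomial number of times (giving a linear blow-up in both completeness and soundness error, which is absorbed by choosing the inner parameters polynomially small), or (b) passes through an IT-core whose error is $\fpoly$ in an internal parameter (e.g.\ $n/\sqrt{m}$ in $\fMultiBlock$, or $\fpoly_2(\delta)$ in $\fQFac$) and can therefore be tuned to any inverse polynomial, or (c) applies one of the generic amplifications of Section \ref{sec:3.5}, whose parameters were already set up to drive $\epsilon_0 \to \epsilon$ and $\delta \to$ anything positive at polynomial cost. So to achieve target completeness $\mu$ and soundness $\epsilon$ at the $\fQFac$ level, I would work backwards through the chain, picking the parameters of each inner call as a sufficiently small inverse polynomial of $\kappa$; since every layer is polynomial-time and only polynomially many layers are stacked, the resulting $\fQFac$ remains efficient.

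The main obstacle, and the only nontrivial issue, is verifying that all of these error reductions indeed compose to give any desired $1/\fpoly(\kappa)$ bound rather than saturating at some fixed constant. The delicate case is $\fQFac$, where Theorem \ref{thm:5.13r} yields $(\delta-\epsilon_0,\fpoly_2(\delta)+\epsilon_0)$-soundness: one must pick $\delta$ small enough that $\fpoly_2(\delta)$ is below target, yet large enough (relative to $\lambda$ and $\epsilon_0$) that the preRSPV-with-score amplification (Protocol \ref{prtl:3r}) has a valid parameter regime, i.e.\ $\lambda < \frac{1}{6}\delta_0(\epsilon-\epsilon_0)$. Choosing $\epsilon_0$ as a sufficiently small inverse polynomial of $\kappa$ and tuning $\delta$ accordingly resolves this, and the resulting RSPV inherits completeness $\mu$ directly from the weak-NTCF correctness parameter in Theorem \ref{thm:6.1} (since every reduction layer only adds negligible completeness error on top). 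This yields the theorem.
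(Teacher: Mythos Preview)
Your proposal is correct and follows essentially the same approach as the paper: the paper itself does not give an explicit proof of this theorem but simply states it as an immediate consequence of combining Theorem \ref{thm:6.1} with the reduction chain of Section \ref{sec:5} (exactly the chain $\fBB \Rightarrow \fOneBlock \Rightarrow \fOneBlockTensor \Rightarrow \fMultiBlock \Rightarrow \fKP \Rightarrow \fQFac$ you spell out). One minor imprecision: the completeness error does accumulate polynomially (not negligibly) through the repetitions, but as you note earlier in the proposal this is absorbed by choosing the inner $\mu$ parameter sufficiently small, so the argument goes through.
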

	Then recall that by \cite{FKD}, if the client is allowed to prepare and send lots of $\ket{+_{\theta}}$ states before the protocol, both parties could do quantum computation verification. As discussed before, this quantum communication could be compiled to classical communication using an RSPV for $\ket{+_\theta}$ states. Finally we note that although RSPV-based compilation introduces some non-negligible completeness error and soundness error, in CVQC problem these errors could be amplified to be exponentially small by sequential repetition as long as there is a significant completeness-soundness gap. Thus we have:
	\begin{thm}\label{thm:6.4}
		Assuming the existence of weak NTCF, there exists a CVQC protocol.
	\end{thm}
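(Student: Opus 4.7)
The plan is to chain together the three ingredients assembled in this section: (i) the existence of RSPV for BB84 states from weak NTCF (Theorem~\ref{thm:6.1}), (ii) the RSPV reductions of Section~\ref{sec:5} culminating in Theorem~\ref{thm:1.2n} which gives an RSPV for $\{\ket{+_\theta}\}_{\theta\in\{0,1,\dots,7\}}$, and (iii) the FKD verification protocol \cite{FKD}, which takes as a quantum input a sequence of sampled $\ket{+_\theta}$ states on the server side (with the client knowing the descriptions $\theta$) and achieves CVQC with a constant completeness-soundness gap. Putting these together, the classical client can, instead of quantum-transmitting the $\ket{+_\theta}$ states, invoke the RSPV protocol from Theorem~\ref{thm:1.2n} to remotely prepare them, then execute the FKD protocol. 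This is exactly the ``RSPV as compiler'' pattern formalized in Section~\ref{sec:3.4.2}.

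More concretely, I would first fix polynomially small parameters $\mu,\epsilon=1/\fpoly(\kappa)$ and invoke Theorem~\ref{thm:1.2n} to obtain an RSPV $\pi_{\fRSPV}$ for $\ket{+_\theta}$ that is $\mu$-complete and $\epsilon$-sound. Sequential composition (Section~\ref{sec:3.4.1}) yields an RSPV for the tensor product state family $\{\ket{+_\theta}\}^{\otimes N}$ with completeness error $N\mu$ and soundness error $N\epsilon$, where $N$ is the number of $\ket{+_\theta}$ states required by FKD. Then I would define the compiled CVQC protocol to be: (a) run the tensor-product RSPV, breaking out on failure; (b) run the FKD verification protocol using the prepared states. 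Using the compiler analysis of Section~\ref{sec:3.4.2}, the soundness of the compiled protocol is bounded by the FKD soundness plus $N\epsilon$, and the completeness is bounded by the FKD completeness error plus $N\mu$.

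The key calculation is parameter tuning: the FKD protocol has some constant (or inverse-polynomial) completeness-soundness gap $\Delta$, so by choosing $\mu$ and $\epsilon$ smaller than $\Delta/(10N)$ (which is still $1/\fpoly(\kappa)$ since $N=\fpoly(\kappa)$), the compiled protocol inherits a positive inverse-polynomial gap. This gives a CVQC protocol with inverse-polynomial completeness and soundness errors separated by an inverse-polynomial gap, which is the standard (weak) CVQC notion.

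The potentially subtle step is upgrading this to exponentially small errors, as is customary in the CVQC literature. The natural approach is parallel or sequential repetition with majority vote on the client's accept/reject decision, followed by a Chernoff bound. Sequential repetition is the safer choice because our RSPV soundness is only a stand-alone simulation-based guarantee (Section~\ref{sec:3.2}) rather than universally composable, and sequential composability is what we proved in Section~\ref{sec:3.4.1}. As long as the completeness-soundness gap is at least inverse polynomial, $\fpoly(\kappa)$ sequential repetitions followed by a threshold decision suffice to push both errors below $\fneg(\kappa)$. This reduction to the standard CVQC error profile is essentially the final step I would flesh out; I expect it to be the main technical point that requires care (in particular, ensuring that sequential repetition preserves the simulation-based soundness through the hybrid argument), but it follows the same template as analogous amplification arguments in \cite{GVRSP,cvqcinlt}.
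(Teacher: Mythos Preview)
Your proposal is correct and follows essentially the same approach as the paper: combine Theorem~\ref{thm:1.2n} (RSPV for $\ket{+_\theta}$ from weak NTCF via Theorem~\ref{thm:6.1} and the reductions of Section~\ref{sec:5}) with the FKD protocol via the compiler of Section~\ref{sec:3.4.2}, then amplify the resulting inverse-polynomial completeness--soundness gap by sequential repetition. The paper's own argument is terser but identical in structure, explicitly noting that the non-negligible errors introduced by the RSPV compilation can be driven to negligible by sequential repetition once a significant gap is secured.
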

\subsection{CVQC from Assumptions on Group Actions}
Finally we review the results in \cite{AMR22}, which constructs weak TCF (which is stricter than weak NTCF) from cryptographic group actions (for example, isogeny).
\begin{assump}[Repeat of \cite{AMR22}]\label{assump:ga}
Assume the extended linear hidden shift assumption holds for some effective group action.
\end{assump}
\begin{thm}[Repeat of \cite{AMR22}]\label{thm:6.5}
	There exists a weak TCF assuming Assumption \ref{assump:ga}
\end{thm}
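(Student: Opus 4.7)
The plan is to invoke the construction of \cite{AMR22} essentially verbatim, since Theorem \ref{thm:6.5} is explicitly labeled as a repeat of their result. I would begin by recalling the general recipe for building a claw-free function from a (cryptographic) group action $\ast : G \times X \to X$. Fix two public set elements $x_0, x_1 \in X$ such that $x_1 = g^\ast \ast x_0$ for a secret $g^\ast \in G$; the public key is $(x_0, x_1)$ and the trapdoor is $g^\ast$. Define the two-branch function $f_{\pk}(b, g) = g \ast x_b$. On a uniform superposition over $g \in G$, the honest server can prepare
\[
\frac{1}{\sqrt{2|G|}}\sum_{b \in \{0,1\}, g \in G} \ket{b}\ket{g}\ket{g \ast x_b},
\]
and measuring the last register produces, for each $y$ in the image, the claw state $\tfrac{1}{\sqrt 2}(\ket{0}\ket{g_0}+\ket{1}\ket{g_1})$ with $g_b \ast x_b = y$. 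The trapdoor $g^\ast$ lets the client recover both preimages from $y$, implementing $\fDc_{\sk}$, while $\fCHK_{\pk}$ merely re-evaluates $f_{\pk}$.

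Next, I would argue claw-freeness: any adversary producing $(0,g_0,1,g_1,y)$ with $g_0\ast x_0 = g_1\ast x_1$ directly yields $g_1^{-1}g_0 \ast x_0 = x_1$, i.e.\ solves the group-action discrete-logarithm-style problem that underlies the extended linear hidden shift assumption (Assumption \ref{assump:ga}). This is precisely the reduction carried out in \cite{AMR22}. I would then check the remaining syntactic requirements of Definition \ref{defn:ntcf} in its ``weak TCF'' form, identifying the sub-normalized distribution $p_y$ with the measurement distribution on the image of $f_{\pk}$.

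The main technical obstacle, and the reason the primitive is \emph{weak} rather than exact, is the preparation of the ``uniform superposition over $G$'' and the evaluation of $\ast$ coherently. For the generic group actions available from isogeny-based cryptography (e.g.\ CSIDH-like constructions), we do not have a clean quantum sampler that produces a perfectly uniform superposition over $G$ together with a reversible, deterministic evaluation of $g \ast x$; we only have approximate samplers and randomised evaluators, which is exactly the source of the inverse-polynomial correctness error in \eqref{eq:64co}. The bulk of the work in \cite{AMR22} is devoted to showing that (i) one can assemble a suitable ``effective group action'' in which the sampling and evaluation errors compose to a $\fpoly(\kappa)$-bounded $\mu$ while preserving the 2-to-1 structure on the noisy support, and (ii) the claw-free reduction still goes through in the presence of this noise. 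I would simply cite the corresponding lemmas from \cite{AMR22} at these two points, rather than re-derive them.

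Putting the three ingredients together --- the algebraic $f_{\pk}$ with trapdoor $g^\ast$, the hardness reduction to the extended linear hidden shift assumption, and the noisy-correctness analysis of the effective group action --- yields a tuple $(\fKg, \fEv, \fDc, \fCHK)$ satisfying Definition \ref{defn:ntcf}, establishing Theorem \ref{thm:6.5}. Combined with Theorem \ref{thm:6.4} (CVQC from weak NTCF, since weak TCF trivially implies weak NTCF), this delivers the promised CVQC protocol from group-action assumptions.
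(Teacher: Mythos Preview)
The paper itself provides no proof of this theorem: immediately after stating it, the paper simply writes ``We refer to \cite{AMR22} for details.'' So there is nothing to compare against beyond a bare citation, and your proposal already goes further than the paper does.

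Your sketch is a reasonable high-level outline of the \cite{AMR22} construction, but one point is oversimplified. The reduction you describe --- a claw $(g_0,g_1)$ with $g_0\ast x_0=g_1\ast x_1$ yields $g_1^{-1}g_0$ solving the discrete-log-style problem of finding $g^\ast$ --- is the naive parallelization/vectorization argument, and it reduces claw-freeness to the \emph{group-action discrete log} (or parallelization) problem, not directly to the \emph{extended linear hidden shift} assumption stated in Assumption~\ref{assump:ga}. In \cite{AMR22} the actual construction uses several parallel instances and a linear structure over $\mathbb{Z}_2$ (this is where the word ``linear'' in the assumption name comes from), and the security reduction exploits that structure rather than a single discrete-log instance. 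If you want to write out a proof rather than defer, you should either (i) state and use the assumption that \cite{AMR22} actually reduces to, or (ii) reproduce their multi-instance construction and the corresponding reduction to the extended linear hidden shift problem. Otherwise, simply citing \cite{AMR22}, as the paper does, is sufficient here.
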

We refer to \cite{AMR22} for details.\par% The fact that we do not need adaptive hardcore bit might lead to further simplification or other constructions.\par
Combining Theorem \ref{thm:6.4}, \ref{thm:6.5} we have:
\begin{cor}
Assuming \ref{assump:ga}, there exists a CVQC protocol.
\end{cor}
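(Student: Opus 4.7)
The plan is to chain the two theorems stated immediately before the corollary. First I would apply Theorem \ref{thm:6.5} to Assumption \ref{assump:ga}: this yields an instantiation of weak TCF. Since a weak TCF is stricter than a weak NTCF (the 2-to-1 property is a restriction of the noisy 2-to-1 property, and the claw-free property is shared), any weak TCF is in particular a weak NTCF in the sense of Definition \ref{defn:ntcf}. More explicitly, I would set the $\fKg$, $\fEv$, $\fDc$, $\fCHK$ algorithms of the weak NTCF to be those of the weak TCF, verify that correctness error $\mu$ is preserved because \eqref{eq:64co} holds with the same bound, and note that claw-freeness transfers directly.

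Next I would invoke Theorem \ref{thm:6.4}, which gives a CVQC protocol from any weak NTCF. Plugging in the weak NTCF obtained above produces a CVQC protocol whose security reduces to Assumption \ref{assump:ga}. The resulting protocol inherits the completeness and soundness guarantees of Theorem \ref{thm:6.4}: inverse-polynomial errors introduced by the RSPV-based compilation can be driven down by sequential repetition, as already noted in the discussion preceding Theorem \ref{thm:6.4}, so the final CVQC protocol has a negligible completeness--soundness gap.

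The only subtle point --- and arguably the main obstacle --- is verifying that the weak TCF produced by Theorem \ref{thm:6.5} conforms to the syntactic interface demanded by Definition \ref{defn:ntcf} (for instance, agreement on domain/range encodings, on how $\fCHK$ is derived from $\fDc$, and on how the correctness error $\mu$ is parameterized). This is routine bookkeeping rather than a real conceptual issue, and since \cite{AMR22} formulates weak TCF with essentially the same signature, the translation is mechanical. No new quantum or cryptographic argument is needed beyond what is already collected in Sections \ref{sec:5} and \ref{sec:6}, so the corollary follows by composition of Theorems \ref{thm:6.4} and \ref{thm:6.5}.
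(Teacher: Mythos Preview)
Your proposal is correct and takes essentially the same approach as the paper: the corollary is obtained by combining Theorem \ref{thm:6.5} (Assumption \ref{assump:ga} yields a weak TCF, hence a weak NTCF) with Theorem \ref{thm:6.4} (weak NTCF yields CVQC). The paper states this in a single line without elaborating on the syntactic bookkeeping you mention, but your added discussion of the TCF-to-NTCF containment and interface matching is accurate and does not deviate from the intended argument.
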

\appendix
\section{Proof of Theorem \ref{thm:qfacbeforec}}\label{app:p1}
We note that test in Definition \ref{defn:a1} corresponds exactly to the case of $\varphi\in \{0,2\}, \theta\in \{1,3,5,7\}$ in $\fQFacTest_{\text{step 3}}$. In $\fQFacTest_{\text{step 3}}$ the cases where the score could be nontrivially accumulated could be decomposed to this test rotated to different angles (that is, $\varphi\in \{0,2\}$, $\varphi\in \{1,3\}$, $\varphi\in \{4,6\}$, $\varphi\in \{5,7\}$). In trivial cases (that is, $|\theta-\varphi|\in \{0,2,4\}$) the value of $\bscore$ is simply $\fwin$ with probability close to $1$ (under the condition that the server passes with probability close to $1$). Thus we have $\OPT=\frac{1}{2}+\frac{1}{2}\cos^2(\pi/8)$ as in Set-up \ref{setup:12}. The soundness property could be proved in the way described in the beginning of Section \ref{sec:r4}, proof of Lemma \ref{lem:b3} and Section \ref{sec:a.3} (here we do not need to consider computational indistinguishability so lemmas based on statistical indistinguishability, like Lemma \ref{lem:a.5}, \ref{lem:a.8}, are sufficient).

\section{CVQC from Cryptographic Group Actions via \cite{morimae20}}\label{app:p2}
In this section we sketch (without formal proofs) another approach for achieving CVQC from weak NTCF, based on the results in \cite{morimae20}.\footnote{We thank anonymous reviewers for pointing out this approach.}\par
\cite{morimae20} gives a result on how to achieve verification of quantum computation in a model where trusted center sends BB84 states to the server and sends the classical description to the client. By replacing the state distribution step by callings to the RSPV protocols, we get a CVQC protocol. Note that it seems that RSPV for random BB84 states is not sufficient here; what we need is an RSPV for BB84 states where the client could choose which state to prepare, as discussed in Section \ref{sec:3.3.1}. Intuitively this primitive could be constructed by repeating the RSPV-for-BB84 for many times and letting the client to choose the desired state. Finally, as discussed in Section \ref{sec:r4} and Theorem \ref{thm:6.1}, RSPV for BB84 states could be constructed from cryptographic group actions, which completes the proof.
\bibliographystyle{plain}
\bibliography{bib}
	\end{document}